\definecolor{darkblue} {rgb}{.1,.5,0.65}
\definecolor{darkgreen}{rgb}{.1,.18,.82}
\newcommand{\p}{\partial}
\newcommand{\lan}{\langle}
\newcommand{\ran}{\rangle}
\newcommand{\unit}{\mathbf{1}}
\newcommand{\da}{{\dagger}}
\newcommand{\ra}{\rightarrow}
\newcommand{\wt}{\widetilde}
\newcommand{\uva}{{\mathbf{\hat a}}}
\newcommand{\uvb}{{\mathbf{\hat b}}}
\newcommand{\uvc}{{\mathbf{\hat c}}}
\newcommand{\uvx}{{\mathbf{\hat x}}}
\newcommand{\uvy}{{\mathbf{\hat y}}}
\newcommand{\bfzero}{{\mathbf{0}}}
\renewcommand{\(}{\left(}
\renewcommand{\)}{\right)}
\renewcommand{\[}{\left[}
\renewcommand{\]}{\right]}
\newcommand{\mt}{\mapsto}
\newcommand{\tp}{\otimes}
\newcommand{\D}{\nabla}
\newcommand\bpm            {\begin{pmatrix}}
	\newcommand\epm           {\end{pmatrix}}
\newcommand{\ms}{\medskip}
\def\app#1#2{%
	\mathrel{%
		\setbox0=\hbox{$#1\sim$}%
		\setbox2=\hbox{%
			\rlap{\hbox{$#1\propto$}}%
			\lower1.1\ht0\box0%
		}%
		\raise0.25\ht2\box2%
	}%
}
\newcommand{\tw}{\textwidth}
\newcommand{\ct}{\Theta}
\newcommand{\inv}{^{-1}}
\newcommand{\ope}\odot
\newcommand{\bi}{\begin{itemize}}
	\newcommand{\ei}{\end{itemize}}
\newcommand{\igpfc}[1]{\vcenter{\hbox{\includegraphics[width=.5\textwidth]{#1}}}}
\newcommand\bd            {\begin{definition}}
	\newcommand\ed            {\end{definition}}
\newcommand\bt            {\begin{theorem}}
	\newcommand\et            {\end{theorem}}
\newcommand\be            {\begin{equation}}
	\newcommand\ee            {\end{equation}}
\newcommand\ba            {\begin{aligned}}
	\newcommand\ea            {\end{aligned}}
\newcommand\bea{\begin{equation}\begin{aligned}}
		\newcommand\eea{\end{aligned}\end{equation}}
\newcommand{\sss}{\subsubsection}
\renewcommand{\ss}{\subsection}
\renewcommand{\a}{\alpha}
\renewcommand{\b}{\beta}
\renewcommand{\d}{\delta}
\newcommand{\De}{\Delta}
\newcommand{\g}{\gamma}
\newcommand{\s}{\sigma}
\renewcommand{\l}{\lambda}
\renewcommand{\L}{\Lambda}
\renewcommand{\o}{\omega}
\renewcommand{\O}{\Omega}
\renewcommand{\r}{\rho}
\newcommand{\bfsig}{{\boldsymbol{\sigma}}}
\newcommand{\bfl}{{\boldsymbol{\lambda}}}
\newcommand{\bfm}{\mathbf{m}}
\newcommand{\bfr}{\mathbf{r}}
\newcommand{\bfs}{\mathbf{s}}
\newcommand{\bfx}{\mathbf{x}}
\newcommand{\bfy}{\mathbf{y}}
\newcommand{\zt}{\mathbb{Z}_2}
\newcommand{\EE}{\mathbb{E}}
\newcommand{\rr}{\mathbb{R}}
\newcommand{\zz}{\mathbb{Z}}
\newcommand{\mcc}{\mathcal{C}}
\newcommand{\mck}{\mathcal{K}}
\newcommand{\mcb}{\mathcal{B}}
\newcommand{\mcf}{\mathcal{F}}
\newcommand{\mco}{\mathcal{O}}
\newcommand{\mce}{\mathcal{E}}
\newcommand{\mcd}{\mathcal{D}}
\newcommand{\mcl}{\mathcal{L}}
\newcommand{\mcs}{\mathcal{S}}
\newcommand{\mch}{\mathcal{H}}
\newcommand{\mca}{\mathcal{A}}
\newcommand{\mcn}{\mathcal{N}}
\newcommand{\mcr}{\mathcal{R}}
\newcommand{\sfA}{\mathsf{A}}
\newcommand{\sfB}{\mathsf{B}}
\newcommand{\sfC}{\mathsf{C}}
\newcommand{\sfD}{\mathsf{D}}
\newcommand{\sfE}{\mathsf{E}}
\newcommand{\sfN}{\mathsf{N}}
\newcommand{\sfP}{\mathsf{P}}
\newcommand{\sfS}{\mathsf{S}} %
\newcommand{\sfW}{\mathsf{W}}
\newcommand{\sfa}{\mathsf{a}}
\newcommand{\sfb}{\mathsf{b}}
\newcommand{\sfd}{\mathsf{d}}
\newcommand{\sfn}{\mathsf{n}}
\newcommand{\sfp}{\mathsf{p}}
\newcommand{\sfs}{\mathsf{s}} 
\newcommand{\sft}{\mathsf{t}}
\newcommand{\sca}{\mathscr{A}}
\newcommand{\scb}{\mathscr{B}}
\newcommand{\scp}{\mathscr{P}}
\newcommand{\scs}{\mathscr{S}}
\renewcommand{\k}[1]{|#1\rangle}
\newcommand{\tdec}{T_{\sf dec}}
\newcommand{\emp}{\varnothing}
\newcommand{\plog}{p_{\sf log}}
\newcommand{\dist}{{\sf dist}}
\newcommand{\qq}{\qquad} 
\newtheorem{theorem}{Theorem}[section]
\newtheorem{proposition}[theorem]{Proposition}
\newtheorem{lemma}[theorem]{Lemma}
\newtheorem{remark}{Remark}[section]
\newtheorem{definition}{Definition}[section]
\newtheorem{conjecture}{Conjecture}
\newtheorem{corollary}[theorem]{Corollary}
\newcommand{\plc}{{\sf PLF}}
\newcommand{\flc}{{\sf FLF}} 
\newcommand{\plf}{{\sf PLF}}
\newcommand{\tsim}{\sft_{\sf sim}}
\newcommand{\tsimmin}{\sft_{\sf sim,min}}
\newcommand{\tsimmax}{\sft_{\sf sim,max}}
\newcommand{\lag}{{\sf lag}}
\newcommand{\asy}{{\sf desynch}}
\newcommand{\plogae}{{p_{\sf log}^{\sfa,(\mce)}}}
\newcommand{\tdece}{\tdec^{(\mce)}}
\newcommand{\ploge}{\plog^{(\mce)}}
\newcommand{\tdeca}{T_{\sf dec}^\sfa}
\newcommand{\tdecae}{T_{\sf dec}^{\sfa,(\mce)}}
\newcommand{\supp}{{\rm supp}}
\newcommand{\Tr}{{\sf Tr}}
\renewcommand{\bot}{\bigotimes} 
\newcommand{\new}{{\sf new}}
\newcommand{\old}{{\sf old}}
\newcommand{\cp}{\Phi}
\renewcommand{\bfl}{{\mathbf{l}}}
\newcommand{\fut}{{\sf future}}
\newcommand{\pres}{{\sf present}}
\newcommand{\dsy}{{\sf desync}}
\newcommand{\lind}{\dsy}
\newcommand{\rlog}{{\r_{\sf log}}}
\newcommand{\qu}{{\sf qu}}
\newcommand{\cl}{{\sf cl}}
\newcommand{\trec}{t_{\sf rec}}
\newcommand{\trev}{t_{\sf rev}}
\newcommand{\normalmath}[1]{\begingroup\normalfont\mathversion{normal}#1\endgroup}
\begin{document}
	\title{Fast offline decoding with local message-passing automata}
	
	\author[]{\fontsize{12pt}{16pt}\selectfont \fnm{Ethan} \sur{Lake}}
	
	\affil[]{\small Department of Physics, University of California Berkeley, Berkeley, CA 94720}

	\abstract{
		We present a local offline decoder for topological codes that operates according to a parallelized message-passing framework. The decoder works by passing messages between anyons, with the contents of received messages used to move nearby anyons towards one another. We prove the existence of a threshold, and show that in a system of linear size \normalmath{$L$}, decoding terminates with an \normalmath{$O((\log L)^\eta)$} average-case runtime, where \normalmath{$\eta$} is a small constant. For the toric code subject to i.i.d Pauli noise, our decoder has \normalmath{$\eta=1$} and a threshold at a noise strength of \normalmath{$p_c\approx 7.3\%$}. 
	}
	
	\maketitle
	
	\tableofcontents
	
	\section{Introduction} \label{sec:intro}

	The problem of performing error correction in quantum memories is one of growing theoretical and experimental importance.  
	From a practical standpoint, the ability to reliably and efficiently correct errors is an essential step towards developing useful quantum computers. 
	Quantum error correction is also playing an increasingly significant role in fundamental physics: error correction algorithms which operate locally in space are closely connected to non-equilibrium quantum phases of matter, the understanding of which is still in its infancy. 
	
	The last several years have seen significant experimental progress in quantum error correction, particularly in the case of the surface code \cite{campbell2024series}. Significant challenges nevertheless remain, including on the classical control side, where standard approaches can struggle to keep pace with demanding error correction cycle times and to integrate smoothly with cryogenic quantum hardware \cite{google_qec,reilly2019challenges,brennan2025classical}. Decoding algorithms which overcome these challenges will need to be fast, accurate, and simple.

	Many state-of-the-art approaches focus on parallelizing (to various degrees) global decoding algorithms like minimal weight perfect matching and union-find (see e.g. \cite{liyanage2024fpga,demarti2024decoding,google_qec,ziad2024local,wu2023fusion,higgott2025sparse,barber2023real,delfosse2021almost}). 
	To fully parallelize the decoding process at the hardware level, one may imagine laying out an array of small classical processors next to the qubits, with each processor using a small amount of working memory to locally measure syndromes, communicate measurement results to its neighbors, and locally apply feedback. Systems which perform error correction in this way are known as {\it cellular automaton} (CA) decoders \cite{Harrington2004,herold2015cellular,herold2017cellular,balasubramanian2024local,kubica2019cellular,breuckmann2016local}, which in addition to their manifest parallelization, have the benefit of operating according to (usually) very simple rules; as such, they are particularly well suited to implementation with FPGAs or similar parallelized architectures. 
	
	CA decoders are also of interest for the role they play in the study of non-equilibrium and mixed-state phases of matter, the classification of which is an ongoing program in condensed matter physics (see e.g. \cite{coser2019classification,rakovszky2024defining,sang2024mixed,Dennis_2002,fan2024diagnostics,cubitt2015stability}). In one classification scheme, two states are said to be in the same mixed-state phase if there exist local Lindbladians that rapidly turn one state into the other \cite{coser2019classification,sang2024mixed}. Decoders which may be formulated as (quasi)local Lindbladians---which all CA decoders can be, as we will see---thus provide a direct way of investigating the landscape of mixed-state quantum phases of matter.

	This work develops local CA decoders for topological codes (Abelian or non-Abelian)
	defined by a simple message-passing framework. Our decoders operate by finding a matching between anyons in a parallel, decentralized fashion: under the decoding dynamics, anyons send out signals broadcasting their locations, and use local feedback based on incoming signals to move nearby anyons towards one another. This scheme is naturally translation invariant, can be implemented in continuous time with a strictly local Lindbladian (with each term having support on $O(1)$ sites), and on a system of linear size $L$, achieving a suppression of the sub-threshold logical error rate which is superpolynomial in $L$ requires access to only $c\log \log L$ classical control bits per site, where $c$ is a small $O(1)$ constant.

	The decoders we develop in this work are designed for {\it offline} decoding: given a logical state $\rlog$ and a noise channel $\mce$, they recover $\rlog$ when input $\mce(\rlog)$, under the assumptions that topological charge measurements can be made perfectly, and that the noise is not too strong (sometimes known as the ``code capacity scenario'').\footnote{``Offline decoding'' could also be used in a situation in which a decoder is run on an accumulated spacetime trajectory of (unreliable) syndrome measurements. Our decoders can also be used in this context.} This is enough for establishing mixed-state phase equivalence in the sense discussed above, and our work thus provides an explicit construction of a strictly local Lindbladian that quickly connects noisy topologically ordered states to their clean counterparts (our decoders can be regarded as an alternative to the explicit RG-based decoders developed for this purpose in Ref.~\cite{sang2024mixed}, which lack translation invariance and require nonlocal operations over $\O({\rm polylog}(L))$ length scales). 
	
	Offline decoders do not by themselves stabilize fault-tolerant memories. This must be done with an {\it online} (or ``real-time'') decoder, which operates in a setting where noise occurs continuously and measurement outcomes are unreliable. With the notable exception of codes allowing for single-shot error correction \cite{stahl2024single,kubica2022single,bombin2015single}---where the assumption of perfect measurements is not needed---the speed of an offline decoder is therefore generally not of direct practical significance. More broadly though, understanding how to quickly perform offline decoding is a necessary stepping stone on the path towards designing efficient fault-tolerant quantum memories, and extensions of the present decoders to this setting are presented in Ref.~\cite{online}.

	\begin{figure}
		\centering \includegraphics[width=.8\tw]{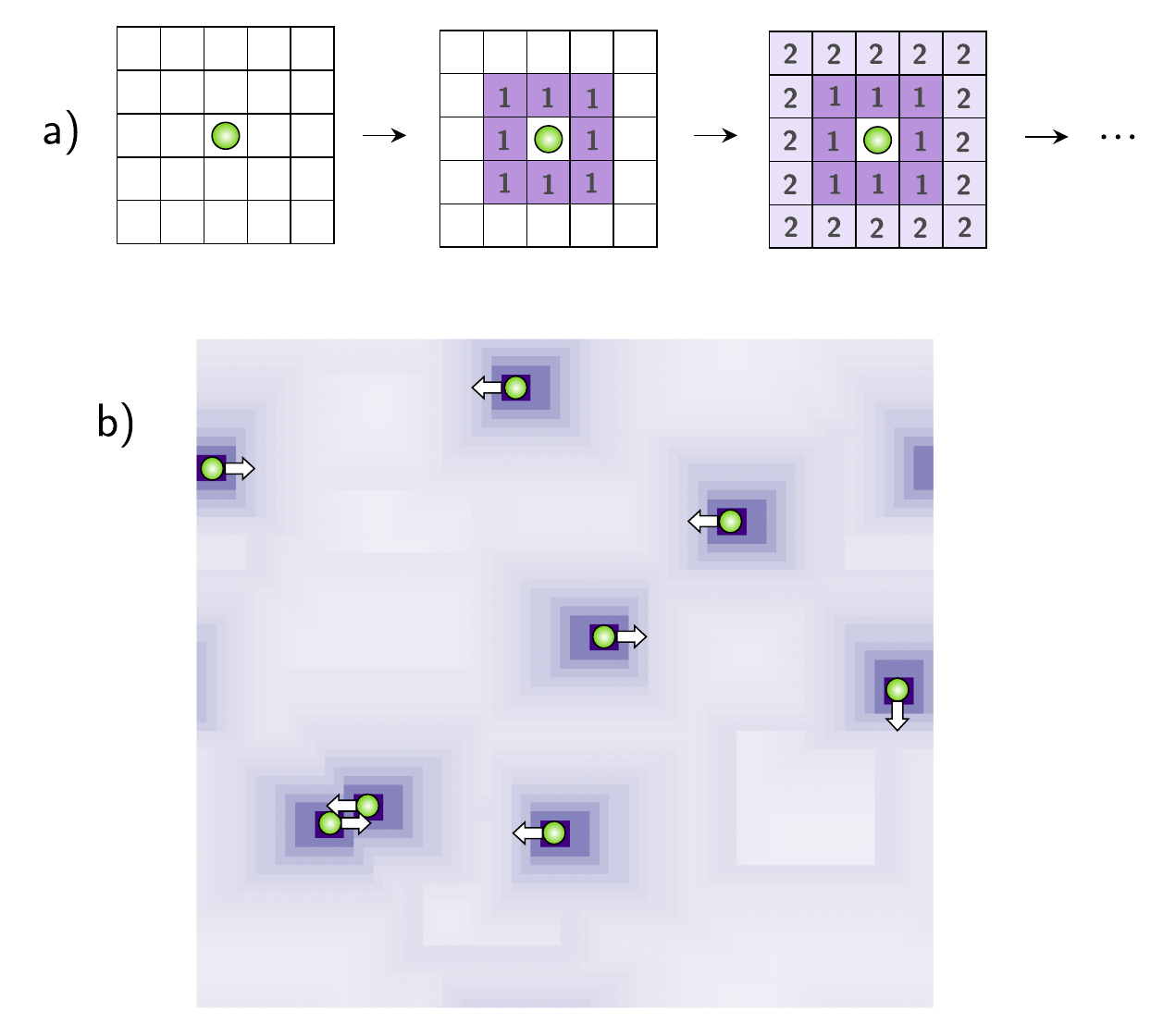}
		\caption{\label{fig:inspir_schem} Illustration of the message passing architecture in two dimensions. {\bf {\sf a)}} A schematic of how messages are produced when the decoder operates synchronously. A site hosting an anyon (green circle) sends out messages which propagate outward from the anyon's location at a constant speed (with respect to the $\infty$-norm), and increase their value by 1 at each time step.
			The decoder makes use of this information by moving each anyon in the direction $(\pm\uvx$ or $\pm\uvy$) of the received message with the smallest value (to convey directional information, the decoder actually employs four different types of messages; this distinction is not shown in the figure).  
			{\bf {\sf b)}} A snapshot of the decoding dynamics in a system of size $L=75$. Darker colors indicate messages with smaller values, and the directions along which each anyon will move at the next time step are indicated by the white arrows. If the message transmission speed was infinite, each anyon would move in the direction of its nearest neighbor. For finite message speeds, it is possible for an anyon to move towards the location where an anyon was annihilated in the recent past (the rightmost anyon in the figure being an example). }
	\end{figure}
	
	\ss{Summary and overview}

	In this subsection we provide an extended summary of the results of this paper. 
	
	
	This work focuses on the problem of designing parallelized offline CA decoders for topological  codes with point-like anyons (in any dimension). 
	Our interest is in decoders which perform computations and apply feedback according to translationally invariant spatially local CA update rules.\footnote{\label{local_footnote} Our emphasis on spatial locality perhaps deserves some elaboration. In a strict sense, any offline decoder that operates non-locally on a system-size-dependent scale $R_{\sf nl}(L)$ can be rendered local by adding additional degrees of freedom that communicate instructions to different parts of the lattice before nonlocal operations are applied. Doing so slows down decoding by an amount at least of order $R_{\sf nl}(L)$, and compromises the simplicity of the decoding scheme. Moreover, this construction leads to an extensively long increase in waiting times between the application of nontrivial feedback operations, which causes problems when trying to extend to the setting of real-time decoding. Our interest in this work is therefore in decoders which are ``naturally'' local,  with the ratio between local classical data-processing steps and nontrivial feedback operations being fixed to an $O(L^0)$ constant. } These decoders are built from an array of classical processors, with one processor located at the location of each minimal stabilizer generator.  
	A processor at site $\bfr$ is restricted to performing the following actions at each time step (temporarily working in a discrete-time setting for simplicity): 
	\begin{enumerate} 
		\item Measuring the topological charge  
		 at site $\bfr$, obtaining outcome $\s_\bfr$.
		\item Performing a CA update $\mca$ by modifying its internal state in a way that depends on $\s_\bfr$ and the internal states of its nearest neighbors, but not on $\bfr$ or the elapsed time.
		\item Performing a local feedback operation $\mcf$\footnote{We phrase this as performing feedback on the system, but since we are only concerned with offline decoding in this work, all feedback can of course be absorbed into classical post-processing. } conditioned on the value of its internal state to the qubits near site $\bfr$.
	\end{enumerate}
	
	We will be concerned only with decoders which operate {\it locally}, meaning that the ratio of CA updates to nontrivial feedback operations is a constant number, independent of the system size $L$. Decoders which require communication over a length scale that diverges as $L\ra\infty$ in between each feedback operation---even if the divergence is only as $\log L$---will not be regarded as local in this work (c.f. the discussion in footnote \ref{local_footnote}). 

	\sss{Construction}
	
	We begin by briefly describing how our decoders are constructed. For concreteness, we will specialize in most of the paper to the case of the 2D surface code, so that $\s_\bfr \in \{ \pm1\}$. The generalization to arbitrary (potentially non-Abelian) topological codes is essentially immediate, and will be explained in sec.~\ref{ss:nonab}. For simplicity, we will first explain their operation in discrete time, where a global synchronizing clock ensures that all sites of the system update simultaneously at each time step. 
	
	Consider dynamics whereby at each time step, each anyon moves by one lattice site in the direction of the anyon closest to it (without attempting to predetermine if the nearest neighbor is the one that it pairs with in a correct matching). While this decoding rule is extremely simple---and can in fact be shown to possess a threshold, using the techniques of this work---locating the closest anyon to a given site explicitly requires non-local communication. The decoding algorithm developed in this work can be viewed as a simple way of making this decoding rule completely local. 
	The automaton rule $\mca$ and feedback operations $\mcf$ are defined by the following rules, which are illustrated schematically in fig.~\ref{fig:inspir_schem} and spelled out in detail in sec.~\ref{sec:msg_passing}. At each time step, the following occur in sequence: 
	\begin{enumerate}
		\item Each anyon emits messages in all directions, which propagate outward from its location at a constant speed $v$. Messages come in four different types, with each type propagating along one of the directions $\pm\uvx,\pm\uvy$. Message propagation occurs in such a way that by time $r/v$, an anyon broadcasts messages to all sites in an $\infty$-ball of radius $r$ centered on its location (c.f. fig.~\ref{fig:inspir_schem}).
		\item The value of a message is updated to equal the number of time steps it has been propagating. 
		\item If two messages of the same type overlap, the message with smaller value overwrites the message with larger value. 
		\item An operator $\mcf$ is applied which moves each anyon by one lattice site in the direction ($\pm\uvx$ or $\pm\uvy$) of the smallest-valued message it receives.
	\end{enumerate}
	The first three steps constitute the automaton rule $\mca$, and the pair $(\mca,\mcf)$ together define a time-delayed version of the ``move towards nearest anyon'' rule discussed above.

	We will see in sec.~\ref{sec:offline} that in a system of linear size $L$, we can take messages to disappear after propagating for time $O(\log L)$ without compromising the existence of a threshold (at the expense of getting a slower, albeit still superpolynomia, suppression of the sub-threshold logical error rate). With this rule, the largest value a message can take is accordingly $O(\log L)$, and the number of control bits our decoder uses per site will then be shown to scale as $c\log_2 \log_2 L$, where the constant prefactor $c$ will be shown to be no larger than $10$. The divergence as $L\ra\infty$ is so weak that we will not be concerned with addressing it.\footnote{If $L$ is upper bounded by the number of atoms in the observable universe, $\log_2 \log_2 L < 10$.} 
	
	In order to fully parallelize error correction, decoding needs to be performed asynchronously, with each site executing operations independently of a global control clock. We will address this issue by extending our decoder to operate in continuous time, where $\mca,\mcf$ are performed by an appropriate Lindbladian.  The most naive extension---where each site performs the same updates as in the discrete-time setting---is not a priori guaranteed to work, as the behavior of synchronous cellular automata are generically completely altered in this setting. Nevertheless, we develop a desynchronization scheme which provably allows any discrete-time decoder to be extended to a continuous time Lindbladian without changing the value of its threshold or its expected decoding time. This construction is based on the  ``marching soldiers'' rule of Berman and Simon \cite{berman1988investigations,gacs_synch_slides} (closely related to the restricted solid-on-solid model of surface growth \cite{kim1989growth}), and operates by using a continuous time system to simulate the computation performed by a discrete time one (assuming the absence of errors in the classical control bits). The details are provided in sec.~\ref{sec:lind}.

	\sss{Decoding performance}  \label{ss:offline_intro}
	
	We now summarize the performance of our decoders. Given a noise channel $\mce$, the two diagnostics we are most interested in are:
	\begin{enumerate}
		\item The {\it logical failure rate} $\ploge$, defined as the probability for the decoder to output a logical state other than $\rlog$ when input a noisy state $\mce(\rlog)$, and 
		\item The {\it expected decoding time} $\tdece$, which is defined to capture the average-case time complexity of the decoding problem, and equals the expected amount of time the decoder must run on input $\mce(\rlog)$ before returning a logical state. 
	\end{enumerate}
	
	For simplicity, we restrict our attention throughout to incoherent stochastic noise channels of the form
	\be \mce(\r) = \sum_\bfs p^{(\mce)}(\bfs) P_\bfs \r P_\bfs^\da,\ee 
	where $P_\bfs$ is a tensor product of Pauli operators labeled by the string $\bfs$, and $p^{(\mce)}(\bfs)$ is a probability distribution. Our only requirement on $p^{(\mce)}(\bfs)$ is that the marginal probability for $\bfs$ to be nonzero on all sites in a subset $A$ be upper bounded by $ p^{|A|}$ for some constant $p$ (regardless of the geometric locality of $A$); such noise channels will be referred to as being ``$p$-bounded''. We say that the decoder has a threshold at $p_c$ under $\mce$ if $\ploge$ vanishes in the thermodynamic limit for all $p<p_c$.  
	
	The restriction to this class of stochastic noise models is common in fault tolerance proofs (c.f. Refs.~\cite{balasubramanian2024local,gottesman2024surviving,gottesman2009introductionquantumerrorcorrection,knill1998resilient}). However, this type of noise does not encompass (rather benign) types of coherent errors, e.g. a small unitary rotation performed on all qubits. The techniques used to treat non-Markovian noise in \cite{aliferis2005quantum,gottesman2024surviving} can be used to extend our analysis to general types of coherent noise, but the details are somewhat involved and will be omitted. 

	\begin{figure}
		\centering 
		\includegraphics[width=.8\tw]{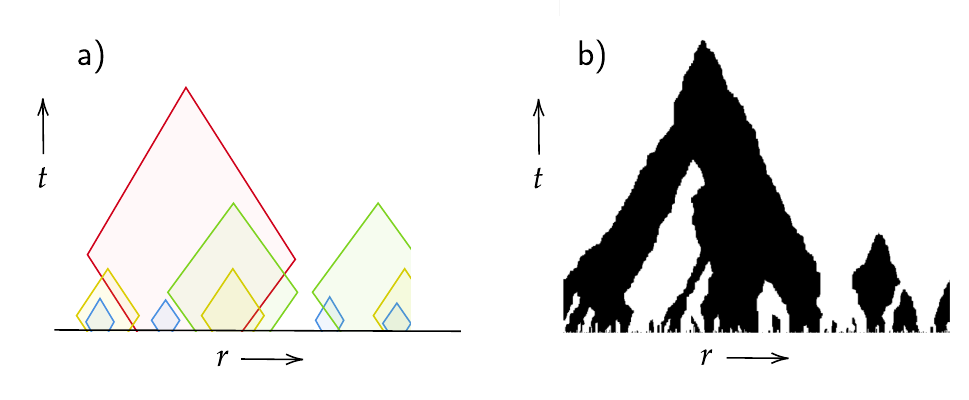}
		\caption{\label{fig:erosion_demo} Clustering and error correction under the message-passing decoder, illustrated for the case of the 1D repetition code. {\bf {\sf a)}} A spacetime schematic illustrating how the decoder corrects errors. The noise in the input state can be recursively organized into clusters of different sizes. The smallest clusters are indicated by the locations where the blue regions intersect the $t=0$ axis. The decoder is guaranteed to eliminate these clusters before they have a chance to merge with one another, and the spacetime support of anyons contained in these small clusters is restricted to the interiors of the blue diamonds. Disregarding these small clusters, the remaining part of the noise is broken up into larger clusters (yellow regions). The decoder similarly eliminates anyons in these regions before they have a chance to merge. This clustering process is continued up to the scale of the largest clusters in the system (here, the red region), with $\ploge,\tdece$ determined by how these large clusters are eliminated. {\bf {\sf b)}} elimination of clusters in action. A visualization of the decoding dynamics for a system of size $L=512$, initialized on a random bit string. Black and white areas denoting regions where the bits take values of $0$ and $1$, respectively. }
	\end{figure}
	
	A heuristic understanding for how our decoder performs error correction can be obtained as follows. Consider for simplicity the case where $\mce$ implements i.i.d bit-flip noise of strength $p$ on each link,	
	and examine a pair of anyons $(a_1,a_2)$ created by $\mce$ which are separated by a distance $\De$. If $\De$ is smaller than the average distance $d_{\sf avg}$ between anyons, then $a_1$, $a_2$ are likely to be closer to one another than to any other anyon; if this is the case, they will exchange messages and annihilate each other in time $O(\De)$. 
	
	Consider on the other hand a case where $\De > d_{\sf avg}$, so that some smaller anyon pairs are initially present between $a_1$ and $a_2$. The messages sent by $a_1$ will then not be received by $a_2$ (and vice versa) until the intervening smaller pairs have been eliminated. If these small pairs are eliminated in a time linear in their separation, then the messages from $a_1$ will reach $a_2$ in a time linear in $\De$, and $a_1,a_2$ will subsequently annihilate one another by a time which is again $O(\De)$.\footnote{While $a_1,a_2$ may fuse with anyons in smaller pairs before communication between them is established, these fusion processes essentially amount to moving the initial positions of $a_1,a_2$ by amounts of order $d_{\sf avg}$. This does not affect the scaling of the annihilation time with $\De$.}
	Very roughly, we can imagine self-consistently extending this analysis to all scales: if pairs of size $\ell<\De$ annihilate in a time linear in $\ell$, then so too do pairs of size $\De$, implying that in fact all pairs are annihilated in a time linear in their size. This logic requires pairs to be relatively well-separated from one another, with the value of $p$ below which this holds determining the threshold error strength. 
	
	We now estimate the scaling of $\ploge, \tdece$ with $p$ and $L$ using this logic (not putting any cutoff on the largest message size for simplicity). At small enough $p$, the probability for a pair $(a_1,a_2)$ to be separated by $\De$ is $\sim p^\De$.
	Assuming the error correcting dynamics proceeds as above, a logical error will only occur if $\De \sim L$, giving a logical error rate exponentially small in $L$. To estimate $\tdece$, note that for a typical noise realization, at least one pair of size $\De$ will be present in the system provided $p^\De L^D \gtrsim 1$. The largest pair in the system will thus with high probability have size $\De_{\sf max} \sim \log (L) / \log(1/p)$ and will be well-isolated from pairs of comparable size. The arguments above then suggest $\tdec \sim \De_{\sf max} \sim \log(L)$. Summarizing, this intuition suggests the scalings 
	\be \label{expected_scaling} \plog = (p/p_c)^{\ct(L)},\qq  \tdec = \ct(\log L).\ee 
	
	Sections~\ref{sec:erosion} and \ref{sec:offline} make this intuition mathematically rigorous by proving the following result: 
	
	\ms\begin{theorem}[Threshold for offline decoding, informal]\label{thm:1d_offline}
		Consider the message-passing decoder operating on any topological code defined on a $D$-dimensional square lattice of linear size $L$, with either periodic or open boundary conditions, and in either discrete or continuous time. If we allow the decoder to store at least $2D\log_2(L)$ control bits at each site, then there exist positive $O(1)$ constants $p_c,\b,\eta$ such that as long as $p<p_c$, then under any $p$-bounded stochastic noise model $\mce$ we have 
		\be \label{intro_scaling} \ploge = (p/p_c)^{\O(L^\b)},\qq \tdece = O((\log L)^{\eta}). \ee 
		If we require that the decoder store only $O(\log \log L)$ control bits per site, the bound on the logical error rate is loosened to $\ploge = (p/p_c)^{\O ((\log L)^\g)}$ for a positive $O(1)$ constant $\g>1$, while the bound on $\tdece $ remains the same. 
	\end{theorem}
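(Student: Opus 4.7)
The plan is to combine a hierarchical decomposition of the noise configuration into clusters at varying scales with an \emph{erosion lemma} showing that the message-passing decoder disposes of a cluster of characteristic size $R$ within time $O(R)$, provided the cluster is well-isolated from clusters of comparable or larger scale. This is exactly the picture sketched around equation \eqref{expected_scaling} and Fig.~\ref{fig:erosion_demo}, and it mirrors the strategy of the G\'acs--Harrington tradition of local CA decoder analyses.

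First I would fix a base scale $\alpha>1$ and separation constant $\kappa>1$, and recursively define level-$k$ clusters of errors: a level-$0$ cluster is a single errored site, and a level-$(k+1)$ cluster is a maximal collection of level-$k$ clusters any two of which lie within $\infty$-distance $\kappa\alpha^{k+1}$. This yields a unique decomposition of any error configuration in which distinct level-$k$ clusters are separated by at least $\kappa\alpha^k$ and each level-$k$ cluster has diameter $\leq C\alpha^k$ for some geometric constant $C$. A Peierls-style union bound over the combinatorial configurations that produce a level-$k$ cluster, together with $p$-boundedness of $\mce$, then gives
\be \label{plan:cluster-prob} \Pr[\text{a level-}k\text{ cluster touches site }\bfr] \leq (p/p_c)^{2^k} \ee
for an appropriate threshold $p_c = p_c(\alpha,\kappa,D)$.

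Next I would prove, by induction on $k$, the \emph{erosion lemma}: every anyon originating inside a level-$k$ cluster is annihilated by time $C'\alpha^k$ without ever leaving the $C''\alpha^k$-neighborhood of that cluster. This is the main technical obstacle. The inductive hypothesis guarantees that by the time dynamics at scale $\alpha^k$ matters, smaller clusters have already been cleared, so inside the $\kappa\alpha^k$-ball around a level-$k$ cluster the decoder sees an effectively isolated problem; the ``move toward the smallest-valued message'' rule then forces each anyon pair within the cluster to annihilate at speed $O(1)$, up to an additive delay from the finite message propagation speed $v$. The delicate step is verifying that cross-cluster messages --- those emitted by distant clusters that are also being processed, or by anyons that have already been annihilated --- cannot redirect a surviving anyon across the separation gap; choosing $\kappa$ large enough in terms of $v$ ensures that any such message, having propagated a distance $\geq \kappa\alpha^k$, carries a time-value strictly larger than any intra-cluster message, so the feedback rule ignores it.

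Combining these inputs, a logical failure requires some cluster to have diameter $\Omega(L)$, which via the recursion needs level $k \gtrsim \log_\alpha L$ and hence at least $2^k \gtrsim L^\beta$ errors with $\beta = \log_\alpha 2$; a union bound over $L^D$ sites together with \eqref{plan:cluster-prob} then gives $\ploge = (p/p_c)^{\Omega(L^\beta)}$. For the expected runtime, with high probability no level above $k_{\max} \sim \log_\alpha(\log L/\log(1/p))$ is populated, so summing the erosion times over levels yields $\tdece = O(\alpha^{k_{\max}}) = O(\log L)$ and $\eta = 1$. When control-bit resources are cut to $O(\log\log L)$, messages must be truncated at range $R = c'\log L$ and the erosion lemma applies only up to level $k_* \sim \log_\alpha \log L$; a logical failure sourced from levels exceeding $k_*$ then needs at least $2^{k_*} \sim (\log L)^\gamma$ errors with $\gamma = \log_\alpha 2$, degrading $\ploge$ to $(p/p_c)^{\Omega((\log L)^\gamma)}$, while the runtime bound is unaffected since the dominant contribution already came from clusters of size $O(\log L)$. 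The continuous-time statement is finally inherited from the discrete-time one via the desynchronization construction of Sec.~\ref{sec:lind}, which simulates the synchronous dynamics with $O(1)$ overhead in both time and failure probability.
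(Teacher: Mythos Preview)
Your overall strategy---hierarchical clustering plus an inductive erosion lemma, combined via a union bound---is the same as the paper's, and the computation of $\beta=\log_\alpha 2$ is correct. There is, however, a genuine error in your derivation of $\eta$, and your erosion sketch elides a point the paper treats with care.

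\textbf{The $\eta=1$ claim is wrong.} You write $k_{\max}\sim\log_\alpha(\log L)$, but the level at which clusters become rare is fixed by the probability bound $(p/p_c)^{2^k}$, not by the geometric scale $\alpha^k$: the union bound over $L^D$ sites forces $2^{k_{\max}}\sim\log L$, i.e.\ $k_{\max}\sim\log_2\log L$. The erosion time of the largest populated cluster is then $\alpha^{k_{\max}}=(\log L)^{\log_2\alpha}$, giving $\eta=\log_2\alpha=1/\beta>1$ for any $\alpha>2$. The paper's constraints on the clustering parameters (roughly $b_0>w_0$ together with the erosion analysis) force $\alpha$ well above $2$, so this method cannot yield $\eta=1$. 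More importantly, Sec.~\ref{ss:general_comments} constructs explicit $p$-bounded noise models (``Gerrymandering'') for which $\tdece=\Omega((\log L)^{1/\log_6 5})$, so $\eta=1$ is provably false in the generality of the theorem statement; the paper only conjectures $\eta=1$ for noise with \emph{local} correlations.

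\textbf{The erosion step needs more than the nearest-message rule.} Your claim that ``the move-toward-smallest-message rule forces each anyon pair to annihilate at speed $O(1)$'' is not established by the rule alone: in $D>1$ the paper introduces modified feedback (Def.~\ref{def:modified_feedback}) adding a uniform $-\uvx$ drift precisely to make a monotone Lyapunov argument go through. Separately, your treatment of cross-cluster messages only covers messages from \emph{distant} clusters; the harder case is stale messages left behind by \emph{nearby, already-annihilated} smaller clusters, which can initially push $k$-cluster anyons outward. The paper handles this via the $t_{\sf rec}/t_{\sf rev}$ analysis of Lemma~\ref{lem:const_slowdown}, which requires $v>2$ (not merely $v>1$) so that fresh intra-cluster messages eventually overtake the stale ones.
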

	We do not attempt to give tight bounds on $p_c$ or the allowable values of $\b$ and $\eta$, and content ourselves with the (likely extremely conservative) bounds $p_c > 1/29^D,\b > \log_{13}(2) \approx 0.27, \, \eta < \log_2(13) \approx 3.7$. Without additional assumptions on $\mce$ it is in fact impossible to improve the bounds on $\b,\eta$ to the values of $\b=\eta=1$ suggested by  \eqref{expected_scaling}, as in sec.~\ref{ss:general_comments} we will construct explicit examples of $p$-bounded noise models with $\b = \log_6(5) <1,\, \eta = \log_5(6) >1$. 
	These ``Gerrymandered'' noise models create fractal-like patterns of anyons and involve highly non-local correlations in space, and we conjecture that $\eta=1$ for noise with short-range spatial correlations. 
	
	We prove Theorem~\ref{thm:1d_offline} by formalizing the multi-scale approach schematically described above, showing that sufficiently nearby anyons rapidly annihilate, that anyons at larger scales annihilate after the smaller pairs have disappeared, and so on. The strategy of the proof is standard and consists of two parts, the logic of which are illustrated in fig.~\ref{fig:erosion_demo}. The first part proves a ``linear erosion'' property \cite{gacs2024probabilistic,rakovszky2024defining,gacs_eroder_slides}: all anyons in a sufficiently well-isolated cluster of linear size $\De$ annihilate against one another in a time proportional to $\De$, even when they (inevitably) receive signals from anyons outside the cluster. The second part uses a variant of a general renormalization-type argument originally due to Gacs \cite{gacs2001reliable,ccapuni2021reliable}---iterations of which have proven useful in many threshold proofs, see e.g. Refs.~\cite{bravyi2011analytic,Harrington2004,duclos2013fault,kubica2019cellular,paletta2025high}---which shows that the anyons created by the noise can be grouped into well-isolated clusters on a hierarchy of increasing length scales, with large clusters rapidly becoming rare provided $p$ is sufficiently small. The theorem then follows from a standard percolation-type argument, with the erosion property showing that different clusters never merge with one another under the decoding dynamics, despite surviving for long enough to exchange messages with one another. $\ploge$ is then fixed by the probability of the noise creating a system-spanning cluster, and $\tdece$ by the typical size of the largest cluster.

	In sec.~\ref{sec:numerics} we use Monte Carlo simulations to quantitatively address the performance of our decoder for surface codes under i.i.d Pauli noise.  
	Our simplest decoding scheme yields a threshold consistent with $p_c = 1/2$ in 1D (the repetition code) and $p_c \approx 7.3\%$ in 2D, and produces scalings consistent with $\ploge  = \exp(\ct(L))$, $\tdece = \ct(\log L)$ for $p$ not too much smaller than $p_c$, consistent with \eqref{expected_scaling} (for very small $p$, the aforementioned fractal-like noise patterns may dominate the scaling of $\ploge$). We believe that any local decoder must have $\tdece = \O(\log L)$ (see sec.~\ref{ss:general_comments}); if this is true, the scaling of $\tdece$ is optimal. Beyond increasing the numerical value of $p_c$ by a few (no more than 3 \cite{Dennis_2002}) percent, the only improvement would be to reduce the number of control bits per site to $O(1)$ instead of $O(\log \log L)$. While this is an academic point, in app.~\ref{app:constant_bits} we nevertheless give a construction using only $O(1)$ bits per site that we believe (but do not prove) retains a threshold.

	\ss{Comparison with prior work} 
	
	We now briefly compare our approach with a few other decoders studied in the literature. 
	
	The decoders most similar to ours---which served as direct inspiration for the present work---are the ``field-based'' CA surface code decoders of Refs.~\cite{herold2015cellular,herold2017cellular}. These decoders transmit information about syndrome locations in a physically natural fashion, using a scalar field $\phi$ updated according to a lattice discretization of the Poisson equation 
	\be \p_t \phi_\bfr = D\D^2 \phi_\bfr + \frac{1-\s_{\bfr}}2,\ee 
	where $\s_\bfr \in \{\pm1\}$ is the syndrome at site $\bfr$ and $D$ is a diffusion constant. In the limit where $\phi$ equilibrates instantaneously, it mediates an attractive interaction between anyons that decays with distance as $1/r^{D_\phi-2}$, where $D_\phi$ is the dimension of the space that $\phi$ propagates in. Ref.~\cite{herold2015cellular} gave numerical evidence suggesting that a decoder with $D_\phi=3$ has a threshold under offline decoding when used to correct a 2D toric code on the boundary of a 3D bulk.\footnote{The techniques developed in this work can be used to rigorously prove the existence of a threshold in this case.}
	
	For the purposes of this work, the most significant drawback of these decoders is that they do not meet our definition of locality, with the $D_\phi=3$ model requiring a $(\log L)^2$-divergent classical communication speed (in app.~\ref{app:no_pdes}, we argue that all PDE-based models suffer from this problem). Our decoder can thus be viewed as an improvement of this construction which brings down the communication speed to an $O(1)$ constant and reduces the number of control bits used per site from $\o((\log L)^3)$ \cite{herold2015cellular} to $O(\log \log L)$.

	Our decoders also bear some resemblance to 1D classical automata designed to perform majority voting in a translation-invariant, parallelized way. The most salient decoders in this class are the GKL automaton
	\cite{gacs1978one} and Toom's closely related two-line voting rule \cite{toom1995cellular}, variants and extensions of which have also appeared in the physics literature, c.f. Refs.~\cite{guedes2024quantum,lang2018strictly,paletta2025high}.\footnote{With the possible exception of \cite{paletta2025high}, these automata only have thresholds as offline (not online) decoders.} These automata operate according to a similar type of message-passing architecture as employed by the 1D versions of our decoders, with the added feature of having ``anti-messages'' which ensure that {\it all} messages emitted by a domain wall pair are erased in a time linear in the pair's initial  separation (the GKL and two-line voting automata also have the advantage of only requiring $O(1)$ control bits per site). While these rules perform well for the 1D repetition code, adapting them directly to the setting of topological codes in $D>1$ is not particularly straightforward.

	Aside from field-based decoders, the two existing types of cellular automaton decoders for the 2D toric code
	are Harrington's decoder \cite{Harrington2004} and its extensions \cite{duclos2013fault,schotte2022fault} (see also \cite{breuckmann2016local,sang2024mixed}), and the decoder of Ref.~\cite{balasubramanian2024local}, which can be regarded as a quantum extension of Tsirelson's automaton \cite{cirel2006reliable} (these decoders are designed for online decoding, but they can of course be used for offline decoding as well). Both of these decoders can loosely be thought of as eliminating anyons according to a locally-implemented  RG scheme, with the different length scales involved in the RG process hard-coded into the structure of the dynamics. 
	For the purposes of this work, drawbacks of these approaches are that they are only naturally defined on systems of size $n^l$ for integers $n,l$, and are numerically observed to have thresholds at rather low error rates, with the simplest formulation of  Ref.~\cite{balasubramanian2024local} achieving $p_c \lesssim 2\%$ for offline decoding. These features make them---at least in their current form---less practical for experimental implementation and numerical study.\footnote{In 2D, these decoders are also not naturally translation invariant in space or time, although this problem can be fixed formally by adding additional clock registers at each site and initializing the dynamics with an appropriate (fine-tuned) choice of initial conditions for the classical variables.} 

	
	\ss{Outlook}

	While our explicit construction is specialized to the setting of topological codes defined on $\zz^D$, it should be straightforward to generalize the message passing architecture to codes defined on more general types of graphs. 	
	From an implementation perspective, it would also be valuable to quantitatively investigate different types of message passing schemes, and to systematically study how the numerical value of the threshold depends on the choice of architecture and parameters involved (such as the message transmission speed). Relatedly, it could be fruitful to study how decoding can be optimized to take advantage of particular features in experimentally-relevant noise models. One notable example of this is provided by Ref.~\cite{chirame2024stabilizing}, which constructs a very simple local Lindbladian decoder that corrects heralded erasure errors. While these will never be the sole source of errors, there are certainly scenarios in which they dominate, and it is natural to wonder to what extent our decoder can be optimized for such cases.


	From both a practical and theoretical perspective, the most immediate question raised by this work is whether this and related message-passing schemes can operate as real-time decoders.
	As of writing, all known parallelized real-time decoding strategies (with provable thresholds) perform error correction according to a hierarchical RG-inspired approach. One is then naturally led to wonder: is hierarchical organization---either hard-coded into the circuit dynamics, as in Refs.~\cite{Harrington2004,balasubramanian2024local}, or emergent, as in Gacs' automaton \cite{gacs2001reliable}---necessary for performing local quantum error correction? 
	
	This question is addressed in Ref.~\cite{online}, where we rigorously show that the offline decoders presented in this work, as well as the field-based decoders proposed in Ref.~\cite{herold2017cellular}, do not have thresholds for real-time decoding. Fortunately, that work also demonstrates how to modify the message-passing architecture in a way which does guarantee the existence of a threshold, at least given the assumption that the noise acts only on the quantum system, and not on the data manipulated by the classical control infrastructure.

	\section{Preliminaries}\label{sec:generalities}

	In this preparatory section, we fix notation, introduce the general framework within which our decoders are defined, and discusses the types of noise models considered in this work. To streamline the presentation, the discussion in this section will make two simplifications. First, we will focus exclusively on synchronous decoders that operate in discrete time. The extension to asynchronous Lindbladian decoders which operate in continuous time case will be given later in sec.~\ref{sec:lind}. Second, we will explicitly adopt notation appropriate for the $D$-dimensional surface code. The generalization to arbitrary (potentially non-Abelian) topological codes is immediate, and will be deffered to sec.~\ref{ss:nonab}. 
	
		As a final preliminary comment, we note that we will focus solely on correcting logical qubits whose logical operators are string-like, since it is errors associated with these logical sectors whose correction is hardest to accomplish in a local / parallelized way. If $D$ is large enough so that some of the logical operators are supported on submanifolds of dimension $d>1$, these logical sectors can easily be corrected locally using a variant of Toom's rule \cite{kubica2019cellular}, and we will consequently ignore such logical sectors in what follows.

	\ss{Automaton decoders}\label{ss:aut_decoders}

	We begin by fixing notation. Our decoders are defined for codes built from qubits living on the links of a square lattice $\L$, which we will usually take to be $\zz^D_L$ with periodic boundary conditions (all of our results also hold for open boundary conditions with minor modifications; the details are provided in app.~\ref{app:open}). $\L_l$ will denote the set of links in $\L$, and $B_\bfr^{(\sfp)}(\De)$ will denote the ball of radius $\De$ in $\L$ centered at a lattice point $\bfr$ defined with respect to the $\sfp$ norm (when the choice of $\sfp$ is not important, the superscript will be omitted). The stabilizers which define the code space will be denoted by $g_\bfr$, with $\L$ defined such that each lattice point hosts a single stabilizer. A logical state will refer to a density matrix $\r_{\sf log}$ such that $\r_{\sf log}  g_\bfr = g_\bfr \r_{\sf log}  = \r_{\sf log} $ for all $\bfr$. 

	Since our decoders operate by using  autonomous processing and feedback rules, we will find it helpful to formally split the Hilbert space the decoder acts on into quantum and classical parts as $\mch = \mch_{\sf qu} \tp \mch_{\sf cl}$, where $\mch_{\sf qu} = \bot_{\bfl \in \L_l} \mch_{{\sf qu},\bfl}$ and $\mch_{\sf cl} = \bot_{\bfr\in\L} \mch_{{\sf cl},\bfr}$. We will find it helpful to introduce notation where 
	\be \mch_{{\sf cl},\bfr} = {\sf span}\{\k{m_\bfr,\s_\bfr}\},\ee 
	with $\s_\bfr \in \{\pm1\}$ a variable which will store the eigenvalue of $g_\bfr$. We will use boldface font to denote the collection of classical variables at all sites, so that $\bfm = (m_{\bfr_1},m_{\bfr_2},\dots)$ and likewise for $\bfsig$. If an operator $\mco$ either commutes or anti-commutes with every $g_\bfr$, so that $g_\bfr \mco = (-1)^{\s_{\mco,\bfr}} \mco g_{\bfr}$, its syndrome $\bfsig_\mco$ will be defined as the string $\bfsig_\mco = (\s_{\mco,\bfr_1},\s_{\mco,\bfr_2},\dots)$. Finally, we will use the notation $P_\bfs = \bot_{\bfl\in\L_l} X^{s_{1,\bfl}} Z^{s_{2,\bfl}}$ to denote a Pauli operator labeled by the string $\bfs \in \zt\times\zt^{|\L_l|}$.
	
	The discrete-time decoders considered in this work all fall under the framework of the dynamics introduced in the following definition: 
	
	\ms \begin{definition}[synchronous automaton decoder]\label{def:synch_ccqs}
		A {\it synchronous automaton decoder} $\mcd = (\mca,\mcf)$ is a discrete-time dynamics determined by an automaton rule $\mca$ and a set of feedback operators $\mcf$. For an input state $\r$ on $\mch_{\sf qu}$, the decoder first measures the topological charges at each lattice site, producing the initial state $\r_0 = \sum_\bfsig \Pi_\bfsig \r \Pi_\bfsig \tp |\bfm_0,\bfsig\ran\lan\bfm_0,\bfsig|_{\sf cl}$, where $\bfm_0$ is a particular initialization of the control variables and $\Pi_\bfsig = \bot_\bfr (1+\s_\bfr g_\bfr)/2$. At each time step $t$ after this initialization, the following operations in sequence: 
		\begin{enumerate}
			\item The $\bfm$ variables in $\mch_{\sf cl}$ are simultaneously updated according to the automaton rule $\mca$, which uses the values of $\bfm,\bfsig$ to update the values of $\bfm$; we will use notation where $\mca(\bfm,\bfsig)[m_\bfr]$ denotes the value that $m_\bfr$ takes upon applying $\mca$.\footnote{We can in general allow $\mca$ to involve randomness, but for simplicity will stick with notation where $\mca$ is deterministic.} 
			Formally, this is done using a channel $\Phi_\mca$ with Kraus operators
			\be \mck_{\bfm,\bfsig} = \unit_{\sf qu} \tp |\mca(\bfm,\bfsig),\bfsig\ran\lan \bfm,\bfsig|.\ee 
			\item Feedback is applied to $\mch_{\sf qu}$ controlled by $\mch_{\sf cl}$ using a unitary operator $\mcf(\bfm,\bfsig)$ with definite syndrome $\bfsig_\mcf$, and the syndromes are updated accordingly. This is done using a channel $\Phi_\mcf$ with Kraus operators 
			\be \mck_{\bfm,\bfsig} = \mcf(\bfm,\bfsig) \tp |\bfm,\bfsig \bfsig_\mcf\ran\lan \bfm,\bfsig|,\ee 
					where $\bfsig\bfsig_\mcf$ denotes element-wise multiplication. 
		\end{enumerate}
		The state of the system after $t$ steps of this process is $\r_t = (\cp_\mcf \circ \cp_\mca)^t[\r_0]$. 
		
		A {\it local} synchronous automaton decoder is one in which both the automaton rule $\mca$ and the feedback operator $\mcf$ are computed locally. This means that for some fixed $\De$ independent of the system size, 
		\be \label{localcond} \mca(\bfm,\bfsig)[m_\bfr] =  \mca(\bfm|_{B_\bfr(\De)},\bfsig|_{B_\bfr(\De)})[m_\bfr],\qq \mcf(\bfm,\bfsig) = \bigotimes_{\bfr\in\L} \mcf_\bfr(\bfm|_{B_\bfr(\De)},\bfsig|_{B_\bfr(\De)}),\ee 
		where $ \supp(\mcf_\bfr)\subset B_\bfr(\De)$ for each $\bfr$, and the different $\mcf_\bfr$ can be taken to commute without loss of generality.\footnote{If we have different $\mcf_\bfr$ that do not commute, we can decompose $\mcf(\bfm,\bfsig)$ into a brickwork circuit of finite depth $d$, parametrize the depth into this circuit as an additional classical state variable taking on $d$ possible values, and then accordingly update $\mca$ to perform cyclic shifts on this variable. }
		
		Finally, a {\it quantum local} synchronous automaton decoder is one in which the feedback operators $\mcf$ are spatially local, but where the automaton rule $\mca$ needn't be. 
	\end{definition}\ms 
	
		The word ``synchronous'' is used here to refer to the fact that the automaton updates and feedback are applied synchronously, with all sites in the system updating in union at each time step. 
		
	We will be almost exclusively interested in local automaton decoders in this work,
	but at several points will briefly comment on a family of quantum local decoders inspired by \cite{herold2015cellular}.

	This definition shares some similarities with the ``measurement and feedback model'' of Ref.~\cite{balasubramanian2024local}, from which it differs by the addition of the explicit internal state variables $\{m_\bfr\}$ and the automaton rule $\mca$, the absence of an explicit dependence of the feedback operators on $t$, and by not mandating that the syndrome measurement outcomes be erased at each step.

	\ss{Noise models}\label{ss:noise_models}
	
	The input states our decoders will be tasked with decoding are obtained by subjecting logical states to certain types of noise channels. As mentioned in the introduction, we will restrict our attention to $p$-bounded stochastic noise channels, viz. channels which apply an incoherent mixture of Pauli operators to an input state, with the probability to apply any particular Pauli operator $P$ being exponentially suppressed in the size of $P$'s support. This is essentially the most general incoherent noise model one could consider, as no assumption on the spatial correlations between noise events are made. Importantly, this definition does not allow for any coherence between errors, although this assumption can be removed following the techniques of \cite{aliferis2005quantum,gottesman2024surviving}.

	In more detail, the types of stochastic noise channels we consider act as 
	\be\label{incoherent} \mce(\r) = \sum_{\bfs} p^{(\mce)}(\bfs) P_\bfs \r P_\bfs^\da,\ee 
	where 
	$p^{(\mce)}(\bfs)$ is a probability distribution satisfying the following $p$-bounded property: 
	
	\ms \begin{definition}[$p$-bounded distributions]\label{def:pbounded_def}
		A probability distribution $p(R)$ over subsets $R$ of a set $\mcs$ is said to be {\it $p$-bounded} if the marginal probability for a fixed subset $A$ to be contained in $R$ is exponentially small in the size of $A$: 
		\be \sfP_{R \sim p}(A \subset R) \leq p^{|A|}.\ee 
		Similarly, a probability distribution $p(\bfs)$ over Pauli strings will be said to be $p$-bounded if for all $A \subset \L_l$ we have 
		\be \sfP_{\bfs\sim p}(A \subset \supp(P_\bfs)) \leq p^{|A|}.\ee 
		Finally, a {\it $p$-bounded noise channel} is an incoherent channel $\mce$ of the form \eqref{incoherent} such that $p^{(\mce)}(\bfs)$ is  $p$-bounded. 
	\end{definition}\ms

	For CSS codes like the surface code---which is the main example studied in this work---we will always consider decoders $\mcd$ where the classical variables and syndromes can be separated as $\bfm = (\bfm^X,\bfm^Z)$, $\bfsig = (\bfsig^X,\bfsig^Z)$, the automaton rule factored as $\mca = \mca^X \circ \mca^Z$ with $\mca^{X/Z}$ only depending on $\bfm^{X/Z},\bfsig^{Z/X}$, and the feedback operators split as $\mcf(\bfm,\bfsig) = \mcf^X(\bfm^X,\bfsig^Z) \mcf^Z(\bfm^Z,\bfsig^X)$, where $\mcf^{X/Z}$ are built entirely as tensor products of $X$ and $Z$ operators, respectively. We can accordingly split the decoder into $X$ and $Z$ parts as $\mcd = \mcd^X \circ \mcd^Z$, and as discussed in detail in Ref.~\cite{balasubramanian2024local}, it suffices to separately consider the performance of $\mcd^{X/Z}$. While the $X$ and $Z$ errors will in general be correlated with one another, the generality of our $p$-bounded definition means that if we marginalize over e.g. the $Z$ errors, the resulting pattern of $X$ errors remains $p$-bounded. This means that if we construct $\mcd^{X/Z}$ to have thresholds under all $p$-bounded noise channels that apply only $X/Z$-type noise, $\mcd$ will have a threshold under general $p$-bounded noise channels. 
	Using this fact, until we discuss general topological codes in sec.~\ref{ss:nonab} we will focus only on noise channels $\mce$ which apply $p$-bounded $X$ noise, taking 
	\be \label{final_channel} \mce(\r) =  \sum_{\sfN \subset \L_l} p^{(\mce)}(\sfN) X^\sfN \r X^\sfN, \ee 
	where $\sfN$ runs over all subsets of $\L_l$ , $X^\sfN = \bot_{\bfl\in \sfN} X_\bfl$, and $p^{(\mce)}(\sfN)$ is a $p$-bounded distribution over subsets of $\L_l$. 
	To cut down on notation, in what follows we will write $\bfm,\bfsig$ for $\bfm^X$ and $\bfsig^Z$, respectively. 
	
	Since the states considered in \eqref{final_channel} are incoherent mixtures over different noise patterns, we may treat the decoding problem for an input state $\rlog$ as the task of decoding individual states $X^\sfN \r_{\sf log} X^{\sfN}$ sampled from the distribution $p^{(\mce)}(\sfN)$. In what follows we will use the notation $\r \sim \mce,\r_{\sf log}$ to denote a random state that equals $X^\sfN \r_{\sf log} X^{\sfN}$ with probability $p^{(\mce)}(\sfN)$. 
	
	As a final comment, we note that while the $p$-bounded property is often referred to as implying that the noise is ``local'' and has ``decaying spatial correlations'', this is not true in any geometric sense: $\sfP(\bfl,\bfl'\in \sfN)$ needn't factor as $ \sfP(\bfl\in \sfN) \sfP(\bfl'\in \sfN)$, even for arbitrarily large $||\bfl-\bfl'||$. Rather, $p$-boundedness simply implies that the marginal probability of observing a particular pattern $A$ in the noise $\sfN$ is exponentially suppressed in the size of $A$. In the case where connected correlations between points in $\sfN$ decay exponentially in distance, we will say $\mce$ has {\it (spatially) local correlations}.
	
	\begin{figure}
		\centering 
		\includegraphics[width=.9\tw]{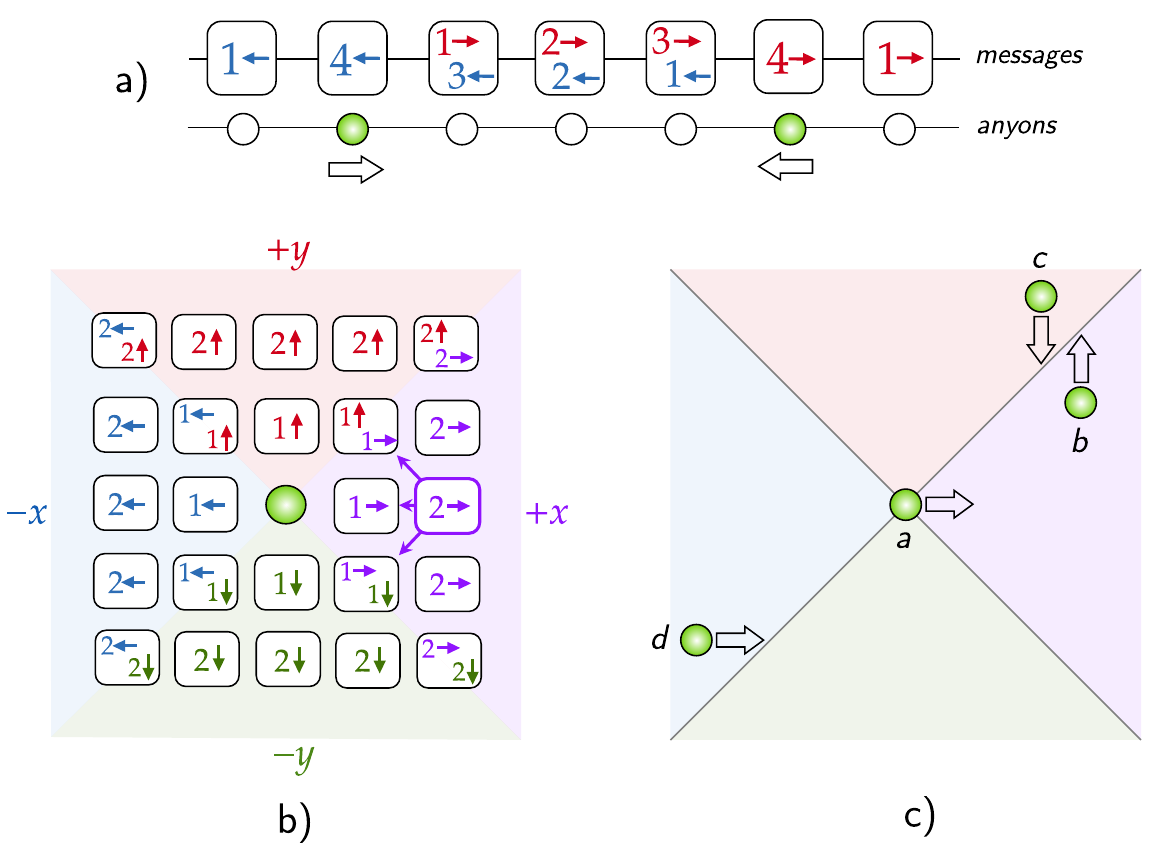}
		\caption{\label{fig:construction} A schematic of the message passing architecture. {\bf {\sf a)}} The message-passing system in 1D. Anyons (green circles) live on a 1D chain, each site of which hosts two message fields $m^\pm$. Once the messages have equilibrated, the value of $m^+_x$ at site $x$ is equal to the distance between site $x$ and the nearest anyon to the left of site $x$; in the figure, a red number $n\ra$ indicates that $m^+=n$ at this site. Similarly, the value of $m^-_x$ is determined by the distance between $x$ and the nearest anyon to the right of $x$; in the figure these are indicated by the blue numbers with $\leftarrow$ arrows. With the message fields shown in the figure, the left anyon will move right, and the right anyon will move left. {\bf {\sf b)}} The message-passing system in 2D. Each site now hosts four message fields $m^{\pm a}$, $a = x,y$, with the figure indicating the fields created by an anyon in the indicated location. $m^{\pm a}$ fields indicate the distance (in the $\infty$ norm) from an anyon in the $\mp \uva$ direction, and propagate along the $\pm a$ section of the anyon's lightfront; in the figure, these lightfront segments are indicated by the shaded regions. The $m^{\pm a}$ field at a site $\bfr$ is updated by consulting the three sites at a infinity-norm distance of 1 in the $\mp\uva$ direction from $\bfr$. As an example, the site outlined in purple is updated according to the three sites indicated by the arrows. 			
			{\bf {\sf c)}} Movement of anyons in 2D. In the limit where messages are transmitted instantly, each anyon moves in the direction of its nearest neighbor (with distance calculated using the $\infty$ norm). If an anyon's nearest neighbor is located in the $\pm a$ section of that anyon's instantaneous lightfront, the anyon moves along $\pm \uva$. Thus the anyon $a$ in the figure moves along $+\uvx$ towards anyon $b$, anyon $b$ moves along $+\uvy$ towards anyon $c$, and so on. When the message transmission speed is finite, each anyon moves instead towards the nearest anyon in its past lightfront.   }
	\end{figure}
	
	\section{Message passing decoders} \label{sec:msg_passing}
	
	The automaton decoders studied in this work will be based on a particular type of message-passing scheme, where information about the spatial locations of anyons is transmitted using the classical variables $\bfm$. In the present subsection we describe the message-passing scheme for synchronous surface code decoders operating on systems with periodic boundary conditions, which will be the setting considered in the majority of the main text. Extensions to general topological codes are given in sec.~\ref{ss:nonab}, the Lindbladian extensions of these decoders are presented in sec.~\ref{sec:lind}, and open boundary conditions are treated in app.~\ref{app:open}.

	We begin in sec.~\ref{ss:1d_setup} by explaining the construction in one dimension with periodic boundary conditions, where it performs error correction on the 1D repetition code. The generalization to higher dimensional systems with periodic boundary conditions is given subsequently in sec.~\ref{ss:general_d_setup}.
	Finally, sec.~\ref{ss:power_law_construction} shows how our construction can be used to produce improved versions of the decoders in Ref.~\cite{herold2015cellular}. 

	\ss{Message passing in 1D} \label{ss:1d_setup} 
	
	We will first explain the construction in one dimension, where the notation is a bit simpler. 
	
	We will use notation where at each syndrome location $r$ (getting rid of the boldface notation in 1D), the classical variables $\bfm$ accessed by the decoder are a collection of two variables $m_r = \{m^+_r,m^-_r\}$, each taking values in the set 
	\be \label{mvals} m^\pm_r \in \{0,1,\dots,m_{\sf max}\}.\ee 
	The value of $m^+_r$ is designed to convey the distance between site $r$ and the nearest anyon to the left of $r$, while that of $m^-_r$ is designed to convey the distance of $r$ from the nearest anyon to the right of $r$; a value of $m^\pm_r = 0$ will be taken to mean that there is {\it no} anyon to the left (right) of $r$. We will let $m_{\sf max}$ remain unspecified for now, although we will see later in sec.~\ref{sec:offline} that having a threshold with super-polynomial suppression of $\ploge$ below threshold only requires $m_{\sf max} = \O(\log L)$. Since  $\lceil\log_2(m_{\sf max})\rceil+1$ bits suffice to store the values of the $m^\pm_r$ fields, the number of classical bits the decoder needs to store at each site in order to produce a threshold scales only as 
	\be {\rm dim}(\mch_{{\sf cl},\bfr}) = \ct(\log \log L),\ee which increases extremely slowly with $L$. 
	
	We first give a schematic overview of how the decoding rules work. The $m^\pm_r$ messages are emitted by anyons, and propagate ballistically along the $\pm$ direction. The value of a message increases by 1 at each time step, so that the value of $m^\pm_r$ is equal to the amount of time the message at this site has been propagating; this allows $m^\pm_r$ to serve as a (time delayed) proxy for the distance between $r$ and the nearest anyon to the left (right) of $r$. The propagation speed of the messages will be denoted by $v \in \zz^{>0}$, with $v$ message updating steps happening in between each application of the feedback operators. 
	
	The feedback rules of the 1D decoder use the $m^\pm_r$ fields to attempt to move each anyon by one lattice site in the direction of the anyon nearest to it. Consider an anyon at site $r$. If $m^+_r = m^-_r = 0$, the anyon has not received  any messages, and no feedback is performed. If only $m^\pm_r$ is nonzero---so that a signal is received  only from an anyon to the left (right) of the one at $r$---feedback is performed to move the anyon along the $\mp $ direction. Finally, if both $m^\pm_r \neq 0$, the anyon is moved along in the $\pm$ direction if $m^\mp_r < m^\pm_r$, since then the message coming from the $\pm$ direction was emitted earlier (hopefully, from a closer anyon) than the message coming from the $\mp$ direction. No anyon movement occurs if $m^+_r = m^-_r$. 
	
	The above automaton and feedback rules are described formally in the following definition: 
	
	\ms \begin{definition}[message passing automaton, 1D]\label{def:1d_msg_passing}
		Let $v$ be a positive integer. The {\it 1D message passing automaton decoder with speed $v$}  is the synchronous automaton decoder with automaton rule $\mca^{(v,1)}$ and feedback operators $\mcf^{(1)}_r$ defined as follows. 
		
		One step of the automaton update $\mca$ sources $m^\pm_r$ messages to the right (left) of anyons, propagates them along the $\pm$ direction, and increases their values so that $m^\pm_r$ is equal to the time at which the message at site $r$ has been propagating, with newer messages rewriting older ones when appropriate. This is described by the rule
		\be \mca^{(1)}(\bfm,\bfsig)[m^\pm_r] = \frac{1-\s_{r\mp1}}2 + \frac{1+\s_{r\mp1}}2
		(1-\d_{m^\pm_{r\mp 1},0}) (m^\pm_{r\mp1}+1),\ee 
		and $\mca^{(v,1)}$ is defined to execute $v$ repetitions of this rule: 
		\be \mca^{(v,1)} =( \mca^{(1)})^v.\ee 
		
		Let $f_r = {\sf sgn}(-m^+_r + m^-_r) \in \{\pm1,0\}$ be the {\it force} at site $r$ (with ${\sf sgn}(0) = 0$). 
		If $f_r = \pm1$, the feedback operators move an anyon on site $r$ the right / left, while they act trivially if $f_r = 0$: 
		\be \mcf^{(1)}_r(\bfm,\bfsig) =X^{\s_rf_r \vee \s_{r+1}(- f_{r+1})}_{\lan r,r+1\ran} \ee 
		where for two variables $x,y \in \{\pm1,0\}$ we define the OR function as 
		\be x\vee y =  \frac{x(x+1)}2 + \frac{y(y+1)}2 - \frac{xy(x+1)(y+1)}4 = \begin{dcases} 1 & x = 1 \,\, {\rm or} \,\, y = 1 \\ 
			0 & {\rm else} \end{dcases}\ee 
	\end{definition}\ms 
	We will let the message passing speed $v$ be arbitrary for now, although in sec.~\ref{sec:offline} we will see that our threshold proof necessitates $v > 2$. 
	
	\begin{figure} 
		\centering
		\includegraphics[width=.8\tw]{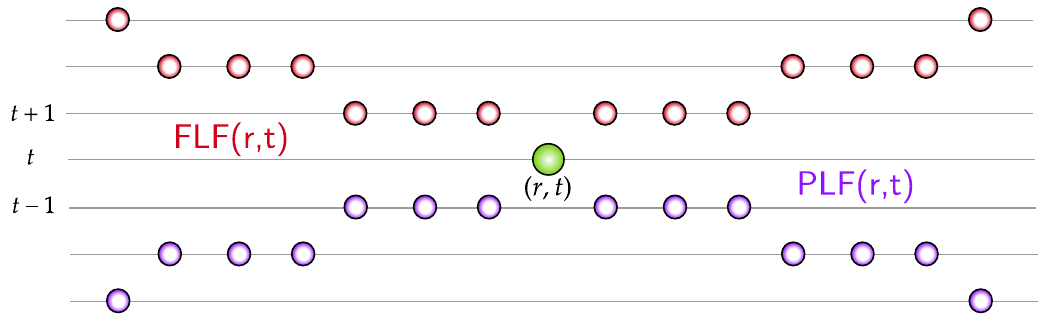}
		\caption{\label{fig:lightfronts1d} lightfronts in one dimension for a message propagation speed of $v=3$; the vertical direction is time (flowing up) and the horizontal direction is space. The red points indicate (part of) $\flc(r,t)$ with $(r,t)$ the location marked with the green point; the purple points indicate $\plc(r,t)$.
		} 
	\end{figure}

	The feedback rules are designed to move an anyon $a$ in the direction of the anyon closest to $a$, but this interaction is necessarily delayed by the finite speed $v$ of the message transmission. The spatial positions of neighboring anyons that are encoded in $m_r^\pm$ at time $t$ are determined not by the locations of anyons at time $t$, but by the locations of anyons at spacetime points that can send messages to the spacetime point $(r,t)$. To make this precise, we define the notion of a lightfront:\footnote{If we were being more faithful to relativistic terminology, we would call this a ``null cone''.} 
	
	\ms	\begin{definition}[lightfronts, 1D]\label{def:lightfronts1d}
		Consider a spacetime coordinate $(r,t)$. The {\it future lightfront} $\flc(r,t)$ is the set of spacetime points reached by messages emitted from a source at spacetime location $r,t$: 
		\be \flc(r,t) = \{ (r',t') \, : \, t'>t, \, v(t'-t-1) < |r-r'| \leq v(t'-t) \}.\ee 
		The {\it past lightfront} $\plc(r,t)$ is the set of points whose future lightfronts contain $(r,t)$: 
		\be \plc(r,t) = \{ (r',t') \, : \, (r,t) \in \flc(r',t')\}.\ee 
		See fig.~\ref{fig:lightfronts1d} for an illustration. 
	\end{definition}\ms 
	Note that due to the discrete time nature of the decoding dynamics, our lightfronts have a staircase-like structure; again see fig.~\ref{fig:lightfronts1d}. 
	
	To describe the anyon dynamics, the following definitions will be useful: 
	\ms \begin{definition}[anyon indicator function; worldlines]\label{def:anyon_indicator}
		The {\it anyon indicator function} $\sfA$ is defined to take on the value 1 on the spacetime points where anyons are present, and the value 0 on other points: 
		\be \label{adef} \sfA(r,t) = \begin{dcases} 
			1 & \s_r = -1 \, \text{at time $t$} \\ 
			0 & \text{else} \end{dcases}.\ee 
		If $\mcs$ is a set of spacetime points, we define $\sfA(\mcs) = \{ (r,t) \in \mcs \, : \, \sfA(r,t) = 1\}$. 
		
		Suppose an anyon $a$ is present at a spacetime point $(r,t)$. The {\it worldline} $\sfW(a)$ of $a$, also written $\sfW(r,t)$, is defined to be the set of spacetime points occupied by this anyon. Formally, we have $(r,t)\in \sfW(a)$, and if $(r',t') \in \sfW(a)$, then either $(r'+1,t'-1)\in \sfW(a)$ or $(r'-1,t'-1)\in \sfW(a)$.\footnote{Ambiguities occurring when two anyons come within a single lattice site of one another will not concern us and can be resolved arbitrarily.} 
	\end{definition}\ms 
	
	The decoding rules of Def.~\ref{def:1d_msg_passing} ensure that an anyon $a$ at spacetime coordinate $(r,t)$ moves in the spatial direction of the point spatially nearest to $r$ in $\sfA(\plc(r,t))$. 	
	A very important property of these rules is that long as $v>1$, the worldline of an anyon never intersects its past lightfront, $\sfW(r,t) \cap \plf(r,t) = \emp$, and thus an anyon can never send signals to itself (anyon motion is slower than message transmission when $v>1$, and hence an anyon can never catch up to the signals it emits). This property is desirable because anyon self-interaction generically prevents anyons from being paired correctly. Our choice of $\mca$ thus eliminates the self-interaction problem suffered by the field-based decoders of Refs.~\cite{herold2015cellular,herold2017cellular}, which were shown to require (in the present language) a value of $v$ which diverged with $L$. app.~\ref{app:no_pdes} shows that this self-interaction problem is in fact generic to models where the classical control fields are updated according to a (discretization of a) local PDE; the present message-passing architecture is therefore crucial for producing a decoder where the speeds of anyon motion and classical communication do not differ by a thermodynamically large amount.

	\ss{Message passing in general dimensions}\label{ss:general_d_setup}
	
	We now generalize the construction of the previous subsection to $D$ dimensions. The basic idea is the same as in one dimension, except now we introduce one message field for each of the $2D$ directions $\pm \uva$, $a = 1,\dots,D$. We write $m_\bfr = \{ m^{\pm a}_r\}$, where as before, each message field is valued in $\{0,1,\dots,m_{\sf max}\}$.
	
	The automaton dynamics is defined so that $m^{\pm a}_\bfr$ propagates along the $\pm \uva$ direction, with the value of $m^{\pm a}_\bfr$ designed to convey the distance between $\bfr$ and the nearest anyon in the direction $-\uva$ from $\bfr$. In a bit more detail, the $m^{\pm a}$ messages emitted from an anyon at $\bfr$ are designed to travel outward and convey information about the anyon's position to lattice sites in the cone-like region 
	\be \sfC^{\pm a}_\bfr= \{ \bfr' \, : \, |(r')^a - r^a| \geq |(r')^b - r^b| \, \,\forall \, \, b, \, (r')^a > \pm r^a\},\ee 
	with the value of a message equal as before to the time over which that message has been propagating (see panel $\sfb$ of fig.~\ref{fig:construction}). $\mca$ is designed such that the messages propagate over a lightfront dictated by the $\infty$ norm, with $m^{\pm a}_\bfr$ indicating the value of $||\bfr - \bfr'||_\infty$, where $\bfr'$ is the closest anyon to $\bfr$  in $\sfC^{\pm a}_\bfr$ (as in 1D, a value of $m^{\pm a}_\bfr = 0$ indicates that no such anyon is present). These rules for $D=2$ are illustrated in panels b and c of fig.~\ref{fig:construction}, and are made precise in the following definition: 
	\ms \begin{definition}[message passing automaton, general dimensions]\label{def:Dd_msg_passing}
		The {\it $D$-dimensional message passing automaton decoder with speed $v$} is the synchronous automaton decoder with automaton rule $\mca^{(v,D)}$ and feedback operators $\mcf^{(D)}_\bfr$ defined as follows. 
		
		Let 	
		\be C^{(a)}_\bfr = \{ \bfr' \in B^{(\infty)}_\bfr(1) \, : \, (r')^a = 0\}\ee 
		denote the unit $\infty$-norm ball in the $D-1$ dimensions normal to $\uva$ centered at $\bfr$, and define $\sfA_\ct(S)$ for a set $S \subset \L$ as 
		\be \sfA_\ct(S) = \begin{dcases} 1 & \exists \,\bfr \in S \, : \, \s_\bfr = -1 \\ 
			0 & \text{else}\end{dcases}.\ee  
		The automaton rule sources $m^{\pm a}_\bfr$ messages on the sites contained in $C^{(a)}_{\bfr \pm \uva}$, propagates them along the $\pm \uva$ direction, and increases their values so that $m^{\pm a}_\bfr$ is equal to the time at which the message at site $\bfr$ has been propagating, overwriting older messages with newer ones when appropriate. This is done using the rule 
		\be \mca^{(D)}(\bfm,\bfsig)[m^{\pm a}_\bfr] = \begin{dcases} 
			1 & \sfA_\ct(C^{(a)}_{\bfr\mp\uva}) = 1
			\\ 0 & \(m^{\pm a}_{\bfr'} = 0 \, \, \forall \, \, \bfr' \in C^{(a)}_{\bfr\mp\uva}\) \wedge \(\sfA_\ct(C^{(a)}_{\bfr\mp\uva}) = 0\) \\ 
			\min_{\bfr' \in C^{(a)}_{\bfr\mp\uva}}\{m^{\pm a}_{\bfr'}\} + 1 & \text{else} 
		\end{dcases}
		,\ee 
		with $\mca^{(v,D)} = (\mca^{(D)})^v$. 
		
		To write down the feedback operators, define the quantities
		\bea \label{blahdef}M_\bfr &= \{ (s,a) \in \{\pm1\}\times \{1,\dots,D\} \, : \, m^{sa}_\bfr > 0\}, \\ (s_\bfr,a_\bfr) & =\begin{dcases} {\sf argmin}'_{(s,a) \in M_\bfr} \{m^{sa}_\bfr\} & M_\bfr \neq \emp \\ 
			(0,0) & \text{else} \end{dcases}, \\ 
		f_\bfr^a & =\d_{a,a_\bfr} {\sf sgn}(-m^{+a}_\bfr + m^{-a}_\bfr),\eea 
		where the function ${\sf argmin}'$ is defined in a way which removes degeneracies between elements of $M_\bfr$ as 
		\be \label{degbreak} {\sf argmin}'_{(s,a)}\{m^{sa}_\bfr\} = {\sf argmin}_{(s,a)}\{ m^{sa}_\bfr + sa/(D+1)\} . \ee 	
		The feedback operators are then 
		\be \label{ddfeedback} \mcf_\bfr^{(D)} = \bot_{a = 1}^D X_{\lan \bfr,\bfr+\uva\ran}^{\s_\bfr f_\bfr^a \vee \s_{\bfr+\uva} (-f_{\bfr+\uva}^a)}.\ee 	
	\end{definition}\ms 
	The necessity of using a method of degeneracy breaking like that of  \eqref{degbreak} can be seen e.g. by considering a state with a pair of anyons at positions $\bfr,\bfr + s\uva +s' \uvb$ with $a \neq b$ and $s,s'\in\{\pm1\}$. 
	
	Animations illustrating anyon movement and field propagation are available at \cite{code}.

	The indicator function $\sfA$, lightfronts, and and worldlines in $D$ dimensions are defined in complete analogy with Def.~\ref{def:lightfronts1d}, with the distances measured using the $\infty$ norm: 
	\ms \begin{definition}[lightfronts, indicator functions, and worldlines: general dimensions]\label{def:lightfronts2d}
		Consider a spacetime coordinate $(\bfr,t)$. The {\it future lightfront} $\flc(\bfr,t)$ is the set of spacetime points reached by messages emitted from a source at $(\bfr,t)$: 
		\be \flc(\bfr,t) = \{(\bfr',t) \, : \, t'>t, \, v(t'-t-1) < ||\bfr-\bfr'||_\infty \leq v(t'-t) \}.\ee 
		The {\it past lightfront} is defined similarly as $\plc(\bfr,t) = \{ (\bfr',t') \, : \, (\bfr,t) \in \flc(\bfr',t') \}.$ We will define the segment of the past lightfront along the $\pm\uva$ direction as 
		\be \plf^{\pm a}(\bfr,t) = \{ (\bfr',t') \in \plf(\bfr,t) \, : \, \pm(r')^a > \pm r^a\}.\ee 
		
		The {\it anyon indicator function} $\sfA$ takes on the value 1 (0) on spacetime points where anyons are (not) present: 
		\be \sfA(\bfr,t) = \begin{dcases} 1 & \s_\bfr = -1 \,\, \text{at time $t$} \\ 
			0 & \text{else} \end{dcases}.\ee 
		If $\mcs$ is a set of spacetime points, $\sfA(\mcs) = \{ (\bfr,t) \in \mcs \, : \, \sfA(\bfr,t) = 1\}$.
		
		Let an anyon $a$ be present at a spacetime point $(\bfr,t)$. The {\it worldline} $\sfW(a)$ of $a$, also written $\sfW(\bfr,t)$, is defined to be the set of spacetime points occupied by this anyon; formally, $(\bfr,t) \in \sfW(a)$ and  $(\bfr',t')\in \sfW(a)$ only if $(\bfr'',t'-1) \in \sfW(a)$ for some $\bfr'' \in B_{\bfr'}^{(\infty)}(1)$.\footnote{As in 1D, ways of resolving this ambiguity when two anyons occupy the same $B_\bfr^{(\infty)}(1)$ at some point in time will not be important to keep track of.}		
	\end{definition}\ms 
	
	As in 1D, no anyon self-interaction occurs provided $v>1$.

	\ss{Other types of interactions}  \label{ss:power_law_construction}
	
	The message passing rules defined above are designed to move each anyon in the direction of its closest neighbor (albeit in a time-delayed way, due to the finiteness of $v$). 
	It is however straightforward to modify these rules to simulate many different types of interactions between anyons.
	
	Suppose we want to consider a feedback rule that moves each anyon $a$ in a way that depends on the positions of the $N$ anyons closest to $a$. This can be done by using $N$ ``flavors'' of messages, with the control variables at each site being packaged as $m_\bfr = \{ m^{\pm a,f}_\bfr\, : \, a \in \{1,\dots,D\},\, f \in \{1,\dots,N\}\}$. The update rules of $\mca$ are then modified so that instead of newer messages overwriting older ones, a newer message of flavor $f$ turns an older message of flavor $f$ into a message of flavor $f +1$ (with a message of flavor $f = N$ simply being deleted). This allows the signals from up to $N$ anyons to simultaneously be stored at each site, at the cost of increasing the number of control variables at each site to ${\rm dim}(\mch_{{\sf cl},\bfr}) =\ct(N \log \log(L))$. Since anyon pairs below threshold are very unlikely to be separated by a distance greater than $O(\log L)$, it suffices to set $N = \ct((\log L)^D)$ for any reasonable interaction scheme (taking $D=2$ then gives a slightly better scaling compared to the $\ct((\log L)^3 \log \log(L))$ required by the ``constant-$c$'' decoder of Ref.~\cite{herold2015cellular}). 
	
	There is however not obviously any benefit to this added flexibility. For example, consider using this scheme to engineer a $1/r^\a$ attractive interaction between anyons. If $\a$ is too small, one can show that the resulting decoder lacks a threshold. If $\a > D$ is large enough, the existence of a threshold can be proven using the techniques of sec.~\ref{sec:offline}, but it is unclear that these decoders offer any benefit over the simpler nearest-anyon rule defined above: numerically, all schemes with power-law interactions were found to produce smaller thresholds than the nearest-anyon rule decoder. 
	
	\section{Clustering and erosion}\label{sec:erosion} 
	
	In this section we develop the two main concepts needed to show that the message passing decoders possess a threshold, as well as to characterize the behavior of $\ploge,\tdece$ below it. In sec.~\ref{ss:sparsity}, we give a procedure for grouping errors in a given noise realization into isolated clusters of different sizes, and use a variant on a result of  Gacs~\cite{gacs2001reliable,ccapuni2021reliable} to show that for $p$-bounded error models, large-sized clusters are extremely rare. In sec.~\ref{ss:erosion}, we show that anyons created by the noise in sufficiently isolated clusters are guaranteed to be annihilated in a way which does not produce a logical error. These two results will be combined together in sec.~\ref{sec:offline} to prove Theorem~\ref{thm:1d_offline}. 
	
	To simplify the discussion, we will assume throughout that $m_{\sf max} = \lceil\log_2(L)\rceil$ (c.f.~\eqref{mvals}), so that messages may be transmitted and received between any two anyons in the system, no matter how distant. The effects of taking $m_{\sf max}$ to be smaller than this value will be discussed in the following section. 
	
	\ss{Clustering and sparsity} \label{ss:sparsity} 
	
	Our proof strategy will be to employ a recursive hierarchical organization of noise realizations into well-separated clusters of different sizes. This general strategy was first developed by Gacs \cite{gacs2001reliable,ccapuni2021reliable}, 
	although the procedure we use for clustering noise realizations will be a bit different from his original approach. 
	A more detailed analysis and proofs of the results to follow are provided in Appendix~\ref{app:sparsity}; in what follows we will simply state the most important definitions and results. 
	
	We first need to define the algorithm by which we perform clustering on different noise realizations $\sfN \subset \L_l$. For this, we need the following definition:\footnote{This approach to organizing points in $\sfN$ differs from that of Gacs, who uses {\it isolation} rather than clustering; see app.~\ref{app:sparsity} for details.} 
	\ms\begin{definition}[clustering]\label{def:clustering}
		Let $\sfN\subset\L_l$ be a noise realization. A $(W,B)${\it  -cluster} $C_{(W,B)}$ of width $W$ and buffer thickness $B$ is a subset of $\sfN$ such that 
		\begin{enumerate}
			\item all points in $C_{(W,B)}$ are contained in a ball $B_\bfr(W/2)$ for some $\bfr$, and 
			\item these points are separated from other points in $\sfN$ by a buffer region of thickness at least $B$:
			\be \sfN \cap (B_\bfr(W/2+B) \setminus  B_\bfr(W/2)) = \emp.\ee 			
		\end{enumerate}
		A point $\bfl \in \sfN$ is called {\it $(W,B)$-clustered} if it is a member of a $(W,B)$ cluster, and a set is said to be $(W,B)$ clustered if all points it contains are $(W,B)$-clustered. 
	\end{definition}\ms
	The following proposition is proved in app.~\ref{app:sparsity}: 
	\ms\begin{proposition}\label{prop:gooddecomp}
		If $B\geq W$, there exists a set of $(W,B)$-clusters $C^{(i)}_{(W,B)}$ such that every $(W,B)$-clustered point belongs to one and only one $C^{(i)}_{(W,B)}$.  
	\end{proposition}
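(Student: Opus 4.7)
The plan is to build the clusters as equivalence classes of a natural relation. Define $\bfl \sim \bfl'$ on the set of $(W,B)$-clustered points to mean that some single $(W,B)$-cluster contains both $\bfl$ and $\bfl'$. Reflexivity is automatic (every $(W,B)$-clustered point lies, by definition, in at least one $(W,B)$-cluster) and symmetry is trivial. If transitivity can be established, taking the $C^{(i)}_{(W,B)}$ to be the equivalence classes of $\sim$ immediately yields a partition of the $(W,B)$-clustered points, and the only remaining task is to check that each such class is itself a valid $(W,B)$-cluster.

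\textbf{The main step.} The hard part will be transitivity, and this is where the hypothesis $B \geq W$ enters. Suppose $\bfl \sim \bfl' \sim \bfl''$, witnessed by $(W,B)$-clusters $C_1 \subset B_{\bfr_1}(W/2)$ and $C_2 \subset B_{\bfr_2}(W/2)$. Since $\bfl'$ lies in both balls, the triangle inequality forces $||\bfr_1 - \bfr_2|| \leq W$, so $B_{\bfr_2}(W/2) \subset B_{\bfr_1}(W/2 + W) \subset B_{\bfr_1}(W/2 + B)$. But the buffer condition for $C_1$ asserts $\sfN \cap (B_{\bfr_1}(W/2 + B) \setminus B_{\bfr_1}(W/2)) = \emp$, so every point of $\sfN$ inside $B_{\bfr_2}(W/2)$ in fact lies in $B_{\bfr_1}(W/2)$; in particular $C_2 \subset B_{\bfr_1}(W/2)$. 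Consequently $C_1 \cup C_2$ sits inside $B_{\bfr_1}(W/2)$ while inheriting the empty buffer of $C_1$, so it is itself a $(W,B)$-cluster, and it witnesses $\bfl \sim \bfl''$.

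\textbf{Finishing up.} With transitivity in hand, the equivalence classes partition the $(W,B)$-clustered points. Each class $[\bfl]$ is itself a $(W,B)$-cluster: picking any witnessing cluster $C$ for $\bfl$ centered at some $\bfr$, applying the containment argument above once for each point equivalent to $\bfl$ shows that every such point lies in $B_\bfr(W/2)$, while the buffer condition is inherited from $C$. Beyond transitivity I do not anticipate any further obstacles; it is worth noting in passing that the bound $B \geq W$ is essentially sharp for this argument, since the triangle inequality only forces centers of compatible clusters to sit at distance at most $W$ apart, so one genuinely needs the entire other $W/2$-ball to fit inside the buffer region.
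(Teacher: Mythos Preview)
Your argument is correct and rests on the same geometric observation as the paper's: if two $(W,B)$-clusters share a point, the buffer annulus of one swallows the $W/2$-ball of the other, forcing all noise points of the second cluster into the first's ball. The paper (App.~\ref{app:sparsity}, Prop.~\ref{prop:cluster_uniqueness}) packages this as a disjointness statement for any given family of clusters and leaves the passage to an explicit partition implicit; your equivalence-class formulation makes that passage explicit and is arguably the cleaner route to the existence statement as phrased in the main text. Either way the content is the same one-line buffer argument.
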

	The clusters defined in all of the applications to follow will have $B>W$. 
	
	Note that we have left unspecified the choice of norm used to define the balls in Def.~\ref{def:clustering}. Clusters defined with both $||\cdot||_1$ and $||\cdot ||_\infty$ will find use in what follows.

	For a given noise realization $\sfN$, consider the pattern of anyons and noise in the state $\r_\sfN = X^\sfN \rlog X^\sfN$. 
	We will say that an anyon at site $\bfr$ with $\s_\bfr = -1$ is in a cluster $C_{(W,B)}\subset\sfN$ if an odd number of links containing $\bfr$ are contained in $C_{(W,B)}$. We then note the following:  \ms\begin{proposition}\label{prop:cluster_anyons}
		For $B>1$, any $(W,B)$-cluster $C_{(W,B)}$ contains an even number of anyons, and if $a$ is an anyon in $C_{(W,B)}$, the anyon strings created by $X^\sfN$ connect $a$ to another anyon in $C_{(W,B)}$. 
	\end{proposition}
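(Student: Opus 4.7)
The plan is to combine a simple parity-counting argument with the observation that a buffer of thickness $B > 1$ strictly contains the immediate neighborhood of any cluster link, after which the pairing of anyons drops out of an Eulerian-style decomposition of the edges of $C = C_{(W,B)}$ viewed as a subgraph of $\L$.

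The main ingredient is a localization lemma: if $\bfr$ is an endpoint of some $\bfl \in C \subset B_{\bfr_0}(W/2)$, then every link $\bfl' \in \sfN$ incident to $\bfr$ in fact lies in $C$. Any such $\bfl'$ is contained in $B_{\bfr_0}(W/2 + 1) \subseteq B_{\bfr_0}(W/2 + B)$, and the buffer condition forces $\sfN \cap (B_{\bfr_0}(W/2 + B) \setminus B_{\bfr_0}(W/2)) = \emp$, so $\bfl' \subset B_{\bfr_0}(W/2)$; uniqueness of the clustering decomposition (Proposition \ref{prop:gooddecomp}) then assigns $\bfl'$ to $C$. Consequently the parity of $C$-links at $\bfr$ coincides with the parity of $\sfN$-links at $\bfr$, so a site has odd $C$-parity if and only if it hosts a genuine anyon ($\s_\bfr = -1$) whose incident noise lies entirely inside $C$. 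Part 1 is then just the handshake lemma,
\begin{equation}
\sum_{\bfr} |\{\bfl \in C \, : \, \bfr \in \partial \bfl\}| \;=\; 2|C| \;\equiv\; 0 \pmod 2,
\end{equation}
so the number of sites with odd $C$-parity---which by the localization lemma is the number of cluster anyons---is even.

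For part 2, I would decompose $X^\sfN = X^C \cdot X^{\sfN \setminus C}$, exploiting the fact that the two factors commute and act on disjoint links. By the localization lemma, $X^C$ creates anyons at precisely the cluster anyons of $C$ and nowhere else, while $X^{\sfN \setminus C}$ creates no anyons on any site incident to a link of $C$. Viewing $C$ as the edge set of a subgraph of $\L$ whose odd-degree vertices are exactly the cluster anyons, a standard Eulerian-trail decomposition partitions the edges of $C$ into a collection of simple paths $\gamma_1, \dots, \gamma_n$ pairing up the odd-degree vertices, together with a (possibly empty) union of closed cycles. Each $X^{\gamma_i}$ is then an $X$-string supported inside $C \subset B_{\bfr_0}(W/2)$ whose two endpoints are cluster anyons, and the closed cycles are stabilizer-equivalent to the identity. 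This gives an explicit pairing of each anyon $a$ in $C$ with another anyon in $C$ by an $X$-string drawn from $X^\sfN$.

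The only nontrivial step is the localization lemma, whose proof requires a small amount of care about whether ``$\bfl \subset B_\bfr(R)$'' is interpreted via $\|\cdot\|_1$ or $\|\cdot\|_\infty$; in either case the strict inequality $B > 1$, together with the integrality of the lattice, ensures that every $\sfN$-link adjacent to an endpoint of a cluster link is itself in $C$, which is all that is really needed. Everything else is routine graph-theoretic combinatorics, so I do not expect a substantive obstacle.
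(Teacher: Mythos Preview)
Your argument is correct in substance and amounts to a careful unpacking of the paper's one-line proof (``follows directly from Def.~\ref{def:clustering}''): the buffer forces every connected component of $\sfN$ touching the ball to lie entirely inside it, after which the parity and pairing statements are immediate graph theory.

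One small misstep: your appeal to Proposition~\ref{prop:gooddecomp} to conclude $\bfl' \in C$ is both unnecessary and formally illegitimate here, since that proposition assumes $B \geq W$, whereas the present statement only assumes $B > 1$. What you actually need is that the cluster is \emph{maximal} in its ball, i.e.\ $C = \sfN \cap B_{\bfr_0}(W/2)$; this is implicit in the textual part of Def.~\ref{def:clustering} (``separated from other points in $\sfN$''), and with it your localization step $\bfl' \in B_{\bfr_0}(W/2) \Rightarrow \bfl' \in C$ is immediate. Once you make that replacement, the Eulerian decomposition in part~2 is a nice explicit way to exhibit the intra-cluster strings, slightly more constructive than what the paper bothers to write down.
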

	\begin{proof} 
		This follows directly from Def.~\ref{def:clustering}.
	\end{proof} 
	
	This means that if a feedback operator $\mcf_{C_{(W,B)}}$ is applied to $\r_\sfN$ which pairs each anyon in $C_{(W,B)}$ with another anyon in $C_{(W,B)}$, $\mcf_{C_{(W,B)}}$ will not cause a logical error. Intra-cluster pairing of anyons is thus always ``safe'' as far as decoding is concerned, and our goal in the following will be to argue that our message-passing decoder performs only intra-cluster pairing with high probability (provided $p$ is small enough). 
	
	We now define a scheme by which we hierarchically refine $\sfN$ into clusters of exponentially increasing sizes. 
	\ms\begin{definition}[noise hierarchies, error rates]\label{def:noise_hier}
		Define $\sfN_0 = \sfN$, and fix a positive integer $n$ and two numbers $w_0,b_0$ such that $0<w_0<b_0$. Define the level-$k$ width $w_k$ and buffer thickness $b_k$ as 
		\be w_k = w_0 n^k, \qq b_k = b_0 n^k.\ee 
		A $(w_k,b_k)_\infty$-cluster $C_{(w_k,b_k)}$ will be referred to as a {\it $k$-cluster} in what follows, and a particular such cluster will often be written more succinctly as $\mcc_k$. The largest level of cluster in a system of linear size $L$ will be denoted by $k_L$:\footnote{Strictly speaking $k_L$ is the floor function of this quantity, but we will omit these and related floor functions for notational clarity. } 
		\be \label{kldef} k_L = \log_n \frac L{w_0+2b_0}\ee 
		
		The {\it $(k+1)$th level clustered noise set $\sfN_{k+1}$} is defined as the set obtained by deleting from $\sfN_k$ all $(wn^k, bn^k)_\infty$-clustered points.\footnote{This removal process is well-defined on account of \ref{prop:gooddecomp}.}
		
		For a $p$-bounded error model $\mce$, the {\it level-$k$ error rate} $p_k$ is the largest probability for the location of a given qubit to belong to $\sfN_k$:
		\be p_k = \max_{\bfl \in \L_l}\sfP_{\sfN \sim \mce}\(\bfl\in \sfN_k\),\ee 
		with $p_0 = p$.
	\end{definition}\ms 
	Note that these clusters are specifically defined with respect to the $\infty$ norm. 
	
	The key utility of this definition is that, provided $p$ is small enough, the level-$k$ error rate decreases doubly-exponentially in $k$. This result is formulated in the following theorem, which closely parallels the analogous result in Ref.~\cite{ccapuni2021reliable}: 
	
	\ms\begin{theorem}[sparsity bound]\label{thm:sparsity} 
		Suppose the parameters $w_0,b_0,n$ satisfy 
		\be \label{bwineqs} 0 < w_0 < b_0 < \frac{n-3}4 w_0.\ee 
		Then in $D$ dimensions, there exists a constant $p_* \geq 1/(2b_0 + w_0)^D$ such that 
		\be \label{rarenoise} p_k \leq \frac{p_*}{n^{kD}} \(\frac p{p_*}\)^{2^k}.\ee 
	\end{theorem}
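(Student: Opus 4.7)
My plan is to prove the sparsity bound by induction on $k$, carrying a strengthened joint hypothesis: for every finite set $A\subset\L_l$ whose pairwise $\infty$-norm separations all exceed $R_k:=w_k/2$, one has $\sfP(A\subset\sfN_k)\leq p_k^{|A|}$. The single-point case recovers the theorem, and the base case $k=0$ is just the definition of $p$-boundedness applied to $\sfN_0=\sfN$. This strengthening is essential: sites of $\sfN_k$ are not conditionally independent, so the single-point bound alone is not inherited from the $p$-boundedness of $\mce$, and the recursion below needs the joint form to close.

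For the inductive step I fix a set $A$ at level $k+1$ with pairwise separations exceeding $R_{k+1}=nw_k/2$ and extract witnesses. Each $\bfl\in A$ lies in $\sfN_k$ yet fails to be $(w_k,b_k)_\infty$-clustered; in particular, the singleton $\{\bfl\}$ cannot form a valid cluster, so its buffer annulus $B_\bfl(w_k/2+b_k)\setminus B_\bfl(w_k/2)$ must contain some witness $\bfl^*\in\sfN_k$. Picking one such $\bfl^*$ per $\bfl\in A$ produces an augmented set $A\cup A^*\subset\sfN_k$ of cardinality $2|A|$. The hypothesis \eqref{bwineqs} is calibrated precisely to preserve the separation requirement one level down: within each pair $(\bfl,\bfl^*)$ the distance is $>w_k/2=R_k$ by the annular condition, while across distinct pairs the triangle inequality combined with $R_{k+1}>R_k+2(w_k/2+b_k)$ closes the gap (this rearranges to exactly $b_0<(n-3)w_0/4$). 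Applying the inductive hypothesis to $A\cup A^*$ and union-bounding over the $\sim(w_k+2b_k)^D$ annular positions of each $\bfl^*$ gives the single-step recurrence $p_{k+1}\leq (w_0+2b_0)^D n^{kD}p_k^2$.

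Unwinding the recurrence is routine: the ansatz $p_k=A_k p^{2^k}$ converts it into $A_{k+1}=(w_0+2b_0)^D n^{kD}A_k^2$, whose solution $A_k=(w_0+2b_0)^{(2^k-1)D}n^{(2^k-k-1)D}$ repackages into the stated form $p_k\leq(p_*/n^{kD})(p/p_*)^{2^k}$. The step I expect to be the main obstacle is tightening the constants to reach the precise bound $p_*\geq 1/(2b_0+w_0)^D$ claimed by the theorem: a direct execution of the plan produces a $p_*$ that is smaller by a factor of $n^D$ due to the crude $(w_k+2b_k)^D$ annulus count. Closing this final factor-of-$n^D$ gap appears to require either a sharper enumeration of the annular shell (exploiting that it is a shell of thickness $b_k$, not a full ball of radius $w_k/2+b_k$) or a BK-style disjoint-occurrences inequality in place of the plain union bound on $\sfP(\{\bfl,\bfl^*\}\subset\sfN_k)$, after which the same iteration immediately delivers the theorem's $p_*$.
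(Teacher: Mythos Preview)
Your argument is correct and takes a genuinely different route from the paper's. The paper introduces an auxiliary \emph{isolated} hierarchy $\mcn_k$ (built by removing $(\b n^k,\g n^{k+1})$-isolated points rather than $(w_k,b_k)$-clustered ones), proves $\sfN_k\subset\mcn_k$, and then bounds $\sfP(x\in\mcn_k)$ by enumerating the \emph{minimal implying events} $M_k(x)$: it shows each such event has exactly $2^k$ points contained in a ball of radius $\a n^k$, and recursively bounds $|M_k|$. Your approach instead works directly with $\sfN_k$ and carries a strengthened joint $p$-boundedness hypothesis for well-separated sets, extracting one annular witness per point. What this buys you is that you never need the auxiliary isolation construction or the containment lemma for minimal events; what the paper's route buys is a slightly cleaner combinatorial picture (the minimal-event tree makes the $2^k$ exponent manifest). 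Both approaches land on the identical single-step recurrence $p_{k+1}\leq (w_k+2b_k)^D p_k^2$ and hence the same $p_*$.

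Regarding the factor-of-$n^D$ gap you flag at the end: the paper's proof does \emph{not} close it either. Working through the appendix's $D$-dimensional recursion $|M_k|\leq (2\g n^k)^D|M_{k-1}|^2$ (and using $2\g n = 2b_0+w_0$) yields exactly $p_*=1/[(2b_0+w_0)n]^D$, the same constant your argument produces; the main-text statement $p_*\geq 1/(2b_0+w_0)^D$ is a mild overstatement relative to what the appendix actually delivers. Your proposed fixes would not recover the missing $n^D$: the shell volume $|B(w_k/2+b_k)\setminus B(w_k/2)|$ is still $\Theta((w_0+2b_0)^D n^{kD})$ since $b_k$ and $w_k$ scale identically in $k$, and a BK-type inequality gains nothing here because your separation argument already establishes the disjoint-witness structure that BK would provide. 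So you can drop that last paragraph: your derivation already matches the paper's, and the discrepancy lies in the theorem statement, not in your proof.
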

	The full proof is provided in app.~\ref{app:sparsity}, and is conceptually rather similar to proofs of fault tolerance in concatenated quantum codes (see e.g. \cite{aharonov1996faulttolerantquantumcomputation,gottesman2009introductionquantumerrorcorrection}).

	\ss{Linear erosion} \label{ss:erosion}
	
	We now turn to showing that the synchronous message passing decoder satisfies what we will refer to as a {\it linear erosion property}: there exists an $O(1)$ constant $c$ such that anyons in any given $(W,cW)$ cluster are guaranteed to all annihilate against one another in a time linear in $W$. A refinement of  property which takes into account effects of inevitable inter-cluster interaction at large scales will be combined with the sparsity results of the previous subsection to prove the threshold theorem in sec.~\ref{sec:offline}. 
	
	We should note that this property is not as strong as the linear erosion property discussed in the cellular automaton literature (see e.g. \cite{gacs2024probabilistic,rakovszky2024defining,gacs_eroder_slides}), which would require erosion in both $\mch_\qu$ and $\mch_\cl$. In our decoders by contrast, erosion only occurs in $\mch_{\sf qu}$. Indeed, the messages produced by anyons in a $(W,cW)$ cluster which propagate away from the cluster are {\it not} guaranteed to be erased by a time proportional to $W$, and thus have the potential of influencing anyons in an unboundedly large spacetime region. Nevertheless, the fact that the erosion time is linear in $\mch_\qu$ will be shown to imply that these errant messages only slow down error correction by a constant factor (if erosion in $\mch_{\sf qu}$ took longer than linear time, this effect would not be so innocuous, and would preclude the existence of a threshold). In passing, we note that the lack of erosion in $\mch_{\sf cl}$ is ultimately the reason why these (and related) decoders fail to be robust online decoders; details are provided in Ref.~\cite{online}.
	
	\sss{Modified message passing} 
	
	To keep the proof of linear erosion simple, we will find it helpful to slightly modify the feedback rules applied by the decoder (proving linear erosion is the only place in this paper where these modified rules will be utilized). Let $\uvx$ be the unit vector pointing along the first spatial coordinate $r^1$.  Our modified feedback rules apply operators $\wt \mcf_\bfr^{(D)}$ which are obtained from the feedback operators $\mcf_\bfr^{(D)}$ of Def.~\ref{def:Dd_msg_passing} by adding motion of anyons along $-\uvx$ to all feedback operators except those which move anyons along the $\pm\uvx$ direction. Formally, this property is captured by the following definition:	
	
	\ms\begin{definition}[modified message passing automaton]\label{def:modified_feedback}
		The modified message passing automaton decoder is defined as the decoder in Def.~\ref{def:Dd_msg_passing}, except with feedback operators $\wt \mcf_\bfr^{(D)}$ defined as 
		\bea \wt\mcf^{(D)}_\bfr & = X_{\lan \bfr-\uvx,\bfr\ran}^{G_\bfr(\bfm,\bfsig)} \mcf^{(D)}_\bfr, \\ 
		G_\bfr(\bfm,\bfsig) & = \frac12\(1-\s_\bfr \s_{\mcf^{(D)}(\bfm,\bfsig),\bfr}\)  \cdot  \bigvee_{a \neq 1}  \(\s_\bfr \wt f_\bfr^a \vee \s_{\bfr+\uva} (-\wt f_{\bfr+\uva}^a) \), \eea
		where the $\wt f_\bfr^a$ are defined in the same way as in \eqref{blahdef}, except with a different way of resolving degeneracies: degeneracies between messages from the $s_x\uvx$ and $s_a\uva$ directions with $a > 1$ and $s_{x,a} \in \{\pm 1\}$ are split to favor motion along $s_a\uva$ if $a \neq 1$, while no anyon motion occurs if $s_x=+1$ (see panel a of fig.~\ref{fig:erosion_setup} for an example in $D=2$).  The function $G_\bfr(\bfm,\bfsig)$ is defined such that it equals 1 when both a) $\mcf^{(D)}$ moves an anyon along a link containing $\bfr$ and normal to $\uvx$, and b) when an anyon remains at site $\bfr$ after the action of $\mcf^{(D)}$. Finally, note that this rule remains local since $\s_{\mcf^{(D)}(\bfm,\bfsig),\bfr}$ only depends on $\{\bfm,\bfs\}|_{B_\bfr^{(\infty)}(1)}$.
	\end{definition}\ms
	
	Note that the $\infty$-norm distance between an anyon and any fixed point still changes by at most $1$ under the modified feedback rules. The modified feedback rules are illustrated schematically in the left panel of fig.~\ref{fig:erosion_setup}.
	
	\begin{figure}
		\centering
		\includegraphics[width=.7\tw]{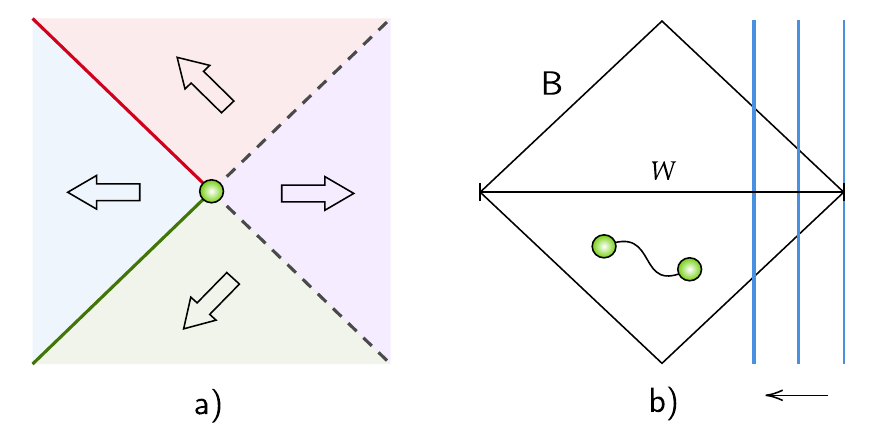}\caption{\label{fig:erosion_setup} {\bf{\sf a)}}  modified feedback rules in 2D. The arrows in each of the four quadrants indicate the direction that an anyon moves upon receiving a signal from the associated direction. When the signals are received  from the $\pm \uvy$ components of the lightfront, the anyon moves along $\pm \uvy$ before moving along $-\uvx$. Degeneracies between different movement directions are resolved using the colors of the shaded lines on the boundaries of the lightfront segments; dashed black lines mean that anyons do not move. {\bf {\sf b)}} strategy of the erosion proof for clusters defined with the 1 norm. Anyons initially in a ball $\sfB$ of radius $W/2$ remain there if they are sufficiently isolated. Fixing the center of the ball at $\bfr=\bfzero$, after time $t_{\sf move} = W/v$, all anyons in $\sfB$ are confined to be to the left of the line $r^1 = W/2-(t-t_{\sf move})$ (blue vertical lines); this leads to all anyons being annihilated after time $t_{\sf move} + W$.  }
	\end{figure}

	We now prove the aforementioned linear erosion property in various different settings. In the entirety of this subsection, we will only consider the synchronous message passing decoder with the modified feedback rules defined above. 
	
	\sss{Erosion of a single cluster: clean messages}
	
	To prove a threshold we will need to use linear erosion for clusters defined with the $\infty$ norm. However, the proof is slightly more natural for clusters defined within the 1-norm. We therefore give the proof for the 1 norm first; the extension to the $\infty$ norm is straightforward and will be given later.

	\ms\begin{lemma}[linear erosion: 1-norm]\label{lemma:linear_erosion}
		Consider a $(W,DW)_1$ cluster of errors in a state initialized with trivial messages $\bfm= \bfzero$, and with all anyons in the cluster contained within a 1-ball $\sfB = B^{(1)}(W/2)$. The anyons initially in $\sfB$ never leave $\sfB$, and annihilate with other anyons initially in $\sfB$ by a time $W(1+1/v)$.
	\end{lemma}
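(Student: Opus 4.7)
The plan is to combine two ingredients. First, I would use the buffer to insulate anyons in $\sfB$ from signals emitted elsewhere: the feedback applied to anyons in $\sfB$ should be determined entirely by the positions of other anyons in $\sfB$, so the cluster evolves as though it were isolated. Second, the modified feedback of Def.~\ref{def:modified_feedback} enforces a rigid leftward sweep of the isolated cluster, collapsing every anyon to a single point by time $W(1+1/v)$, at which point even-parity (Proposition~\ref{prop:cluster_anyons}) forces all to annihilate pairwise.

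For insulation, the key calculation uses $\|\cdot\|_\infty \geq \|\cdot\|_1/D$: every error outside $C_{(W,DW)}$ has 1-distance at least $W/2+DW$ from the center $\bfr_0$ of $\sfB$, and any anyon $\bfx\in\sfB$ has $\|\bfx-\bfr_0\|_1\leq W/2$, so for any external anyon $\bfy$ (including any arising from later dynamics in the buffer), $\|\bfx-\bfy\|_\infty\geq \|\bfx-\bfy\|_1/D\geq W$. Since messages travel at $\infty$-norm speed $v$ and carry value equal to their travel distance, external messages arriving at any anyon in $\sfB$ carry value at least $W$, whereas internal messages (exchanged between anyons at mutual $\infty$-distance at most $W$) carry values at most $W$. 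The argmin in \eqref{blahdef} is thus always achieved by an internal source, with any ties broken in favor of internal sources without loss of generality. For the sweep, set $t_{\sf move}:=W/v$ and proceed by induction on $t\geq t_{\sf move}$ on the invariant
\be
\max\{r^1:\text{anyon at }\bfr\in\sfB\text{ at time }t\}\;\leq\;W/2-(t-t_{\sf move}).
\ee
For the inductive step, a rightmost anyon $\bfx$ with $x^1=W/2-(t-t_{\sf move})$ has its nearest internal anyon at $r^1\leq x^1$, so $\mcf^{(D)}$ moves it along $-\uvx$, along $\pm\uva$ for some $a\neq 1$, or not at all; in the latter two cases $G_\bfr$ inserts the missing $-\uvx$ step, and in all cases $\bfx$'s $r^1$ coordinate drops by $1$. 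When a next-rightmost anyon is pulled $+\uvx$ toward $\bfx$, the $\vee$-structure of $\mcf^{(D)}$ ensures that if the two are adjacent the link between them is flipped exactly once (annihilating both), while for non-adjacent pairs some closer internal anyon dominates the next-rightmost's argmin. After $W$ sweep steps the invariant forces every remaining anyon in $\sfB$ to satisfy $r^1\leq -W/2$; combined with $\|\bfr\|_1\leq W/2$ this collapses them to the single point $-(W/2)\uvx$, so by even parity all must have annihilated pairwise by time $t_{\sf move}+W=W(1+1/v)$. That no anyon leaves $\sfB$ then follows from the monotonic sweep together with the fact that the modified feedback never simultaneously increases $r^1$ and the lateral coordinates for anyons on the 1-ball's boundary.

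The main obstacle is the inductive step of the sweep: verifying by careful case analysis that the combined action of $\mcf^{(D)}$, the tie-breaking in \eqref{degbreak}, the secondary tie-breaking of $\wt f_\bfr^a$, and the $G_\bfr$ correction always produces a $-\uvx$ displacement for a rightmost anyon, and that near-rightmost anyons either annihilate with the rightmost or are held back by closer neighbors. Controlling sparse configurations of anyons inside $\sfB$---where the rightmost and a next-rightmost are far apart laterally but close in $r^1$---is the trickiest part, and relies on the convex geometry of the 1-ball together with the $DW$-thick buffer forcing the internal structure of the cluster to be dense enough to provide the needed argmin-dominations.
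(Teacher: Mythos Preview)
Your overall strategy matches the paper's: a leftward sweep driven by the $-\uvx$ bias of the modified feedback, terminating when everything collapses to the leftmost point of $\sfB$. But you miss the central technical difficulty. Feedback is determined by the \emph{past lightfront}, not by current anyon positions. When you assert that ``a rightmost anyon $\bfx$ has its nearest internal anyon at $r^1\leq x^1$'', this is trivially true for current positions but is exactly what must be proven for $\sfA(\plf(\bfx,t))$: at time $t_{\sf move}+\delta$, the past lightfront of the rightmost anyon reaches back to times $t_{\sf move}+\delta'$ with $\delta'<\delta$, where there \emph{were} anyons at $r^1=W/2-\delta'>x^1$. Nothing in your argument rules out these stale signals pushing the rightmost anyon along $+\uvx$. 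The paper handles this with a stronger inductive hypothesis---each slice $\sfS_k=\sfB\cap\{r^1=W/2-k\}$ has been anyon-free for $\delta-k$ steps---and then uses $v>1$ to show that $\plf^{+x}(\bfr,t)$ intersects $\sfS_k$ only \emph{after} $\sfS_k$ has emptied. This is precisely where the hypothesis $v>1$ does work; your sweep never invokes it.

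Two secondary gaps. First, the paper establishes confinement in $\sfB$ as a \emph{separate} preliminary step, via a geometric argument that an anyon on $\partial\sfB$ whose nearest PLF anyon lies in $\sfB$ cannot be pushed outward (this uses the shape of the 1-ball together with the specific tiebreaking of Def.~\ref{def:modified_feedback}); you instead try to derive confinement from the sweep, but the sweep only bounds $r^1$ from above and says nothing about the full 1-norm---your stated condition ``never simultaneously increases $r^1$ and the lateral coordinates'' does not prevent a $-\uvx\pm\uva$ step from increasing $\|\bfr\|_1$ on the left half of $\partial\sfB$. Second, your insulation argument is purely static: you compute $\|\bfx-\bfy\|_\infty\geq W$ at $t=0$ but do not control external anyons moving inward over the course of the dynamics. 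The paper avoids this circularity by first proving the result for a $(W,\infty)_1$ cluster with no external anyons whatsoever, and only afterward arguing that the $DW$ buffer suffices. Finally, ``ties broken in favor of internal sources without loss of generality'' is not a valid move: the tiebreaking is hard-coded into the decoder and cannot be chosen to suit the proof.
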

	In the limit $v = \infty$, a straightforward adaptation of the argument below shows that $(W,W)_1$ clusters are eroded in time $W$ with the original (unmodified) feedback rules. The modified rules are introduced to make this form of argument go through in a simple fashion even when anyon motion is retarded by virtue of $v$ being finite. Our proof will strictly overestimate the maximum time needed for the anyons in a $(W,DW)_1$ cluster to annihilate, but only by a constant factor, which we will not invest effort in identifying. 
	\begin{proof}
		For simplicity, we will first prove linear erosion for $(W,\infty)_1$ clusters, so that we work on $\zz^D$ and are given a 1-ball $\sfB$ of radius $W/2$ which contains every anyon in the system. We will use spacetime coordinates where $\sfB = B^{(1)}_\bfzero(W/2)$ is centered at the origin.
		
		We begin with two observations. First, all anyons in the system are confined to remain in $\sfB$ for all $t$. Indeed, if this wasn't true, there would have to be an event at a spacetime point $(\bfr,t)$ where an anyon $a$ first moves from $\p \sfB$ to the exterior $\sfB^c$ of $\sfB$. Let $\bfr'$ be the coordinate of the nearest anyon to $\bfr$ in $\sfA(\plc(\bfr,t))$. If $\bfr \in \sfB_{\sf int} =  \sfB\setminus \p\sfB$, one readily checks that $a$ moves away from $\p \sfB$ into $\sfB_{\sf int}$ at time $t+1$. If $\bfr' \in \p\sfB$, $a$ can either move into $\sfB_{\sf int}$ or stay along $\p \sfB$, but the degeneracy breaking rules in Def.~\ref{def:modified_feedback} can be exhaustively checked to guarantee that $a$ does not move outside $\sfB$. 
		
		Our second observation is that as long as
		\be t\geq t_{\sf move} = W/v,\ee 
		all anyons must their spatial positions (or annihilate with other anyons) at each time step.\footnote{As before, we are omitting floor functions here for notational convenience.} This follows directly from the previous observation and the fact that $\sfB\subset \plc(\bfr,t\geq t_{\sf move})$. 
		
		We now use these observations to prove linear erosion. Let $\mcr(t)$ denote the smallest convex subset of $\zz^D$ which contains every anyon at time $t$ (with $\mcr = \emp$ if no anyons remain). We will show (see the right panel of fig.~\ref{fig:erosion_setup})
		\be \label{shrinking} \mcr(t_{\sf move} + \d ) \subset \sfB \cap\{\bfr \, : r^1 \leq W/2 - \d \}.\ee 
		Since the RHS is empty when $\d  > W/v$, this will show the claim for $(W,\infty)_1$ clusters. 
		
		To show \eqref{shrinking}, we argue by induction. To streamline notation slightly, define $\sfS_k = \sfB\cap \{\bfr \, : \, r^1 = W/2-k\}$. Consider the system at $t=t_{\sf move}$, which will be our base case. We claim that $\sfS_0$---which is just the single point $\bfr = (W/2,0,\dots)$---must not be occupied by an anyon at time $t_{\sf move}+1$. 
		First, if there is an anyon $a$ at this point, then since $t \geq t_{\sf move}$, $a$ must move to a different position at time $t+1$. Since $\sfA_\ct(\plc^{+x}(\bfr,t)) = 0$, the movement rules of the modified decoder mandate that $a$ move to a smaller $r^1$ coordinate, viz. move within $\sfB\cap \{ \bfr \, : \, r^1 \leq W/2 - 1\}$. Further, if there is no anyon already at this point, one cannot move there, as in this case $\sfA(\plc^{+x}(\bfr,t)) = \emp$ for all $\bfr \in \sfS_1$. 
		
		Now consider time $t = t_{\sf move} + \d$, and assume that $\sfS_k$ with $k<\d$ has been anyon-free for $\d-k$ time steps. We claim that $\sfS_\d$ is  necessarily anyon-free at time $t_{\sf move} + \d +1$, which will then show the induction step. Indeed, suppose $\sfS_\d$ is not already anyon-free at time $t_{\sf move} + \d$. We claim that 
		\be \sfA_\ct(\plc^{+x}(\bfr,t) ) = 0,\quad \, \, \forall \, \, (\bfr,t) \in \sfA(\sfS_\d,t).\ee 
		This is true since we have assumed $v > 1$: $\plc^{+x}(\bfr,t)\cap \sfS_k\neq \emp$ at time $t-(\d-k)/v$ (again omitting floor functions), but our inductive assumption presumes that $\sfS_k$ was anyon-free starting at the earlier time $t-(\d-k) < t-(\d-k)/v$. Since $t > t_{\sf move}$, the anyon at $(\bfr,t)$ must move. In particular, since $\sfA_\ct(\plc^{+x}(\bfr,t) ) = 0$ and since our modified feedback rules 
		always move anyons along $-\uvx$ if the signal they receive does not come from $\plc^{+x}$, this anyon must move from $\sfS_\d$ to $\sfS_{\d+1}$. Thus $\sfS_{\d}$ must be anyon-free at time $t_{\sf move}+\d+1$, which is what we wanted to show. 
		
		The above argument proves the claim for $(W,\infty)_1$ clusters. That the argument holds for $(W,DW)_1$ clusters is however also straightforward. Indeed, any anyon $a$ in $\sfB^c$ can decrease its $\infty$ norm distance to $\sfB$ at a given time step by 1 only after time $t_{\sf move}$, but at this time each anyon $b$ in $\sfB$ already has at least one anyon $c$ in $\sfB$ in its past lightfront. Letting these anyons be at positions $\bfr_{a,b,c}$, we have 
		\be ||\bfr_b - \bfr_c||_\infty \leq W < ||\bfr_a - \bfr_b||_\infty,\ee 
		with the factor of $D$ in $(W,DW)_1$ guaranteeing that the second inequality holds with the $\infty$ norm (used since message decay is determined by the $\infty$ norm); one may further verify that the above inequalities hold for all $t \geq t_{\sf move}$. Thus even though the anyons in $\sfB$ have time to receive signals from $\sfB^c$ before they are annihilated, these signals are guaranteed to be overridden by signals from other anyons in $\sfB$, allowing the above erosion argument to proceed unchanged. 		
	\end{proof} 
	
	Extending this statement to clusters defined with the $\infty$ norm is straightforward, but the details are unfortunately a bit ugly: 
	
	\ms\begin{corollary}[linear erosion: $\infty$-norm]\label{lemma:linear_erosion_infty}
		Consider a $(W,3W/2)_\infty$ cluster of errors in a state initialized with trivial messages $\bfm= \bfzero$, with all anyons in the cluster contained with an $\infty$-ball $\sfB = B_\bfr^{(\infty)}(W/2)$. Define the pyramid-shaped regions
		\be \scp^{a}_\bfr(l) = \{ \bfr' \, : \,   r^a \leq (r')^a \leq  r^a + l \, \, \wedge \, \, (r')^a-r^a \geq (r')^b  - r^b\, \, \forall \, \, b \in \{1,\dots,D\}\}.\ee  
		The anyons initially in $\sfB$ never leave the region 
		\be \sfB \cup \scp^{x}_{\bfr-W\uvx}(W/2) \subset B_\bfr^{(\infty)}(W),\ee 
		and annihilate with another anyons initially in $\sfB$ by a time $d_vW$, where $d_v =\frac{3}2(1+1/v)$. 
	\end{corollary}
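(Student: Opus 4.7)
The proof strategy closely parallels that of Lemma~\ref{lemma:linear_erosion}, with two adjustments reflecting the $\infty$-norm geometry and the enlarged confinement region. Set $\bfr=\bfzero$. The key enabling observation is that throughout $\sfB\cup\scp^{x}_{-W\uvx}(W/2)$, intra-cluster messages dominate over inter-cluster ones: for any point $\bfr'$ in the enlarged region, there is always an intra-cluster anyon at $\infty$-distance at most $3W/2$ (giving a message value of at most $3W/2$), while the $3W/2$ buffer guarantees that every inter-cluster anyon is at $\infty$-distance at least $3W/2$ from $\sfB$ and at least $W$ from the pyramid, so its messages arrive carrying values bounded below by these quantities. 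Combined with the fact that $m$-fields propagating along $\pm a$ only receive contributions from sources in the corresponding $\mp a$ half-space, one checks directionally that the smallest-valued message in any sector consulted by the feedback rule is always of intra-cluster origin.

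The confinement step shows that anyons initially in $\sfB$ remain in $\sfB\cup\scp^{x}_{-W\uvx}(W/2)\subset B_\bfzero^{(\infty)}(W)$ at all times. On the right and transverse faces of $\sfB$, $\sfA_\ct(\plc^{+x})=0$ and the degeneracy-breaking rules of Def.~\ref{def:modified_feedback} drive anyons back into $\sfB$; the new case is the slanted faces of the pyramid, where by construction the modified feedback's $-\uvx$ bias precisely compensates any transverse step, preventing outward motion through a slanted face in a single update. The erosion step then mimics the inductive argument of Lemma~\ref{lemma:linear_erosion}: by time $t_{\sf move}=3W/(2v)$ the past lightfront of every point in the enlarged region has had time to traverse its $\infty$-diameter $3W/2=v\,t_{\sf move}$, so every surviving anyon receives a signal from some other intra-cluster anyon and must either move or annihilate at each subsequent step. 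Defining $\sfS_\d=(\sfB\cup\scp^{x}_{-W\uvx}(W/2))\cap\{r^1=W/2-\d\}$ and inducting on $\d$ exactly as before, one shows that the rightmost $r^1$ coordinate of any anyon decreases by $1$ per step after $t_{\sf move}$: the base case is that $\sfS_0$ (the rightmost face of $\sfB$) empties at $t_{\sf move}+1$ since $\sfA_\ct(\plc^{+x})=0$ there, and the inductive step uses $v>1$ to conclude that if $\sfS_k$ has been empty for $\d-k$ steps for all $k<\d$, then $\plc^{+x}$ of any point in $\sfS_\d$ contains no anyons, so the modified rule sends such an anyon into $\sfS_{\d+1}$. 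Because the enlarged region extends only to $r^1=-W$, the induction terminates at $\d=3W/2$, yielding total elapsed time $t_{\sf move}+3W/2=\tfrac{3}{2}(1+1/v)W=d_vW$.

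The main obstacle is the confinement check on the pyramid's slanted faces for $D\geq 2$: one must verify that for every message configuration consistent with intra-cluster dominance, $\wt\mcf^{(D)}$ moves an anyon on a slanted face either into the interior, in place, or along the face, but never outward. This boils down to an exhaustive but routine case analysis of the modified rule together with the specific degeneracy-breaking convention of Def.~\ref{def:modified_feedback}, checking that the $-\uvx$ bias added whenever the ``bare'' motion is along $\pm\uva$ with $a\neq 1$ exactly matches the transverse drift an anyon might execute. All other details---the intra-cluster dominance bounds, the base case, the erosion induction---are direct transcriptions of the 1-norm arithmetic, replacing 1-balls with $\infty$-balls and appending the pyramid slice-by-slice to the region being eroded.
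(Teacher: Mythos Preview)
Your proposal follows the same route as the paper—confinement to $\sfB\cup\scp^x_{-W\uvx}(W/2)$, then inductive slice-by-slice erosion starting at $t_{\sf move}=3W/(2v)$—and the final arithmetic matches. However, there is a concrete geometric error in your confinement step for the pyramid's slanted faces. You write that ``the modified feedback's $-\uvx$ bias precisely compensates any transverse step, preventing outward motion through a slanted face in a single update''. This is backward: the upper slanted face of $\scp^x_{-W\uvx}(W/2)$ lies along $r^2=r^1+W$ and has outward normal $(-1,+1)/\sqrt2$, since the pyramid \emph{narrows} toward its apex at $-W\uvx$. A $+\uvy$ step combined with the $-\uvx$ bias produces net displacement $(-1,+1)$, which is exactly the outward normal direction—the bias carries the anyon further out of the pyramid, not back in. (Contrast the $1$-norm ball of Lemma~\ref{lemma:linear_erosion}, whose faces have normals along $(+1,\pm1)$, so that $(-1,\pm1)$ steps are tangent; the pyramid's faces have the opposite orientation.) A pure $-\uvx$ step from the upper face likewise exits. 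Confinement on the slanted faces therefore cannot come from the bias-compensation mechanism you invoke; it must instead be argued by showing that the outward-pointing message sectors ($m^{-y}$ and $m^{+x}$ on the upper face) are never the minimal ones.

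This is where your intra-cluster dominance claim would need to do the work, but the bounds you state do not suffice: for a point in the pyramid you give inter-cluster message value $\geq W$ versus intra-cluster value $\leq 3W/2$, which is the wrong inequality. The paper's proof sidesteps this by first running the entire confinement and erosion argument for a $(W,\infty)_\infty$ cluster (no inter-cluster signals at all, so the outward sectors are genuinely empty), and only afterward appending the one-line extension that with buffer $3W/2$ the smallest message received by any anyon in $\mcr(t)$ remains intra-cluster. Your direct approach to the finite-buffer case would need a more careful, time-dependent dominance statement than the static distance comparison you give.
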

	Note that in 1D the addition of $\scp^x_{\bfr-W\uvx}$ and the increased erosion time are unnecessary, since the 1-norm and $\infty$-norm are identical. Similarly to the previous Lemma, this result in fact strictly overestimates the time needed for erosion to be completed by an amount proportional to $W$. 
	\begin{proof}
		Consider first the case of a $(W,\infty)_\infty$-cluster.  Let us first show that the anyons initially in $\sfB$ never leave $\sfB\cup\scp^{x}_{\bfr-W\uvx}(W/2)$. By a similar argument as in the proof of Lemma~\ref{lemma:linear_erosion}, one easily sees that due to the modified feedback rules, an anyon at $\bfr\in \p \sfB$ can only move to $\L\setminus\sfB$ if $\bfr$ is on the component of $\p \sfB$ with unit normal along $-\uvx$; the same argument shows that anyons at positions $\bfr \in \p \scp^{x}_{\bfr-W\uvx}(W/2) \setminus \p \sfB$ cannot move to $ \L \setminus \scp^{x}_{\bfr-W\uvx}(W/2)$. The time at which all anyons have received  messages is upper bounded by $t_{\sf move} = \frac{3W}{2v}$ (the largest $\infty$-norm distance between two points in $ \sfB \cup \scp^{x}_{\bfr-W\uvx}(W/2)$), which is a loose upper bound since the initial positions of the anyons are contained within $\sfB$. At $t \geq t_{\sf move}$, the same inductive argument as in the proof of Lemma~\ref{lemma:linear_erosion} shows a direct analog of \eqref{shrinking}, viz. 
		\be \label{shrinking2} \mcr(t_{\sf move} + \d ) \subset \(\sfB\cup \scp^{x}_{\bfr-W\uvx}(W/2)\)  \cap\{\bfr \, : r^1 \leq W/2 - \d \}\ee 
		where $\mcr(t)$ is defined in the same way as previously. The RHS is empty when $\d > 3W/2$, which shows the claim for $(W,\infty)_\infty$ clusters. That this also holds for $(W,3W/2)_\infty$ clusters follows from the fact that the most recent message received  by an anyon in $\mcr(t)$ is then guaranteed to be from another anyon in $\mcr(t)$. 			
	\end{proof} 
	
	\sss{Erosion of a single cluster: arbitrary messages} 
	
	The above statements about erosion were made for initial states with trivial messages. For arbitrary initial message patterns, we have the following statement: 
	\ms\begin{corollary}[linear erosion with arbitrary messages: $\infty$-norm]\label{cor:infarb}
		Define $f_v = \frac{v+1}{v-1}$. Consider a $(W,f_vW)_\infty$-cluster of errors, with all anyons contained in an $\infty$-ball $\sfB = \sfB^{(\infty)}_\bfr(W/2)$ for some site $\bfr$, and let the initial pattern of messages in the system be arbitrary. Then if $v>2$, the anyons initially in $\sfB$ never leave the region
		\be \sfB^{(\infty)}_\bfr(f_vW/2) \cup \scp^x_{\bfr - f_vW\uvx}(f_vW/2),\ee 
		and annihilate with other anyons initially in $\sfB$ by a  time $c_vW$, where $c_v = (3v+5)/(2(v-1))$.
	\end{corollary}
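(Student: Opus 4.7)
My plan is to reduce the arbitrary-messages case to Corollary~\ref{lemma:linear_erosion_infty} by splitting the decoding dynamics into an initial \emph{relaxation phase} $[0, t_*]$, during which the stale initial messages are progressively overwritten by messages generated by anyons in the cluster, and a subsequent \emph{erosion phase} $[t_*, c_v W]$, in which the earlier argument applies with minor modifications. Throughout both phases anyons move at most one lattice site per step, so bounding $t_*$ in terms of $W$ and $v$ will automatically give a spatial confinement bound. Because the modified feedback rules of Def.~\ref{def:modified_feedback} supplement generic feedback with a drift along $-\uvx$ whenever the received signal comes from a direction $\pm\uva$ with $a\neq 1$, the worst-case displacement of an anyon is naturally biased along $-\uvx$, which is what is captured by the pyramid $\scp^x_{\bfr - f_vW\uvx}(f_vW/2)$. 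The ball $\sfB^{(\infty)}_\bfr(f_vW/2)$ has radius $W/2 + W/(v-1)$, so the condition $v>2$ guarantees that this enlargement comfortably accommodates a relaxation time of the needed size.

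The key technical step will be to show that $t_* = O(W)$ suffices for every message received by a surviving cluster anyon at $t\geq t_*$ to have been generated by an anyon still in the cluster. I would interpret any nonzero stale message with initial value $k_0$ at a site $\bfr_0$ as a ``phantom'' message produced in the past by a virtual anyon; under the update rule $\mca^{(v,D)}$ such a phantom wave propagates forward at speed $v$ and is incremented each step, so after $t$ steps it has value at least $k_0 + t$ and has displaced by at most $vt$. A fresh message sourced by an anyon in $\sfB$, by contrast, reaches any point at $\infty$-distance $d$ from the cluster after $d/v$ time steps and carries value at most $d + W$. Comparing the two propagations, and using $v>2$ (equivalently $f_v = (v+1)/(v-1)<3$), I would verify that beyond time $t_*$ the fresh messages strictly dominate the stale ones at every site whose past lightfront still contains a surviving cluster anyon. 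The buffer thickness $f_v W$ enters the same estimate, guaranteeing that stale messages originating outside the buffer have accumulated enough value to be overridden by fresh cluster-generated ones by the time they reach the cluster.

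Once $t_*$ is in hand, the inductive argument of Corollary~\ref{lemma:linear_erosion_infty} applied to the enlarged effective ball $\sfB^{(\infty)}_\bfr(f_vW/2)$ yields annihilation of all cluster anyons by time $t_* + d_v f_v W$; choosing $t_*$ to saturate the joint spatial and temporal constraints produces the stated $c_vW = (3v+5)W/(2(v-1))$. I expect the dominance claim to be the main obstacle: stale messages can persist by propagating forward and being refreshed by other stale messages behind them, so one must verify that the fresh wavefront from the real cluster anyons actually overrides the stale front in every relevant direction within the allotted time, and that every surviving anyon has received at least one fresh message from another cluster anyon before the erosion phase begins — otherwise that anyon could drift in a direction taking it outside $\sfB^{(\infty)}_\bfr(f_vW/2) \cup \scp^x_{\bfr - f_vW\uvx}(f_vW/2)$.
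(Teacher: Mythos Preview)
Your approach is essentially the same as the paper's: split into a relaxation phase (the paper calls its endpoint $\trec \leq W/(v-1)$) during which stale messages become strictly older than fresh intra-cluster ones, then apply Corollary~\ref{lemma:linear_erosion_infty} to the enlarged ball of width $f_vW$. One small arithmetic point: in the erosion phase you should use $3(W/2+\trec)$ rather than $d_v f_v W$, since the $1/v$ part of $d_v$ accounts for initial message propagation that has already happened by time $t_*$; with this correction the total is exactly $\trec + 3(W/2+\trec) = 3W/2 + 4\trec \leq c_vW$.
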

	\begin{figure}
		\centering
		\includegraphics[width=.6\tw]{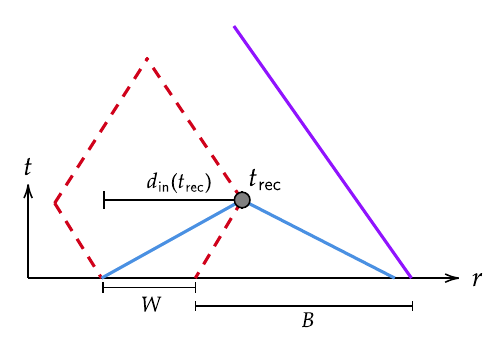} \caption{\label{fig:arb_msg_proof} Geometric considerations relevant to the proof of Corollary~\ref{cor:infarb}, illustrated in the case of a $(W,B)$ cluster in 1D. Blue lines indicate spacetime trajectories of messages, and the anyons in the cluster at $t=0$ are contained in the indicated interval of width $W$. For a fixed time $t$, the distance between the two red dashed lines equals $2R(t)$. While $R(t)$ can increase at small $t$ due to the arbitrary pattern of messages in the initial state, $R(t)$ must decrease after the time $\trec$ when the rightmost anyon in the cluster begins to receive signals from the leftmost anyon, provided that $v>2$ and that $B$ is large enough. The minimal value of $B$ is determined by requiring that signals emitted by anyons not in the cluster do not reach the rightmost anyon by time $\trec$. In the figure, the purple line denotes the worldline of the leftmost anyon to the right of the cluster, and the above requirement is satisfied by virtue of the purple line intersecting the line $t=0$ to the right of the place where the right blue line intersects. In higher dimensions, the linear dimensions of the spacetime rectangle in which worldlines of anyons in the cluster are confined is increased due to the $\scp^x$ regions appearing in Corollary~\ref{lemma:linear_erosion_infty}.}
	\end{figure}
	To ensure that this result always holds, we will restrict to $v\geq 3$ in what follows. The basic idea of the proof is that while the arbitrary initial message pattern can cause the support of the cluster to expand at small times, the fast ($v\geq3$) speed of the messages, and the fact that newer messages overwrite older ones, means that this initial expansion is guaranteed to be inexorably reversed at late enough times. The geometric considerations relevant to the proof are illustrated in fig.~\ref{fig:arb_msg_proof} for the case of $D=1$ (where the union with $\scp^x_{\bfr-f_v W\uvx}(f_vW/2)$ is unnecessary and erosion consequently faster). As usual, the result proved here will overestimate the time needed for erosion by an amount proportional to $W$. 
	\begin{proof}
		Consider a $(W,B)_\infty$ cluster of errors  in a state with arbitrary initial message distribution, and let $R(t)$ be the radius of the $\infty$-ball in which all anyons initially in $\sfB$ are contained in at time $t$ (so that $R(0) = W/2$). The longest erosion time will be realized when the initial message pattern causes $R(t)$ to linearly increase at small $t$. Let $\trec$ be the latest time at which an anyon initially in $\sfB$ receives a signal from another anyon initially in $\sfB$. Assuming the worst-case scenario where $dR/dt=2$ for $t \leq \trec$, a simple geometric argument shows $\frac{W + \trec}v \geq \trec$ (see fig.~\ref{fig:arb_msg_proof}), giving 
		\be \trec \leq \frac W{v-1}.\ee 
		
		At time $t$, the maximum possible $\infty$-norm distance $d_{\sf in}(t)$ of signals received  by anyons in $B^{(\infty)}_\bfr(R(t))$ from other anyons in $B^{(\infty)}_\bfr(R(t))$ satisfies (again, see fig.~\ref{fig:arb_msg_proof})
		\be d_{\sf in }(t) \leq W + \trec + 2(t-\trec) .\ee 
		If this distance is smaller than the minimal distance $d_{\sf out}(t)$ of messages received  by anyons in $B_\bfr^{(\infty)}(R(t))$ from those in $\L \setminus B_\bfr^{(\infty)}(R(t))$ as well as from messages in the initial state, the latter types of messages will not influence the erosion of the anyons in $B_\bfr^{(\infty)}(R(t))$. Note that we are guaranteed to have $d_{\sf out}(t) = tv$ if the size of the buffer region satisfies $B - t \geq tv$. Assuming this is true, we are guaranteed to have $d_{\sf in}(t) < d_{\sf out}(t)$ at $t= \trev + 1$ provided $v > 2$ (since $d_{\sf in}(t)$ increases by at most $2$ each time step, while $d_{\sf out}(t)$ increases by $v$, and $d_{\sf out}(\trec) \geq d_{\sf in}(\trec)$). 
		Self-consistency then requires $B - \trec \geq W + \trec$, which is satisfied if $B \geq \frac{v+1}{v-1}W$. 
		
		At times $t > \trec$, the anyons now contained in $\sfB^{(\infty)}(W/2 + \trec)$ all have worldlines of anyons initially in $\sfB$ as the closest worldlines in their past lightfronts, and they thus annihilate with each other for the same reason as in the proof of Corollary~\ref{lemma:linear_erosion_infty}. From this starting point, annihilation takes a time no longer than $3(W/2 + \trec)$, and so all anyons initially in $\sfB$ are eliminated by a time no longer than 
		\be \frac{3W}2 + 4\trec \leq \frac{3v+5}{2(v-1)},\ee 
		showing the claim. 
	\end{proof} 
	
	\sss{Erosion of multiple clusters}\label{sss:const_slowdown}
	
	We now generalize Corollary~\ref{cor:infarb} by considering the simultaneous erosion of all clusters in an input state $\r \sim\mce$, and show that message-mediated interactions between clusters only slow down erosion by a constant factor. 
	
	In what follows, a {\it spacetime rectangle of width $W$ and height $T$ based at $(\bfr,t)$} will refer to the set of spacetime points $\{(\bfr',t')  \, : \, \bfr' \in B_\bfr^{(\infty)}(W/2),\,\, t' \in [t,t+T]\}$.

	\ms\begin{lemma}[message exchange slows down cluster erosion by a constant factor]\label{lem:const_slowdown}
		Consider a state initialized with trivial messages $\bfm = \bfzero$. For $v>2$, there exists choices of parameters $w_0,b_0,n$ such that for all $k\leq k_L$, worldlines of anyons that originate in a $k$-cluster $\mcc_k$ centered at position $\bfr_{\mcc_k}$ are contained in a spacetime rectangle $\wt\mcr_{\mcc_k}$ based at $(\bfr_{\mcc_k},0)$ whose height $T_k$ and width $W_k$ are upper bounded as 
		\be \label{twcond} T_k \leq Aw_k,\qq W_k \leq B w_k \ee 
		for $O(1)$ constants $A,B$ which may chosen such that if $\mcc_k,\mcc'_k$ are two distinct $k$-clusters, then 
		\be \label{disjoint} \wt\mcr_{\mcc_k} \cap \wt\mcr_{\mcc'_k} = \emp.\ee   
	\end{lemma}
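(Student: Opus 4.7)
The plan is to proceed by strong induction on $k$. For the base case $k=0$, a $0$-cluster is a $(w_0,b_0)_\infty$-cluster in $\sfN_0=\sfN$, and since the initial messages are trivial, Corollary~\ref{lemma:linear_erosion_infty} applies directly (provided $b_0\geq 3w_0/2$), yielding worldline confinement in a set of $\infty$-diameter $\leq 2w_0$ and complete erosion by time $d_v w_0$. Taking $A\geq d_v$ and $B\geq 2$ handles this case.

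For the inductive step, fix $k\geq 1$ and assume the claim for all $j<k$. Let $\mcc_k$ be a $k$-cluster centered at $\bfr_{\mcc_k}$, whose anyons are initially in $\sfB_k = B^{(\infty)}_{\bfr_{\mcc_k}}(w_k/2)$. The buffer shell $B^{(\infty)}_{\bfr_{\mcc_k}}(w_k/2+b_k)\setminus \sfB_k$ contains no points of $\sfN_k$ but may contain points of smaller $j$-clusters; by the inductive hypothesis, the worldlines of any such $\mcc_j$ are confined to a spacetime rectangle of spatial width $\leq B w_{k-1}$ and temporal extent $\leq A w_{k-1}=Aw_k/n$, disjoint from worldlines of other $j$-clusters.

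I would then split the evolution of $\mcc_k$'s anyons into two phases. In Phase~1, for $t\in[0,Aw_{k-1}]$, smaller sub-clusters may still be active and perturb $\mcc_k$'s anyons; but since each anyon moves at most one lattice site per time step, $\mcc_k$'s anyons remain in $B^{(\infty)}_{\bfr_{\mcc_k}}(w'_k/2)$ with $w'_k = w_k(1+2A/n)$. In Phase~2, at $t>Aw_{k-1}$, the IH guarantees that all smaller sub-clusters have fully eroded, so only $\mcc_k$'s remaining anyons are present in the relevant spatial region, albeit with an arbitrary pattern of leftover messages. Applying Corollary~\ref{cor:infarb} to the residual $\mcc_k$ anyons (with effective width $w'_k$), provided $b_k - Aw_{k-1}\geq f_v w'_k$, yields an additional erosion time $\leq c_v w'_k$ and spatial confinement in a region of $\infty$-diameter $\leq f_v w'_k$. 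Summing, $T_k \leq (A/n + c_v(1+2A/n))w_k$ and $W_k\leq f_v(1+2A/n)w_k$; taking $n$ large, $A\geq 2c_v$, and $B\geq 2 f_v$ yields $T_k\leq Aw_k$ and $W_k\leq Bw_k$. Disjointness~\eqref{disjoint} is then immediate: distinct $k$-clusters have centers at $\infty$-distance $\geq b_k-w_k$, so rectangles of width $Bw_k$ are spatially disjoint whenever $b_0/w_0 > B+1$.

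The main obstacle is in Phase~1: smaller sub-cluster anyons could in principle cross-annihilate with $\mcc_k$'s anyons, and by Proposition~\ref{prop:cluster_anyons} this would leave unpaired residual anyons in both the smaller $\mcc_j$ (contradicting its IH erosion bound) and in $\mcc_k$. Handling this rigorously requires either a direct argument that the message dynamics at scale $w_{k-1}$ favors intra-cluster pairing---leveraging the condition $v>2$ together with the separation of scales $b_k\gg w_{k-1}$ to prevent cross-scale synchronization---or a bookkeeping argument that absorbs any affected $\mcc_j$ into $\wt\mcr_{\mcc_k}$ at the cost of a slightly larger constant $B$, relying on the fact that any interfering $\mcc_j$ must sit within an $O(w_k)$-neighborhood of $\bfr_{\mcc_k}$ so that disjointness~\eqref{disjoint} at level $k$ is preserved.
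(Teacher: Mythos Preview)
Your overall strategy—strong induction on $k$, waiting a time $\sim Aw_{k-1}$ for smaller clusters to clear, then applying arbitrary-message erosion to the residual $\mcc_k$ configuration—matches the paper's. The Phase~1 obstacle you identify is genuine and is precisely where the substance of the proof lies. The paper's resolution is cleaner than either of your proposed alternatives: it simply \emph{strengthens the inductive hypothesis} to include the statement ``anyons originating in $\mcc_{k'}$ annihilate only with other anyons from $\mcc_{k'}$,'' for every $k'<k$. This is not circular: at level $k'$, any point of $\sfN_{k'+1}$ (in particular any $\mcc_k$ anyon for $k>k'$) is by construction a point of $\sfN_{k'}$ lying outside $\mcc_{k'}$'s buffer, hence at distance $\geq b_{k'}$ from $\mcc_{k'}$'s ball. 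The signal analysis carried out at level $k'$ then treats all such anyons uniformly as ``outside $\sfN_{k'}$ anyons'' and shows their messages cannot override intra-$\mcc_{k'}$ messages before erosion completes. With this in hand, your Phase~1 is immediate: $\mcc_k$ anyons cannot annihilate against smaller-cluster anyons, so they merely drift. Your ``bookkeeping'' alternative of absorbing affected $\mcc_j$ into $\wt\mcr_{\mcc_k}$ cannot work as stated, since it would contradict the level-$j$ hypothesis that $\mcc_j$ anyons are confined to a box of width $Bw_j$.

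There is a second, smaller gap in Phase~2. Corollary~\ref{cor:infarb} as stated assumes all anyons lie in the ball; the ``arbitrary messages'' are leftover fields, not live sources. At the start of your Phase~2 there remain other $\sfN_k$ anyons (from distinct $k$-clusters and from $\sfN_{k+1}$) at distance $\gtrsim b_k$, and these continue to emit fresh signals throughout Phase~2. The paper therefore does not invoke Corollary~\ref{cor:infarb} as a black box; it re-runs the $d_{\sf in}/d_{\sf out}$ comparison from scratch with the additional constraint
\[
w_k/2+t+v(t-T_{k-1})<b_k+w_k/2-T_{k-1},
\]
which ensures that past lightfronts of $\mcc_k$ anyons hit the time-slice $t=T_{k-1}$ before intersecting worldlines of any other $\sfN_k$ anyon. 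This constraint is what actually couples $b_0/w_0$ to $A$ and $n$ and yields the self-consistent recursion for $A$. Your condition $b_k-Aw_{k-1}\geq f_v w_k'$ is morally in the right direction but does not by itself control these fresh external signals.
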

	
	Since this holds for all $k$, this implies that anyons in a $k$-cluster $\mcc_k$ are guaranteed to annihilate only against other anyons in $\mcc_k$. In the following subsection, this will enable the use of a percolation argument to bound $\plog, \tdec$. The proof below will at places give rather suboptimal bounds on $A,B$, especially in one dimension (where erosion is quicker); this is done with the aim of streamlining the presentation, as in the proofs of the previous results. The various spacetime regions used in the proof are illustrated in fig.~\ref{fig:1d_spacetime} in the case of $D=1$. 
	
	\begin{proof}
		For a $k$-cluster $\mcc_k$, let $\mcr_{\mcc_k}$ denote a spacetime rectangle of width $2w_k$ and height $d_vw_k$ based at $(\bfr_{\mcc_k},0)$, where $\bfr_{\mcc_k}$ is the spatial center of $\mcc_k$ and $d_v$ as in the statement of Lemma~\ref{lemma:linear_erosion_infty}. In the absence of message exchange between clusters, the worldlines of anyons in $\mcc_k$ would be confined to $\mcr_{\mcc_k}$, as we have $\mcr_{\mcc_k} \cap \mcr_{\mcc'_k} = \emp$ for distinct $k$-clusters $\mcc_k,\mcc'_k$ on account $w_0 < b_0$ (c.f. \eqref{bwineqs}), so that the spatial expansion of each cluster into the pyramids $\scp^{+x}$ (by a distance at most $w_0/2$) is not sufficient to connect the spacetime supports of distinct $\mcc_k,\mcc_{k'}$.\footnote{This is one of the places where our bounds on $A,B$ can be improved in 1D.} We will now inductively show that message exchange between clusters only increases the height and width of the $\mcr_{\mcc_k}$ by a factor proportional to $w_k$, and that the constants $w_0,b_0,n$ in the sparsity theorem can be chosen to guarantee that the resulting increased-size spacetime rectangles for different $k$ clusters remain disjoint.

		The base case for $k=0$ follows directly from the linear erosion property and our assumption that $w_0 < b_0$; this gives $T_0 \leq d_vw_0$ and $W_0 \leq  2w_0$, which is consistent with \eqref{twcond} if 
		\be \label{areq1} A \geq d_v, \qq B \geq 2\ee 
		where as before $d_v =\frac32(1+1/v)$. 
		The erosion analysis guarantees that $\wt\mcr_{\mcc_0} \cap\wt \mcr_{\mcc_0'} = \emp$, and that anyons originating from $\mcc_0$ never annihilate with anyons from any other 0-cluster.

		\begin{figure}
			\centering 
			\includegraphics[width=\tw]{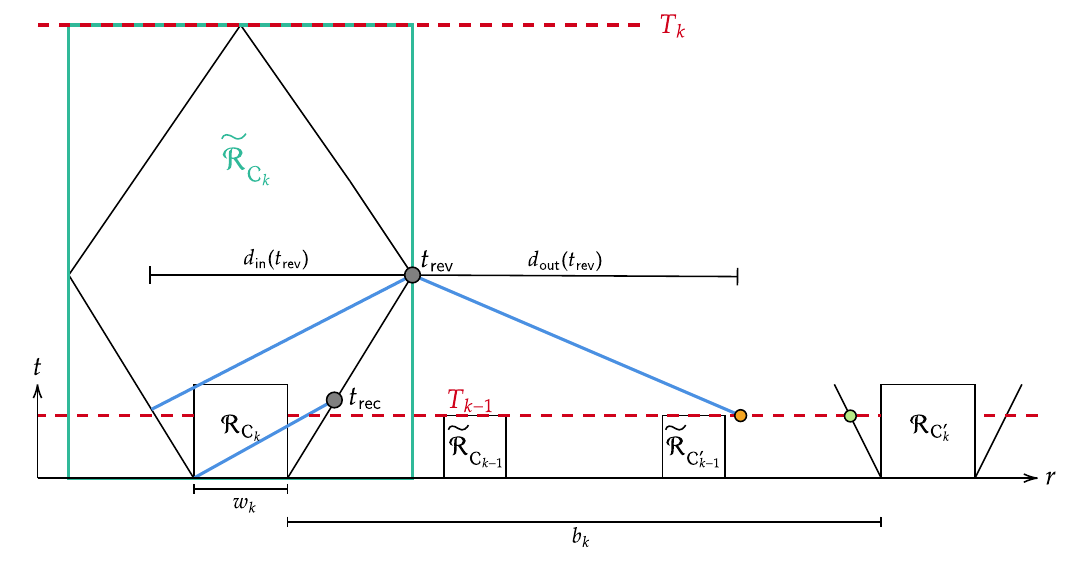}
			\caption{\label{fig:1d_spacetime} An illustration of the causal relationships between different regions of spacetime appearing during the proof of Lemma~\ref{lem:const_slowdown} in one dimension. Blue lines indicate signals. Anyons in $k$-clusters would be confined to the spacetime boxes marked $\mcr_{\mcc_k}$ in the absence of message-induced interactions between clusters; the Lemma shows that in the presence of interactions they are instead confined to larger spacetime rectangles $\wt \mcr_{\mcc_k}$ of height $T_k$, with $\wt \mcr_{\mcc_k} \cap \wt \mcr_{\mcc_k'} = \emp$ for distinct $k$-clusters $\mcc_k, \mcc_{k'}$.  $t_{\sf rec}$ is an upper bound on the latest time at which the rightmost anyon in a $k$-cluster $\mcc_k$ can receive right-moving signals from another anyon in $\mcc_k$, and $t_{\sf rev}$ is an upper bound on the latest time at which the rightmost anyon initially in $\mcc_k$ can move right. The parameters $w_0,b_0,n$ are chosen such that at time $t_{\sf rev}$, the rightmost anyon originally in $\mcc_k$ has to begin moving left before receiving signals from a distinct $k$ cluster $\mcc'_k$; this necessitates that the orange dot in the figure lie to the left of the green dot. In higher dimensions, the linear dimensions of the spacetime rectangles $\mcr_{\mcc_k},\wt\mcr_{\mcc_k}$  are larger due to the extra $\scp^x$ regions occupied by anyons during cluster erosion, c.f. Corollary~\ref{lemma:linear_erosion_infty}. }
		\end{figure}
		
		We now assume that for all $k'<k$, $T_{k'} \leq A w_{k'}, W_{k'} \leq B w_{k'}$, that $\wt\mcr_{\mcc_{k'}} \cap \wt \mcr_{\mcc_{k'}'} = \emp$ for distinct $k'$-clusters $\mcc_{k'},\mcc'_{k'}$, and that anyons in $\mcc_{k'}$ only annihilate with other anyons in $\mcc_{k'}$. Our aim is to show that there exist $O(1)$ choices of $A,B$ such that the induction step holds, provided $w_0,b_0,n$ are chosen appropriately.

		To upper bound $T_k,W_k$, we consider the longest possible trajectory by which anyons in a given $k$-cluster $\mcc_k$ are annihilated, using a strategy similar to the proof in Corollary~\ref{cor:infarb}. Consider this worst-case trajectory, and let $R(t)$ be the radius of the smallest $\infty$ ball centered at $\bfr_{\mcc_k}$ which at time $t$ contains all anyons that originated in $\mcc_k$. The inductive hypothesis implies that anyons in $B_{\bfr_{\mcc_k}}^{(\infty)}(R(t))$ cannot annihilate against any anyons from $(k'<k)$-clusters, which means that such anyons move change their positions by distance at most 1 (w.r.t the $\infty$ norm) at each time step until they annihilate against an anyon from a cluster $\mcc_{k''}$ with $k''\geq k$ (which we will show must be $\mcc_k$ itself). 
		
		In the following we will fix coordinates where $\bfr_{\mcc_k} = \bfzero$. Define $\trec$ and $d_{\sf in}(t)$ as in the proof of Corollary~\ref{cor:infarb}, with $\trec < w_k/(v-1)$\footnote{It is $<$ here and not $\leq$ since anyons in $\mcc_k$ cannot actually begin to move into $\L\setminus B_{\bfzero}^{(\infty)}(W/2)$ until a time of at least $b_0/v$ (as we initialize the system with trivial messages).} and $d_{\sf in}(t) < w_k+ \trec + 2(t-\trec)$. Define the regions 
		\be \scb_{\mcc_k}(t) = B_\bfzero^{(\infty)}(R(t)) ,\qq \sca_{\mcc_k}(t) = \scb_{\mcc_k}(t) \cup \scp^{x}_{-2R(t)\uvx}(R(t)).\ee 
		An anyon in $\scb_{\mcc_k}(t)$ can move into $\L \setminus \sca_{\mcc_k}(t)$ only if it receives signals from anyons in $\L \setminus \scb_{\mcc_k}(t)$ at distances less than $d_{\sf in}(t)$, and our inductive hypothesis implies that no such signals are received  past a certain time. Indeed, once $t > T_{k-1}$, all anyons in $\mcc_{k'<k}$ clusters have been eliminated, and so letting $d_{\sf out}(t)$ denote the minimal distance of signals received  by anyons in $\scb_{\mcc_k}(t)$ from those outside, we have 
		\be d_{\sf out}(t) > v(t - T_{k-1}),\ee
		provided that the past lightfronts of anyons in $\p\scb_{\mcc_k}(t)$ intersect the line $t = T_{k-1}$ before they intersect worldlines of anyons in a nearby $k$-cluster $\mcc_{k}'$. This situation is guaranteed to be avoided if (see fig.~\ref{fig:1d_spacetime})
		\be \label{next_cluster_constraint} w_k/2 + t + v(t-T_{k-1}) < b_k + w_k/2 - T_{k-1}.\ee 
		Some algebra then shows that if the above inequality holds, $d_{\sf out}(t) > d_{\sf in}(t)$ provided $t> t_{\sf rev}$, where
		\be t_{\sf rev} < \frac{w_k}{v-1} + \frac v{v-2} T_{k-1}.\ee  
		Since the closest worldlines in the past lightfronts of anyons in $\scb_{\mcc_k}(t)$ belong to anyons which originated in $\mcc_k$ when $d_{\sf out}(t) > d_{\sf in}(t)$, the satisfaction of \eqref{next_cluster_constraint} at $t=t_{\sf rev}$ will guarantee that anyons from $\mcc_k$ will not annihilate with anyons from any distinct $k$-cluster $\mcc_k'$, showing that $\wt \mcr_{\mcc_k} \cap \wt \mcr_{\mcc'_k} = \emp$ as desired. Using the above bound for $t_{\sf rev}$, this happens if 
		\be \frac{w_k}{v-1} + \frac v{v-2}T_{k-1} < \frac{b_k+ (v-1)T_{k-1} }{v+1}. \ee
		Using $w_k = n^k w_0, b_k = n^kb_0$, some algebra yields the constraint 
		\be \label{areq2} A \leq  n\frac{v-2}{4v-2} \(\frac{b_0}{w_0} - \frac{v+1}{v-1} \).\ee 
		The same argument as in Corollary~\ref{cor:infarb} then shows that the anyons which originated in $\mcc_k$ do not move outside of \be B^{(\infty)}_\bfzero(w_k/2 + \trev) \cup \scp^x_{-(w_k+2\trev)\uvx}(w_k/2+\trev),\ee  
		and are eliminated by time $\trev + 3(w_k/2 + \trev)$. Therefore\footnote{This is another place where we use a rather large overestimate of $W_k$.}
		\bea T_k  & <  \frac32 w_k + 4\trev <\( \frac32 + \frac4{v-1}  + \frac{4v}{(v-2)n} A\)w_0n^k , \eea 
		and similarly for $W_k$. 
		The induction step then holds provided  
		\be \label{areq3} A \geq \frac{\frac32 + \frac4{v-1}}{1 - \frac{4v}{n(v-2)}}\ee 
		and 
		\be \label{breq3} B \geq 2 + \frac4{v-1} + \frac{4v}{(v-2)n} A.\ee 
		The constraints that must be satisfied by $A$ are \eqref{areq1}, \eqref{areq2}, and \eqref{areq3}, although the lower bound in \eqref{areq3} is stronger than that in \eqref{areq1}, so we may focus only on \eqref{areq2} and \eqref{areq3}. We are then tasked with identifying a set of $w_0,b_0,n$ which enable $A$ to satisfy these constraints, while at the same time respecting \eqref{bwineqs}. Once this is done, we may then choose $B$ according to the RHS of \eqref{breq3}. 
		
		By inspecting the equations, it is clear that  \eqref{areq2}, \eqref{areq3}, and \eqref{bwineqs} can all be satisfied provided that we choose an appropriately large $O(1)$ value of $n$. One particular choice which works is 
		\be\label{choices} v=10, \qq w_0 = 5, \qq b_0 = 12,\qq n = 13,\qq A = 3.2,\ee 
		with smaller values of $n$ at fixed $v$ being attainable in 1D, where our bounds on erosion times can easily be made more stringent. 
	\end{proof}

	\section{Threshold theorems and decoding times} \label{sec:offline}
	
	In this section we prove that the local message-passing decoder of Def.~\ref{def:Dd_msg_passing} possesses a threshold under any $p$-bounded noise model $\mce$, and provide general bounds on $\ploge,\tdece$.
	Similar proofs can be given for various types of modified interaction rules achieved through the construction of sec.~\ref{ss:power_law_construction}.\footnote{As an example, attractive power-law interactions of the type considered in Ref.~\cite{herold2015cellular} can be shown to have thresholds as long the interactions are weak enough, decaying with a power $\a>D$.} 
	
	This section is structured as follows. sec.~\ref{ss:ptdef} takes care of some preliminary remarks about the definitions of $\ploge$ and $\tdece$. 
	In sec.~\ref{ss:offline_threshold}, we use the linear erosion results of sec.~\ref{ss:erosion} and a type of percolation argument to prove the existence of a threshold, with the sparsity result of sec.~\ref{ss:sparsity} showing that $\ploge,\tdece$ scale as in Theorem~\ref{thm:1d_offline}.
	Finally, in sec.~\ref{ss:general_comments} we provide some additional results on decoding times. In particular, we give explicit constructions of $p$-bounded error models with $\ploge = p^{\O(L^{1/\b})}$ and $\tdece = \O((\log L)^{\eta})$ with $\b,\eta>1$. These error models however have non-local correlations in space, and we conjecture that if $\mce$ has local correlations (and if we are allowed to store at least $2D\log_2(L)$ control bits per site), the decoding time improves to $\tdece = \ct(\log L)$ (which we will confirm numerically in sec.~\ref{sec:numerics} for i.i.d noise). 
	
	We continue to specialize to synchronous decoding for the entirety of this section. Analogous results for Lindbladian decoders will be given in sec.~\ref{sec:lind}.

	\ss{Preliminaries} \label{ss:ptdef} 
	
	We begin by giving more rigorous definitions of $\ploge$, $\tdece$, and the notion of a threshold. We will find it convenient to use the shorthand 
	\be \mcd^t(\r)_{\sf qu} = \Tr_{\mch_{\sf cl}}[\mcd^t(\r)] \ee 
	to denote the state obtained after evolving under $\mcd$ for $t$ time steps and then throwing away the classical control registers. 
	
	\ms\begin{definition}[logical failure rate, expected decoding time, and thresholds]\label{def:plogandtdec}
		For an error model $\mce$, the {\it logical failure rate} $\plog^{(\mce)}$ is the worst-case trace distance between an input logical state and the long-time output of $\mcd$ running on $\mce(\r_{\sf log})$:  
		\be \plog^{(\mce)} = \frac12\max_{\r_{\sf log}} \lim_{t \ra \infty}||  \mcd^t(\mce(\r_{\sf log}))_{\sf qu} - \r_{\sf log}||_1.\ee 
		When $\mce$ is a stochastic noise channel, this may be rewritten
		more transparently as 
		\be\plog^{(\mce)} =  \max_{\r_{\sf log}}\sfP_{\r \sim \mce,\r_{\sf log}}\( \lim_{t \ra\infty} \mcd^t(\r)_{\sf qu} \neq \r_{\sf log} \)\ee 
		where the notation $\r\sim \mce,\r_{\sf log}$ was defined near \eqref{final_channel}. $\mcd$ will be said to have a {\it threshold at $p_c$} if
		\be \lim_{L \ra \infty} \ploge(p,L) = 0 \, \, \forall \, \, p < p_c.\ee 
		
		The {\it decoding time} $\tdec^{(\mce)}$ is the average-case time complexity of the decoding process, viz. the expected number of times $\mcd$ must act on an input state with definite syndrome sampled from $\mce(\r_{\sf log})$ in order to obtain an anyon-free output:\footnote{Since the feedback applied by $\mcd$ depends only on the syndrome locations in its input state, this quantity is independent of the choice of $\r_{\sf log}$.}
		\be \tdec^{(\mce)} = \sum_{\bfsig} \Tr[\Pi_\bfsig \mce(\r_{\sf log})] \min\{ t\, : \, \Tr_{\mch_{\sf qu}}[\mcd^t(\Pi_\bfsig \mce(\r_{\sf log})\Pi_{\bfsig})_{\sf qu}\Pi_\unit] = 1 \} .\ee 
		We may similarly rewrite this as 
		\be \tdec^{(\mce)} = \EE_{\r \sim \mce} \tdec(\r),\ee 
		where 
		\be \label{tdecrhodef} \tdec(\r) = \min\{ t\, : \, \Tr[\mcd^t(\r)_{\sf qu} \Pi_\unit] = 1 \}.\ee
	\end{definition}\ms

	The discussion in sec.~\ref{ss:noise_models} means that we may without loss of generality take $\rlog$ to be the $+1$ eigenstate of each logical $Z$ operator, and (as before) take $\mce$ to only apply Kraus operators built from tensor products of $X$ operators; these choices will be made in what follows, and we will consequently omit the $\rlog$ in $\r \sim \mce,\rlog$.

	Recall also that the noise channels under consideration can always be taken to have the form $\mce(\r) = \sum_{\sfN \subset \L_l} p^{(\mce)}(\sfN) X^\sfN\r X^\sfN$. In what follows, we will refer to a particular subset $\sfN \subset \L_l$ as a {\it noise realization}, and will use the notation $\sfN \sim \mce$ to denote a random subset of $\L_l$ chosen with probability $p^{(\mce)}(\sfN)$.

	\ss{Bounds on logical error rates and decoding times} \label{ss:offline_threshold}

	We now use a percolation-type argument to combine the linear cluster erosion property of Lemma~\ref{lem:const_slowdown} and the sparsity theorem to prove the existence of a threshold and provide bounds on $\ploge,\tdece$. The heuristic argument is as follows. 
	Linear cluster erosion means that under the error correcting dynamics, small clusters  fail to link up and form a percolating cluster of anyons which can cause a logical error. As a result, logical errors can arise only from large clusters of size $L$, which the sparsity theorem guarantees are extremely rare for sufficiently small $p$. This shows the existence of a threshold error strength $p_c$, below which $\plog^{(\mce)}$ is suppressed super-polynomially in $L$. It also guarantees that the decoding time is set simply by the size of the largest cluster, which the sparsity theorem shows is only polylogarithmic in $L$ with high probability, giving $\tdec^{(\mce)} = O({\rm polylog} (L))$. 
	
	To keep the discussion simple, we will first derive bounds on $\ploge,\tdece$ in the setting where the decoder has access to at least $2D\lceil\log_2(L)\rceil$ classical control bits at each site; this allows us to set $m_{\sf max} = L$ (c.f.~\eqref{mvals}) and directly use the erosion results derived in the previous section. Bounds relevant for the setting where less bits are stored at each site will be discussed later in this subsection. 
	
	The precise statement we prove is the following: 
	
	\ms\begin{theorem}[threshold scaling and expected decoding time]\label{thm:precise_offline}
		Consider the local message passing decoder on a $D$-dimensional torus of linear size $L$ subjected to any $p$-bounded noise model $\mce$. If the decoder has access to at least $\lceil2D \log_2L\rceil$ control bits on each site, then for any update velocity $v>2$, there exists $O(1)$ constants $p_c,\b,\eta$ such that when $p < p_c$, 
		\be \label{plogthm} \ploge = (p/p_c)^{\O(L^{\b})}\ee 
		and 
		\be \label{tdecthm} \tdece = O((\log L)^\eta).\ee 
	\end{theorem}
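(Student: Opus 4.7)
The strategy is a percolation-type argument that combines Theorem~\ref{thm:sparsity} with Lemma~\ref{lem:const_slowdown}. Fix parameters $(w_0,b_0,n)$ satisfying both the sparsity inequalities \eqref{bwineqs} and the constraints in Lemma~\ref{lem:const_slowdown} for the given $v>2$ (e.g.\ the choice \eqref{choices} works for $v=10$, and analogous choices exist for any $v>2$). For each noise realization $\sfN\sim\mce$, run the hierarchical clustering of Def.~\ref{def:noise_hier} to produce the sequence $\{\sfN_k\}$, and for every cluster $\mcc_k$ at level $k$ invoke Lemma~\ref{lem:const_slowdown} to confine the worldlines of its anyons to a spacetime rectangle $\wt\mcr_{\mcc_k}$ of width $O(w_k)$ and height $T_k\leq A w_k$, with these rectangles pairwise disjoint across clusters of every level.

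For the bound \eqref{plogthm} on the logical failure rate, the key observation is that if every error belongs to some cluster at level $k\leq k_L$, then disjointness of the $\wt\mcr_{\mcc_k}$'s together with Prop.~\ref{prop:cluster_anyons} forces each anyon to annihilate only with another anyon from the same cluster, so the net Pauli feedback is a product of stabilizers supported on homologically trivial regions of the torus and no logical error occurs. A logical error therefore requires $\sfN_{k_L+1}\neq\emp$. A union bound over lattice sites combined with Theorem~\ref{thm:sparsity} and $n^{k_L+1}\sim L$ gives
\begin{equation}
\ploge \;\leq\; |\L_l|\,p_{k_L+1} \;\leq\; C\,(p/p_*)^{2(L/(w_0+2b_0))^{\log_n 2}},
\end{equation}
which matches \eqref{plogthm} with $p_c=p_*$ and $\b=\log_n 2$.

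For the expected decoding time \eqref{tdecthm}, let $k^*(\sfN)=\max\{k:\sfN_k\neq\emp\}$. When $k^*(\sfN)\leq k_L$, Lemma~\ref{lem:const_slowdown} gives $\tdec(\r)\leq Aw_0\,n^{k^*(\sfN)}$; in the complementary event we use only a crude worst-case bound $\tdec(\r)\leq\mathrm{poly}(L)$. Writing $\EE[n^{k^*}]$ as a sum over $k$ and using $\sfP(k^*\geq k)\leq \min\{1,\,|\L_l|\,p_k\}$, split the sum at the crossover scale $k^\dagger$ where $|\L_l|\,p_k$ first drops below $1$; from Theorem~\ref{thm:sparsity} this occurs at $k^\dagger\approx \log_2\!\bigl(D\log L/\log(p_*/p)\bigr)$. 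The portion $k\leq k^\dagger$ is bounded via $\sfP(k^*\geq k)\leq 1$ by a geometric sum $O(n^{k^\dagger})=O((\log L)^{\log_2 n})$; the portion $k^\dagger<k\leq k_L$ is governed by the doubly-exponential decay of $p_k$, which easily beats the geometric growth $n^k$ and sums to $O(1)$; and the $k^*>k_L$ tail contributes at most $\mathrm{poly}(L)\cdot(p/p_*)^{\Omega(L^\b)}$, which vanishes. This yields $\tdece=O((\log L)^{\log_2 n})$, i.e.\ $\eta=\log_2 n$.

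The main obstacle is really just the split-at-crossover computation in $\EE[n^{k^*}]$: one has to identify $k^\dagger$ correctly and check that the doubly-exponential decay above it dominates $n^k$, while also disposing of the $k^*>k_L$ tail with a sufficiently coarse but finite bound on the worst-case decoding time. Once the crossover is handled cleanly, both bounds reduce to combining the already-proved sparsity theorem and the disjointness/erosion lemma with a union bound over lattice sites, so no new quantum or combinatorial difficulty arises.
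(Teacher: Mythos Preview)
Your proposal is correct and follows essentially the same route as the paper: both halves combine Theorem~\ref{thm:sparsity} with Lemma~\ref{lem:const_slowdown} via a union bound, bounding $\ploge$ by $\sfP(\sfN_{k_L}\neq\emp)$ and bounding $\tdece$ by splitting a sum over levels at the crossover scale $\ell\sim\log_2(\log L)$ where $L^D p_\ell$ first becomes small. The only cosmetic differences are that the paper organizes the $\tdece$ computation as a tail sum $\sum_t\sfP(\tdec>t)$ broken into intervals $(T_{k-1},T_k]$ rather than through your random variable $k^*$, and that you make the large-$k^*$ tail contribution explicit with a $\mathrm{poly}(L)$ worst-case bound whereas the paper truncates its sum at $k_L$; the resulting constants $\b=\log_n 2$, $\eta=\log_2 n$, $p_c=p_*$ are identical.
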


	We have not endeavored to provide sharp bounds on the allowed values of $p_c,\b,\eta$. Our crude lower bounds on $p_c,\b$ monotonically increase with $v$, and our upper bound on $\eta$ monotonically decreases with $v$. 
	At the very least, for $v \geq 10$ we have 
	\be p_c > \frac1{29^{D}},\qq \b > \log_{13}(2) \approx 0.27, \qq \eta < \log_2(13) \approx 3.7,\ee 
	which are almost certainly extremely conservative bounds. 
	In sec.~\ref{ss:general_comments} we will show that for any $D$, there exist $p$-bounded noise models with $\b =\log_6(5)\approx 0.9, \eta = 1/\log_6(5)\approx 1.1$, and will motivate a conjecture that $ \eta = 1$ for locally correlated noise. 
	
	\sss{Bounding $\ploge$} 
	We first show the result \eqref{plogthm} about the scaling of $\ploge$. 
	
	\begin{proof} 
		We will assume throughout that $p < p_*$, where as before $p_*$ is the critical error strength appearing in the sparsity theorem (which sets a lower bound on the threshold error strength $p_c$). 
		
		Since the decoder has access to at least $\lceil 2D \log_2L\rceil$ bits on each site, we may set $m_{\sf max} = L$. 
		Since Lemma~\ref{lem:const_slowdown} showed that 
		anyons in distinct $k$-clusters are cleaned up within disjoint spacetime boxes, and since the intra-cluster pairing of anyons cannot cause a logical error, the logical error rate is upper bounded by the likelihood of the noise possessing a nontrivial $k_L$-cluster (c.f. \eqref{kldef}). Thus by a union bound and Theorem~\ref{thm:sparsity},
		\be \label{nonemptybound} \ploge \leq  \sfP(\sfN_{k_L} \neq \emp) \leq  \frac{p_*L^D}{ n^{Dk_L}} \( \frac p{p_*} \)^{2^{k_L}},\ee 
		and so 
		\be \ploge = \(\frac p{p_*}\)^{\O(L^{\log_n(2)})}\ee 
		with \eqref{choices} then giving $\b \geq \log_{13}(2)$ for $v \geq 10$.  
	\end{proof}
	
	\sss{Bounding $\tdece$} 
	
	We now show the bound on $\tdece$ given in \eqref{tdecthm}. 
	
	\begin{proof}
		Again, assume $p < p_*$. 
		Recall the definition of $T_k$ from Lemma~\ref{lem:const_slowdown}. In order for $\tdec > T_{k-1}$ for a particular noise realization, it is necessary that $\sfN_{k} \neq \emp$, so 
		\be p(\tdec > T_{k-1}) \leq \sfP(\sfN_{k} \neq \emp).\ee 	
		We then calculate 
		\bea \EE[\tdec] & = \sum_{t=0}^\infty \sfP(\tdec > t) \\
		& = \sum_{k=0}^{k_L} \sum_{t=T_{k-1}+1}^{T_k} \sfP(\tdec > t) \\ 
		& \leq \sum_{k=0}^{k_L} \sum_{t=T_{k-1}+1}^{T_k} \sfP(\tdec > T_{k-1}) \\ 
		& \leq \sum_{k=0}^{k_L} T_k \sfP(\sfN_k \neq \emp).\eea 
		where in the first line we used the standard tail sum formula $\EE[x] = \sum_{y=0}^\infty \sfP(x > y)$ and tacitly defined $T_{-1} = 0$. 
		We then split the sum up into terms $k< \ell$ and $k\geq\ell$, where $\ell$ is an integer to be specified shortly. Using \eqref{rarenoise} for $k \geq \ell$, we have 
		\bea \label{secondtdec} \EE[\tdec] & \leq \sum_{k=0}^{\ell-1} T_k +  \sum_{\d=0}^{k_L-\ell} T_{\ell+\d} \sfP(\sfN_{\ell+\d}\neq \emp)
		\\
		& \leq Aw_0 \sum_{k=0}^{\ell-1} n^k + \frac{Aw_0p_* L^D}{n^{(D-1)\ell}} \(\frac p{p_*}\)^{2^\ell} \sum_{\d = 0}^{k_L-\ell} n^{\d(1-D)} \(\frac p{p_*}\)^{2^\d} \\ 
		&	\leq Aw_0 n^\ell \(1 + \frac{p_* L^D k_L}{n^{D\ell }}  \(\frac p{p_*}\)^{2^{\ell}}\) .
		\eea 
		We then choose $\ell$ so that the second term in the parenthesis is negligible compared to the first in the $L\ra \infty$ limit. Writing $\ell = \log_2(\a \log_n(L))$ for some $O(1)$ constant $\a$,\footnote{As before, we will avoid writing explicit floor functions in contexts like this.} the second term is
		\be \frac{p_* L^D k_L}{n^\ell}  \(\frac p{p_*}\)^{2^{D\ell }} = O \((\log_nL)^{1-D\log_2(n)} L^{D-\a \log_n(p_*/p)}\),\ee 
		which is $o(L^0)$ if $\a > D/\log_n(p_*/p)$. Choosing e.g. $\a = 2D/\log_n(p_*/p)$, the first term in \eqref{secondtdec} then gives 
		\be \label{tdecfinal} \EE[\tdec] = O( (\log L)^{\log_2(n)}).\ee 
		The particular choice of parameters \eqref{choices} then gives $\eta \leq \log_2(13)$ for $v \geq 10$. 
	\end{proof}
	
	We now discuss how further restricting the number of classical bits available at each site changes these bounds. If the number of bits $n_{\sf bits}$ satisfies $n_{\sf bits} < \lceil 2D  \log_2 L \rceil$---so that $m_{\sf max} < L$ (see \eqref{mvals})---a given anyon may be incapable of transmitting messages to every site in the system. In this case, it is thus possible to construct examples of errors which are never eliminated, even at infinite times (giving the possibility of having $\tdece = \infty$). To fix this problem, we may add a small amount of random motion to the feedback operators, by having each anyon move in a random direction once every $u$ time steps, where $u$ is an $O(1)$ constant. As long as $u$ is sufficiently large, the erosion results of sec.~\ref{sec:erosion} go through unchanged for clusters of size less than $m_{\sf max}$. For clusters larger than $m_{\sf max}$ linear erosion is not guaranteed, and their annihilation may be diffusion-limited (when anyon pairs in a cluster separate to scales larger than $m_{\sf max}$). The diffusive behavior nevertheless ensures the eventual elimination of all clusters, and this is sufficient for deriving the following result: 
	
	\ms 
	
	\ms\begin{corollary}[thresholds with limited classical memory]
		Let $p_c,\b,\eta$ be as in Theorem~\ref{thm:precise_offline}, and let $\mce$ be any $p$-bounded stochastic noise model. 
		Suppose the number of control bits available at each site is $ n_{\sf bits} = \lceil 2D \a \log_2(L)\rceil$, where $0 < \a \leq 1$ is an $O(1)$ constant. Then 
		\be \ploge = (p/p_c)^{\O(L^{\a \b})}.\ee 
		Additionally, there is an $O(1)$ constant $c>0$ such that as long as $n_{\sf bits} > \lceil c \log_2 \log_2 L \rceil$, 
		\be \ploge = (p/p_c)^{\O((\log L)^{\l})},\ee 
		where $\l> 1$. 
		Finally, there is an $O(1)$ constant $c'>0$ such that as long as $n_{\sf bits} \geq \lceil c'\log_2\log_2 L\rceil$, the scaling of the decoding time is unchanged from the case where $m_{\sf max} = L$, viz. $\tdece = O((\log L)^\eta)$. 
	\end{corollary}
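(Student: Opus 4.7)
The plan is to rerun the proofs of Theorem~\ref{thm:precise_offline} with the hierarchy of clusters truncated at the largest scale that the decoder's messages can traverse. Storing $n_{\sf bits}$ control bits per site limits each message field to $m_{\sf max} = \Theta(2^{n_{\sf bits}/(2D)})$ distinct nonzero values, so an anyon only influences sites within $\infty$-norm distance $m_{\sf max}$. The erosion arguments of Section~\ref{ss:erosion} only rely on messages being exchanged within a single cluster, so Lemma~\ref{lem:const_slowdown} continues to hold for every $k$-cluster with $w_k \leq m_{\sf max}$, i.e., for all $k \leq k_\star \equiv \lfloor \log_n(m_{\sf max}/w_0)\rfloor$.

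The two bounds on $\ploge$ then follow from the observation that if $\sfN_{k_\star+1} = \emp$ then every noise point lies in a $k$-cluster with $k \leq k_\star$ and is correctly eroded, so no logical error can occur. Hence $\ploge \leq \sfP(\sfN_{k_\star+1}\neq\emp) \leq p_\ast L^D n^{-D(k_\star+1)}(p/p_\ast)^{2^{k_\star+1}}$ by Theorem~\ref{thm:sparsity}. Substituting $n_{\sf bits} = \lceil 2D\a\log_2 L\rceil$ gives $2^{k_\star+1} = \O(L^{\a\log_n 2}) = \O(L^{\a\b})$, proving the first claim. Substituting $n_{\sf bits} = \lceil c\log_2\log_2 L\rceil$ instead gives $2^{k_\star+1} = \O((\log L)^{c\log_n 2/(2D)})$; taking $c > 2D\log_2 n$ sufficiently large yields the exponent $\l>1$ required for the second claim.

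For $\tdece$, I augment the feedback with the random-motion component suggested immediately before the corollary: once every $u$ time steps, each anyon takes an independent random step. For $u$ chosen to be a large enough $O(1)$ constant, a Chernoff bound on the cumulative random displacement over the $\Theta(w_k)$-long erosion window shows that the analysis of Corollary~\ref{cor:infarb} and Lemma~\ref{lem:const_slowdown} still goes through for all $k \leq k_\star$, with the constants $A,B$ inflated by an $O(1)$ factor. The tail-sum calculation in the proof of Theorem~\ref{thm:precise_offline} may then be split at level $k_\star$:
\begin{equation*}
\EE[\tdec] \,\leq\, \sum_{k=0}^{k_\star} T_k\,\sfP(\sfN_k\neq\emp) \,+\, \sum_{k=k_\star+1}^{k_L} T^{\rm worst}_k\,\sfP(\sfN_k\neq\emp),
\end{equation*}
where $T^{\rm worst}_k$ bounds the worst-case elimination time of a $k$-cluster using random motion alone. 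The first sum is handled exactly as in Theorem~\ref{thm:precise_offline} and contributes $O((\log L)^\eta)$. For the second, diffusive annihilation gives the crude bound $T^{\rm worst}_k = O(L^2)$, while Theorem~\ref{thm:sparsity} makes $\sfP(\sfN_k\neq\emp)$ doubly-exponentially small in $2^k$. Choosing $c'$ large enough so that $L^{D+2}(p/p_\ast)^{2^{k_\star+1}} = o(1)$ makes the second sum vanish in the thermodynamic limit, giving $\EE[\tdec] = O((\log L)^\eta)$.

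The main obstacle is verifying that the random-motion augmentation does not spoil linear erosion of sub-$m_{\sf max}$ clusters. Concretely, one must confirm that an anyon's total random displacement over $O(w_k)$ time steps is with overwhelming probability $o(w_k)$, so that the containment region $B_\bfr^{(\infty)}(f_v W/2)\cup\scp^x_{\bfr-f_v W\uvx}(f_v W/2)$ of Corollary~\ref{cor:infarb} need only be enlarged by a constant factor, and that the degeneracy-breaking feedback rules of Definition~\ref{def:modified_feedback} remain compatible with the small stochastic perturbation. This is routine but somewhat delicate bookkeeping, which I would carry out in full by combining Azuma's inequality on the martingale of random displacements with a union bound over the (at most $O(w_k^D)$) anyons in a $k$-cluster.
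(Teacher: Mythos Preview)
Your proposal is correct and follows essentially the same route as the paper: truncate the cluster hierarchy at the largest level $k_\star$ that messages can traverse, invoke $\ploge\leq\sfP(\sfN_{k_\star+1}\neq\emp)$ together with the sparsity bound for the logical error rate, and for $\tdece$ split into a linear-erosion piece (handled as in Theorem~\ref{thm:precise_offline}) plus a diffusive tail whose probability is crushed by choosing $c'$ large enough. The paper organizes the $\tdece$ bound a bit more crudely---it writes only two terms, $T_{k_*}\sfP(\sfN_{k_*+1}=\emp)+CL^{2D}\sfP(\sfN_{k_*+1}\neq\emp)$, rather than your full tail-sum split---but the logic is the same.

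One small correction: your diffusive worst-case $T_k^{\rm worst}=O(L^2)$ is not valid for $D\geq 2$ (meeting times for random walkers on $\zz_L^D$ scale like $L^2\log L$ in $D=2$ and $L^D$ for $D\geq 3$); the paper uses the deliberately loose $CL^{2D}$. This does not affect the argument, since any polynomial bound is absorbed by the doubly-exponential decay of $\sfP(\sfN_{k}\neq\emp)$, but you should state a bound that actually holds. Your plan to verify via Azuma that random motion does not spoil sub-$m_{\sf max}$ erosion is more explicit than what the paper does (it simply asserts that for $u$ large enough the erosion results go through unchanged); either level of detail is acceptable here.
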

	
	\begin{proof}
		We will assume throughout that $p<p_*$, and will first show the bounds on $\ploge$. Logical errors may occur as long as the noise contains at least one $k_*$-cluster (assuming without loss of generality that $k_* < k_L$), where
		\be k_* =\lceil \frac{n_{\sf bits}}{2D} \log_n 2 \rceil,\ee 
		so that the cluster's size is not smaller than $m_{\sf max}$. 
		Then 
		\be \ploge \leq \sfP(N_{k_*}\neq \emp) \leq \frac{p_* L^D}{n^{Dk_*}} \( \frac p{p_*}\)^{2^{k_*}} = \begin{dcases} (p/p_c)^{\O(L^{\a \log_n(2)})} & n_{\sf bits} = \lceil 2D\a\log_2L \rceil \\ 
			(p/p_c)^{\O((\log L)^\l)} & n_{\sf bits} = \lceil\frac{2D\l}{\log_n2}\log_2\log_2L \rceil \end{dcases}, \ee 
		where the second line on the RHS requires $\l >1$ (with $2D\l/\log_n2$ at e.g. $n=13$ setting a bound on the constant $c$ above). 
		
		Now we show the statement about $\tdece$; to avoid getting bogged down on this point we will be slightly schematic. The basic fact that we will use is that for pair-annihilating random walkers on $\zz_L^D$, any initial configuration with an even number of particles is annihilated in a diffusion-limited time upper bounded by $CL^{2D}$ for constant $C$ w.h.p.\footnote{For $D>1$ this is a grossly conservative bound, which we use only to simplify notation. } We may then crudely bound $\tdece$ as 
		\bea \tdece & \leq T_{k_*} \sfP(\sfN_{k_*+1}=\emp) + CL^{2D} \sfP(\sfN_{k_*+1} \neq \emp).\eea 
		We will recover the same bound on $\tdece$ as in Theorem~\ref{thm:precise_offline} provided the second term is negligible compared to the first, and $k_*$ is large enough. The bound in \eqref{rarenoise} implies that the second term vanishes as $L\ra\infty$ if 
		\be 2^{k_* } > \frac{3D \ln L}{\ln(p_*/p)},\ee 
		which will be the case below threshold if e.g. $k_* \geq \log_2(4D \log_n(L)) = \ct(\ell)$, where $\ell$ is as in the proof of Theorem~\ref{thm:precise_offline}. Making this choice, one is then led to the same bound on $\tdece$ as in \eqref{tdecfinal}. 
	\end{proof}
	
	In the remainder of this work we will return to assuming $n_{\sf max} = \lceil 2D \log_2 L\rceil $, unless explicitly specified otherwise. 
	
	\ss{Gerrymandering} \label{ss:general_comments}

	Recall the general arguments we gave in sec.\ref{sec:intro}, which suggested the scalings 
	\be \label{naivescaling} \ploge = e^{-\ct(L)},\qq \tdece = \ct(\log L)\ee 
	below threshold for noise channels $\mce$ with local spatial correlations.
	Our threshold proof gave weaker bounds, but in a more general setting (for $\mce$ with non-local correlations). One may naturally wonder how tight the bounds of our threshold result are: could a refinement of our proof be used to improve the guaranteed scaling to \eqref{naivescaling} for {\it all} $p$-bounded $\mce$? 
	
	The following proposition shows that the answer is negative: there are indeed (highly spatially correlated) $p$-bounded error models for which $\ploge$ is only a stretched exponential in $L$, and for which $\tdece$ scales as $(\log L)^{\b}$ for some $\b > 1$. This is possible because the message-passing decoder produces logical errors on certain noise events whose weight is sublinear in $L$, which will be demonstrated shortly by showing the existence of certain sparse fractal-like patterns of errors that are guaranteed to cause logical failures. We will refer to this phenomenon as {\it Gerrymandering}, since the logical failures arise from a dilute collection of strategiclally-placed errors that sway the outcome of the global error correction operation. Gerrymandering is ubiquitous in error correction algorithms based (either explicitly or implicitly) on clustering; see also the discussions in Refs.~\cite{wootton2015simple,hutter2015improved,paletta2025high,balasubramanian2024local}. 
	
	 Gerrymandering implies that proving a scaling of $\tdece$ as in \eqref{naivescaling} for spatially local $\mce$ will necessitate techniques that specifically take advantage of the lack of $\mce$'s spatial correlations.\footnote{Gerrymandered configurations can dominate the scaling of $\ploge$ at very small $p$---even for local $\mce$---but they are usually too entropically disfavored to influence $\tdece$.}
	
	\ms\begin{proposition}[Gerrymandering]
		In any dimension, there exists a $p$-bounded error model $\mce_{\sf bad}$ such that for all communication velocities $v\geq3$, the logical error rate satisfies 
		\be \plog^{(\mce_{\sf bad})} = e^{-O(L^{\log_6(5)})},\ee 
		and a $p$-bounded error model $\mce'_{\sf bad}$ with 
		\be  \tdec^{(\mce'_{\sf bad})} = \O((\log L)^{\frac1{\log_6(5)}}).\ee 
	\end{proposition}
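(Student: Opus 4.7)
The plan is to exhibit explicit noise models whose support is concentrated on a recursively-defined family of fractal error patterns that defeat the nearest-anyon rule. It suffices to construct the patterns in one dimension and embed them into higher dimensions along a single row of the lattice, so I work on a cycle $\zz_L$ throughout.

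First, I would inductively construct a family of ``Gerrymandered'' noise patterns $\sfG_k \subset \L_l$ supported in an interval $I_k$ of length $6^k \ell_0$, with $|\sfG_k| = 5^k$, having the key property that when the decoder is run on $X^{\sfG_k}\rlog$ in an otherwise noise-free system, the two ``extremal'' anyons of the pattern must travel a distance $\Omega(6^k)$ before annihilating. The base case $\sfG_0$ is a single noise link producing one anyon pair. Given $\sfG_{k-1}$, I form $\sfG_k$ by placing five disjoint translated copies of $\sfG_{k-1}$ inside $I_k$, with inter-copy spacings tuned so that after each sub-pattern has internally processed, the residual anyons at its endpoints are closer in $\infty$-norm to the endpoints of neighboring sub-patterns than to their own correct partners. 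The decoder then mispairs across sub-patterns, leaving the two global extremals unmatched until they have been pushed a distance $\Omega(6^k)$. A straightforward induction on $k$ then confirms that the mispairings accumulate across all scales; the factor of $6$ versus $5$ is the slack needed so that each sub-pattern has an internal erosion buffer consistent with the finite message-transmission speed demanded by Corollary~\ref{cor:infarb} for $v\geq 3$.

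To construct $\mce_{\sf bad}$, I would pick $k_L = \lfloor\log_6(L/\ell_0)\rfloor$ and define the channel that, with probability $q = p^{5^{k_L}}$, applies $X^{\sfG_{k_L}}$ at a uniformly random translate, and acts trivially otherwise. Since the extremal anyons of $\sfG_{k_L}$ must traverse a distance of order $L$ to annihilate, they are forced to wind around the cycle, producing a logical error on the non-trivial branch; hence $\plog^{(\mce_{\sf bad})} \geq q/2 = e^{-O(L^{\log_6 5})}$. The $p$-boundedness property follows because the marginal probability that a fixed set $A \subset \L_l$ lies in the noise is at most $q \leq p^{5^{k_L}}$ when $|A| \leq 5^{k_L}$ and zero otherwise, hence at most $p^{|A|}$. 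For $\mce'_{\sf bad}$, I instead choose $k$ with $5^k = \lceil 2\log_{1/p}(L)\rceil$ and independently apply $X^{\sfG_k}$ at each translate with probability $q' = p^{5^k}$; then with probability $1-o(1)$ at least one copy of $\sfG_k$ appears, and its erosion cannot finish until the extremal anyons have traversed $\Omega(6^k)$, giving $\tdec^{(\mce'_{\sf bad})} = \Omega(6^k) = \Omega((\log L)^{1/\log_6 5})$.

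The main obstacle is the geometric design of $\sfG_k$ and the inductive verification that the sub-pattern spacings simultaneously induce the desired cross-pattern mispairing at each level, prevent messages leaked from already-eroded sub-patterns from disrupting the mispairing at higher levels, and remain consistent with the feedback rules of Def.~\ref{def:Dd_msg_passing}, including the degeneracy-breaking convention in \eqref{degbreak}, which will dictate detailed choices of direction for each sub-pattern relative to its neighbors. The delicate point is to show that no unforeseen alternative annihilation pathway allows the decoder to escape the mispairing at some intermediate level; once that is ruled out, the recursive arithmetic giving the exponents $\log_6 5$ and $1/\log_6 5$ is immediate.
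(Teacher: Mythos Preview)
Your approach is essentially the same as the paper's: both build a self-similar fractal noise pattern in 1D by recursively placing $5$ copies of the previous level inside an interval scaled up by a factor of $6$, so that nearest-anyon pairing systematically mismatches across sub-copies and leaves two extremal anyons that must traverse a distance $\Theta(6^k)$; the exponent $\log_6 5$ then falls out immediately, and both $\mce_{\sf bad}$ and $\mce'_{\sf bad}$ are built the same way (a single global fractal operator for the former, independent block-wise copies for the latter). The paper makes your ``main obstacle'' fully explicit by using the substitution rule $\scs_{n,n-1}(\mco) = \mco^{\otimes(n-1)}\otimes\unit$ and tracking the surviving anyon pair through an exact three-term recursion $(a_l,b_l,c_l)$, which cleanly verifies the mispairing at every level and pins down why $n=6$ (rather than $5$) is needed once $v$ is finite.
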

 
	\begin{proof}
		We first show the result about $\ploge$. The error model $\mce_{\sf bad}$ will be chosen to create a dilute chain of anyons along a 1D slice of the lattice which is guaranteed to be mapped to a logical error by the decoder. For this reason, we will focus on the 1D repetition code for simplicity; the generalization to higher dimensions is immediate. 
		
		For positive integers $k\leq n$, define the operator substitution rule 
		\be \scs_{n,k}(\mco) = \mco^{\tp k} \tp \unit^{n-k} ,\qq  \scs_{n,k}(A\tp B) = \scs_{n,k}(A)\tp \scs_{n,k}(B).\ee  
		If $k<n$, one readily verifies that iterating $\scs$ $m$ times on input $\mco=X$ produces an operator whose syndromes are arranged as a fractal patter, with a density of nontrivial Paulis that decreases exponentially with $m$ according to a fractal dimension set by $k/n$: 
		\be |\supp(\scs_{n,k}^m(X))| = n^m e^{-m \ln(k/n)}.\ee 
		For simplicity, consider a system where $L$ is a power of $n$, and let 
		\be \r_{n,k} = \scs_{n,k}^{\log_nL}(X) \r_0 \scs_{n,k}^{\log_nL}(X),\ee  
		with $\r_0 = (|0\ran\lan 0|)^{\tp L}$. We claim that for any update speed $v \geq 3$, there exist choices of $k<n$ such that the decoder outputs $\r_1 = (|1\ran\lan 1|)^{\tp L}$ when run for long enough on input state $\r$.\footnote{For simplicity, we use the  deterministic feedback rule, where an anyon does not move along $\uva$ if the forces it receives are balanced along the $\pm \uva$ directions. To ensure that the logical states are the only steady states of the decoder, this mandates that the system size (and hence $n$) be odd.} 
		
		For simplicity of presentation, we will specify to $k=n-1$ and momentarily take $v = \infty$. We then prove the that the decoder maps $\r_{n,k}$ to $(|1\ran\lan1|)^{\tp L}$ after a time linear in $L$ if $n\geq5$.
		To ease the notation, we will use $\scs$ as shorthand for $\scs_{n,n-1}$. 
		
		Consider the anyons created by applying $\scs^2(X)$ to a large logical state. Letting the beginning of this operator string be applied to the link with left vertex at site 0, the positions of the anyons are $\{ln,(l+1)n-1\, : \, l = 0,\dots,n-2\}$, and all but the first and last anyons are a distance of 1 away from another anyon. After one step of the decoder, only two anyons remain, at positions $\{1, (n-1)n-2\}$. The anyon configuration on the $n^2$ sites on which $\scs^2(X)$ was applied thus contains only a single string of anyons; we will use this to define a modified recursion rule. Define the operator 
		\be \mco_{a,b,c} = \unit^{\tp a} \tp X^{\tp b} \tp \unit^{\tp c}.\ee
		In this notation, $\scs(X) = \mco_{a_1,b_1,c_1}$ with $a_1 = 0, b_1 = n-1, c_1 = 1$. 
		The aforementioned anyon configuration created by applying one step of the decoder to $\scs^2(X)$ is created by $\mco_{a_2,b_2,c_2}$, where $a_2 = 1,b_2=  n(n-1)-2,c_2 = n+1$. To understand what happens to $\scs^3(X)$, we examine what happens to $\scs(\mco_{a_2,b_2,c_2})$, and so on. Continuing this recursion, we see under the action of the decoder, the anyons created by $\scs(\mco_{a_l,b_l,c_l})$ are equal to the anyons created by $\mco_{a_{l+1},b_{l+1},c_{l+1}}$, where\footnote{Omitting floor functions as usual.}
		\bea  \label{recurs} a_{l+1} = a_l + \frac{a_l + c_l}2,\qq b_{l+1} = \max(0,n^{l+1} - a_{l+1} - c_{l+1}),\qq c_{l+1} = c_l + \frac{a_l+c_l}2 + n^l,\eea 
		and that this mapping takes a time $ \frac{a_l+c_l}2$. In order for the decoder to map $\r_{n,n-1}$ to $\r_1$, we require that the anyon strings created by $\mco_{a_l,b_l,c_l}$ constantly grow in size; this mandates 
		\be \label{bl} b_l  > a_l + c_l.\ee 
		Solving the recursion relation with the given initial conditions, we have 
		\be a_l + c_l = \frac{n^l - 2^l}{n-2}.\ee 
		We may then use \eqref{recurs} to check that \eqref{bl} is satisfied if $n\geq 5$. Provided this is the case, the decoding time of this state is 
		\be \tdec(\r_{n,n-1}) = \frac1{2(n-2)}\sum_{l=1}^{\log_n(L)} (n^l - 2^l) = \ct(L)  .\ee 
		
		The extension of the above analysis to finite $v$ is tedious but straightforward, and will be omitted. For the minimum value of $n=5$, one finds that the decoder takes $\r_{n,n-1}$ to $(|1\ran\lan1|)^{\tp L}$ only if $v \geq 5$. For $n=6$ however, any $v\geq3$ is sufficient. 
		In these cases $\tdec(\r_{n,n-1})$ is still clearly $\ct(L)$ since taking $v < \infty$ only slows down the decoding; for $v=3,n=6$ it is numerically observed to approach $\tdec(\r_{n,n-1}) \approx 0.35L$ in the $L\ra\infty$ limit. 
		
		To show the desired result about the logical failure rate, it suffices to consider the channel $\mce_{\sf bad}$ constructed from two Kraus operators $\mck_\unit$ and $\mck_{n}$, defined as 
		\be \mck_\unit = \sqrt{1-p_{\sf eff}}\unit ,\qq \mck_{n} = \sqrt{p_{\sf eff}}\scs^{\log_n(L)}(X),\ee 
		where $p_{\sf eff} = p^{(n-1)^{\log_n L}}$. This is clearly a $p$-bounded noise channel, and the probability of a logical error is simply the relative weight of $\mck_{n}$, giving 
		\be \plog^{(\mce_{\sf bad})} = e^{-\ct(L^{\log_n(n-1)})}. \ee 
		Since we require $v \geq 3$ for a threshold, for which the argument goes through with $n \geq 6$, we may fix $\mce_{\sf bad}$ to have $n = 6$. Therefore 
		\be \plog^{(\mce_{\sf bad})} = e^{-O(L^{\log_6(5)})},\ee 
		which is what we wanted to show. 
		
		A very similar construction can be used to show the result on $\tdece$. For a fixed constant $\b$ to be determined later, let us divide the system into blocks of size $r = (\log L)^\b$; we will use notation in which we pretend $r$ is a power of $n$ (thus suppressing yet more floor functions). Let $\scs^{\log_n r}(X)_i$ denote the operator $\scs^{\log_n r}(X)$ acting on the $i$th block of the system. We take $\mce_{\sf bad}'$ to apply $\scs^{\log_nr}(X)_i$ independently on each block $i$ with probability 
		\be p^{|\supp(\scs^{\log_nr}(X))|} = p^{(n-1)^{\log_n r}}.\ee 
		From the above analysis of $\mce_{\sf bad}$, the time for the decoder to eliminate the anyons created by the noise on a single block is $\ct(r)$ if $n\geq 6$. Therefore for there is an $O(1)$ constant $c$ such that 
		\bea \sfP_{\r \sim \mce'_{\sf bad}}(\tdec(\r) < c r)  &= \( 1 - p^{(n-1)^{\log_n r}}\)^{L/r} \leq e^{- \frac Lr p^{(n-1)^{\log_n r}}}\\ 
		& = \exp\(- (\log L)^{\b \log_n(n-1)} \ln(1/p) + \ln L + \ct(\log \log L)\),\eea 
		since $\mce_{\sf bad}'$ independently creates errors in each block. This probability vanishes as $L \ra \infty$ if $\b > \frac1{\log_n(n-1)}>1$. Since we needed to take $n\geq 6$ to guarantee a long decoding time, setting $n=6$ gives the desired result. 
	\end{proof} 
	
	The channels $\mce_{\sf bad}, \mce'_{\sf bad}$ relied crucially on having long range correlations in space. What can be said about i.i.d noise? The easiest statement is the following trivial lower bound: 
	\ms\begin{proposition}
		For any quantum local automaton decoder in $D$ dimensions operating under $D$-dimensional i.i.d noise, 
		\be \tdec^{(\mce_{i.i.d})}  = \O( (\log L)^{1/D}).\ee 
	\end{proposition}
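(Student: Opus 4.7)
The plan is to produce, with probability $1-o(1)$ under the i.i.d.\ noise, a pair of anyons at mutual distance $d = \O((\log L)^{1/D})$ that is isolated by the same distance from every other anyon, and then to argue that a quantum local decoder cannot annihilate this pair in fewer than $\O(d)$ time steps.

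To construct the isolated pair, fix a small constant $c>0$ and set $d = c(\log L)^{1/D}$. Partition $\zz_L^D$ into disjoint boxes $C_i$ of side $3d$, of which there are $\sim (L/d)^D$. For each $C_i$, let $E_i$ be the event that the noise realization $\sfN$ restricted to the links of $C_i$ consists exactly of a straight string of $d$ consecutive error links placed at the center of $C_i$, with every other link of $C_i$ being error-free. The events $E_i$ depend on pairwise disjoint sets of links and are therefore mutually independent, and for i.i.d.\ noise at rate $p$ one has $\sfP(E_i) \ge p^d(1-p)^{\a d^D}$ for an $O(1)$ geometric constant $\a>0$. Choosing $c$ small enough (in terms of $p$, $\a$, and $D$) makes $\sfP(E_i) \ge L^{-D/2}$, so $\sfP(\bigcap_i E_i^c) \le (1-L^{-D/2})^{(L/(3d))^D} = o(1)$. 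On the good event $\bigcup_i E_i$, the initial noise creates a pair of anyons at the endpoints of one of these strings; they are at mutual distance $d$ and, since the $d$-thick error-free buffer surrounding the string sits inside $C_i$, at distance $\ge d$ from every other anyon in the system.

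For the lower bound on annihilation time, observe that a single time step of a quantum local decoder applies $\bigotimes_\bfr \mcf_\bfr$ with $\supp(\mcf_\bfr)\subset B_\bfr(\De)$ and $\De = O(1)$. Each individual anyon's position therefore changes by at most $O(1)$ per step, and in particular the distance between any two anyons can decrease by at most $O(1)$ per step. Because $\zt$ anyons are annihilated only by being fused on adjacent sites, on the good event the decoder cannot eliminate our isolated pair in fewer than $c'd$ steps for some constant $c'>0$: either the two anyons must be brought together across a distance $d$, or at least one of them must travel distance $\ge d$ to reach a distinct anyon outside $C_i$. Averaging over $\sfN$ then gives $\tdec^{(\mce_{i.i.d})} \ge c'd\cdot(1-o(1)) = \O((\log L)^{1/D})$. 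The only routine point deserving care is the independence of the $E_i$, which is immediate from the pairwise disjointness of the box interiors; the rest is a direct calculation.
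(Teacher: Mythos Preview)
Your proof is correct and follows essentially the same strategy as the paper's: exhibit, with high probability, an anyon configuration isolated by distance $\O((\log L)^{1/D})$, and then invoke the $O(1)$-per-step movement of anyons under quantum local feedback to lower-bound $\tdec$. The only cosmetic difference is that the paper looks for a ball containing a \emph{single} anyon (defining $r_{\sf iso}(\rho)$ as the largest such radius and arguing $r_{\sf iso}=\Theta((\log L)^{1/D})$ w.h.p.), whereas you plant an explicit error string to produce an isolated \emph{pair}; your version is a bit more explicit about the probability estimate, but both routes are the same idea and give the same bound.
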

	\begin{proof} 
		For a noisy state $\r \sim \mce$, let the random variable $r_{\sf iso}(\r)$ denote the radius of the largest ball which contains only a single anyon.
		Then the event $r_{\sf iso}(\r) = r$ implies the existence of a ball of volume $\sim r^D$ in which only one anyon is located; for i.i.d noise the probability for this to happen for a ball centered at a particular site decays like $e^{-cr^D}$ for an $O(1)$ constant $c$. Summing over all sites, this gives constants $a,b$ for which $a (\log L)^{1/D} < r_{\sf iso}(\r) < b (\log L)^{1/D}$ occurs w.h.p. Since a given anyon can move by at most one site at each time step, at least $r_{\sf iso}(\r)/2 \sim (\log L)^{1/D}$ time steps are needed to decode $\r$ w.h.p.
	\end{proof}
	This shows that for $\r \sim \mce_{i.i.d}$, there is a matching of the anyons in $\r$ whose edge lengths are all less than $b (\log L)^{1/D}$ w.h.p. Whether or not there is a perfect matching with the same bound on the largest edge length is a different question---since the largest separation between two anyons paired by the operators created by the noise goes like $\log(L)$ w.h.p, independent of $D$---although the following footnote argues that there is.\footnote{As just mentioned, there always exists a perfect matching in $\r$ which pairs anyons up according to the way in which they were created by the noise, and the longest edge length in this perfect matching goes as $\log(L)$ w.h.p. Given an anyon pair at coordinates $\bfr_1,\bfr_2$ with $||\bfr_1-\bfr_2||\sim \log(L)$, we can however w.h.p find a ``bridge'' of smaller anyon pairs connecting the two anyons in the large pair, in the sense that there exist anyon pairs at coordinates $\bfr_{i,1},\bfr_{i,2}$, $i=1,\dots,n$ with $n \lesssim (\log L)^{1-1/D}$, $||\bfr_1-\bfr_{1,1}||,||\bfr_{i,\a} - \bfr_{i+1,\b}||,||\bfr_{n,2}-\bfr_2|| < b(\log L)^{1/D}$ w.h.p. Flipping the perfect matching along the anyons in this bridge then gives a perfect matching where the edge of length $\sim \log(L)$ has been substituted for one of length at most $\sim (\log L)^{1/D}$. Repeating this process for all $\sim \log(L)$-sized pairs then suggests the existence of a perfect matching with largest edge length $\sim (\log L)^{1/D}$. } 
	Nevertheless, constructing a perfect matching with this bound on the largest edge length appears to require nonlocal information, and we conjecture that $\log(L)$ is the best one can do for i.i.d noise with a local decoder. Since any noise model with quickly decaying spatial correlations should roughly speaking renormalize to i.i.d noise at large enough (but still $O(L^0)$) distances, we conjecture the following: 
	\ms\begin{conjecture}\label{conj:local_noise}
		If $\mce$ is a $p$-bounded error model with local correlations, then for $p<p_c$, 
		\be \tdec^{(\mce)} = O(\log L).\ee 
	\end{conjecture}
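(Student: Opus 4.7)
The plan is to refine the tail sum analysis used in the proof of Theorem~\ref{thm:precise_offline} by replacing the worst-case bound $T_k \le Aw_k$ with an estimate that reflects the actual spatial extent of the $k$-clusters present in a typical noise realization. Under the local correlation hypothesis on $\mce$, this extent should be $O(\log L)$ uniformly in $k$, which immediately collapses $\tdece$ to $O(\log L)$.

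The first step is to strengthen the sparsity result of Theorem~\ref{thm:sparsity} by tracking joint, rather than merely marginal, probabilities. Specifically, I would try to show that if $\mce$ has exponentially decaying connected correlations, then for each $k$ there exists a finite $\xi_k$, growing at most polynomially in $n^k$, such that
\begin{equation}
	\sfP\bigl(\{\bfl,\bfl'\} \subset \sfN_k\bigr) \;\le\; C\, p_k^{\,2} \, e^{-\|\bfl-\bfl'\|_\infty/\xi_k}
\end{equation}
for all $\bfl,\bfl' \in \L_l$. A natural way to attack this is by induction on $k$: the event $\bfl \in \sfN_k$ is determined by $\sfN_{k-1}$ restricted to $B_\bfl(w_{k-1}+b_{k-1})$, so when $\|\bfl-\bfl'\|$ comfortably exceeds $2(w_{k-1}+b_{k-1})$ the two events concern disjoint balls, and invoking the inductive hypothesis on $\sfN_{k-1}$ should give the desired decoupling up to an exponentially small correction.

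The second step is to combine this refined sparsity bound with a union bound over possible cluster positions and levels to show that, with probability $1-o(1/L)$, every $k$-cluster appearing in the decomposition has spatial extent at most $r_* = C'\log L$ for some sufficiently large constant $C'$. Conditional on this ``good event,'' the linear erosion analysis of Lemma~\ref{lem:const_slowdown} goes through with spacetime rectangles $\wt\mcr_{\mcc_k}$ whose height and width are both $O(\log L)$: any cluster of extent $\ell \le w_k$ still enjoys the buffer $b_k \ge f_v w_k \ge f_v \ell$ required by Corollary~\ref{cor:infarb}, and is thus eroded in time $c_v\ell = O(\log L)$. The complementary ``bad event'' has probability $o(1/L)$, and even in that case $\tdec \le T_{k_L} = O(L)$, so its contribution to $\EE[\tdec]$ is $o(1)$. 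Altogether this yields $\EE[\tdec] = O(\log L)$.

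The main obstacle is the inductive step in the refined sparsity bound. The clustering rule defining $\sfN_k$ is nonlocal at scale $w_{k-1}+b_{k-1}$, which grows exponentially with $k$, so the geometry at all lower scales must be controlled simultaneously rather than only at scale $k-1$. A clean proof likely requires a cluster-expansion or Dobrushin-type argument to track how connected correlations propagate through the hierarchy, and it is not a priori obvious that the naive induction survives all the way up to $p=p_c$ without further hypotheses on $\mce$ (for instance, one might need a Gibbs measure assumption rather than mere exponential decay of two-point correlations). An alternative that sidesteps the hierarchical framework is to work directly with subcritical percolation clusters of $\sfN$ under a suitable adjacency relation and to re-derive the erosion and non-merging properties of Lemma~\ref{lem:const_slowdown} from scratch in that setting; the price is that the adjacency radius has to be tuned carefully to guarantee buffers large enough for Corollary~\ref{cor:infarb} to apply.
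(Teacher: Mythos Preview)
The first thing to note is that this statement is explicitly a \emph{conjecture} in the paper: the authors do not prove it, and support it only with the heuristic pair-annihilation picture of Sec.~\ref{ss:offline_intro} and the Monte Carlo data of Sec.~\ref{sec:numerics}. So there is no paper proof to compare against; you are attempting something the paper leaves open.

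Your proposal contains a genuine gap in the second step. You argue that once every $k$-cluster has actual spatial extent $\ell \le r_* = O(\log L)$, Corollary~\ref{cor:infarb} applies with $W=\ell$ and buffer $b_k \ge f_v\ell$, yielding erosion in time $c_v\ell$. But the buffer $b_k$ is a buffer \emph{in $\sfN_k$}, not in $\sfN$: lower-level error sets $\sfE_{k'}$ with $k'<k$ can and typically do lie inside $B_{\bfr_{\mcc_k}}(b_k)$, and their anyons send messages that delay the erosion of $\mcc_k$. This is precisely why the proof of Lemma~\ref{lem:const_slowdown} is inductive, with $T_k$ depending on $T_{k-1}$ through $t_{\sf rev}$. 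If you re-run that induction replacing each $w_k$ by $r_*$, the recursion becomes
\[
T'_k \;\lesssim\; a\, r_* \;+\; \frac{4v}{v-2}\, T'_{k-1},
\]
and since $4v/(v-2)>1$ for all $v>2$, this gives $T'_k \sim (4v/(v-2))^k\, r_*$. The largest level present w.h.p.\ is $k_{\max}\sim \log_2\log L$, so you recover $T'_{k_{\max}} = (\log L)^{1+\log_2(4v/(v-2))}$, i.e.\ exactly the polylog scaling of Theorem~\ref{thm:precise_offline}, not $O(\log L)$. Bounding cluster extents alone cannot beat the multiplicative constant in the inter-level recursion; collapsing $\eta$ to $1$ requires a mechanism that prevents the delays from compounding across levels, which your argument does not supply.

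You correctly flag the difficulty of propagating correlation decay through the hierarchy in step one, but that is not the bottleneck: even granting your refined sparsity bound for free, step two does not deliver the claimed conclusion. Your alternative route via single-scale percolation clusters runs into the mirror-image problem: the buffer around a percolation cluster is $O(1)$ while its diameter can be $\Theta(\log L)$, so Corollary~\ref{cor:infarb} again fails to apply directly to the large clusters. A proof of the conjecture would seem to need either a sharper erosion statement that tolerates nearby small clusters without accumulating delay, or a noise-model-specific argument (exploiting, say, independence rather than just correlation decay) that controls the entire spacetime trajectory of the largest cluster at once.
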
 \ms 
	In sec.~\ref{sec:numerics} we will numerically confirm this conjecture at near-threshold error rates for i.i.d noise in 1 and 2 dimensions. 	
	
	\ss{Extension to arbitrary topological codes} \label{ss:nonab} 
	
	In this short subsection, we show that the message-passing decoder has thresholds for all topological codes $\mcc$, not just the surface code. In particular, acheiving a threshold for non-Abelian anyon theories does not require modifying the message-passing achitecture in any way, regardless of whether or not they are acyclic (although making $\mcc$-dependent modifications may of course increase the numerical value of threshold). 
	
	We first define how our decoder operates for a generic anyon theory $\mcc$.	Following most treatments of non-Abelian decoders in the literature (e.g.~\cite{brell2014thermalization,wootton2014error,dauphinais2017fault,hutter2015improved}), we work in a phenomenological setting in which anyons live on the sites of the lattice, and charge-neutral pairs of anyons are created by the noise according to a $p$-bounded noise distribution on the links of the lattice (see e.g. \cite{schotte2022quantum} for a discussion of how a general qubit-based noise model may be converted to a phenomenological anyon-based one). 
	We assume the ability to make topological charge measurements at each site,\footnote{Performing measurements by braiding only unambiguously resolves the anyon type in modular anyon theories; for non-modular theories, the anyon type is resolved only up to elements of the Muger center of the associated tensor category. Since the anyons in the center cannot implement logical operations on a torus, we will not care about keeping track of them, and they can be ignored as far as decoding is concerned. } and also assume the ability to deterministically move anyons from one lattice site to another through the application of appropriate string operators (which in general require linear-depth circuits to be implimented, but this point is unimportant for us).  
	
	With the ability to measure topological charge and move anyons, we may impliment decoding in precisely the same way as was done for the surface code. In particular, nontrivial anyons always act as message sources, and the messages they send are independent of their topological charge.\footnote{We chose to use separate messages for charges and fluxes in our discussion of the surface code, but this is not actually necessary, and by the arguments in this paragraph, ignoring the distinction between them can be done without compromising the existence of a threshold (although doing so will certainly result in a smaller value of $p_c$).}  The decoder then moves anyons in the direction of the smallest-valued recieved message, and topological charges are re-measured at each time step after motion has occured. 
	The non-Abelian nature of the anyons involved does not compromise the existence of a threshold because, at small enough $p$, the decoder has a very high probability of fusing all anyons within a given noise cluster together. Since the anyons in a given noise cluster were created from the vacuum (by the noise), they are as a whole topologically neutral, and fusing them together (in any order) is guaranteed to produce the vacuum, independent of any mutual braiding events among them which may have occured along the way (for a similar discussion in the context of RG-based decoders, see \cite{hutter2015improved,wootton2016active}). 
	This argument thus implies 
	\ms \begin{corollary}[message-passing thresholds for arbitrary topological codes]\label{cor:arb_thresholds} 
		The statements of theorem~\ref{thm:precise_offline} hold independent of $\mcc$. 
	\end{corollary}

	\section{Desynchronization and Lindbladian dynamics } \label{sec:lind}
	
	The decoders discussed thus far have all operated {\it synchronously:} all sites in the system possess access to a reliable shared clock, which is used to ensure that automaton updates and feedback operations are applied in unison at every site. In this section we discuss a procedure for ``desynchronizing'' any such synchronous decoder, removing the need for a synchronizing global clock. Decoders which have been desynchronized in this way can naturally operate in continuous time, with each site performing local automaton updates and feedback operations at a series of independently chosen random time intervals (distributed according to e.g. independent Poisson processes). 
	For this reason, we will focus desynchronized decoders that perform error correction using local Lindbladians. 
	
	This section is structured as follows. In sec.~\ref{ss:marching soldiers} we formulate the class of Lindbladian decoders considered in this work and describe our desynchronization procedure, which is based on the ``marching soldiers'' scheme of  Ref.~\cite{berman1988investigations} and developed further in \cite{gacs2001reliable,gacs_synch_slides}. In sec.~\ref{ss:slowdowns} we prove that this scheme produces Lindbladian decoders that perform the same computations as their synchronous counterparts, and that the time it takes the two types of decoders to perform a computation is almost identical. Finally, in sec.~\ref{ss:asynch_offline} we use these results to prove that the bounds on $\plog,\tdec$ derived in sec.~\ref{sec:offline} for the message passing decoders continue to hold in the Lindbladian setting.

	\ss{Desynchronization with marching soldiers}\label{ss:marching soldiers}
	
	The continuous-time decoders we study below will all fall within the framework provided by the following definition: 
	\ms\begin{definition}[Lindbladian automaton decoder]\label{def:asynch_ccqs}
		Consider the same setup as in Def.~\ref{def:synch_ccqs}. A {\it  Lindbladian automaton decoder} $\mcd_\lind = (\mca_\lind,\mcf_\lind,\mcc)$ is a continuous-time dynamics determined by a topological stabilizer code $\mcc$ with stabilizers $g_\bfr$, an automaton rule $\mca_\lind$, and a set of unitary feedback operators $\mcf_\lind$ with well-defined syndrome. The dynamics evolves an input state $\r$ on $\mch_{\sf qu}$ by first forming the initial state $\r_0 = \r \tp |\bfm_0,\unit\ran\lan \bfm_0,\unit|_{\sf cl}$, where $\bfm_0$ is a particular initialization of the control variables, and then evolving $\r_0$ under a Lindbladian $\mcl_\asy$ defined by the jump operators 
		\be \label{jumpdef} L_{\bfr \bfm \bfm'\bfsig\bfsig'}= \frac1{\sqrt\mu} \bot_{\bfr'\in B_\bfr(\De)} \( \d_{m_{\bfr'},\mca_\lind(\bfm',\bfsig')[m_{\bfr'}]}  |m_{\bfr'},\s_{\bfr'}\ran\lan m'_{\bfr'},\s'_{\bfr'}| \) \tp \mcf_{\lind,\bfr} (\bfm,\bfsig) \Pi_{\bfsig|_{B_\bfr(\De)}},\ee 
		where $\mu>0$ is a real number controlling the speed of updates, $\De$ is chosen such that $\supp(\mcf_{\lind,\bfr}(\bfm,\bfsig)) \subset B_\bfr(\De)$ for all $\bfm,\bfsig$, and $\Pi_{\bfsig|_{A}} = \bot_{\bfr' \in A} \frac{1+\s_{\bfr'}g_{\bfr'}}2$. These jump operators perform automaton updates, measure stabilizers,\footnote{Here we have written the jump operators as always performing stabilizer measurements on $\mcr_\bfr$ before $\mcf_\bfr$ is applied. This is of course overkill in the context of offline decoding, where the measurements are reliable and only need to be performed once; we write the jump operators in this form only to emphasize that synchronous measurements are unnecessary.} and perform feedback operations in regions that the feedback operators draw syndromes information from, and update the values of the stored syndromes to reflect the implementation of feedback. We will write $\mcd^t(\r)$ for the state $e^{t \mcl_\asy}(\r \tp |\bfm_0 ,\unit\ran\lan \bfm_0,\unit|)$, and $\mcd^t(\r)_{\sf qu}$ for $\Tr_{\mch_{\sf cl}}[\mcd^t(\r)]$. 
		
		A {\it local} Lindbladian automaton decoder is one for which $\De$ is an $O(1)$ constant, and where the actions of $\mca_\lind$ and $\mcf_\lind$\footnote{The different $\mcf_\bfr$ are taken to mutually commute without loss of generality, as in Def.~\ref{def:synch_ccqs}.} at $\bfr$ only depend on $\bfm|_{B_\bfr(\De)}, \bfsig|_{B_\bfr(\De)}$, c.f. \eqref{localcond}. As above, we use the adjective {\it quantum local} if $\mca$ is non-local. 
	\end{definition}\ms 
	
	Since the decoding process is essentially classical after measurements have been performed, evolution under $\mcd_\lind$ can be described by a continuous time Markov process. Formally, this follows by unraveling $e^{t \mcl_\mcd}(\r_0)$ as a stochastic sum over pure-state trajectories $|\psi(t)\ran$, which undergo evolution under the non-Hermitian Hamiltonian $H = -\frac i2 \sum_i L_i^\da L_i$ and are acted on by $L_i$ at random times, where $i=\bfr\bfm\bfm'\bfsig\bfsig'$ is a composite index (see e.g. \cite{breuer2002theory} for background). The present case is especially simple since one may verify that $\sum_i L_i^\da L_i = \mu\inv |\L|\unit$, so that $H = -i\mu\inv|\L|/2$ and the total rate of transitions out of a particular pure state $|\psi(t)\ran$ is state-independent. The rate $\g_i$ at which the jump operator $L_i$ is applied to $|\psi(t)\ran$ is 
	\be \label{rates} \g_i = || L_i |\psi(t)\ran ||^2 = \frac1\mu \prod_{\bfr'\in B_\bfr(\De)} \d_{m_{\bfr'},\mca_\lind(\bfm',\bfsig')[m_{\bfr'}]} \lan \psi(t) | \(|\bfm'\bfsig'\ran\lan \bfm'\bfsig'| \tp \Pi|_{\bfsig|_{B_\bfr(\De)}} \) | \psi(t)\ran .\ee 
	When $|\psi(t)\ran$ is a stabilizer state---which it will become after a jump operator has acted at each $\bfr$ ---this rate is equal to $1/\mu$ for jump operators that update the control variables in the way dictated by $\mca_\lind$, and is zero otherwise. Therefore after unraveling, the dynamics indeed reduces to a classical continuous time Markov process, where
	each jump operator respecting the automaton transition rule is applied according to independent Poisson processes with rate $1/\mu$, with waiting times drawn from an exponential distribution of mean $\mu$ (in what follows we will mostly choose units of time where $\mu=1$).

	\begin{figure}
		\centering \includegraphics[width=.8\tw]{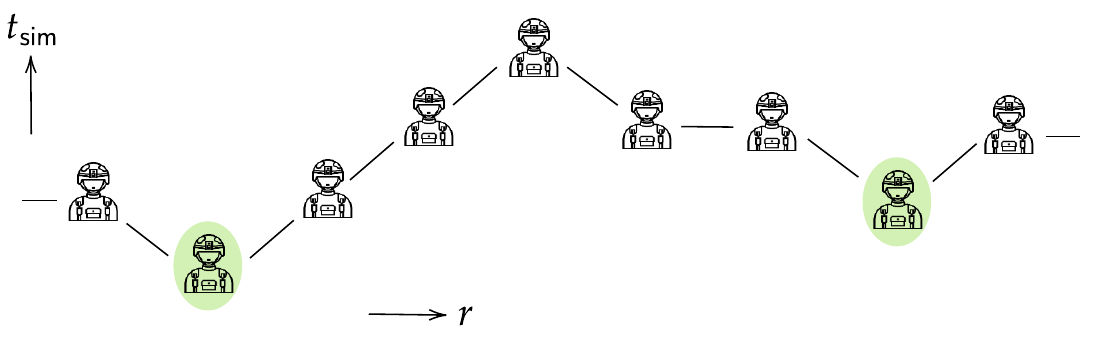}
		\caption{\label{fig:marching_soldiers} The marching soldiers scheme used to simulate a synchronous decoder with an asynchronous one. Each solider represents the classical variable stored by the decoder at a given spatial site, with the soldier's vertical position indicating the simulation time $\tsim$ of that site, viz. how many automaton updates have occurred at that site. Soldiers are prevented from leaving their neighbors behind; this mandates that only the soldiers marked in green can perform an automaton update (increase their value of $\tsim$).  } 
	\end{figure}
	
	Our construction for producing a Lindbladian decoder from a synchronous one is based on the ``marching soldiers scheme'' of  Refs.~\cite{berman1988investigations,gacs2001reliable,gacs_synch_slides}, which as we will see can be implemented at the cost of increasing the number of classical control variables stored at each site by a constant factor.
	The basic idea is to use $\mcd_\lind$ to simulate the computation performed by a given synchronous decoder $\mcd$. This is done by having each site keep track of how many automaton updates it has performed, and to then require that the number of automaton updates performed by neighboring sites cannot differ by more than 1: if a jump operator applied by $\mcd_\lind$ attempts to update a site which would violate this rule, it instead acts trivially. This update rule is directly analogous to the process by which surface growth occurs in the restricted solid-on-solid model \cite{kim1989growth}, with the constraint ensuring that no site is ``left behind'' during the evolution: nearby sites are guaranteed to have made progress on $\mcd$'s computation by an amount that differs by at most a single time step (see fig.~\ref{fig:marching_soldiers} for a schematic). We let each site store the values of its control variables for both the current and previous simulation time steps, and $\mcd_\lind$ may then draw on this information to update any updateable site in the same way that $\mcd$ would. Implementing this scheme is slightly complicated by the fact that the feedback operators of $\mcd$ will generically modify the values of syndromes on multiple sites, before these sites have had a chance to be updated. This necessitates also storing several variables related to the recent history of syndrome measurements, in a manner that will be explained below. 
	
	In what follows, we will add superscripts $\sfa,\sfs$ to explicitly distinguish the control variables for $\mcd_\lind$ and $\mcd$, respectively, and we will let $\De$ denote the radius of the balls in which the support of all feedback operators are contained. The control variables $m_\bfr^\sfa$ for $\mcd_\lind$ are organized as 
	\be \label{new_variables} m^\sfa_\bfr = \{m_\bfr^\old, m_\bfr^\new,\s_\bfr^\old,\s_\bfr^\new,\s_\bfr^{\sf future},\lag_{\bfr\ra\bfr'},u_\bfr\},\ee 
	where $m_\bfr^{\old,\new}$ are both valued in the alphabet in which $\mcd$'s control variables $\bfm^\sfs$ are encoded, $\s^{\old,\new,\fut}_\bfr \in \{\pm1\}$, $\lag_{\bfr\ra\bfr'} \in \{-1,0,1\}$ is defined for each $\bfr' \in B_\bfr(2\De) \setminus \{\bfr\}$ (the reason for the factor of 2, which means that the radius over which the automaton rule operates is twice as large as that of the feedback operations, will become clear later), and $u\in \{0,\dots,v-1\}$, where $v$ as before determines the ratio of automaton updates to feedback operations.\footnote{We will always initialize the decoding dynamics by setting 
		\be \label{trivial_init} m^{\old,\new}_\bfr = 0,\qq\s^{\old,\new,\fut}=1,\qq\lag_{\bfr\ra\bfr'}=0,\qq u_\bfr=0.\ee
	} We now describe the roles played by each of these variables.
	
	We start by defining notions that will help us keep track of the progress made by the Lindbladian system in simulating its synchronous counterpart: 
	\ms\begin{definition}[simulation and lag times]\label{def:sim_times}
		Fix a spacetime coordinate $(\bfr,t) \in \L \times \rr$. The {\it simulation time} $\tsim(\bfr,t) \in \zz^{\geq 0}$ is the number of times the decoder has successfully updated the control variables $m^{\new}_\bfr$. The {\it minimum simulation time} $\tsimmin(t)$ is the smallest simulation time in the system at time $t$:  
		\be \tsimmin(t) = \min_\bfr \tsim(\bfr,t),\ee 
		and the {\it maximum simulation time} is likewise $\tsimmax(t) = \max_\bfr \tsim(\bfr,t)$. 
		For a pair of neighboring sites $\bfr,\bfr'$, the {\it lag time} $\lag_{\bfr\ra\bfr'}(t)$ from $\bfr$ to $\bfr'$ at time $t$ is the difference in the simulation times at the two sites: 
		\be \lag_{\bfr\ra\bfr'}(t) = \tsim(\bfr,t) - \tsim(\bfr',t).\ee 
	\end{definition}\ms 
	
	Our architecture will store the values of $\lag_{\bfr\ra\bfr'}$ at each site using the appropriate field of $m^\sfa_\bfr$, and use these to control how automaton updates are performed.\footnote{Note that the automaton will not actually store the values of $\tsim(\bfr,t)$, which are introduced purely for convenience in analyzing the performance of the decoder. } The full control architecture is laid out in the following definition: 
	\ms \begin{definition}[marching soldiers desynchronization]\label{def:marching_soldiers}
		Given a local automaton rule $\mca$ and a feedback operation $\mcf$ on the control variables $\bfm^\sfs,\bfsig^\sfs$, its {\it marching soldiers desynchronization} $\mca_\asy$ is the automaton on the control variables $\bfm^\sfa,\bfsig^\sfa$ (c.f.  \eqref{new_variables}) defined as follows. 
		
		Define 
		\be \wt B_\bfr(x) = B_\bfr(x) \setminus \{ \bfr\}.\ee 
		For a site $\bfr$, define the set 
		\be \bfm^{\sf present}_\bfr =\bigcup_{\bfr' \in \wt B_\bfr(\De) \, : \, \lag_{\bfr\ra\bfr'} = 0} \{m_{\bfr'}^\new\}  \cup \bigcup_{\bfr' \in \wt B_\bfr(\De) \, : \, \lag_{\bfr\ra\bfr'} = -1} \{m_{\bfr'}^\old\},\ee 
		and likewise for $\bfsig^{\sf present}_\bfr$. 
		
		Regardless of the values of the lag variables $\lag_{\bfr\ra\bfr'}$, $\mca_\asy$ always updates 
		\be\label{sigmaupdates} \s^\new_{\bfr'} \mt \s^\sfa_{\bfr'} \s^{\sf future}_{\bfr'} \,\,\, \forall \, \, \bfr' \in B_\bfr(\De).\ee 
		Furthermore, if $\lag_{\bfr\ra \bfr'} \leq 0 \, \, \forall \, \, \bfr' \in \wt B_\bfr(2\De)$---so that a proposed automaton update at $\bfr$ will be  accepted---$\mca_\asy$ simultaneously performs the following updates:
		\bea\label{asynch_rules}
		m_\bfr^\old & \mt m_\bfr^\new  \\
		m_\bfr^\new & \mt  \mca(\bfm^{\sf present}_\bfr,\bfsig^{\sf present}_\bfr) \\ 
		\s^\old_\bfr & \mt \s^\new_\bfr \\ 
		\s^\fut_\bfr &\mt 1 \,\, \text{ if} \,\, u_\bfr = v-1 \, \, \text{else }\,  \s^\fut_\bfr \\  
		\lag_{\bfr\ra\bfr'} & \mt  \lag_{\bfr\ra\bfr'}+1 \, \, \forall \, \, \bfr'\in \wt B_\bfr(2\De) \\   
		\lag_{\bfr'\ra\bfr} & \mt  \lag_{\bfr'\ra\bfr}-1 \, \, \forall \, \, \bfr' \in \wt B_\bfr(2\De) \\ 
		u_\bfr & \mt (u_\bfr + 1) \mod v\\
		\s^\fut_{\bfr'} & \mt \s^\fut_{\bfr'} \(\s_{\mcf_{\bfr}(\bfm^\pres,\bfsig^\pres),\bfr'}\)^{1 + \lag_{\bfr \ra \bfr'}} \, \, \forall \, \, \bfr'\in \wt B_\bfr(\De)\eea 			
	\end{definition}\ms 
	
	To help unpack this definition a bit, we offer the following comments:
	\begin{enumerate}
		\item The $\lag$ variables are updated according to the marching soldiers scheme, where an update can occur at $\bfr$ at time $t$ only if $\tsim(\bfr',t) \geq \tsim(\bfr,t)$ for all $\bfr' \in \wt B_\bfr(2\De)$.
		\item Feedback operations at a site $\bfr$ only occur after $v$ automaton updates have been applied to $\bfr$. 
		\item $\s^\fut_\bfr$ represents the accumulated change in the syndrome at site $\bfr$ due to feedback operations applied by neighboring sites $\bfr'$ that have advanced to larger simulation times.
		\item $\s^\new_\bfr$ is updated to reflect the eigenvalue that $g_\bfr$ would have if one were to ignore the changes to this eigenvalue that occur when sites neighboring $\bfr$ progress to larger simulation times. 
		\item The accumulated syndrome change from these sites is reset after feedback occurs at $\bfr$.
		\item $m^\old_\bfr, \s^\old_\bfr$ store the values of $m^\new_\bfr,\s^\new_\bfr$ at the most recent simulation step, which are used to update neighboring sites $\bfr'$ with $\tsim(\bfr',t) = \tsim(\bfr,t) - 1$ in the same way in which they would be updated according to the synchronous automaton.
		\item The functions of $\bfm^{\sf present}_\bfr, \bfsig^{\sf present}_\bfr$ are to store the values of the control variables that the synchronous automaton would use to update $m^\sfs_\bfr$ (that they indeed do this will be proven below). 
		\item The distinction at various places between balls of radii $\De$ and $2\De$ is made so that the marching soldiers scheme (viz. the interaction range of $\mca_\asy$) extends to a distance twice the size of that over which $\mca,\mcf$ draw input from the control variables (viz. twice the size of the range over which each feedback operation is applied). This separation will be useful for proving a result about desynchronized simulation in the next section. 		
	\end{enumerate}
	
	Our desynchronization procedure is defined by combining this scheme with an appropriately modified feedback operation: 
	\ms\begin{definition}[desynchronized decoders]\label{def:desynch}
		Let $\mcd = (\mca,\mcf,\mcc)$ be a synchronous automaton decoder with control variables $\bfm^\sfs,\bfsig^\sfs$. 
		The {\it desynchronization} of $\mcd$, written $\mcd_{\sf asynch} = (\mca_\asy,\mcf_\asy,\mcc)$, is the Lindbladian automaton decoder with control variables $\bfm^\sfa,\bfsig^\sfa$ as in \eqref{new_variables}, automaton rule $\mca_\asy$ defined in Def.~\ref{def:marching_soldiers}, and feedback operators 
		\be\label{asynch_feedback} \mcf_{\asy,\bfr}(\bfm^\sfa,\bfsig^\sfa) = \begin{dcases} \mcf_\bfr(\bfm_\bfr^{\sf present}, \bfsig_\bfr^{\sf present}) & \(\lag_{\bfr\ra\bfr'}\leq 0 \, \, \forall \,\, \bfr'\in \wt B_\bfr(\De)\) \wedge (u_\bfr = v-1) \\ 
			\unit & {\rm else} \end{dcases}.\ee 
	\end{definition}\ms 
	This rule ensures that the ratio of automaton updates to feedback operations is precisely equal to $v$, and that nontrivial feedback is applied only when an automaton update is allowed to occur, as desired.

	\ss{Faithful Lindbladian simulation and slowdowns} \label{ss:slowdowns}
	
	Having defined desynchronization, we now show exactly how the computations performed by  $\mcd_\asy$ and $\mcd$ are related. To help distinguish times in the discrete and continuous settings, we will mostly use sans-serif font for discrete time variables, and serif font for continuous ones. 
	
	The first result shows that the classical variables of the asynchronous system at sites with simulation time $\tsim$ match the states of their synchronous counterparts at this time step: 	
	\ms\begin{proposition}[faithful Lindbladian simulation: control variables]\label{prop:faithful_asynch_sim}
		Consider a spacetime point $(\bfr,t)$. Let $m^\sfs_\bfr(\sft),\s^\sfs_\bfr(\sft)$ denote the values of $m^\sfs_\bfr,\s^\sfs_\bfr$ after $\sft$ steps of a synchronous decoder $\mcd$ on a particular input state $\r$, and let  $m^\new_\bfr(t),\s^\new_\bfr(t)$ be the values of $m^\new_\bfr,\s^\new_\bfr$ after evolving with $\mcd_\asy$ for time $t$ on $\r$. Then for all $(\bfr,t)$, 
		\be \label{faithful_sim_control} m^\new_\bfr(t) = m^\sfs_\bfr(\tsim(\bfr,t)),\ee 
		and if $\lag_{\bfr\ra\bfr'} \leq 0 \,\,  \forall \, \, \bfr'\in \wt B_\bfr(\De)$, then 
		\be \label{faithful_sig_control} \s^\new_\bfr(t) = \s^\sfs_\bfr(\tsim(\bfr,t)).\ee 
	\end{proposition}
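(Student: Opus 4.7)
The plan is to prove both equations \eqref{faithful_sim_control} and \eqref{faithful_sig_control} by joint induction on $\tsim(\bfr,t)$. The base case $\tsim(\bfr,t) = 0$ is immediate from \eqref{trivial_init} together with the analogous trivial initialization of $\mcd$. The key structural fact I would establish first is the marching-soldiers invariant $\lag_{\bfr\ra\bfr'}(t) \in \{-1,0,+1\}$ for every pair of sites with $\bfr'\in \wt B_\bfr(2\De)$. This follows by a short induction on the sequence of applied jump operators: an update at $\bfr$ is permitted only when $\lag_{\bfr\ra\bfr'}\leq 0$, after which that lag increases by $1$ and therefore cannot exceed $+1$, and the reverse lag is the negative. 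Consequently, at the instant $\bfr$ performs its $(\tsim(\bfr,t)+1)$-th update, $\lag_{\bfr\ra\bfr'} \in \{-1,0\}$ for every $\bfr' \in \wt B_\bfr(2\De)$.

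For \eqref{faithful_sim_control}, the definition of $\bfm^\pres_\bfr$ pulls $m^\new_{\bfr'}$ from neighbors with $\lag_{\bfr\ra\bfr'} = 0$ and $m^\old_{\bfr'}$ from neighbors with $\lag_{\bfr\ra\bfr'} = -1$. In the first case, $m^\new_{\bfr'} = m^\sfs_{\bfr'}(\tsim(\bfr,t))$ directly by the inductive hypothesis applied at $\bfr'$. In the second case, $m^\old_{\bfr'}$ was last overwritten when $\bfr'$ advanced from simulation time $\tsim(\bfr,t)$ to $\tsim(\bfr,t)+1$, storing precisely the $\new$ value from the earlier step; moreover, $\bfr'$ cannot have updated again since, because a further update at $\bfr'$ would require $\lag_{\bfr'\ra\bfr} = -\lag_{\bfr\ra\bfr'} = 1 \leq 0$, which is forbidden until $\bfr$ itself advances. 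By the inductive hypothesis at the earlier step, this stored value is again $m^\sfs_{\bfr'}(\tsim(\bfr,t))$. Thus $\bfm^\pres_\bfr$ reproduces the synchronous configuration of $\mcd$ over $\wt B_\bfr(\De)$ at step $\tsim(\bfr,t)$, and applying $\mca$ in \eqref{asynch_rules} delivers $m^\sfs_\bfr(\tsim(\bfr,t)+1)$, completing the induction.

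For \eqref{faithful_sig_control}, the extra subtlety is that $\s^\sfs_\bfr$ in the synchronous run evolves not only through feedback applied at $\bfr$, but also through feedback operators $\mcf_{\bfr'}$ at neighbors $\bfr' \in \wt B_\bfr(\De)$ whose support touches $\bfr$. The variable $\s^\fut_\bfr$ is designed precisely to buffer these contributions: the exponent $1 + \lag_{\bfr\ra\bfr'}$ in the $\s^\fut$ update of Def.~\ref{def:marching_soldiers} selects exactly those neighbors whose feedback is still in $\bfr$'s synchronous ``future,'' that is, those at the same or one step behind $\bfr$'s simulation time. Under the hypothesis $\lag_{\bfr\ra\bfr'} \leq 0$ for all $\bfr' \in \wt B_\bfr(\De)$, the invariant on $\lag$ implies that every such neighbor has already executed the feedback corresponding to simulation step $\tsim(\bfr,t)$, and none has executed a feedback belonging to a later step. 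The universal rule $\s^\new_{\bfr'} \mapsto \s^\sfa_{\bfr'} \s^\fut_{\bfr'}$ in \eqref{sigmaupdates} applied at the last update of $\bfr$ therefore injects exactly the correct product of flips needed to match $\s^\sfs_\bfr(\tsim(\bfr,t))$.

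I expect the main obstacle to be the bookkeeping for $\s^\fut$: one must carefully verify that each neighbor-induced syndrome flip is recorded in $\s^\fut_\bfr$ exactly once and consumed into $\s^\new_\bfr$ exactly once, with no double-counting or omission as lag variables drift. The separation between the interaction range $2\De$ of $\mca_\asy$ and the feedback range $\De$ is what makes this argument close, since it guarantees that every site whose feedback could affect $\bfr$ lies within the range over which the $\lag$-invariant is maintained. The inductive step for $m^\new$, by contrast, is essentially combinatorial and should go through once the lag invariant is in hand.
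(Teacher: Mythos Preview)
Your proposal is essentially the same inductive argument on $\tsim$ that the paper gives, and the treatment of \eqref{faithful_sim_control} matches the paper's almost verbatim (including the observation that a neighbor with $\lag_{\bfr\ra\bfr'}=-1$ cannot update again until $\bfr$ does). Two points of calibration are worth noting. First, your sentence ``none has executed a feedback belonging to a later step'' is not literally true: neighbors with $\lag_{\bfr\ra\bfr'}=-1$ \emph{have} applied feedback at simulation step $\tsim(\bfr,t)+1$, and the whole point of the $\s^\fut$ buffer (which you otherwise describe correctly) is to cancel precisely those contributions from the measured $\s^\sfa_\bfr$. Second, your explanation of why the $2\De$ range is needed is slightly off-target. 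The paper does not use $2\De$ merely to maintain the $\{-1,0,+1\}$ invariant over the feedback support; it uses the inclusion $\wt B_{\bfr_1}(\De)\subset \wt B_\bfr(2\De)$ for each neighbor $\bfr_1\in\wt B_\bfr(\De)$ so that the \emph{hypothesis} of \eqref{faithful_sig_control} (namely $\lag_{\bfr_1\ra\bfr''}\leq 0$ for all $\bfr''\in\wt B_{\bfr_1}(\De)$) is automatically satisfied at $\bfr_1$ whenever $\bfr$ is updateable. This lets the paper prove $\bfsig^\pres_\bfr=\bfsig^\sfs|_{B_\bfr(\De)}$ by directly invoking the inductive hypothesis at each neighbor, rather than by your proposed direct accounting of which flips enter and leave $\s^\fut_\bfr$. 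Both routes work, but the paper's is shorter precisely because it avoids the delicate bookkeeping you flag as the main obstacle.
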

	The need to wait until sites in $\wt B_\bfr(\De)$ have ``caught up'' to $\bfr$ in order for \eqref{faithful_sig_control} to hold is due to the fact that feedback operators applied during updates of sites in $\wt B_\bfr(\De)$ will generically change the value of $\s^\new_\bfr$ (while $m^\new_\bfr$ is by contrast only modified by updates to site $\bfr$ itself). 
	\begin{proof}
		We argue by induction on $\tsim(\bfr,t)$. The base case of $\tsim(\bfr,t) = 1$ is trivial. For the induction step, fix a time $t$ and consider a point $\bfr$ where $\tsim(\bfr,t) = \sft$, and assume the result to be true for all $(\bfr',t')$ such that $\tsim(\bfr',t') \leq \sft$. Consider what happens when an update at $\bfr$ is accepted. If this happens we must have $\lag_{\bfr\ra\bfr'} \leq 0 \, \,\forall\,\,\bfr'\in \wt B_\bfr(\De) \subset \wt B_\bfr(2\De)$, and we first claim that the induction hypothesis implies 
		\be \label{presentcond} \bfm^{\sf present}_\bfr(t) = \bfm^\sfs(\sft)|_{B_\bfr(\De)}.\ee 
		First, we clearly have $(\bfm^\pres_\bfr(t))|_\bfr = m^\sfs_\bfr(\sft)$ by assumption. Consider then a point $\bfr'\in \wt B_\bfr(\De)$. If $\tsim(\bfr',t) = \sft$, then $\lag_{\bfr\ra\bfr'} = 0$ and 
		\be (\bfm_\bfr^{\sf present}(t))|_{\bfr'} = m^\new_{\bfr'}(t) = m^\sfs_{\bfr'}(\sft)\ee 
		by the induction hypothesis. Suppose on the other hand that $\tsim(\bfr',t) = \sft+1$. Then $\lag_{\bfr\ra\bfr'} = -1$ and $(\bfm^{\sf present}_\bfr(t))|_{\bfr'}=m^\old_{\bfr'}$, which equals the value of $m^\new_{\bfr'}$ at the time $t'$ right before the most recent update at $\bfr'$ occurred. Since $\tsim(\bfr',t')=\tsim(\bfr',t)-1=\sft$, the induction hypothesis then implies $m^\old_{\bfr'} = m^\sfs_{\bfr'}(\sft)$, showing the claim. 
		
		To complete the proof, we need to show the analogous result for $\s^\pres$, viz.
		\be \label{sigcond} \bfsig^\pres_\bfr(t) = \bfsig^\sfs(\sft)|_{B_\bfr(\De)}.\ee 
		Indeed, if this is true, then at the next time $t'$ when $\bfr$ successfully updates, $\mca_\asy$ runs to produce a value of $m^\new_\bfr$ which is computed 
		using the same function running on the same input as $\mca$, meaning that when $\tsim(\bfr,t') = \sft+1$, we continue to have $m^\new_\bfr(t') = m^\sfs_\bfr(\tsim(\bfr,t'))$; this then will show \eqref{faithful_sim_control}. 
		Furthermore, the definition in \eqref{asynch_feedback} ensures that the feedback operator $\mcf_{\asy,\bfr}$ applied at $\bfr$ also exactly matches the operator applied in the synchronous system. The next time $t'$ a measurement at $\bfr$ occurs, the resulting syndrome $\s^\sfa_\bfr(t')$ will then match the 
		value of $\s^\sfs_\bfr(\sft+1)$, up to the following corrections: 
		\begin{enumerate}
			\item Changes to $\s^\sfa_\bfr$ applied by feedback on sites $\bfr'\in \wt B_\bfr(\De)$ with $\tsim(\bfr',t') = \sft$ (as these operations have not yet been implemented by time $t'$). These changes will be correctly accounted for provided we wait until $\lag_{\bfr\ra\bfr'} \leq 0 \, \, \forall \, \, \bfr' \in \wt B_\bfr(\De)$, as in the statement of the proposition.
			\item Changes to $\s^\sfa_\bfr$ that occur due to the application of feedback at sites $\bfr'\in\wt B_\bfr(\De)$ with $\lag_{\bfr\ra\bfr'} = -1$. These changes are accounted for using $\s^\fut_\bfr$, with \eqref{sigmaupdates} and \eqref{asynch_rules} ensuring that these effects are appropriately subtracted in the computation of $\s^\new_\bfr$. 
		\end{enumerate}
		Proving \eqref{sigcond} will therefore also show \eqref{faithful_sig_control}.
		
		The proof of \eqref{sigcond} is very similar to that of \eqref{presentcond}.  
		Consider again what happens when at update at $\bfr$ is accepted, which happens only if $\lag_{\bfr\ra\bfr'} \leq 0 \, \, \forall \,\, \bfr' \in \wt B_\bfr(2\De)$. This implies 
		\be \label{goodcond} \lag_{\bfr_1 \ra \bfr''} \leq 0 \, \, \forall \, \, \bfr''\in \wt B_{\bfr_1}(\De),\ee 
		since $\wt B_{\bfr_1}(\De) \subset \wt B_\bfr(2\De)$, which allows us to apply the induction hypothesis at site $\bfr_1$, using the same argument as before.\footnote{If we instead had chosen the $\lag_{\bfr\ra\bfr'}$ variables to extend only to the sites in $\wt B_\bfr(\De)$, we would not necessarily have \eqref{goodcond}, and thus not be able to use the induction hypothesis at $\bfr_1$.} Explicitly, if $\lag_{\bfr\ra\bfr_1}=-1$, then $(\bfsig^\pres_\bfr)|_{\bfr_1} = \s^\old_{\bfr_1}$, which matches $\s^\new_{\bfr_1}$ at the time $t'$ just before the last update at $\bfr_1$ (where $\tsim(\bfr_1,t')=\sft$), and hence matches $\s^\sfs_{\bfr_1}(\sft)$ by the induction hypothesis.
		If on the other hand $\lag_{\bfr\ra\bfr_1} = 0$, then $(\bfsig^\pres_\bfr(t))|_{\bfr_1} = \s^\new_{\bfr_1}(t) = \s^\sfs_{\bfr_1}(\sft)$. This completes the proof of \eqref{sigcond}. 
	\end{proof}
	
	It is clear that $\mcd_{\sf asynch}$ runs more slowly than $\mcd$, since a nontrivial amount of time will inevitably be spent attempting to update sites $\bfr$ that have larger simulation times than at least one other site in $\wt B_\bfr(2\De)$, and hence reject the proposed update. How severe is the slowdown? A result from Berman and Simon \cite{berman1988investigations} shows that with high probability, the slowdown is actually an innocuous constant factor, provided $\mcd_{\sf asynch}$ has simulated at least $\O(\log L)$ steps of $\mcd$'s computation. We will in fact prove a bit more than this: 
	\ms\begin{lemma}[constant slowdown for desynchronization]\label{lem:slowdown_lemma}
		Consider an asynchronous automaton updating according to the marching soldiers scheme. There exist positive $O(1)$ constants $\g<1, \l>1,c>0$  such that the following are true as long as $t \geq \frac{2D}{c}\ln L$:\footnote{Here we are continuing to set $\mu=1$; if we did not do this we would need to multiply $t_{{\sf sim},{\sf max/min}}$ by $\mu$.}
		\be \label{tminbound_lemma}  \sfP( \tsimmin(t) < \g t) < e^{-c t} \ee
		and 
		\be  \label{tmaxbound_lemma} \sfP( \tsimmax(t) > \l t) < e^{-c t}. \ee
	\end{lemma}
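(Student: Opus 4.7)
My plan is to prove the two inequalities separately, with \eqref{tmaxbound_lemma} being a one-line Chernoff-plus-union-bound argument and \eqref{tminbound_lemma} following from a last-passage-percolation representation of the update times.

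For \eqref{tmaxbound_lemma}, since a site can only increment its simulation time when its local Poisson clock rings, we have $\tsim(\bfr,t) \leq N_\bfr(t)$ with $N_\bfr$ a rate-$1$ Poisson process. Cram\'er's theorem yields $\sfP(N_\bfr(t) > \lambda t) \leq e^{-tI(\lambda)}$ with $I(\lambda) = \lambda\log\lambda - \lambda + 1 > 0$ for $\lambda > 1$, and a union bound over the $L^D$ sites gives $\sfP(\tsimmax(t) > \lambda t) \leq L^D e^{-tI(\lambda)}$. Choosing $c$ a small $O(1)$ fraction of $I(\lambda)$ and restricting to $t \geq (2D/c)\ln L$ absorbs the volume factor.

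For \eqref{tminbound_lemma}, let $T_\bfr(n)$ denote the time of the $n$-th successful update at $\bfr$, with $T_\bfr(0) = 0$. Reading off Def.~\ref{def:marching_soldiers}, such an update occurs at the first clock ring at $\bfr$ after the time $S_\bfr(n) := \max_{\bfr' \in B_\bfr(2\De)} T_{\bfr'}(n-1)$, where including $\bfr' = \bfr$ in the max encodes the trivial requirement that $\bfr$ itself has already reached $\tsim = n-1$. The strong Markov property at the stopping time $S_\bfr(n)$, combined with memorylessness of the rate-$1$ Poisson clock at $\bfr$, produces the recursion
\begin{equation}\label{lpp_recursion} T_\bfr(n) = S_\bfr(n) + \tau_\bfr(n),\end{equation}
with the gaps $\tau_\bfr(n)$ i.i.d.\ $\mathrm{Exp}(1)$ across $(\bfr,n)$: distinct sites are driven by independent clocks, and for a single site the intervals $[S_\bfr(n), T_\bfr(n)]$ are disjoint for distinct $n$. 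Iterating \eqref{lpp_recursion} expresses $T_\bfr(n)$ as the last-passage time $T_\bfr(n) = \max_{\pi \in P_n(\bfr)} \sum_{i=1}^{n} \tau_{\bfr_i}(i)$, where $P_n(\bfr)$ is the set of sequences $(\bfr_1,\dots,\bfr_n = \bfr)$ with $\bfr_{i-1} \in B_{\bfr_i}(2\De)$; this set has cardinality at most $K^n$ with $K = |B(2\De)|$ an $O(1)$ constant. Along any fixed path, a Chernoff bound applied to a sum of $n$ i.i.d.\ $\mathrm{Exp}(1)$ variables gives $\sfP(\sum > \lambda n) \leq e^{-n J(\lambda)}$ with $J(\lambda) = \lambda - 1 - \log\lambda \to \infty$, and union-bounding over paths yields $\sfP(T_\bfr(n) > \lambda n) \leq e^{-n(J(\lambda) - \log K)}$. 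Picking $\lambda$ large enough that $c_1 := J(\lambda) - \log K > 0$ and setting $\gamma := 1/\lambda$, the event $\tsimmin(t) < \gamma t$ forces some $\bfr$ to have $T_\bfr(\gamma t) > \lambda \cdot \gamma t$, so a final union bound over the $L^D$ sites gives $\sfP(\tsimmin(t) < \gamma t) \leq L^D e^{-c_1 \gamma t}$. This is $\leq e^{-ct}$ for $c$ a small $O(1)$ fraction of $c_1 \gamma$ and $t \geq (2D/c)\ln L$; taking $c$ to be the minimum of the two constants coming from the two halves produces a single constant.

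The step I expect to be most delicate is justifying the i.i.d.\ $\mathrm{Exp}(1)$ structure of the gaps $\tau_\bfr(n)$ in \eqref{lpp_recursion}. Because the max defining $S_\bfr(n)$ couples distinct sites through the marching-soldiers rule, one has to check carefully that $S_\bfr(n)$ is a stopping time of the joint filtration generated by all of the clocks, and that Poisson memorylessness properly decouples the wait $\tau_\bfr(n)$ from both the past and from the waits at other $(\bfr',n')$. Once this independence is verified, the remainder is a standard last-passage-percolation concentration calculation, and explicit constants $\gamma, \lambda, c$ can be read off by enlarging $\lambda$ as needed.
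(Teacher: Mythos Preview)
Your approach is correct and is essentially the same argument as the paper's, dressed in forward last-passage-percolation language rather than the paper's backward ``delay chain'' language. The paper starts from a spacetime point $(\bfr,t)$ with $\tsim(\bfr,t)=m$ and traces backward: at each step one either moves to the previous attempted-update time at the same site (if the update was accepted) or hops to a lagging neighbour (if rejected), producing a length-$m$ path whose associated inter-arrival intervals sum to at least $t$. Your recursion $T_\bfr(n)=\max_{\bfr'\in B_\bfr(2\De)}T_{\bfr'}(n-1)+\tau_\bfr(n)$ is the dual forward formulation, and unrolling it yields exactly the same max over nearest-neighbour paths of a sum of exponential variables; the delay chain is just (one choice of) the argmax path. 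Both proofs then finish identically: a union bound over the at most $|B(2\De)|^m$ paths and a Chernoff bound on a sum of $m$ i.i.d.\ exponentials.

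The point you flag as delicate---the joint i.i.d.\ structure of the $\tau_\bfr(n)$---is precisely the step the paper also leans on when it asserts that ``the time delays $\d_{\bfr,t}$ are i.i.d.\ random variables with mean $\mu$.'' Your version makes the issue more transparent: each $S_\bfr(n)$ is a stopping time of the joint clock filtration, the strong Markov property for Poisson processes gives $\tau_\bfr(n)\sim\mathrm{Exp}(1)$ independent of $\mcf_{S_\bfr(n)}$, and since $S_\bfr(n+1)\geq T_\bfr(n)$ (because $\bfr\in B_\bfr(2\De)$) the intervals $[S_\bfr(n),T_\bfr(n)]$ at a single site are disjoint and hence the corresponding $\tau$'s are independent; independence across sites comes from the independence of the underlying clocks. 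This is the standard argument establishing the equivalence between marching-soldiers/RSOS growth and directed LPP with i.i.d.\ exponential weights, so your concern is well-placed but resolvable with a short filtration argument.
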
 	
	Since Berman and Simon did not provide the full details of their proof, formulated the constraints differently, and did not give the bound on $\tsimmax(t)$, we provide the full proof in Appendix~\ref{app:slowdown}.
	
	Suppose we are interested in simulating $\sft$ steps of the computation done by a synchronous decoder $\mcd$ using its desynchronization $\mcd_\asy$. We will only be interested in computations which end after a finite time. To this end, we will say that $\mcd$ {\it terminates in time $\sft_{\sf end}$ on input $\r$} if $\mcd^{\sft'}$ applies no nontrivial feedback operators for all $\sft' > \sft_{\sf end}$ (in the context of our decoders, $\sft_{\sf end} = \tdec(\r)$).\footnote{Note that the action of $\mcd^{\sft'}$ on $\mch_{\sf cl}$ needn't be trivial when $\sft'>\sft_{\sf end}$---we require only that $\mcd^{\sft'}$ act trivially on $\mch_{\sf qu}$. } In this case, the previous Lemma ensures that at long enough times, desynchronized decoders output the same quantum state as their synchronous counterparts w.h.p:  
	\ms\begin{corollary}[faithful Lindbladian simulation: quantum states]\label{prop:faithful_asynch_sim_quantum}
		Suppose that $\mcd$ terminates in time $\sft_{\sf end}$ on input $\r$. Then there are $O(1)$ constants $\g,f$ such that as long as $t > \sft_{\sf end}/\g + f \log(L)$,\footnote{Depending on the automaton rules, $\mcd^t_\asy(\r)$ may have a large probability to differ from $\mcd^{\sft_{\sf end}}(\r)$ on $\mch_{\sf cl}$. For the message passing rules of sec.~\ref{sec:msg_passing} however, where messages decay in the absence of anyon sources, the trace over $\mch_{\sf cl}$ is unnecessary provided one waits the additional time ($\O(\log L)$ w.h.p if $p<p_c$) needed for all messages to decay.}
		\be \label{goal} \frac12 ||\mcd_\asy^t(\r)_{\sf qu}- \mcd^{t_{\sf end}}(\r)_{\sf qu}||_1 < e^{-ct}.\ee 
	\end{corollary}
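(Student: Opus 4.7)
The plan is to combine Proposition~\ref{prop:faithful_asynch_sim} (which says that once a site has reached simulation time $\sft$, its classical variables and the feedback operators it applies agree with those of the synchronous decoder at step $\sft$) with Lemma~\ref{lem:slowdown_lemma} (which guarantees that $\tsimmin(t)\geq \g t$ with high probability), to argue that after time $t\gtrsim \sft_{\sf end}/\g$ every site has simulated all the nontrivial feedback operators $\mcd$ would have applied, and therefore the asynchronous state on $\mch_{\sf qu}$ coincides with the synchronous output.

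The first step is to invoke the lower bound \eqref{tminbound_lemma}: for $t\geq (2D/c)\log L$, we have $\tsimmin(t)\geq \g t$ except on an event of probability at most $e^{-ct}$. Choosing $f = 2D/c$ and demanding $t>\sft_{\sf end}/\g + f\log L$ ensures $\g t > \sft_{\sf end}$, so every site has performed at least $\sft_{\sf end}$ accepted updates by time $t$ (with failure probability $\leq e^{-ct}$). Conditioned on this event, one then argues that $\mcd_\asy^t(\r)_{\sf qu} = \mcd^{\sft_{\sf end}}(\r)_{\sf qu}$. The statement on classical variables $m^\new_\bfr(t) = m^\sfs_\bfr(\tsim(\bfr,t))$ from Proposition~\ref{prop:faithful_asynch_sim}, together with Def.~\ref{def:desynch}, implies that every time site $\bfr$ applies a nontrivial $\mcf_{\asy,\bfr}$ it applies exactly the unitary $\mcf_\bfr(\bfm^\sfs(\tsim(\bfr,t)),\bfsig^\sfs(\tsim(\bfr,t)))$ that the synchronous decoder applies at the corresponding simulation step. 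Since $\mcd$ terminates at $\sft_{\sf end}$, no nontrivial feedback is emitted at simulation times $\sft > \sft_{\sf end}$, and thus the only nontrivial feedback ever applied by $\mcd_\asy$ on $\mch_{\sf qu}$ (up to time $t$) is the set of operators $\{\mcf_\bfr(\bfm^\sfs(\sft),\bfsig^\sfs(\sft))\}_{\sft\leq \sft_{\sf end},\bfr}$ that together constitute $\mcd^{\sft_{\sf end}}$.

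The remaining subtlety is that these operators are applied by $\mcd_\asy$ in a different temporal order than by $\mcd$; one must check that this reordering does not change the resulting operator on $\mch_{\sf qu}$. This is where the marching-soldiers constraint on the radius-$2\De$ lag variables plays its role: an accepted update at $\bfr$ at simulation step $\sft$ requires $\lag_{\bfr\ra\bfr'}\leq 0$ for all $\bfr'\in \wt B_\bfr(2\De)$, so every $\bfr'$ within range $2\De$ has already completed step $\sft-1$. Because $\supp(\mcf_\bfr)\subset B_\bfr(\De)$, two feedback operators $\mcf_\bfr$ and $\mcf_{\bfr'}$ at the same simulation step act on disjoint qubits whenever $\|\bfr-\bfr'\|>2\De$ and therefore commute (as already assumed in Def.~\ref{def:synch_ccqs}), while for $\|\bfr-\bfr'\|\leq 2\De$ the lag constraint forces them to be applied in the same relative order as in $\mcd$. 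Thus the total unitary implemented on $\mch_{\sf qu}$ by the accepted updates of $\mcd_\asy$ up to time $t$ equals the one implemented by $\mcd^{\sft_{\sf end}}$, and on the high-probability event the two quantum marginals agree exactly.

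The main obstacle is the ordering argument in the preceding paragraph: one must verify carefully that any pair of non-commuting feedback operators from the full synchronous sequence is applied by $\mcd_\asy$ in the same order as by $\mcd$. Modulo this bookkeeping, the conclusion \eqref{goal} follows by a trivial union bound: the trace distance is $0$ on the good event and at most $1$ on its complement, so $\tfrac12\|\mcd_\asy^t(\r)_{\sf qu}-\mcd^{\sft_{\sf end}}(\r)_{\sf qu}\|_1 \leq e^{-ct}$.
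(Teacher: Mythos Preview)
Your proposal is correct and follows essentially the same approach as the paper: invoke Lemma~\ref{lem:slowdown_lemma} to guarantee $\tsimmin(t)\geq \sft_{\sf end}$ with probability $\geq 1-e^{-ct}$, then use Proposition~\ref{prop:faithful_asynch_sim} to conclude that on this event the quantum marginals agree exactly, and bound the trace distance by the failure probability. Your explicit treatment of the feedback-ordering subtlety (which the paper absorbs into the phrase ``direct consequence of Prop.~\ref{prop:faithful_asynch_sim}'') is a useful addition, though note that your sentence about same-step operators is slightly garbled---all same-step $\mcf_\bfr$ commute by the assumption in Def.~\ref{def:synch_ccqs} regardless of distance, and the $2\De$ lag constraint is really needed to order \emph{different}-step operators with overlapping support.
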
 
	\begin{proof}
		We claim that
		\be \label{equaleventually} \tsimmin(t) \geq \sft_{\sf end} \implies \mcd_\asy^t(\r)_{\sf qu} =\mcd^{\sft_{\sf end}}(\r)_{\sf qu}.\ee 
		This is a direct consequence of Prop.~\ref{prop:faithful_asynch_sim}, which, together with the fact that the feedback operators are trivial at all spacetime points such that $\tsim(\bfr,t) > \sft_{\sf end}$, guarantees that the same feedback operators have been applied by the two decoders when $\tsimmin(t) \geq \sft_{\sf end}$. 
		
		Lemma~\ref{lem:slowdown_lemma} shows that $\sfP(\tsimmin(t) < \sft_{\sf end}) < e^{-ct}$ provided $t > \sft_{\sf end}/\g + f \log(L)$ for appropriate $O(1)$ $\g,f$; if $t$ satisfies this inequality, then \eqref{equaleventually} implies the existence of a density matrix $\wt \r$ such that 
		\be \label{dasydecomp} \mcd^t_\asy(\r)_\qu = \mcd^{\sft_{\sf end}}(\r)_\qu (1-p_{\sf slow}) + p_{\sf slow} \wt \r \ee 
		where $p_{\sf slow} < e^{-ct}$. Inserting \eqref{dasydecomp} into the trace distance in \eqref{goal} gives the desired result. 
	\end{proof}
	%
	%

	\ss{Performance guarantees for Lindbladian decoding}\label{ss:asynch_offline}
	
	In this subsection, we show that the results of sec.~\ref{sec:offline} continue to hold under desynchronization, which is a rather direct consequence of the constant slowdown guarantee provided by Lemma~\ref{lem:slowdown_lemma}. In particular, we show: 
	
	\ms\begin{theorem}[equivalent performance of synchronous and asynchronous decoders]\label{thm:equiv_asynch}
		Let $\mcd$ be a synchronous automaton decoder with logical error rate $\plog^{(\mce)}$ and decoding time $\tdec^{(\mce)}$ for offline decoding under error model $\mce$. Let $\plogae$, $\tdecae$ be the logical failure rates and expected decoding times for its desynchronization $\mcd_{\sf asynch}$. Then 
		\be \plogae = \plog^{(\mce)}, \qq \tdecae = O(\tdec^{(\mce)} + \log(L)) . \ee 
		Furthermore, if $\tdec^{(\mce)} = \O(\log L)$,\footnote{Technically, we actually require that for $\r\sim\mce$, $\tdec(\r)\ran > f\ln L$ for some $O(1)$ constant $f$ w.h.p. } then $\tdecae = \ct(\tdec^{(\mce)})$. 
	\end{theorem}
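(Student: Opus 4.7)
The plan is to deduce both halves of the theorem by working pointwise in the noise realization and then averaging, leaning almost entirely on Corollary~\ref{prop:faithful_asynch_sim_quantum} and Lemma~\ref{lem:slowdown_lemma}. Fix $\r \sim \mce,\rlog$ and set $\tau = \tdec(\r)$. By construction of $\tdec$, $\mcd$ terminates on $\r$ at step $\tau$ in the sense required by Corollary~\ref{prop:faithful_asynch_sim_quantum}: the terminal state $\mcd^{\tau}(\r)_\qu$ has trivial syndrome, and every subsequent synchronous feedback operator acts as the identity on $\mch_\qu$.

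For $\plogae = \plog^{(\mce)}$, I would apply Corollary~\ref{prop:faithful_asynch_sim_quantum} to this $\r$ and send $t\to\infty$; the bound $e^{-ct}$ vanishes, so $\lim_{t\to\infty}\mcd_\asy^t(\r)_\qu = \mcd^{\tau}(\r)_\qu$ pointwise. Substituting this limit into Definition~\ref{def:plogandtdec} and averaging $\tfrac12\|\,\cdot\,-\rlog\|_1$ against $p^{(\mce)}(\sfN)$ converts the formula for $\plogae$ termwise into the formula for $\plog^{(\mce)}$.

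For the upper bound on $\tdecae$, the key pointwise inequality is $\tdec_\asy(\r) \leq T(\r) := \inf\{t : \tsimmin(t) \geq \tau\}$. This follows from a slightly stronger reading of the proof of Corollary~\ref{prop:faithful_asynch_sim_quantum}: once $\tsimmin(t) \geq \tau$, Proposition~\ref{prop:faithful_asynch_sim} implies each site has applied exactly those feedback operators $\mcd$ would apply through step $\tau$ (any extra operators from sites that have advanced further are identities on $\mch_\qu$), so the async quantum state literally equals $\mcd^{\tau}(\r)_\qu$ and is anyon-free. Lemma~\ref{lem:slowdown_lemma} gives $\sfP(T(\r) > t) \leq e^{-ct}$ whenever $t \geq \max(\tau/\gamma,\, \tfrac{2D}{c}\ln L)$, so the tail integral $\EE[T(\r)] = \int_0^\infty \sfP(T(\r)>t)\,dt$ splits into a deterministic bulk $\tau/\gamma + O(\log L)$ plus an $O(1)$ exponential tail. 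Averaging over $\r$ yields $\tdecae \leq \tdec^{(\mce)}/\gamma + O(\log L)$, which is the first claim; when $\tdec^{(\mce)} = \Omega(\log L)$ the additive $\log L$ is absorbed into the upper half of the $\Theta$ statement.

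For the matching $\Omega(\tdec^{(\mce)})$ bound (used under the footnoted hypothesis $\tdec(\r) > f\ln L$ w.h.p.) I would invoke the $\tsimmax(t) \leq \lambda t$ half of Lemma~\ref{lem:slowdown_lemma}: the async feedback applied to $\mch_\qu$ by time $t$ is a product over sites of sync feedback operators drawn from simulation steps $\leq \tsimmax(t)$, so the async state cannot be anyon-free until $\tsimmax(t) \geq \tdec(\r)$, forcing $t \geq \tdec(\r)/\lambda$ w.h.p. Integrating gives $\tdecae \geq \tdec^{(\mce)}/\lambda - O(\log L)$, closing the $\Theta$ conclusion. The step I expect to be the main obstacle is making this last ``prefix'' argument rigorous: verifying that no intermediate async configuration, in which different sites sit at different simulation times, can accidentally produce a trivial-syndrome state strictly before the sync state does. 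The resolution should again come from Proposition~\ref{prop:faithful_asynch_sim}: each site-$\bfr$ feedback drawn at its simulation time $k$ is the \emph{same} Pauli the sync decoder would apply at step $k$, so the composite async feedback is a subset of the sync feedback operators scheduled by step $\tsimmax(t)$, and any anyon-free state it realizes would contradict the definition of $\tdec(\r)$ unless $\tsimmax(t) \geq \tdec(\r)$.
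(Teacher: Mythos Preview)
Your plan matches the paper's proof essentially step for step: both use Proposition~\ref{prop:faithful_asynch_sim} (or its corollary) to equate the long-time asynchronous and synchronous logical outputs, and both sandwich $\tdecae$ via the tail-sum formula together with the $\tsimmin$ and $\tsimmax$ halves of Lemma~\ref{lem:slowdown_lemma}. The only differences are cosmetic---you introduce the auxiliary stopping time $T(\r)$ where the paper works directly with $\sfP(\tdeca(\r)>t)\le\sfP(\tsimmin(t)<\tdec(\r))$, and the paper is more explicit about the integration window $[a,b]$ in the lower half (invoking the sharper bound \eqref{maxbound} from the appendix rather than the lemma as stated).

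You are also right that the implication $\tsimmax(t)<\tdec(\r)\Rightarrow\tdeca(\r)>t$ is the crux of the lower bound; the paper asserts it in one line without justification. However, your proposed resolution does not close it. That the asynchronous feedback is a site-wise prefix of the synchronous feedback through step $\tsimmax(t)$ is correct, but this does not contradict the definition of $\tdec(\r)$: that quantity is the first step at which applying the \emph{full} synchronous feedback through that step yields trivial syndrome, and it says nothing about what a proper (even downward-closed, marching-soldiers-compatible) subset of those operators does to the anyon pattern. A priori a particular staircase profile $\{\tsim(\bfr,t)\}$ could select exactly the feedback needed to clear all anyons before any uniform synchronous cut does. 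Ruling this out requires something more structural---for instance, tracking the worldline of an anyon that survives in the sync dynamics past step $\tsimmax(t)$ and arguing that it must still terminate on an anyon somewhere in the async configuration---rather than the bare prefix observation.
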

	As we have seen, there exist $p$-bounded error models with $\tdec^{(\mce)} = \O(\log L)$. Therefore the worst-case (with respect to the choice of $p$-bounded $\mce$) performance of $\mcd$ and $\mcd_\asy$ is asymptotically identical: 
	\be \max_{\mce \, \text{$p$-bounded}} \plogae = \max_{\mce \, \text{$p$-bounded}} \plog^{(\mce)}, \qq \max_{\mce \, \text{$p$-bounded}}  \tdecae  = \ct\(\max_{\mce \, \text{$p$-bounded}}\tdec^{(\mce)}\).\ee 
	\begin{proof}
		That $\plogae = \ploge$ is easy to see: by Prop.~\ref{prop:faithful_asynch_sim}, for any initial state $\r\sim \mce$  we have $\mcd^{\tsimmin(t)}(\r)_\qu = \mcd_{\sf asynch}^t(\r)_\qu$ if there are no anyons in the system by time $t$; the logical error rates of $\mcd$ and $\mcd_\asy$ are thus identical.

		The stated result about $\tdecae$ follows as a consequence of Lemma~\ref{lem:slowdown_lemma}. 
		We will first show $\tdecae = O(\tdec^{(\mce)} + \log(L))$. Given a noisy input state $\r\sim \mce$, let the random variable $\tdeca(\r)$ denote the decoding time of $\r$ under $\mcd_\asy$ for a particular (random) choice of proposed update times (defined in the context of the unraveling into stochastic pure-state trajectories, c.f. the discussion near \eqref{rates}). 
		We have 		 
		\be \EE [\tdeca(\r)]  =  \int_0^\infty dt\, \sfP(\tdeca(\r) > t) \leq \int_0^\infty dt \,\sfP(\tsimmin(t) < \tdec(\r)) ,\ee 
		where the expectation value is taken over the random proposed update times and $\tdec(\r)$ is as in \eqref{tdecrhodef}. Here the equality is the standard tail-sum formula, and the inequality holds because the event $ \tdeca(\r) > t$ implies $\tsimmin(t) < \tdec(\r)$ (again with $\tsimmin(t)$ treated as a random variable depending on the sequence of proposed update times). Breaking the integral up at a time $t_*$, we have 
		\bea  \EE [\tdeca(\r)]  & 
		\leq t_* + \int_{t^*}^\infty dt\, \sfP(\tsimmin(t) < \g_t t),\eea 
		where we defined $\g_t = \tdec (\r)/ t$. Letting $c,\g$ be the constants in the statement of Lemma~\ref{lem:slowdown_lemma}, we may apply the bound in \eqref{tminbound_lemma} to the integral if $t_* \geq \frac{2D}c \ln L$ and $\g_t \leq \g$. We then choose 
		\be t_* = \max\(\frac{2D}c \ln L, \frac{\tdec(\r)}\g\),\ee 
		which gives
		\be \EE [\tdeca(\r)]  
		\leq t_* + \int_{t_*}^\infty dt \, e^{-ct} = O(\tdec(\r) + \log(L)).\ee 
		Averaging over $\r\sim\mce$ then gives the desired result.
		
		We now show $ \tdecae = \O(\tdec^{(\mce)})$ if $\tdece = \O(\log L)$. 
		Again using the tail-sum formula and fixing two constants $a,b$ with $a<b$, we have 
		\bea  \EE [\tdeca(\r)] &  =   \int_0^\infty dt\, \sfP(\tdeca(\r) > t) \geq \int_0^\infty dt\, \sfP(\tsimmax(t) < \tdec(\r)) \\ & \geq \int_a^b dt\, (1 - \sfP(\tsimmax(t) \geq \l_t t )),\eea
		where we used that the event $\tsimmax(t) < \tdec(\r)$ implies the event $\tdeca(\r) > t$, and defined $\l_t = \tdec(\r)/t$. 
		
		We now apply Lemma~\ref{lem:slowdown_lemma} to bound the remaining integral. We in fact need to make use of the stronger result derived in \eqref{maxbound}, which gives us an exponentially small upper bound on $\sfP(\tsimmax(t) \geq \l_t t)$ provided 
		\be \frac{2D}{c_\l} \leq t \leq \frac{\tdec(\r)}{2e},\ee 
		where $c_\l = (1+\l \ln2)/2$. If we assume $\tdec(\r) >  f \log(L)$ for some $O(1)$ constant $f$, this can be accomplished by choosing e.g. $\l = \l_f = 2(4ez/f - 1)/\ln2$. Making this choice, and taking $a=\tdec(\r)/4e$, $b= \tdec(\r)/2e$, we have 
		\be \EE[\tdeca(\r)] \geq \int_{\tdec(\r)/4e}^{\tdec(\r)/2e} dt\, (1-e^{-c_{\l_f} t}).\ee 
		After averaging over $\r\sim\mce$, this then gives $\tdecae = \ct(\tdec^{(\mce)})$ provided $\tdec(\r) > f\log(L)$ w.h.p. 
	\end{proof}

	\section{Numerics } \label{sec:numerics}
	
	In this section, we use Monte Carlo simulations to quantitatively address the performance of the message-passing decoder under i.i.d Pauli noise. The simulations are performed by subjecting a clean initial state in the logical-$Z$ eigenbasis to Pauli-$X$ noise sampled from an i.i.d distribution of strength $p$, and then running the decoder on this noisy state until no anyons remain. $\ploge$ is estimated as the fraction of states $\r \sim \mce_{i.i.d}$ for which the decoder returns an output state different from the input, and $\tdece$ is estimated as the average number of steps the decoder performs before no anyons remain. For simplicity we will not set a truncation on the maximal allowed message size, and thus will utilize  $\ct(\log(L))$ control bits on each site. All code used to create the plots to follow is available at  \cite{code}. 
	
	We find threshold error rates of $p_c = 1/2$ in 1D and $p_c \approx 7.3\%$ in 2D (again, no attempt has been made to choose an automaton rule which maximizes this value). For both the 1D and 2D decoders, $\tdec$ and $\plog$ are observed to scale with $L$ below threshold in the manner predicted by Conjecture~\ref{conj:local_noise} and the general intuition in sec.~\ref{sec:intro}. 
	
	All of our simulations are run with periodic boundary conditions and a field update speed of $v=3$. We specify to synchronous decoders until sec.~\ref{ss:uncoord}; by virtue of Theorem.~\ref{thm:equiv_asynch} the same results hold upon desynchronizing.

	\ss{One dimension} 	
	\begin{figure}
		\hfill \includegraphics[width=.45\tw]{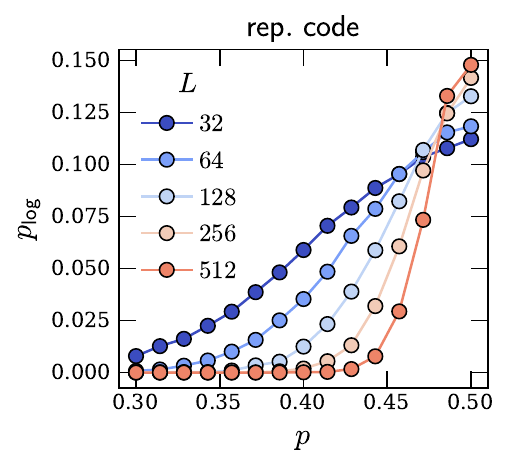}  \hfill 
		\includegraphics[width=.45\tw]{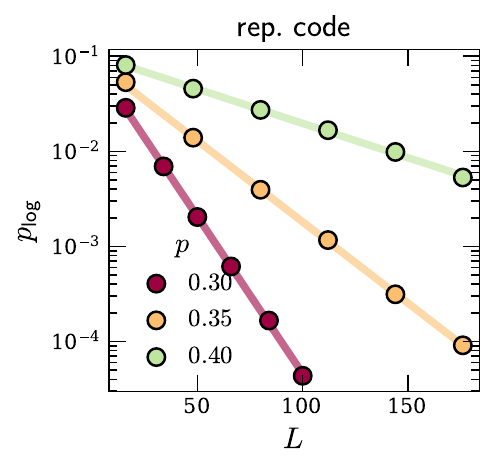} \hfill 
		
		\caption{\label{fig:1d_offline} Performance of the offline decoder for the 1D repetition code. {\it Left:} logical failure rate $\plog$ as a function of $p$. The data are consistent with a threshold of $p_c \ra 1/2$. $\plog(p=1/2) < 1/2$ for finite system sizes because we take decoding to succeed as long as the majority vote of the inputs matches the logical state of the output.  {\it Right:} Scaling of the logical failure rate with $L$ below threshold. The solid lines are fits of the data to the form $\plog = C(p/p_c)^{\g L}$ with $\g \sim 1/10$. }
		
	\end{figure}

	\begin{figure}
		\centering 
		\includegraphics[width=.45\tw]{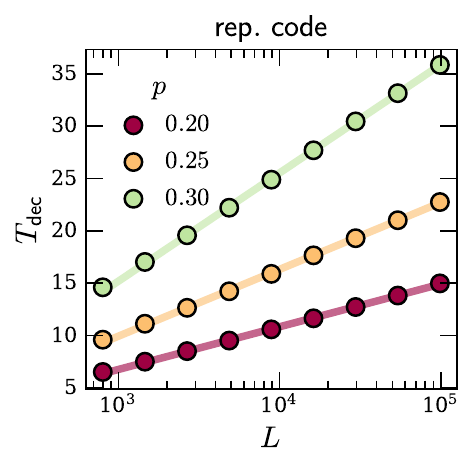}
		\caption{\label{fig:1d_tdec} Decoding time $\tdec$ as a function of $L$ for the 1D repetition code. The solid lines are fits of the data to the form $\tdec = A\log(L)+B$, with $p$-dependent constants $A,B$.}
	\end{figure}
	
	We start with the 1D repetition code. Here we find decoding times to be decreased when a small amount of randomness is added to the update rules; the simulations presented here move each anyon in a random direction $10\%$ of the time, and move it according to the correct update rule the remaining $90\%$. 
	The left panel of fig.~\ref{fig:1d_offline} plots $\plog$ as a function of $p$ at different system sizes. The crossing points for different values of $L$ drift towards {\it larger} values of $p$ as $L$ is increased, and at face value are consistent with a threshold at $p_c =1/2$. The right panel of fig.~\ref{fig:1d_offline} shows $\plog$ as a function of $L$ for different values of $p$, showing an excellent fit to an exponential scaling $\plog = e^{-\ct(L)}$ in this regime. 
	
	It is not a priori obvious that our decoder should have $p_c = 1/2$: this is the threshold one gets upon using a global majority vote decoder, and it is perhaps surprising that a completely parallelized local scheme can reproduce the same computation. A hint of why $p_c = 1/2$ is nevertheless reasonable comes from comparing the average distance between anyons in a pair to the inverse of the average anyon density: when the former is larger, we expect decoding to fail. To this end, define $\lan r_{\sf pair}\ran$ as the average distance between anyons in a pair (with anyons being paired by the strings created by the noise, as usual), and define $\lan \r\ran$ as the average anyon density. For i.i.d noise, one finds 
	\be\lan  \r\ran  = 2 p (1-p)\ee
	which comes from adding the (identical) probabilities of the two ways that an anyon can be present at a given site. The mean pair length is 
	\bea\lan r_{\sf pair}\ran & = (1-p)  \sum_{k=0}^\infty k p^{k-1}  = \frac1{1-p},\eea 
	and combining these gives $\lan r_{\sf pair}\ran\lan \r\ran = 2p$,
	which is strictly less than 1 except when $p=1/2$. This heuristically suggests that a decoder which moves anyons towards their nearest neighbors has the potential of attaining $p_c = 1/2$. 
	
	fig.~\ref{fig:1d_tdec} shows the expected decoding time $\tdec$ as a function of $L$, with the data exhibiting an excellent fit to $\tdec = \ct(\log L)$. This shows that the scalings predicted by Conjecture~\ref{conj:local_noise} hold for i.i.d noise in 1D (we have also investigated several non-i.i.d noise models with strictly local spatial correlations, and all are observed to have the same scaling as i.i.d noise).

	\ss{Two dimensions} 
	
	We now repeat the above analysis for the 2D toric code. The left panel of fig.~\ref{fig:2d_offline} plots $\plog$ as a function of $p$ for different system sizes. Since our simulations are done on the torus, where logical errors occur when anyon strings wrap around either one or both of the two nontrivial cycles, $\plog \ra 3/4$ in the limit $p\ra 1/2$. The crossing points for different values of $L$ are consistent with a threshold of $p_c \approx 7.3\%$, which unlike the 1D case is (a few percent) smaller than the optimum value of $\approx 11\%$ set by the phase transition of the 2D random bond Ising model on the Nishimori line~\cite{Dennis_2002}, but is a bit higher than the $\approx 6.1\%$ of the ``constant-$c$'' field-based decoder of Ref.~\cite{herold2015cellular}. The right panel of fig.~\ref{fig:2d_offline} plots $\plog$ against $L$ and confirms the expected scaling $\plog = e^{-\ct(L)}$ for $p$ not too far below threshold. Finally, fig.~\ref{fig:2d_tdec} shows $\tdec$ as a function of $L$ for several values of $p < p_c$, which again display an excellent fit to the predicted scaling $\tdec = \ct(\log L)$ (as in 1D, introducing local spatial correlations into the noise does not appear to affect the results). 
	
	\begin{figure}
		\hfill \includegraphics[width=.45\tw]{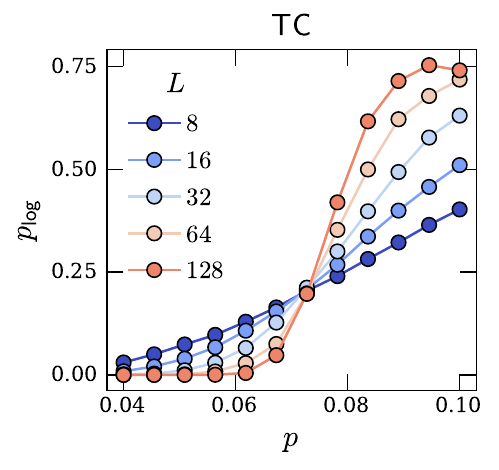} \hfill 
		\includegraphics[width=.45\tw]{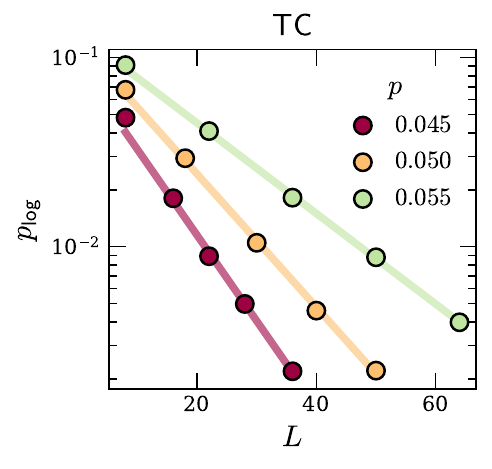}
		\hfill 
		\caption{\label{fig:2d_offline} Performance of the offline decoder for the 2D toric code. {\it Left:} logical failure rate $\plog$ as a function of $p$, with the crossing point indicating $p_c \approx 7.3\%$. {\it Right:} Scaling of the logical failure rate with $L$ below threshold. The solid lines are fits of the data to the form $\plog = C(p/p_c)^{\g L}$ with $\g \approx 1/5$.}
		
	\end{figure}
	
	\begin{figure}
		\centering 
		\includegraphics[width=.45\tw]{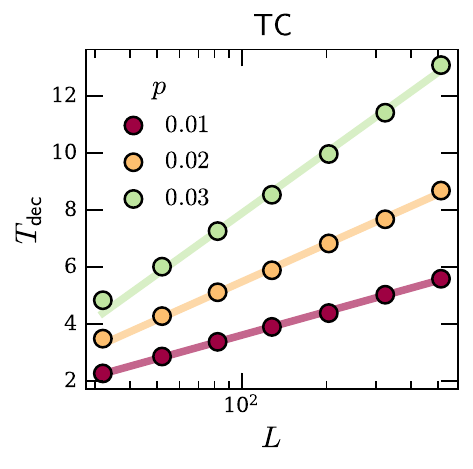}
		\caption{\label{fig:2d_tdec} Expected decoding time $\tdec$ below threshold as a function of $L$ for the 2D toric code. The solid lines are fits of the data to the form $\tdec = A \log(L)  +B$.} 
	\end{figure}
	
	We can give an extremely crude estimation of $p_c$ for i.i.d noise using the following simplistic argument. Let $\sfP_{\sf nearest\, \, paired}$ be the probability that a given anyon is paired with one of its nearest neighbors. Heuristically, our decoder should succeed when this probability is $>1/2$; in this case, moving each anyon towards its nearest neighbor is likely to induce a pairing leading to an error-free perfect matching. We compute $\sfP_{\sf nearest\, \, paired}$ numerically in fig.~\ref{fig:p_nearpair}, which shows that $\sfP_{\sf nearest\, \, paired} = 1/2$ at $p \approx 6.5\%$, which is slightly smaller than the numerically computed value of the threshold probability. It would be interesting if a refinement of this naive argument could be used to get a sharp bound on $p_c$. 
	\begin{figure}
		\centering \includegraphics[width=.45\tw]{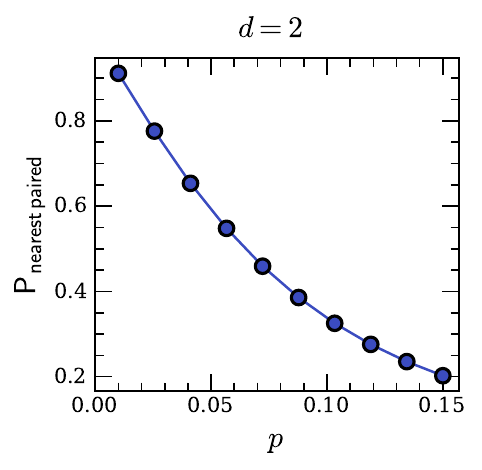} 
		\caption{\label{fig:p_nearpair} The probability for anyons in 2d created by i.i.d noise to be paired with one of their nearest neighbors. $\sfP_{\sf nearest\, \, paired} = 1/2$ around $p\approx 6.5\%$.}
	\end{figure}

	\ss{Uncoordinated desynchronization}  \label{ss:uncoord}
	
	The continuous time decoders studied in sec.~\ref{sec:lind} were constructed by augmenting the classical control architecture with additional bits that implement the marching soldiers scheme. This scheme was proven to produce Lindbladian decoders with the same values of $p_c$ and $\plog$ as their synchronous counterparts, along with the same scaling of $\tdec$. It is natural however to wonder about the performance of Lindbladian decoders constructed with a simpler descynhronization scheme, where we simply independently implement anyon motion and cellular automaton updates on each site in an uncoordinated way. We will call this type of Lindbladian construction the {\it uncoordinated desynchronization}.
	
	For all cellular automata capable of highly structured computations known to the author---such as the game of life and many of the 1D automata defined by Wolfram---performing uncoordinated desynchronization completely destroys the automaton's ability to perform meaningful computations. However, for Toom's rule, which like the present decoder is designed only to remember $O(1)$ bits of information, uncoordinated desynchronization is provably innocuous, changing the value of $p_c$ but not the existence of a threshold \cite{gray1999toom,bennett1985role,ray2024protecting}. While it is possible to construct local robust classical memories storing $O(1)$ bits for which uncoordinated desynchronization fails \cite{squeezing}, it is still reasonable to conjecture that it may work for the present message-passing decoder under consideration. 
	
	We will not attempt a rigorous analysis of uncoordinated desynchronization in this work, and will content ourselves with a numerical study of uncoordinated Lindbladian decoders under i.i.d noise. 
	
	fig.~\ref{fig:1d_uncoord_plog} studies the uncoordinated desynchronization of the message passing decoder in 1D by plotting $\plog$ as a function of $p$ for different values of $L$. Without the marching soldiers architecture, the error rate $\plog^{\sf un}$ for the uncoordinated decoder will generically not agree with its value under synchronous decoding; in general we expect $\plog^{\sf un} < \plog$. Our numerics are consistent with this, but nevertheless again indicate a threshold at the optimal value of $p_c = 1/2$. 
	
	Finally, fig.~\ref{fig:2d_uncoord_plog} shows the performance of the uncoordinated Lindbladian decoder in 2D. We again observe the existence of a threshold and that $\plog^{\sf un} < \plog$, but unlike in 1D observe that the threshold value of $p$ is decreased relative to the synchronous case, with $p_c^{\sf un} \approx 5.2\%$. The scaling of $\plog^{\sf un}$ as a function of $L$ is observed to scale as $p^{-\ct(L)}$, as in the synchronous case.

	\begin{figure}
		\centering 
		\includegraphics[width=.45\tw]{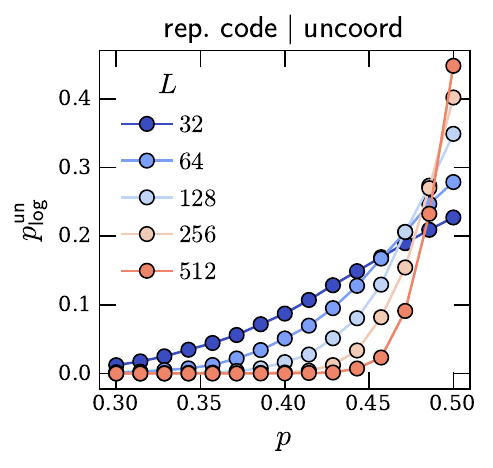}
		\caption{ \label{fig:1d_uncoord_plog} Scaling of $\plog^{\sf un}$ vs $p$ for the uncoordinated asynchronous 1D repetition code decoder under i.i.d noise of strength $p$; the evolution of the crossing points are consistent with a threshold at $p^{\sf un}_c = 1/2$.  } 
		
	\end{figure}

	\begin{figure}
		\hfill 
		\includegraphics[width=.45\tw]{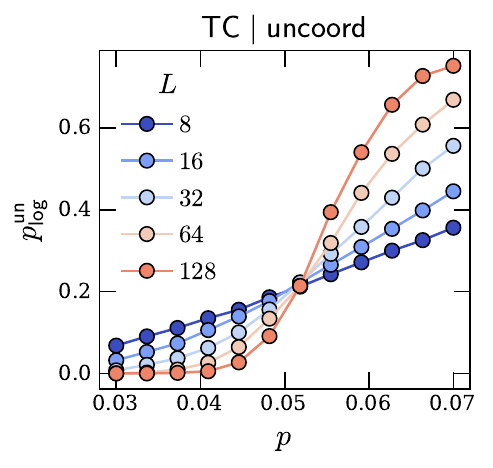}
		\hfill 
		\includegraphics[width=.45\tw]{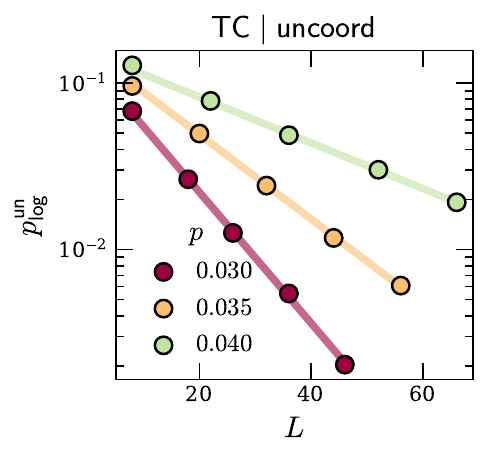}\hfill 
		\caption{ \label{fig:2d_uncoord_plog} Performance of the uncoordinated asynchronous 2D toric code decoder under i.i.d noise of strength $p$. {\it Left:} $\plog^{\sf un}$ vs $p$, consistent with a threshold around $p^{\sf un}_c \approx 5.2\%$.  {\it Right:} $\plog^{\sf un}$ vs $p$; the solid lines are fits to $\plog = C(p/p^{\sf un}_c)^{\g L}$ with $\g \approx 1/10$. } 
	\end{figure}

	\section{Acknowledgments}
	
	I thank Ehud Altman, Nicholas O'Dea, Ruihua Fan, Sarang Gopalakrishnan, Jeongwan Haah, Yaodong Li, Louis Paletta, Pablo Sala, Charles Stahl, Kazuaki Takasan, Zijian Wang, Hayata Yamasaki, and Mike Zaletel for discussions and feedback, Shankar Balasubramanian and Margarita Davydova for helpful discussions about Gacs' sparsity methods, synchronization, and collaboration on related work, Larry Chen and Noah Goss for insights into hardware constraints, and Nathaniel Selub for finding an egregious number of typos. I am supported by a Miller research fellowship. 
	
	\begin{appendix}

		\section{Sparsity theorems and hierarchical noise}\label{app:sparsity}

		In this appendix we adopt the strategy in Gacs and Capuni \cite{ccapuni2021reliable} to show how noise distributions can be organized in a hierarchy of increasingly large-sized clusters, with noise events rapidly becoming rare at large levels in the hierarchy (large cluster sizes). Our proof will closely follow that of Gacs and Capuni, with the following differences:
		\begin{itemize}
			\item For our applications to error correction, we will need to organize the noise into well-separated clusters rather than collections of isolated points;
			\item our proof will assume one-dimensional noise, with a generalization to arbitrary dimensions given as a corollary at the end;
			\item our definition of level-$k$ error rates will be a bit different;
			\item we will get slightly better bounds on some of the $O(1)$ constants that control the growth of cluster sizes.		
		\end{itemize}
		
		The discussion in this appendix is completely self-contained, and at some points will employ notation that differs slightly from the main text. 
		
		\ss{Basic notions, and definitions of level-$k$ noise and error sets}
		
		As mentioned above, to simplify the notation we will mostly consider the case where our noise is defined on $\zz$ (the case that would be relevant for offline decoding in the 1D repetition code). The generalizations of the results below to $\zz^{D>1}$ and $\zz_L^D$ is simple: the former will be presented in a corollary at the end, and the latter follows essentially immediately from the former. 
		
		As is common in the study of error correcting codes, our assumptions about the noise model will be rather mild, with us needing only to assume that the marginal probability for a certain set of locations to have errors decays exponentially in the size of that set. More precisely: 
		
		\ms\begin{definition}[$p$-bounded noise] A {\it noise realization} $\sfN$ is a random subset of $\zz$. We will say that the noise is $p$-bounded if, for all finite subsets $A \subset \zz$, 
			\be \label{pbound} \sfP (A \subset \sfN) \leq p^{|A|}.\ee 
		\end{definition}\ms
		In the following, we will always assume the existence of a finite $0 < p < 1$ for which the above holds.

		Before we begin, some needed vocabulary: 
		\ms\begin{definition}[isolation and linkedness] 
			The set of points within a distance $r$ of a point $x$ will be denoted $B_x(r)$. 
			Two points $x,y$ are said to be {\it $(r,R)$-isolated} if $y \not\in B_x(R) \setminus B_x(r)$; otherwise, $x,y$ are said to be {\it $(r,R)$-linked}. 
			
			Let $\sfN \subset \zz$ be a noise realization. 
			A point $x \in \sfN$ is {\it $(r,R)$-isolated} if $x$ is $(r,R)$-isolated from all points in $\sfN$. $\sfN$ is said to be $(r,R)$-isolated if all points in $\sfN$ are $(r,R)$-isolated. 
		\end{definition}\ms

		Isolation is the concept used by Gacs for organizing the points in $\sfN$. For our applications to the decoding problem, the following notion will be more useful: 
		\ms\begin{definition}[clustering]
			Let $\sfN \subset \zz$ be a noise realization. A $(W,B)${\it  -cluster} $C_{(W,B)}$ is a subset of $\sfN$ such that 
			\begin{enumerate}
				\item all points in $C_{(W,B)}$ are contained in an interval of width $W$, and 
				\item this interval is separated from other points in $\sfN$ by a buffer region of size at least $B$: 
				\be \dist(\sfN \setminus C_{(W,B)}, C_{(W,B)}) \geq B,\ee 
				where we have defined the distance between two sets $A_1,A_2$ as 
				\be \dist(A_1,A_2) = \min_{x\in A_1,y\in A_2} |x-y|.\ee 
			\end{enumerate}
			A point $x \in \sfN$ is called {\it $(W,B)$-clustered} if it is a member of a $(W,B)$ cluster, and a set is said to be $(W,B)$ clustered if all points it contains are $(W,B)$-clustered. 
		\end{definition}\ms
		\begin{figure}
			\centering \includegraphics[width=.5\tw]{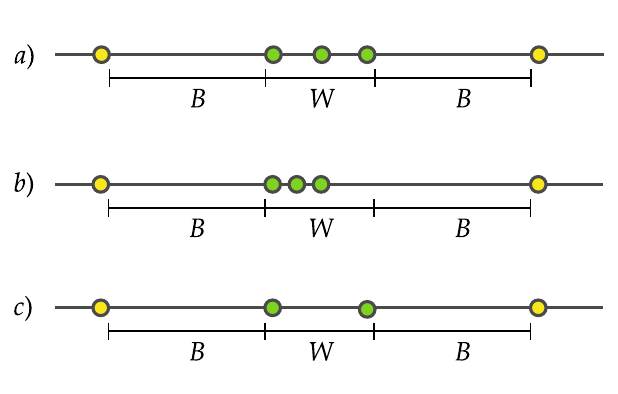}
			\caption{\label{fig:clustering_example} Isolation and clustering in one dimension. Circles represent noise points. In all panels, the green points form a $(W,B)$ cluster. In $a)$, only the central green point is $(W/2,W/2+B)$-isolated. In $b)$, all three green points are $(W/2,W/2+B)$-isolated, while in $c)$ neither of the green points are $(W/2,W/2+B)$-isolated.} 
		\end{figure}

		It follows from the above definition that any $(W/2,W/2+B)$-isolated point defines a $(W,B)$-cluster. However, a given cluster $C_{(W,B)}$ may contain anywhere between $0$ and $|C_{(W,B)}|$ $(W/2,W/2+B)$-isolated points; examples are shown in fig.~\ref{fig:clustering_example}. The partition of a (subset of a) subset of $\zz$ into clusters is as stated not necessarily unique in general, but it is if $B>W$: 
		\ms\begin{proposition}\label{prop:cluster_uniqueness}
			Let $\sfN \subset \zz$ be a noise realization, and $\{C_{(W,B)}^{(i)}\}$ be a collection of $(W,B)$-clusters in $\sfN$. If $B> W$, these clusters are disjoint: $C_{(W,B)}^{(i)} \cap C_{(W,B)}^{(j)}$ if $i\neq j$. 
		\end{proposition}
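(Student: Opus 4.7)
The plan is to argue by contradiction, using nothing more than the two defining properties of a $(W,B)$-cluster (diameter bound and buffer separation). Suppose $C_1=C_{(W,B)}^{(i)}$ and $C_2=C_{(W,B)}^{(j)}$ with $i\neq j$ satisfy $C_1\cap C_2\neq\emptyset$. Since the clusters are distinct, at least one of them contains a point not lying in the other; up to relabeling, pick $y\in C_2\setminus C_1$. The second clause of Definition 4.2, applied to the cluster $C_1$, gives $\dist(\sfN\setminus C_1,C_1)\geq B$, and since $y\in\sfN\setminus C_1$ this forces $|x-y|\geq B$ for every $x\in C_1$, in particular for every $x\in C_1\cap C_2$.

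The next step is to exploit the first clause of the definition applied to $C_2$: all of its points lie in a single interval of width $W$, so for any $x\in C_1\cap C_2$ we automatically have $|x-y|\leq W$. Combining this upper bound with the lower bound from the buffer condition yields $B\leq W$, which contradicts the hypothesis $B\geq W$ in the strict regime $B>W$ that is actually used in the rest of the paper (c.f. \eqref{bwineqs}, where $b_k>w_k$ at every level). This is the whole content of the argument.

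The only subtlety, and in my view the one possible obstacle, is the boundary case $B=W$: the inequalities above are then consistent with equality, and one can even construct toy examples on $\zz$ (e.g.\ $\sfN=\{0,W\}$ with $C_1=\{0\}$ a singleton sub-cluster of $C_2=\{0,W\}$) in which two nominally valid $(W,W)$-clusters do share a point. I would handle this in one of two ways: either restrict the statement to $B>W$ (which costs nothing downstream, since \eqref{bwineqs} is strict), or else observe that in the equality case the two clusters can only overlap at endpoints of their defining intervals, and then impose the natural convention that a cluster is maximal with respect to inclusion, which eliminates singleton sub-clusters like $C_1=\{0\}$ above. Under either convention the four-line argument of the previous paragraph goes through verbatim and establishes disjointness.
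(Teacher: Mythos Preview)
Your argument is correct and is essentially identical to the paper's: both pick a point $x$ in the intersection, use the diameter bound on one cluster to get $|x-y|\leq W$, and the buffer condition on the other to get $|x-y|\geq B$, yielding a contradiction. You are in fact more careful than the paper about the boundary case $B=W$; the paper's proof implicitly uses a strict inequality $|x-y|<W$ (written with a typo as ``$<R$'') and does not comment on this edge case, whereas you correctly note that only $B>W$ is needed downstream.
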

		\begin{proof}
			Fix $i, j$, assume $B > W$, and suppose $x \in C_{(W,B)}^{(i)} \cap C_{(W,B)}^{(j)}$. Consider a point $y \in C_{(W,B)}^{(i)}$. Then $|x-y| < W$. This implies $\dist(C_{(W,B)}^{(j)},\{y\}) < W$. Thus $\dist(C_{(W,B)}^{(j)}, \sfN \setminus C_{(W,B)}^{(j)}) < B$ since $B > W$, and hence we must have $i=j$. 
		\end{proof}

		The reason why we prefer to analyze things in terms of clustering rather than isolation is because we prove our threshold result for a decoder $\mcd$ by showing that anyons in a given cluster are always annihilated with other anyons from the same cluster. The same cannot be done as easily for isolated points: an isolated point in $\sfN$ may create anyons quite close to many other anyons, and this precludes us from similarly showing that the decoder eliminates all appropriately isolated points without causing an error.

		\ss{The sparsity theorem in one dimension}
		
		In the following we will fix a triple of constants $(\b,\g,n)$ chosen to satisfy the inequalities 
		\be  \label{paramineqs} \frac{2\g}{1-1/n} < \b < \g n.\ee 
		We will also define the numbers 
		\be \label{rbdefs} w_k = 2\b n^k, \qq b_k = \g n^{k+1} - \b n^k\ee 
		which accordingly must satisfy 
		\be 0<b_0<\frac{n-3}4w_0.\ee  
		
		These parameters control the scales involved in different levels of the hierarchical structure we use to coarse-grain the noise, in the manner made precise in the following definitions: 	
		\ms\begin{definition}[level-$k$ noise and error sets] 		
			Let $\sfN_0, \mcn_0 = \sfN$. The {\it $(k+1)$th level clustered noise set $\sfN_{k+1}$} is defined as the set obtained by deleting from $\sfN_k$ all $(w_k, b_k)$-clustered points. The {\it $(k+1)$th level isolated noise set $\mcn_{k+1}$} is defined as the set obtained by deleting from $\mcn_k$ all points which are $(\b n^k, \g n^{k+1})$-isolated.
			
			The subset of $\sfN$ that is deleted when passing from level $k$ to level $k+1$ defines the {\it level-$k$ clustered error set}, which we write as $\sfE_k$: 
			\be \sfE_k = \sfN_{k} \setminus \sfN_{k+1}.\ee 
			We similarly define the {\it level-$k$ isolated error set} $\mce_k$ as $\mce_k = \mcn_{k} \setminus \mcn_{k+1}$. 
		\end{definition}\ms 
		
		A $(w_k,b_k)$ cluster $C_{(w_k,b_k)}$ will be referred to as a {\it k-cluster} in what follows, and will often be written more succinctly as $C_k$. 
		
		\ms\begin{remark}
			If $b_k \geq w_k$, then by proposition~\ref{prop:cluster_uniqueness} the clusters removed when passing from level $k$ to level $k+1$ are disjoint: 
			\be \sfE_k = \bigsqcup_i C_k^{(i)} .\ee 
			Since $\sfN = \bigsqcup_k \sfE_k$, this gives a decomposition of $\sfN$ into disjoint clusters as 
			\be \sfN = \bigsqcup_{k,i} C_k^{(i)}.\ee 
			We need to assume this for our proof of Lemma~\ref{lem:const_slowdown}, but will not make this assumption in the remainder of this section. 
		\end{remark}
		
		Our goal in the following will be to show that level-$k$ clustered errors become increasingly rare as $k$ gets larger, viz. to show that if $x$ is sampled randomly from $\sfN$, it is very unlikely to be part of $\sfN_k$ as $k$ gets large. 	
		We quantify this by defining the level-$k$ error rate as follows:\footnote{Note that this definition differs from that in \cite{ccapuni2021reliable} both in the fact that it involves only a single point $x$ and that it pertains to $\sfN_k$, not $\mcn_k$.}
		\ms\begin{definition} The {\it level-$k$ error rate} $p_k$ is the largest probability with which a particular point belongs to $\sfN_k$:
			\be p_k = \max_{x \in \zz} \sfP\(x\in \sfN_k\),\ee 
			with $p_0 = p$ the number appearing in \eqref{pbound}. 
		\end{definition}\ms 
		Note that this definition differs in a few ways from the analogous definitions in \cite{ccapuni2021reliable,gacs2001reliable}. 
		
		When proving that level-$k$ errors become increasingly rare at large $k$, it will prove to be slightly more convenient to work with the isolated noise sets $\mcn_k$, rather than the clustered noise sets $\sfN_k$. Results about the sparsity of isolated noise can then be transferred to results about the sparsity of clustered noise using the following simple fact: 
		\ms\begin{proposition}
			$\sfN_k\subset \mcn_k$ for all $k$. 
		\end{proposition}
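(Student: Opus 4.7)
The plan is to prove this by straightforward induction on $k$, exploiting the fact that the parameter choices $w_k = 2\beta n^k$ and $b_k = \gamma n^{k+1} - \beta n^k$ are rigged so that the scales involved in the $(w_k, b_k)$-clustering step and the $(\beta n^k, \gamma n^{k+1})$-isolation step align perfectly: one has $w_k/2 = \beta n^k$ and $w_k/2 + b_k = \gamma n^{k+1}$. The base case $k = 0$ is immediate since $\sfN_0 = \mcn_0 = \sfN$ by definition.

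For the inductive step I will assume $\sfN_k \subset \mcn_k$, pick an arbitrary $x \in \sfN_{k+1}$, and show $x \in \mcn_{k+1}$. The inductive hypothesis immediately yields $x \in \mcn_k$ (since $\sfN_{k+1} \subset \sfN_k$), so what remains is to exhibit a witness $y \in \mcn_k$ lying in the annulus $B_x(\gamma n^{k+1}) \setminus B_x(\beta n^k)$, thereby showing that $x$ is not $(\beta n^k, \gamma n^{k+1})$-isolated in $\mcn_k$.

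The witness will come from unpacking the assumption that $x$ is not $(w_k, b_k)$-clustered in $\sfN_k$. The key move is to consider the specific candidate cluster $C_x := \sfN_k \cap [x - \beta n^k,\, x + \beta n^k]$, which by construction contains $x$ and satisfies the width-$w_k$ condition for being a $(w_k, b_k)$-cluster. Its failure to be a $(w_k,b_k)$-cluster must then come from a buffer violation: some $z \in \sfN_k \setminus C_x$ lies within distance $b_k$ of some $z' \in C_x$. A one-line triangle inequality then places $z$ in the desired annulus around $x$, namely $\beta n^k < |z-x| < \beta n^k + b_k = \gamma n^{k+1}$, and the inductive hypothesis promotes $z \in \sfN_k$ to $z \in \mcn_k$, providing the witness $y = z$.

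There is no real obstacle here — the argument is essentially bookkeeping. The only place where an error could creep in is the arithmetic matching of parameters, specifically the identity $\beta n^k + b_k = \gamma n^{k+1}$, which is immediate from the definition of $b_k$. Endpoint conventions in the ball and annulus definitions cause no issue since the non-membership $z \notin C_x$ automatically gives the strict inequality $|z - x| > \beta n^k$.
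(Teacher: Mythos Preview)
Your proposal is correct and takes essentially the same approach as the paper: both argue by induction using the parameter alignment $w_k/2 = \beta n^k$ and $w_k/2 + b_k = \gamma n^{k+1}$. The only difference is cosmetic: the paper phrases the inductive step as ``a point of $\sfN_k$ that is $(\beta n^k,\gamma n^{k+1})$-isolated in $\mcn_k$ defines a $k$-cluster, hence lies in $\sfE_k$'' (i.e., $\mce_k\cap\sfN_k\subset\sfE_k$), whereas you argue the contrapositive directly by exhibiting the witness $z$ in the annulus.
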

		\begin{proof}
			We argue by induction. The base case is trivial since $\sfN_0 = \mcn_0$ by definition. Now suppose $\sfN_k \subset \mcn_k$ for some $k>0$.  All $(\b n^k,\g n^{k+1})$-isolated points of $\sfN_k \cap \mcn_k$ are contained in $\sfE_k$ (as each of them defines a $k$-cluster due to the definitions in \eqref{rbdefs}), and hence are deleted from $\sfN_k$ when passing to $\sfN_{k+1}$. These isolated points are the only elements of $\sfN_k \cap \mcn_k$ that are deleted from $\mcn_k$ when passing to $\mcn_{k+1}$. Thus $\mce_k \cap \sfN_k \subset \sfE_k$, implying $\sfN_{k+1} \subset \mcn_{k+1}$. 
		\end{proof}
		An immediate consequence of this is that the level-$k$ error rate is upper bounded by the largest probability for a particular point to be contained in $\mcn_k$: 
		\be p_k \leq \max_{x\in\zz}\sfP\( x \in \mcn_k \).\ee

		Our goal in the following is to prove that the $p_k$ decay very rapidly with $k$, as long as $p$ is small enough: 
		\ms\begin{theorem}[sparsity bound]\label{thm:sparse}
			If $(\b,\g,n)$ satisfy the inequalities \eqref{paramineqs},	
			then the level-$p$ error rate satisfies 
			\be\label{simplethresh} p_k \leq \frac1{2b_0n^{k+1}}\(\frac p{p_*}\)^{2^k},  \ee 
			where the parameter $p_*$ is bounded as 
			\be \label{pstarbound} p_* \geq \frac1{2b_0n}.\ee 
		\end{theorem}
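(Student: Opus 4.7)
The plan is to bound $p_k$ by passing through the slightly weaker isolated noise hierarchy $\mcn_k$, using the containment $\sfN_k \subset \mcn_k$ already established in the appendix. Membership $x\in \mcn_j$ forces the existence of some $y\in \mcn_{j-1}$ within the annulus $(\beta n^{j-1},\gamma n^j]$ of $x$, and iteratively unrolling this recursion down to $\mcn_0 = \sfN$ produces a binary tree of depth $k$ rooted at $x$ whose $2^k$ leaf labels must all lie in $\sfN$. A union bound over such witness trees, combined with the $p$-bounded property applied to the leaf set, will then yield a sharp bound on $\sfP(x\in \mcn_k)$ and hence on $p_k$.

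I would formalize this by defining a depth-$k$ witness tree rooted at $x$ to be a binary tree whose root carries the label $x$, and in which every internal node at level $j\geq 1$ (with level $0$ being the leaves) has a ``self'' child at level $j-1$ with the same label and a ``partner'' child at level $j-1$ whose label is an integer at distance in $(\beta n^{j-1},\gamma n^j]$ from it. A straightforward induction on $k$ then shows that $x\in\mcn_k$ forces the existence of such a tree with all $2^k$ leaf labels in $\sfN$. The key structural step is to verify that these leaves are pairwise distinct: at any level-$j$ node the two hanging subtrees are rooted at points separated by more than $\beta n^{j-1}$, while every leaf of a depth-$(j-1)$ subtree is within $\sum_{i=1}^{j-1}\gamma n^i = \frac{\gamma n}{n-1}(n^{j-1}-1)$ of its subtree root. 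The inequality $\beta > 2\gamma/(1-1/n)$ from \eqref{paramineqs} is precisely what makes $\beta n^{j-1} > \frac{2\gamma n}{n-1}(n^{j-1}-1)$ hold uniformly in $j$, guaranteeing disjoint leaf sets at every split, while the complementary inequality $\beta < \gamma n$ enters only to ensure $b_0 > 0$ so the annuli are nonempty.

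With distinctness in hand, the remainder is a counting exercise. Let $T_k$ be the maximum over $x$ of the number of depth-$k$ witness trees rooted at $x$; choosing $x$'s partner from the $2b_{k-1}$-site annulus and then independently selecting subtrees for the two children yields the recursion $T_k \leq (2b_{k-1})\, T_{k-1}^2$ with $T_0=1$, which, using $b_j = b_0 n^j$, solves to the closed form $T_k \leq (2b_0)^{2^k-1}\, n^{2^k-k-1}$. A union bound over witness trees together with the $p$-bounded estimate $\sfP(S\subset\sfN)\leq p^{|S|}$ applied to the leaves then gives
\be
\sfP(x\in\mcn_k) \;\leq\; T_k\, p^{2^k} \;\leq\; \frac{1}{2b_0 n^{k+1}}\, (2b_0 n p)^{2^k},
\ee
which is exactly \eqref{simplethresh} with $p_* = 1/(2 b_0 n)$; the desired bound on $p_k$ then follows from $\sfN_k \subset \mcn_k$. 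The main obstacle is the leaf-distinctness argument above, since it is the only place where the geometric inequalities on $(\beta,\gamma,n)$ are really used; once that is in place, the tree recursion and the closed-form evaluation are routine geometric-series algebra and the union bound is immediate.
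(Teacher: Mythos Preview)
Your proposal is correct and follows essentially the same route as the paper's proof: pass to the isolated hierarchy via $\sfN_k\subset\mcn_k$, recursively unroll $x\in\mcn_k$ into a depth-$k$ binary structure of $2^k$ distinct witness points, count such structures via the recursion $T_k\leq 2b_{k-1}T_{k-1}^2$, and close with the $p$-bounded estimate. The only cosmetic difference is that the paper phrases the recursion in terms of ``minimal events'' $M_k(x)$ and establishes leaf distinctness by an inductive containment $A_k^{\min}(x)\subset B_x(\a n^k)$ with an auxiliary constant $\g/(1-1/n)<\a\leq\b/2$, whereas you sum the geometric series $\sum_{i=1}^{j-1}\g n^i$ directly; both reductions use \eqref{paramineqs} in exactly the same way and yield the identical closed form $(2b_0)^{2^k-1}n^{2^k-k-1}p^{2^k}$.
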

		
		Since $\sfN$ is $p$-bounded, the scaling $p_k \sim p^{2^{k}}$ suggests that to prove the above theorem, we should show that for a fixed point $x \in \sfN$, each event (viz. each subset of $\sfN$) implying $x \in \sfN_k$ must contain at least $2^k$ points. For technical reasons, it will be more convenient to show this for $\mcn_k$ instead of $\sfN_k$: 
		\ms\begin{lemma}\label{lem:minAsize}
			Fix $x \in \sfN$, suppose that $(\b,\g,n)$ satisfy the inequalities \eqref{paramineqs}, and let $M_k(x)$ denote the set of minimally-sized subsets of $\sfN$ which imply $x \in \mcn_k$.\footnote{In other words: if $x \in \mcn_k$, then there must exist a certain number of other points in $\sfN$ (for example, $\mcn_{k-1}$ must contain more points than just $x$). $M_k(x)$ is the set of minimally-sized sets such that $x \in \mcn_k$ implies that at least one element of $M_k(x)$ is contained in $\sfN$.} 
			Then for all $A_k(x) \in M_k(x)$, 
			\be |A_k(x)|  = 2^{k}.\ee 
		\end{lemma}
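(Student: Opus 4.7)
The plan is to prove the claim by induction on $k$, strengthening the statement with a geometric constraint: any minimal witness $A_k(x) \in M_k(x)$ has cardinality exactly $2^k$ \emph{and} lies inside the ball $B_x(r_k)$, where $r_k := \sum_{j=1}^k \g n^j = \g n(n^k-1)/(n-1)$. The base case $k=0$ is trivial since $\mcn_0 = \sfN$, so $\{x\}$ is the unique minimal witness, with $|A_0(x)|=1=2^0$ and $r_0 = 0$.

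For the inductive step, suppose the strengthened hypothesis holds at level $k-1$, and let $A \subset \sfN$ be a subset whose inclusion in $\sfN$ forces $x \in \mcn_k$. Unwinding the recursive definition, this requires both $x \in \mcn_{k-1}(A)$ and the existence of some $y \in A$, $y\neq x$, with $\b n^{k-1} \leq |x-y| \leq \g n^k$ and $y \in \mcn_{k-1}(A)$. The induction hypothesis applied to $A$ then produces sub-witnesses $\tilde A_x, \tilde A_y \subset A$ of cardinality $2^{k-1}$, contained respectively in $B_x(r_{k-1})$ and $B_y(r_{k-1})$. Conversely, for any valid such $y$, the union $\tilde A_x \cup \tilde A_y$ is itself a witness for $x \in \mcn_k$, so the minimum size is achieved by a set of this form.

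The crux is the disjointness $\tilde A_x \cap \tilde A_y = \emp$, which is what makes the doubling exact: $|A_k(x)| = 2|A_{k-1}(x)| = 2^k$. Since the two sub-witnesses live in balls of radius $r_{k-1}$ around $x$ and $y$, a sufficient condition for disjointness is $|x-y| > 2 r_{k-1}$. Using $|x-y| \geq \b n^{k-1}$ and $r_{k-1} = \g n(n^{k-1}-1)/(n-1)$, dividing through by $n^{k-1}$ reduces the requirement to $\b > 2\g n(1 - n^{-(k-1)})/(n-1)$, whose supremum over $k \geq 1$ is $2\g/(1-1/n)$. This is precisely the left-hand inequality in \eqref{paramineqs}. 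The radius bound $A_k(x) \subset B_x(r_k)$ then propagates from $r_k = \g n^k + r_{k-1}$, closing the induction. The right-hand inequality $\b < \g n$ plays no role in this argument; it serves only to guarantee $b_k = \g n^{k+1} - \b n^k > 0$ so that the level-$k$ isolation criterion is non-vacuous. The main (and essentially only) obstacle is this disjointness step, where the ball-radius bookkeeping is leveraged to pin down the cardinality exactly rather than merely asymptotically.
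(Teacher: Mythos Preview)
Your proof is correct and follows essentially the same approach as the paper. The paper separates the argument into a preliminary proposition showing that every minimal witness $A_k^{\min}(x)$ is contained in $B_x(\alpha n^k)$ for any constant $\alpha$ with $\gamma/(1-1/n) < \alpha \leq \beta/2$, and then uses the condition $2\alpha \leq \beta$ to get disjointness of $A_{k-1}(x)$ and $A_{k-1}(y)$; you instead fold the containment into the induction hypothesis with the exact radius $r_k = \sum_{j=1}^k \gamma n^j$ (whose asymptotic value $\gamma n^k/(1-1/n)$ is precisely the minimal $\alpha n^k$ allowed by the paper's bound), and verify disjointness directly from $\beta > 2\gamma/(1-1/n)$. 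The two presentations are equivalent in substance.
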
 
		
		To prove this, we first need a proposition putting an upper bound on the size of the minimally-sized events that can imply $x \in \mcn_k$: 
		\ms\begin{proposition} 
			For all events $A^{\min}_k(x) \subset M_k(x)$, 
			\be A^{\min}_k(x) \subset B_x( \a n^k),\ee 
			where $\a$ is any number satisfying $\g/(1-1/n) < \a \leq \b /2$. 
		\end{proposition}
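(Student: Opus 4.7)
The plan is to induct on $k$. The base case $k=0$ is trivial: since $\mcn_0 = \sfN$, the only minimal event witnessing $x \in \mcn_0$ is $\{x\}$ itself, which sits in $B_x(\alpha)$ for any positive $\alpha$.

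For the inductive step, I would exploit the recursive structure of the definition of $\mcn_k$. By construction, $x \in \mcn_k$ if and only if $x \in \mcn_{k-1}$ and $x$ fails to be $(\beta n^{k-1}, \gamma n^k)$-isolated inside $\mcn_{k-1}$: equivalently, there must exist some $y \in \mcn_{k-1}$ with $\beta n^{k-1} < |x-y| \leq \gamma n^k$. Any minimal event forcing $x \in \mcn_k$ must therefore decompose as a union $A^{\min}_{k-1}(x) \cup A^{\min}_{k-1}(y)$ for some such $y$, where each piece is itself a minimal event for the corresponding level-$(k-1)$ membership. This is where it will be important that we are looking at $\mcn_k$ rather than $\sfN_k$: the isolation-based definition decouples cleanly into two independent level-$(k-1)$ conditions, one at $x$ and one at $y$.

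The inductive hypothesis places each piece in a small ball: $A^{\min}_{k-1}(x) \subset B_x(\alpha n^{k-1})$ and $A^{\min}_{k-1}(y) \subset B_y(\alpha n^{k-1})$. The farthest point of the second piece from $x$ then lies at distance at most $|x-y| + \alpha n^{k-1} \leq \gamma n^k + \alpha n^{k-1}$, which is bounded by $\alpha n^k$ precisely when $\alpha \geq \gamma/(1 - 1/n)$. This is exactly the lower bound on $\alpha$ in the statement, and the geometric containment $A^{\min}_k(x) \subset B_x(\alpha n^k)$ follows. Unwinding the recursion back to the base case, the ``extra'' distance accumulated at each level forms a geometric series summing to $\gamma n(n^k-1)/(n-1)$, which stays inside $B_x(\alpha n^k)$ for the same reason.

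The main subtlety---and the place where the upper bound $\alpha \leq \beta/2$ enters---is verifying that this recursive decomposition genuinely yields a minimal event, rather than some strictly smaller event that happens to also force $x \in \mcn_k$. Because $|x-y| > \beta n^{k-1} \geq 2\alpha n^{k-1}$, the two balls $B_x(\alpha n^{k-1})$ and $B_y(\alpha n^{k-1})$ are disjoint, and hence so are the recursive minimal events contained in them. This disjointness is what prevents the union from collapsing to a strictly smaller event, and it is essential for the subsequent counting $|A^{\min}_k(x)| = 2^k$ in Lemma~\ref{lem:minAsize}. I expect this bookkeeping---making sure the notion of ``minimal event'' behaves well under the recursive decomposition and that no shared point can witness both conditions simultaneously---to be the only place requiring real care; the geometric estimate itself is a one-line telescoping argument once the parameter inequalities \eqref{paramineqs} are in hand.
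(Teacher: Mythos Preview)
Your proposal is correct and follows essentially the same approach as the paper: induction on $k$, with the recursive decomposition $A^{\min}_k(x) = A^{\min}_{k-1}(x) \cup A^{\min}_{k-1}(y)$ for a linked partner $y$, followed by the geometric estimate $\gamma n^k + \alpha n^{k-1} \leq \alpha n^k$. Your observation that the upper bound $\alpha \leq \beta/2$ is not actually needed for the containment statement itself---only for the disjointness used in the subsequent counting $|A^{\min}_k(x)| = 2^k$---is also accurate and matches how the paper deploys it in the following lemma.
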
 
		
		\begin{proof} 
			The claim is obviously true for $k=0$, since there is only one minimal event in $M_0(x)$, viz. the set $\{ x\}$. Suppose now that the claim holds for all $m \leq k$, and consider a minimal set $A_{k+1}(x) \subset M_{k+1}(x)$. Now if $x \in \mcn_{k+1}$, then $x \not\in \mce_k$. Thus there must be some point $y \in \mcn_k$ such that $x$ and $y$ are $(\b n^k, \g n^{k+1})$-linked.\footnote{This is where working with $\mcn_k$ becomes simpler than with $\sfN_k$: to shown $x \in \sfN_{k+1}$, we would need to both exhibit a point $y$ that is $(\b n^k, \g n^{k+1})$-linked with $x$, as well as demonstrate the {\it absence} of any $(\b n^k, \g n^{k+1})$-isolated points within a distance $\b n^k$ from $x$. } 
			A minimal set $A_{k+1}(x)$ must contain both $x$ and $y$, and contain enough points to imply that $y\in \mcn_k$ and $x \in \mcn_k$. Therefore $A_{k+1}(x)$ must be of the form $A_k(x) \cup A_k(y)$ for $A_k(r) \in M_k(r)$ ($r=x,y$). Since $A_k(y) \subset B_y(\a n^k)$ by assumption, we have 
			\be A_{k+1}(x) \subset B_x(\g n^{k+1} + \a n^k)\subset B_x(\a n^{k+1}),\ee 
			where the second inclusion follows from our assumption on $\a$. This reasoning is summarized as follows: 
			\be \igpfc{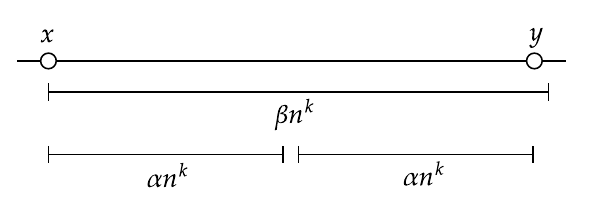} \ee 
		\end{proof} 
		
		In the context of Thm.~\ref{thm:sparse}, the stated assumption that $\b n > 2\g n + \b $ (equation \eqref{paramineqs}) allows us to fix $\a = \b/2$ in what follows. 
		
		We now use this result to prove lemma~\ref{lem:minAsize}: 
		\begin{proof} 
			The logic proceeds in the same manner as in the proof of the previous proposition. If $x \in \mcn_k$ then there must again be a point $y \in \mcn_{k-1}$ such that $x,y$ are $(\b n^{k-1}, \g n^k)$-linked, and a minimal event implying $x \in \mcn_k$ must show that both $x$ and $y$ are members of $\mcn_{k-1}$. Now if $2\a \leq \b$, then by the previous lemma, all members of $M_{k-1}(x), M_{k-1}(y)$ are disjoint. Therefore any member $A^{\min}_k(x)$ of $M_k(x)$ is expressible as a disjoint union $A^{\min}_{k-1}(x) \cup A^{\min}_{k-1}(y)$, so that if $|A^{\min}_k|$ is the size of a minimal subset in $M_k$ (now dropping the $x$-dependence, since $|A^{\min}_k(x)|$ is independent of $x$), we have 
			\be |A^{\min}_k| = 2|A^{\min}_{k-1}|.\ee 
			The claim then follows from $|A^{\min}_0| = 1$. Graphically, this logic is summarized as 
			\be \igpfc{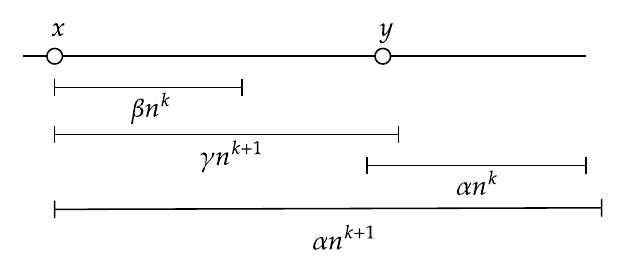} \ee 
		\end{proof} 
		
		We may now complete the proof of Thm.~\ref{thm:sparse}. 
		
		\begin{proof}
			Consider a point $x \in \sfN$. Then the probability that $x \in \sfN_k$ is bounded as 
			\bea \sfP(x \in \sfN_k) & \leq \sfP(x \in \mcn_k) \\
			&  =\sfP \( \bigcup_{A^{\min}_k(x) \in M_k(x)} A^{\min}_k(x)\)  \\ & \leq |M_k| \max_{ M_k} \sfP(A_k^{\min}(x)) \\ 
			& \leq |M_k| p^{2^{l}},\eea 
			where the last inequality follows from our assumption about the noise set $\sfN$ being $p$-bounded, and $|M_k|$ is the number of distinct minimal sets that imply $x \in \mcn_k$ for a given $x \in \zz$. This quantity is easily bounded inductively by again following the logic in the proof of Prop.~\ref{lem:minAsize}: to show that $x \in \mcn_k$ we must show that $x \in \mcn_{k-1}$, together with the existence of a $y \in \mcn_{k-1}$ that is $(\b n^{k-1}, \g n^k)$-linked with $x$. Then we have  
			\be \label{msizebound} |M_k| \leq 2 b_{k-1} |M_{k-1}|^2,\ee 
			and so for $k>0$, 
			\be |M_k| \leq \prod_{j=1}^k (2b_{k-j})^{2^{j-1}},\ee 
			where for $k=0$ we define $|M_0| = 1$. Finally, from the definition of the level-$k$ error rate, we have 
			\be p_k \leq \max_{x \in \zz}\sfP(x\in \mcn_k) \leq p^{2^k }  \prod_{j=1}^k (2b_{k-j})^{2^{j-1}}.\ee 
			The bound in \eqref{simplethresh} then follows from $b_k = b_0n^k$ and the sums $\sum_{j=1}^k 2^{j-1} = 2^k-1$ and $\sum_{j=1}^k (k-j)2^{j-1} = 2^k-k-1$. 
		\end{proof}

		\ss{General dimensions} 
		
		The generalization of the sparsity theorem to an arbitrary number of dimensions is straightforward. We first generalize the notion of isolation and clustering to $D$ dimensions. 
		\ms\begin{definition}[isolation in general dimensions] 
			Define the ball of radius $r$ centered at $\bfx \in \zz^D$ by 
			\be B^{(\sfp)}_\bfx(r) = \{ \bfy \in \zz^D \, : \, |\bfx-\bfy|_\sfp < r\},\ee 
			where the value of $1 \leq \sfp \leq \infty$ will be left unspecified for now (we use $\sfp$ instead of $p$ here to distinguish $\sfp$ from the error rate), and we will omit the basepoint $\bfx$ below when it is unimportant. 
			
			Two points $\bfx,\bfy \in \zz^D$ are {\it $(r,R)$-isolated} if $\bfy \not\in B^{(\sfp)}_\bfx(R) \setminus B^{(\sfp)}_\bfx(r)$; if $\bfy \in  B^{(\sfp)}_\bfx(R) \setminus B^{(\sfp)}_\bfx(r)$ then $\bfx,\bfy$ are {\it $(r,R)$-linked}. 
		\end{definition}\ms

		\ms\begin{definition}[clustering in general dimensions] 
			Let $\sfN\subset\zz^D$ be a noise realization. A $(W,B)$-cluster $C_{(W,B)}$ is a subset of $\sfN$ such that 
			\begin{itemize}
				\item all points in $C_{(W,B)}$ are contained in a ball $B		^{(\sfp)}(W/2)$,
				\item this ball is separated from other points in $\sfN$ along all coordinate directions by a buffer region of size at least $B$: 
				\be \dist(\sfN \setminus C_{(W,B)}, C_{(W,B)}) \geq B,\ee 
				where the distance between two sets $A_1,A_2\subset \zz^D$ is defined as 
				\be \dist(A_1,A_2) = \min_{\bfx\in A_1,\bfy\in A_2}  |\bfx-\bfy|_\sfp.\ee 
			\end{itemize}
		\end{definition}\ms

		The definitions of clustered and isolated noise sets $\sfN_k,\mcn_k$ are then made using these notions of clustering and isolation, and the same parameters $(\g,\b,n)$. The level-$k$ error rate is also defined analogously as 
		\be p_k = \max_{\bfx \in\zz^D} \sfP(\bfx\in \sfN_k).\ee 
		
		The $D$-dimensional analogue of theorem~\ref{thm:sparse} is then an easy generalization of the one dimensional case: 
		\ms\begin{theorem}
			Provided the constraints in \eqref{paramineqs} are satisfied, the level-$k$ error rate satisfies 
			\be p_k \geq  \frac{p_*}{n^{Dk}} \(\frac p{p_*}\)^{2^k},\ee  
			where 
			\be p_* \geq \frac1{(2b_0+w_0)^D}.\ee  
		\end{theorem}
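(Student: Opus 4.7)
As literally worded the inequality points in the direction of a lower bound, but in context (the main-text use in Theorem~\ref{thm:sparsity} and the 1D version in Theorem~\ref{thm:sparse}) this is almost certainly a typo for $p_k \leq (p_*/n^{Dk})(p/p_*)^{2^k}$; a uniform lower bound of the stated form cannot hold, as witnessed by the trivial noise distribution $\sfN = \emp$. I describe the plan for the intended upper bound. The strategy is simply to lift the 1D proof of Theorem~\ref{thm:sparse} line by line, using that the abstract notions of isolation and clustering depend on the ambient space only through the choice of $\sfp$-norm ball. In particular, the inclusion $\sfN_k \subset \mcn_k$ proved just above this theorem transfers verbatim (the induction uses only the definitions), reducing the task to upper-bounding $\max_\bfx \sfP(\bfx \in \mcn_k) \leq |M_k|\, p^{2^k}$, where $M_k(\bfx)$ is the set of minimal subsets of $\sfN$ forcing $\bfx \in \mcn_k$.

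I would next verify the $D$-dimensional analogs of the two structural results that underlie the 1D bound on $|M_k|$. The containment $A_k^{\min}(\bfx) \subset B_\bfx(\a n^k)$ for any $\a$ satisfying $\g/(1-1/n) < \a \leq \b/2$, and the cardinality count $|A_k^{\min}(\bfx)| = 2^k$, both follow from the same recursive decomposition $A_k^{\min}(\bfx) = A_{k-1}^{\min}(\bfx) \sqcup A_{k-1}^{\min}(\bfy)$ used in the proof of Lemma~\ref{lem:minAsize}, where $\bfy$ is a $(\b n^{k-1}, \g n^k)$-linked partner of $\bfx$. The containment uses only the triangle inequality for the $\sfp$-norm, and the disjointness of the two minimal pieces is enforced by the constraint $2\a \leq \b$; neither step is sensitive to dimension. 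This is the step where I would check most carefully that the norm-based definitions are being invoked consistently, but I anticipate no real obstacle.

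The one quantitative change occurs in the recursion \eqref{msizebound}: the 1D factor $2b_{k-1}$, counting admissible positions for the linking partner $\bfy$, is replaced by the cardinality of the $D$-dimensional shell $B_\bfx(\g n^k) \setminus B_\bfx(\b n^{k-1})$, bounded above by $|B(\g n^k)| \leq (2\g n^k)^D = (2b_0 + w_0)^D n^{D(k-1)}$ after substituting $w_0 = 2\b$ and $b_0 = \g n - \b$ (which yield $2\g n = 2b_0 + w_0$). Iterating $|M_k| \leq (2b_0+w_0)^D n^{D(k-1)} |M_{k-1}|^2$ with $|M_0| = 1$ and collecting the telescoping geometric sums $\sum_{j=1}^k 2^{j-1} = 2^k-1$ and $\sum_{j=1}^k (k-j)2^{j-1} = 2^k - k - 1$ exactly as in the 1D proof gives the claimed bound with $p_* = 1/(2b_0+w_0)^D$. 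The argument is entirely mechanical; the only non-routine point is arranging the bookkeeping so that the shell-volume bound produces the stated form of $p_*$, which the specific parametrization in \eqref{rbdefs} is precisely engineered to do.
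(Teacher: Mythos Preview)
Your proposal is correct and follows essentially the same approach as the paper: lift the 1D argument verbatim (the inclusion $\sfN_k\subset\mcn_k$, the containment and cardinality results for minimal implying events), with the single modification that the 1D linking-partner count $2b_{k-1}$ in \eqref{msizebound} is replaced by the $D$-dimensional shell volume, bounded by $(2\g n^k)^D$; the identity $2\g n = 2b_0+w_0$ then gives the stated $p_*$. You also correctly flag the $\geq$ as a typo for $\leq$.
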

		\begin{proof}
			The proof of the generalization of lemma~\ref{lem:minAsize} goes through without modification compared with the $D=1$ case. The only place where dimensionality enters is in the bound in \eqref{msizebound}, which is replaced by 
			\be |M_k| \leq (|B^{(\sfp)}(\g n^k)| - |B^{(\sfp)}(\b n^{k-1})|) |M_{k-1}|^2,\ee 
			where the prefactor counts the number of places at which to locate a point that establishes the $(\b n^{k-1},\g n^k)$-linkedness of a given point in $\mcn_k$. For general $\sfp$, we may replace the above bound by 
			\be | M_k| \leq |B^{(\infty)}(\g n^k)| |M_{k-1}|^2 = (2\g n^k)^D | M_{k-1}|^2.\ee 		
			Since $p_k \leq \max_{\bfx \in \zz^D}(\bfx \in \mcn_k)$, for general $\sfp$ we then have 
			\be p_k \leq p^{2^k}  \prod_{j=1}^k (2 \g n^{k-j})^{D \cdot 2^{j-1}}, \ee 
			which upon taking the product and using $2\g n= 2b_0+w_0$ gives the desired result. 		
		\end{proof}

		The sparsity of errors at different values of $k$ is easy enough to verify numerically; the result of doing this is shown in fig.~\ref{fig:sparsity_numerics}. One may also check that the bound on $\sfP(\mcn_k\neq\emp)$ is asymptotically rather tight for strength-$p$ i.i.d noise. As an illustration of this, define $P_{\sf max}$ as the probability to have a system-spanning cluster: 
		\be P_{\sf max} = \sfP(\mcn_{\lfloor \log_n(L/2\g)\rfloor} \neq \emp).\ee 
		If the scaling given in the sparsity bound is saturated, we expect 
		\be \label{pmaxguess} P_{\sf max} = (p/p_*)^{\ct(L^{\log_n(2)})}\ee 
		at small enough $p < p_*$. This expectation is numerically seen to be satisfied to a reasonable degree; see fig.~\ref{fig:pmaxfig} for details. 
		
		\begin{figure}
			\centering 
			\includegraphics[width=.48\tw]{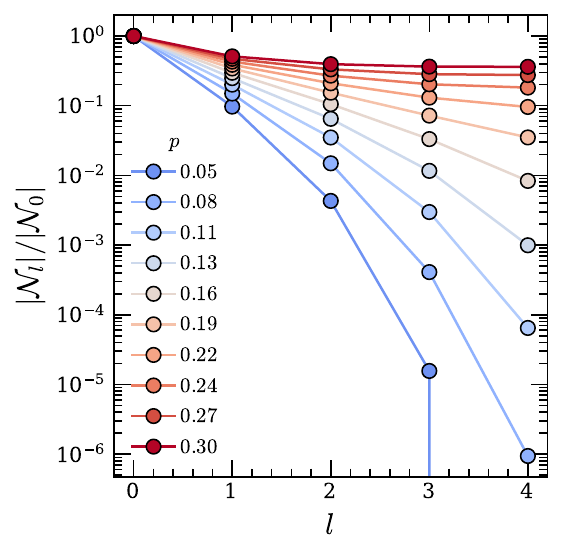} \hfill 	\includegraphics[width=.48\tw]{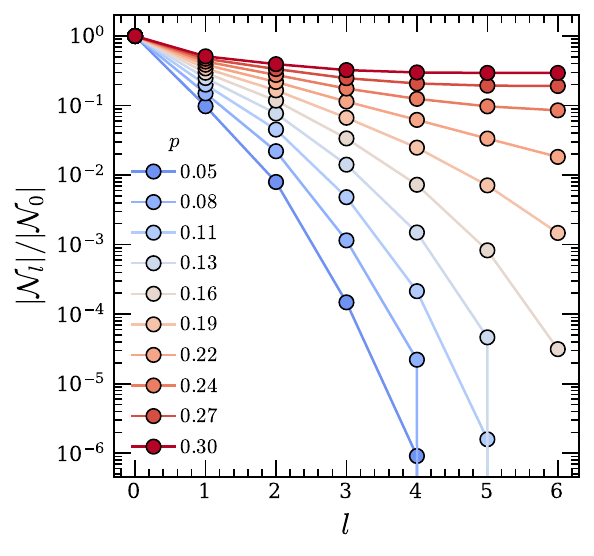} 
			\caption{\label{fig:sparsity_numerics} The size of level-$l$ isolated error sets $\mcn_l$ as a function of $l$, plotted for various values of $p$. In the left panel we have set $n=4,\g = 1,\b = 8/3$, values for which the sparsity theorem promises a threshold, while in the right we have taken $n=3,\g=1,\b=1.5$, which is beyond the regime of validity of the theorem but which is still observed to produce super-exponential decay in $|\mcn_l|/|\mcn_0|$. }
		\end{figure}
	
	\begin{figure}
		\centering \includegraphics[width=.5\tw]{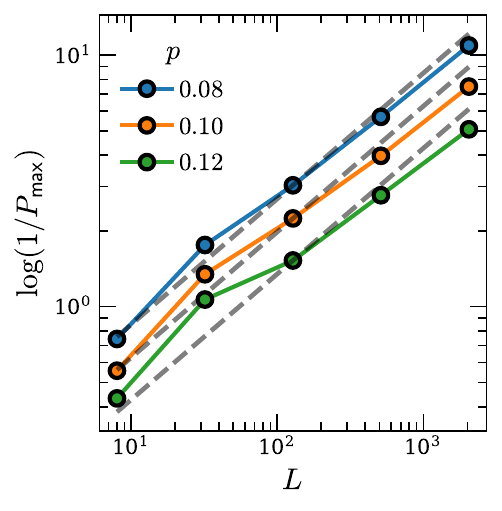} 
		\caption{\label{fig:pmaxfig} $\log (1/P_{\sf max})$ against system size for 1D i.i.d noise of strength $p$. The clustering parameters are chosen as $(\b,\g,n) = (3,1,4)$, values for which the sparsity theorem guarantees a threshold, and all three values of $p$ are taken to be sub-threshold. The dashed lines are fits to \eqref{pmaxguess}.}
	\end{figure}
		
		\section{Open boundary conditions}\label{app:open}
		
		In this appendix, we discuss how the message-passing decoders studied in the main text can be adapted to work with open boundary conditions (e.g. for the surface code). 
		
		Following standard terminology, we will say a segment $S$ of a boundary is {\it rough} if anyon strings may terminate on $S$ without violating any stabilizers.\footnote{We are continuing to specify to only a single sector of anyons (here those associated with violations of $Z$ stabilizers), but the same analysis clearly applies to any logical sector whose excitations are created by string operators.} To keep the discussion simple, we will assume that $\L\cong \zz_L^D$, and will let $\mcb\subset\p\L$ denote the set of sites on rough boundaries.  
		In order for the system to store at least one qubit (or a bit in 1D), $\mcb$ must contain at least two disjoint spatial regions; for simplicity we will fix  
		\be \label{lrboundaries} \mcb = \{ \bfr \in \L \, : \, r^1 = 0\} \cup  \{ \bfr \in \L \, : \, r^1 = L-1\} = \mcb_L \cup \mcb_R\ee 
		although other choices are of course possible. 
		
		\ss{Modifications to the decoder} \label{ss:boundaries_setup}

		We begin by describing how to modify the message passing and feedback rules in the presence of open boundaries. 
		In this context, error correction is performed by allowing anyons to match either with other anyons, or with sites in $\mcb$. Since sites in $\mcb$ act as anyons for the purposes of matching, we simply modify the automaton rule $\mca$ to source messages at all such sites, with a component of $\mcb$ with outward-facing unit normal $\uva$ sourcing messages that propagate along $-\uva$. These rules lead to anyons being attracted either to the anyon nearest to them or to the nearest site in $\mcb$, whichever is closer. This modification is formalized in the following definition:

		\ms\begin{definition}[message passing automaton with open boundaries]
			
			When $\mcb \neq \emp$, the $D$-dimensional automaton rule is replaced by a rule $\mca_{\sf open}^{(v,D)}$ which incorporates the effects of open boundary conditions, defined as 
			$\mca_{\sf open}^{(v,D)} = (\mca_{\sf open}^{(D)})^v$, where 
			\be \mca^{(D)}_{\sf open}(\bfm,\bfsig)[m^{\pm a}_\bfr] = \begin{dcases} 
				1 & 
				\sfA_\ct(C^{(a)}_{\bfr\mp\uva}) = 1 \vee \(\bfr\mp\uva \in \mcb \wedge \bfr \not \in \mcb\) \\ 
				0 & \(\sfA_\ct^\mcb(C^{(a)}_{\bfr\mp\uva}) = 0 \wedge \(m^{\pm a}_{\bfr'} = 0 \, \, \forall \, \, \bfr' \in C^{(a)}_{\bfr\mp\uva}\)\) \vee 
				\bfr \in \mcb  \\ 
				\min_{\bfr' \in C^{(a)}_{\bfr\mp\uva}}\{m^{\pm a}_{\bfr'}\} + 1 & \text{else} 
			\end{dcases}\ee 
			where we take $C^{(a)}_{\bfr\mp\uva} = \emp$ if $\bfr\mp\uva \not\in \L$. 
			
			The feedback operators $\mcf_\bfr^{(D)}$ remain unchanged from the ones used in the case of periodic boundary conditions. 
		\end{definition}\ms 
		
		This definition ensures that an anyon at $(\bfr,t)$ moves towards either the nearest anyon or nearest point of $\mcb$ in $\plf(\bfr,t)$, whichever is closer. 
		
		\ss{Erosion with open boundaries} \label{ss:open_erosion}
		
		We now show that the linear erosion results developed in sec.~\ref{sec:erosion} continue to hold with open boundary conditions. 
		We continue to use the definitions of $(W,B)$-clusters and $k$-clusters in sec.~\ref{ss:sparsity}, as they were made independently of boundary conditions.
		
		\ms\begin{lemma}[linear erosion with open boundaries]\label{lemma:erosion_with_boundaries}
			Consider a $(W,B)$ cluster $C$. 
			The statements of Lemmas~\ref{lemma:linear_erosion}, \ref{lem:const_slowdown} and Corollaries~\ref{lemma:linear_erosion_infty}, \ref{cor:infarb} about the erosion of $C$ continue to hold with open boundary conditions, in the following sense. 
			
			If $C$ is centered at $\bfr$ and $B_\bfr(W/2+B) \cap \mcb = \emp$, erosion of $C$ proceeds identically to the case of periodic boundary conditions. If this is not the case, erosion of $C$ occurs with the following modifications: 
			\begin{enumerate} 
				\item If $t_C^{\sf per}$ is the time by which the anyons initially in $C$ are guaranteed to be annihilated under periodic boundary conditions, then with open boundary conditions, the analogous time $t_C^{\sf open}$ satisfies 
				\be t_C^{\sf open} <d_Ct_C^{\sf per},\ee 
				where $d_C$ is an $O(1)$ constant.
				\item All anyons initially contained in $C$ either annihilate with other anyons initially in $C$, or else are absorbed into a site in $\mcb$. 
			\end{enumerate}
			
		\end{lemma}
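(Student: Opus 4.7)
The plan is to reduce both cases to the existing erosion arguments of Sec.~\ref{ss:erosion}, viewing the rough boundary either as a source of ``external'' messages (when $C$ is far from $\mcb$) or as an auxiliary piece of the cluster being eroded (when $C$ is close to $\mcb$).

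For the first case, where $B_\bfr(W/2+B) \cap \mcb = \emp$, the strategy is to notice that in the proofs of Corollary~\ref{cor:infarb} and Lemma~\ref{lem:const_slowdown}, the external message pattern was allowed to be \emph{arbitrary}; what mattered was only the minimum $\infty$-norm distance $d_{\sf out}(t)$ at which external signals could reach anyons in the cluster. Messages sourced by $\mcb$ satisfy $d_{\sf out}(t)\geq B$ for the initial message burst and grow linearly thereafter, exactly as messages from other clusters did. The comparison $d_{\sf in}(t)<d_{\sf out}(t)$ for $t>\trev$ goes through unchanged, so $C$ is eroded exactly as in the periodic case, with no anyons in $C$ absorbed into $\mcb$.

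For the second case, the idea is to enlarge $C$ to $\wt C = C \cup (B_\bfr(W/2+B) \cap \mcb)$ and prove an erosion property for $\wt C$. First I would verify an analogue of Proposition~\ref{prop:cluster_anyons}: every anyon string created by the noise restricted to $C$ terminates either at another anyon in $C$ or at a point of $\mcb \cap \wt C$, since such a string cannot reach $\sfN \setminus C$ (which is separated from $C$ by the buffer of thickness $B$), and any boundary endpoint must lie within the buffer region. The modified automaton rule $\mca^{(D)}_{\sf open}$ causes each site of $\mcb$ to source messages just like an anyon, so an anyon in $C$ moves toward the nearest element of $\sfA(\plf(\bfr,t)) \cup \mcb$. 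The erosion argument of Corollary~\ref{cor:infarb} then applies to $\wt C$ with the proviso that an anyon reaching $\mcb$ is absorbed rather than annihilated. Since $\mathrm{diam}_\infty(\wt C) \leq W+2B$, and the rest of the geometric estimates rescale by at most a constant factor, the erosion time satisfies $t_C^{\sf open} < d_C \, t_C^{\sf per}$ with $d_C$ depending only on the ratio $B/W$ (which is itself $O(1)$ in all applications).

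The main obstacle will be the last remaining worry in the near-boundary case: that a segment of $\mcb$ lying between two distinct clusters $C$ and $C'$ could, by sourcing a common pool of messages, cause anyons from $C$ to be matched with anyons from $C'$ rather than absorbed into $\mcb$. To rule this out I would run the same $d_{\sf in}$ vs.\ $d_{\sf out}$ comparison as in Lemma~\ref{lem:const_slowdown}, but with $\wt C$ and $\wt{C'}$ in place of the original clusters. Because the boundary portions of $\wt C$ and $\wt{C'}$ are spatially disjoint, and the buffers separating $C$ from $C'$ are the same as in the periodic analysis, the same choice of parameters $(w_0,b_0,n)$ from Lemma~\ref{lem:const_slowdown} forces the spacetime rectangles $\wt\mcr_{\wt C}$ and $\wt\mcr_{\wt{C'}}$ to remain disjoint, which completes the argument.
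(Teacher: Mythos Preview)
Your first-case argument is sound and matches the paper's: boundary messages behave just like external messages from other clusters, so the $d_{\sf in}<d_{\sf out}$ comparison goes through unchanged.

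The second case has a genuine gap. The erosion results of Sec.~\ref{ss:erosion} are proved using the \emph{modified} feedback rules of Def.~\ref{def:modified_feedback}, which add a systematic bias along $-\uvx$. This bias is what drives the slicing argument (slices $\sfS_k$ empty out from large $r^1$ to small $r^1$), and it breaks the symmetry between the two boundary components $\mcb_L$ and $\mcb_R$ in \eqref{lrboundaries}. Near $\mcb_L$ the bias is helpful: anyons are pushed toward the boundary and the slicing argument adapts directly. Near $\mcb_R$ the bias pushes anyons \emph{away} from the boundary, and one must separately establish that an anyon attracted toward $\mcb_R$ does not subsequently get pulled back by the $-\uvx$ drift. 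The paper does this with an inductive argument showing that once an anyon leaves the original ball $\sfB$ in the $+\uvx$ direction, the distance to $\mcb_R$ decreases at least as fast as the distance to any worldline in its past lightfront, so it continues along $+\uvx$ until absorbed.

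Your ``enlarge $C$ to $\wt C$ and apply Corollary~\ref{cor:infarb}'' strategy does not capture this. The erosion arguments of Sec.~\ref{ss:erosion} assume every element of the cluster is a mobile anyon that can be swept along $-\uvx$; boundary sites in $\wt C$ are immovable message sources, so the statement that $\sfS_\d$ becomes anyon-free at time $t_{\sf move}+\d+1$ fails whenever $\sfS_\d$ contains a point of $\mcb$. More importantly, even restricting attention to the genuine anyons, near $\mcb_R$ you cannot conclude that the anyon-containing region shrinks along $-\uvx$ without first ruling out the possibility that an anyon oscillates between being attracted to $\mcb_R$ and being biased leftward. That monotonicity is exactly what the paper's separate $\mcb_R$ analysis supplies, and it is the missing ingredient in your outline.
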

		\begin{proof}
			Consider the setup of Lemma~\ref{lemma:linear_erosion}, with a $(W,DW)_1$ cluster $C$ centered at $\bfr$ initialized in a state with trivial messages. If $B^{(1)}_\bfr(3W/2) \cap \mcb = \emp$, the anyons initially in $C$ are guaranteed to all annihilate with other anyons initially in $C$ and we are guaranteed to have $t_C^{\sf open} = t_C^{\sf per}$, for the same reason as in Lemma~\ref{lemma:linear_erosion} (as every anyon in $C$ is closer to another anyon in $C$ than to any site in $\mcb$). Suppose instead that $B^{(1)}_\bfr(3W/2) \cap \mcb \neq \emp$. For simplicity, we will only consider the case when $B^{(1)}_\bfr(DW/2) \cap \mcb$ is contained within a single boundary component of $\mcb$, as this is the only case that will be relevant in subsequent sections. 
			
			\begin{figure}
				\centering 
				\includegraphics[width=.5\tw]{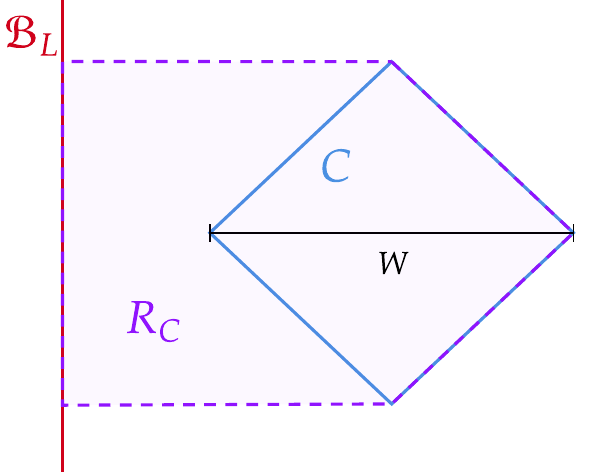} 
				\caption{\label{fig:boundary_erosion} Erosion of a cluster $C$ in 2D near a section of the left rough boundary $\mcb_L$. Anyons are initially contained in the 1-ball of radius $W/2$ indicated in blue, and at all times are confined to the region $R_C$ bounded by the purple dashed lines. The analogous region for clusters near $\mcb_R$ is obtained by reflection.  } 
			\end{figure}
			
			Consider first the case where $B^{(1)}_\bfr(3W/2) \cap \mcb \subset \mcb_L$, and choose coordinates where $C$ is centered at $x\uvx$. We claim that the anyons initially in $C$ are confined at all times to the region (see fig.~\ref{fig:boundary_erosion})
			\be R_C = \bigcup_{\d = 0}^x B^{(1)}_{(x-\d)\uvx}(W/2),\ee 
			and that all anyons in $C$ are eliminated by time $t^{\sf open}_C < x+W(1/2+1/v)$. This follows from an argument very similar as the one in the proof of Lemma~\ref{lemma:linear_erosion}: after time $W/v$, the nearest anyon worldline or point in $\mcb_L$ in the past lightfront of any anyon at $\bfr\in R_C$ is guaranteed to be in $R_C$, and at an $\infty$-norm distance no more than $W$ from $\bfr$. The size $DW$ of the buffer region then ensures that when $t > W/v$, the anyons initially in $C$ are contained in the region 
			\be R_C \cap \{ \bfr \, : \, r^1 < x+W/2-(t - W/v)\} \ee 
			by the same argument as in the proof of \eqref{shrinking}. 
			Since $x < 3W/2$ by assumption, 
			\be t^{\sf open}_C < \frac{2v+1}{v+1}t^{\sf per}_C.\ee 
			
			Now we consider the case where $B^{(1)}_\bfr(3W/2) \cap \mcb \subset \mcb_R$, the analysis of which is unfortunately more tedious since anyon motion is biased away from the boundary. Let us choose coordinates where $C$ is centered at $(L-1-x)\uvx$. We claim that anyons initially in $C$ are confined to the region 
			\be R_C= \bigcup_{\d = 0}^x B^{(1)}_{(L-1-x+\d)\uvx}(W/2),\ee 
			and that all anyons in $C$ are eliminated by time $t^{\sf open}_C < x + W(3/2+1/v)$.
			We show this using a modification of the argument in the proof of Lemma~\ref{lemma:linear_erosion}. 
			To cut down on notation, we define $\sfB = B^{(1)}_{(L-1-x)\uvx}$. 
			
			We first claim that if an anyon enters $\L\setminus \sfB$, it moves along $+\uvx$ until annihilating on a site in $\mcb_R$. To show this, consider the first anyon $a$ from $C$ to enter $\L \setminus \sfB$ (if multiple anyons enter $\L \setminus \sfB$ at the same time, we may resolve the degeneracy by picking the anyon closest to $\mcb_R$ [or one of them, if there are multiple closest anyons]).  
			If $\bfr_t$ is the spatial coordinate of anyon $a$ at time $t$, then if $t$ is the time when $a$ first moves into $\L \setminus \sfB$, we must have $\sfA_\ct(\plf^{+x}(\bfr_t,t)) = 0$ and 
			\be ||\bfr_t - \mcb_R||_\infty < ||\bfr_t - \sfA(\plf(\bfr_t,t))||_\infty,\ee 
			where we have defined the distance from $\bfr$ to a set $S$ as $||\bfr- S|| = \min_{\bfr'\in S}||\bfr-\bfr'||$ (if $S$ is a set of spacetime points, then the minimum is taken by discarding the time coordinates). 
			At time $t+1$, $a$ has moved in the $+\uvx$ direction, hence $||\bfr_{t+1}-\mcb_R||_\infty = ||\bfr_t - \mcb_R||_\infty - 1$. Given this, and given the modified feedback rules in Def.~\ref{def:modified_feedback}, the only way for 
			\be ||\bfr_{t+1} - \sfA(\plf(\bfr_{t+1},t))||_\infty < ||\bfr_{t} - \sfA(\plf(\bfr_{t},t))||_\infty-1\ee 
			is for there to be an anyon in $\plf^{+x}(\bfr_t,t)$ that moves along $-\uvx$ at time $t$, which contradicts $\sfA_\ct(\plf^{+x}(\bfr_t,t)) = 0$. Therefore the distance from $a$ to $\mcb_R$ decreases at least as much as the distance between $a$ and any point in $\sfA(\plf(\bfr_t,t))$. This argument holds for any time $t'>t$, and $a$ therefore moves along $+\uvx$ until annihilating on a site in $\mcb_R$. 
			
			Now consider an anyon $a'$ which leaves $\sfB$ but which is not among the first anyons in $C$ to do so. That $a'$ must also move along $+\uvx$ until annihilating at $\mcb_R$ can be shown by induction using the result of the above paragraph as a base case. Suppose the claim is true for the first $k$ anyons to leave $\sfB$. If $a'$ is the $(k+1)$st anyon to leave $\sfB$ (with degeneracies broken as before) and it does so at $(\bfr'_{t'},t')$, then either  $||\bfr'_{t'}-\sfA(\plf(\bfr'_{t'},t'))||_\infty = ||\bfr'_{t'}-\sfA(\plf^{+x}(\bfr'_{t'},t'))||_\infty$, or $||\bfr'_{t'} - \mcb_R||_\infty < ||\bfr'_{t'}-\sfA(\plf(\bfr'_{t'},t'))||_\infty$. If the latter case holds, then the inductive hypothesis and the properties of the feedback rules in Def.~\ref{def:modified_feedback} mean that 
			\be ||\bfr'_{t'+1}-\sfA(\plf^{+x}(\bfr'_{t'+1},t'+1))||_\infty = ||\bfr'_{t'+1}-\sfA(\plf^{+x}(\bfr'_{t'+1},t'+1))||_\infty\ee 
			provided $\sfA_\ct(\plf^{+x}(\bfr'_{t'+1},t'+1))=1$ (and if not, then we must have $||\bfr'_{t'+1} - \mcb_R||_\infty < ||\bfr'_{t'+1} - \plc(\bfr'_{t'+1},t'+1)||_\infty$, which reduces to the latter case considered above), and that 
			\be ||\bfr'_{t'}-\sfA(\plf^{+x}(\bfr'_{t'},t'))||_\infty \leq ||\bfr'_{t'+1}-\sfA(\plf(\bfr'_{t'+1},t'+1))||_\infty.\ee  
			Therefore $a'$ must continue to move along $\uvx$ at the next time step; since this argument holds for all $t'$, it shows the claim.

			To give a (loose) upper bound on the erosion time, we use the same strategy as in Lemma~\ref{lemma:linear_erosion}. Consider anyons at the slice $\sfS_k = \sfB \cap \{ \bfr \, : \, r^1 = L-1-x+W/2-k\}$ (so that $\sfS_0$ is the single point of $\sfB$ closest to $\mcb_R$, and $\sfS_W$ is the single point furthest from $\mcb_R$). At times $t>t_{\sf move} = W/v$,\footnote{This is a strict overestimate of the first-move time since e.g. $\sfS_0$ is closer to $\mcb_R$ than to $\sfS_W$, on account of our assumption that $\sfB_{(L-1-x)\uvx}(3W/2) \cap \mcb_R\neq\emp$.} the bias in the feedback rules of Def.~\ref{def:modified_feedback} mean that any anyon in $\sfA((\sfS_k,t))$ must move out of $\sfS_k$ at time $t+1$. If an anyon in $\sfS_k$ at time $t > t_{\sf move}$ moves along $\uvx$, then it must move along $\uvx$ until annihilating at $\mcb_R$. Therefore by time $t_{\sf move} + W$, every anyon initially in $C$ must have either been annihilated, or begun moving along $+\uvx$. Since $x<3W/2$ by assumption, this gives a (crude) upper bound on the erosion time of 
			\be t^{\sf open} < x + W(3/2 + 1/v) <\frac{3v+1}{v+1} t_C^{\sf per}.\ee 	
			
			The extension of this result to the scenarios considered in Lemma~\ref{lem:const_slowdown} and Corollaries~\ref{lemma:linear_erosion_infty}, \ref{cor:infarb} proceeds in an identical(ly tedious) fashion, and the details will be omitted. 
		\end{proof}

		\ss{Thresholds with open boundaries} \label{ss:open_threshold}
		
		The fact that erosion proceeds in essentially the same way regardless of boundary conditions means that the scaling of $\ploge,\tdece$ with $p,L$ is also independent of boundary conditions: 
		\ms\begin{corollary}[threshold with open boundary conditions]
			The message-passing decoder defined with open boundary conditions possesses a threshold, and has scalings of $\ploge,\tdece$ subject to the same bounds as in \eqref{plogthm}, \eqref{tdecthm}. 
		\end{corollary}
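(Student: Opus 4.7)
The plan is to mirror the two-step structure used in the proof of Theorem~\ref{thm:precise_offline}, substituting the boundary-adapted erosion input (Lemma~\ref{lemma:erosion_with_boundaries}) for its periodic counterpart, and then verifying that the same percolation-style logical-failure bound survives. First, I would observe that Theorem~\ref{thm:sparsity} and the definitions leading to it (Def.~\ref{def:pbounded_def}, Def.~\ref{def:noise_hier}) concern only the set-theoretic structure of noise realizations $\sfN \subset \L_l$ and make no reference to the code's boundary conditions. The hierarchical decomposition of $\sfN$ into $k$-clusters and the resulting bound $p_k \leq (p_*/n^{kD})(p/p_*)^{2^k}$ therefore hold verbatim in the open-boundary setting.

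Next, I would upgrade Lemma~\ref{lem:const_slowdown} by plugging Lemma~\ref{lemma:erosion_with_boundaries} into its inductive step. For each $k$-cluster $\mcc_k$, the boundary-adapted erosion lemma shows that anyons originating in $\mcc_k$ remain confined to a spacetime rectangle $\wt\mcr_{\mcc_k}$ whose width and height are still $O(w_k)$: the erosion time is at most a constant factor $d_C$ larger than the periodic one, and the spatial confinement region $R_C$ exhibited in the proof of Lemma~\ref{lemma:erosion_with_boundaries} is contained in a ball of radius $O(w_k)$ about the cluster's center. The inductive disjointness argument then goes through with the same choice of parameters as in \eqref{choices} up to replacing the constants $A,B$ by $d_C A, d_C B$, and possibly increasing $n$ or $b_0/w_0$ to maintain \eqref{next_cluster_constraint}. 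This yields $\wt\mcr_{\mcc_k} \cap \wt\mcr_{\mcc'_k} = \emp$ for distinct $k$-clusters, as in the periodic case.

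The key new observation for bounding $\ploge$ is that anyons in a $k$-cluster $\mcc_k$ are now either paired intra-cluster or absorbed into the boundary component that the cluster is adjacent to. Intra-cluster pairing is safe by (the obvious boundary-aware analogue of) Proposition~\ref{prop:cluster_anyons}, and absorption into a single boundary component simply terminates anyon strings on that rough boundary without connecting $\mcb_L$ to $\mcb_R$, so it cannot produce a logical error either. A logical error therefore requires the existence of a cluster whose support reaches both $\mcb_L$ and $\mcb_R$, forcing its width to be at least $L - O(1)$ and hence its level to be at least $k_L$ (c.f.~\eqref{kldef}). The sparsity theorem plus a union bound over cluster centers then gives $\ploge \leq |\L|\, p_{k_L} = (p/p_c)^{\O(L^{\b})}$, matching \eqref{plogthm}. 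The bound $\tdece = O((\log L)^\eta)$ is then inherited from the tail-sum calculation in Theorem~\ref{thm:precise_offline}, since that argument relied only on the $O(w_k)$ bound on $T_k$, which still holds up to the constant $d_C$.

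The main technical subtlety is the restriction in Lemma~\ref{lemma:erosion_with_boundaries} to clusters whose buffered ball $B^{(1)}_\bfr(DW/2)$ meets only a single connected component of $\mcb$. For clusters of level $k < k_L$ this assumption is automatic (since their width is $o(L)$ and the two components of $\mcb$ are separated by distance $L-1$), and for level-$k_L$ clusters the assumption may fail---but such clusters are precisely those that we are already counting as potential logical failures in the union bound above, so no additional work is needed. I do not expect any serious obstacle; the argument is essentially a reproduction of the percolation argument of Sec.~\ref{ss:offline_threshold} with the single substitution Lemma~\ref{lem:const_slowdown} $\to$ Lemma~\ref{lemma:erosion_with_boundaries}.
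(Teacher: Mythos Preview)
Your proposal is correct and follows essentially the same approach as the paper: both argue that the sparsity theorem is boundary-independent, that Lemma~\ref{lemma:erosion_with_boundaries} confines each $k$-cluster's anyons to a single boundary component (so a logical error requires a $k_L$-cluster), and that the tail-sum for $\tdece$ carries over with at most a constant-factor change in the erosion time. The only minor difference is that for $\tdece$ the paper observes that the largest cluster is w.h.p.\ far from $\mcb$ (so boundary effects do not even enter), whereas you absorb the factor $d_C$ into the constants; both routes give the same bound.
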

		\begin{proof}
			The result follows as a direct consequence of Theorem~\ref{thm:precise_offline} and Lemma~\ref{lemma:erosion_with_boundaries}. The latter result means that a logical error will occur only if two anyons $a_1,a_2$ from the same cluster are moved by the feedback operations in such a way that $a_1$ annihilates against a site in $\mcb_L$ and $a_2$ annihilates against a site in $\mcb_R$, which is only possible if the  noise creates at least one $k_L$ cluster. Since the sparsity theorem was formulated independently of boundary conditions,\footnote{With open boundary conditions, the bounds of \eqref{simplethresh},\eqref{pstarbound} are actually very slightly looser than they are with periodic boundary conditions, since for open boundary conditions not all points counted by the RHS of \eqref{msizebound} can actually appear in a cluster if that cluster is sufficiently close to $\mcb$.} we again obtain $\ploge \leq \sfP(\sfN_{k_L}\neq \emp) = (p/p_c)^{O(L^\b)}$, with the same $O(1)$ constant $\b>0$ as in the case of periodic boundary conditions. The scaling of $\tdece$ is similarly $O((\log L)^\eta)$ at $p<p_c$ with the same $O(1)$ constant $\eta >0$: while Lemma~\ref{lemma:erosion_with_boundaries} gave slower erosion than in the periodic case for clusters sufficiently close to $\mcb$, the largest cluster in a noise realization is w.h.p at a distance from $\mcb$  much larger than its size, and hence $\tdece$ is independent of boundary conditions. 
		\end{proof}

		\section{Constant slowdown under desynchronization} \label{app:slowdown} 
		
		In this appendix we provide the proof of Lemma~\ref{lem:slowdown_lemma}. Our general strategy closely follows that of \cite{berman1988investigations}, the differences being that we derive bounds dependent on $t$ (instead of the simulation time), and provide an analogous bound on $\tsimmax$. 
		
		\begin{proof} 
			
			We begin with the bound on $\tsimmax(t)$, which is simpler. For a fixed site $\bfr$ and time $t$, consider the event $\tsim(\bfr,t)= m$. The number of attempted updates at site $\bfr$ by time $t$ must be at least $m$. Let $\d_k$ denote the time interval between the $(k-1)$th and $k$th updates at site $\bfr$, with $k = 1,\dots,m$. The event $\tsim(\bfr,t) = m$ implies the event  $S_m = \sum_{k=1}^m \d_k \leq t$. Fixing a constant $\l>2e$ and letting $m_t$ denote the smallest integer larger than $\l t/\mu$, 
			\bea \sfP(\mu\tsim(\bfr,t) > \l t) = \sum_{m=m_t}^\infty \sfP(\tsim(\bfr,t) = m) \leq \sum_{m=m_t}^\infty \sfP(S_m \leq t).\eea 
			By definition, the $\d_k$ are i.i.d exponentially-distributed random variables with mean $\mu$. The Chernoff bound then gives, assuming $m > t/\mu$, 
			\bea \sfP(S_m \leq t) & \leq \inf_{\a > 0} \exp\(\a t + m \ln \EE[e^{-\a \d_k}]\) = \exp\(-\frac t\mu - m \(\ln \frac{m \mu}t - 1\)\). \eea 
			Since $m > \l t/\mu$ for all terms in the sum and $\l > 2e$, have $\ln(m\mu/t) - 1 > \ln(\l) - 1 > \ln2$, we have 
			\be \sfP(\mu\tsim(\bfr,t) > \l t) < e^{-t/\mu} \sum_{m=m_t}^\infty 2^{-m} = 2 e^{-m_t \ln 2 - t/\mu} \leq 2 e^{-2c_\l t},\ee 
			where we defined 
			\be c_\l = \frac{1+\l\ln2}{2\mu}.\ee 
			Finally, consider the event $\mu\tsimmax(t) > \l t$. This implies that there is at least one site $\bfr$ with $\mu\tsim(\bfr,t) > \l t$, and so the union bound over the $L^D$ different sites in the system gives 
			\be \sfP(\mu\tsimmax(t) > \l t) \leq 2e^{-2c_\l t + D \ln L}.\ee 
			If we assume $t \geq \frac{2D}{c_\l} \ln L$, we then have (dropping the prefactor of $2$ since we are interested in large $L$), 
			\be \label{maxbound} \sfP(\mu\tsimmax(t) > \l t) \leq e^{-c_\l t},\ee 
			which is what we wanted to show. 
			
			We now show the bound on $\tsimmin(t)$, which we do following the general strategy of \cite{berman1988investigations}. As before, fix a spacetime point $(\bfr,t)$ and consider the event $\tsim(\bfr,t) = m$. For the bound on $\tsimmax(t)$, we showed that for large $m$, this event implied that some sum of i.i.d random variables had to be significantly smaller than its mean. For the bound on $\tsimmin(t)$, we will show that for small $m$, this event implies the existence of a sum of i.i.d random variables which must be significantly larger than its mean. 		
			We will do this by forming a set of spacetime points $\sfD(\bfr,t)$ called a ``delay chain''. $\sfD(\bfr,t)$ consists of $\tsim(\bfr,t)$ spacetime points we label as 
			\be \label{delchain} \sfD(\bfr,t) = \{ (\bfr_n,t_n) \, : \, n = 1,\dots,\tsim(\bfr,t)\}.\ee
			To define these points, we need some additional notation. Let $\tau_{\bfr,k}$ denote the $k$th time at which the system attempts an update at site $\bfr$, and define $\d_{\bfr,t}$ as the time interval containing $t$ between attempted updates at $\bfr$: 
			\be \d_{\bfr,t} = \tau_{\bfr,k_{\bfr,t}+1} - \tau_{\bfr,k_{\bfr,t}},\qq k_{\bfr,t} = \max \{ k\, : \, t > \tau_{\bfr,k}\}.\ee 
			
			The points in $\sfD(\bfr,t)$ are chosen in the following way. First, we fix $(\bfr_1,t_1) = (\bfr,t)$.  Given $(\bfr_n,t_n)$, we fix $(\bfr_{n+1},t_{n+1})$ as follows. Let 
			\be \tau' = \tau_{\bfr_n,k_{\bfr_n,t_n}}-0^+\ee  
			be right before the last attempted update time at site $\bfr_n$. Then either the proposed update at $(\bfr_n,\tau')$ was accepted, in which case $\tsim(\bfr_n,t_n) = \tsim(\bfr_n,\tau') +1$, or it was rejected, in which case $\tsim(\bfr_n,t_n) = \tsim(\bfr_n,\tau')$. In the first case, we fix $(\bfr_{n+1},t_{n+1}) = (\bfr_n,  \tau')$. In the latter case, there must have been at least one site $\bfr'$ neighboring $ \bfr_n$ such that $\tsim(\bfr',\tau') = \tsim(\bfr_n,\tau')-1$, whose smaller simulation time prevented the proposed automaton update at $(\bfr_n,\tau')$ from succeeding (by virtue of having $\lag_{\bfr\ra\bfr'} = +1$). We then choose one such site $\bfr'$---with different choices of $\bfr'$ defining different delay chains---and then fix $(\bfr_{n+1},t_{n+1}) = (\bfr', t_n)$.
			
			The event $\mu\tsim(\bfr,t) < \g t$ implies the existence of at least one delay chain with $(\bfr_1,t_1) = (\bfr,t)$. Let ${\sf Paths}(m)$ denote the set of length-$m$ nearest-neighbor paths on the lattice, and let $\sfD(\bfr,t)$ have path $P\in {\sf Paths}(m)$ if the spatial coordinates of $\sfD(\bfr,t)$ is equal to $P$ as an ordered set. Letting $z$ be an upper bound on the number of nearest neighbors of any given site, we have $|{\sf Paths}(m)|  = z^m$. Defining $m_t$ as the largest integer less than $\g t / \mu$, we use the above facts to write  
			\bea  \sfP(\mu \tsim(\bfr,t) < \g t) & = \sum_{m=0}^{m_t} \sfP(\tsim(\bfr,t) = m) \\ 
			&  \leq \sum_{m=0}^{m_t}\sfP\( \bigcup_{P\in {\sf Paths}(m)} \text{$\exists$ delay chain with path $P$} \) \\ &
			\leq \sum_{m=0}^{m_t} z^m \sfP\(\text{$\exists$ some length-$m$ delay chain $\sfD(\bfr,t)$}\).
			\eea 
			
			Recall that the time delays $\d_{\bfr,t}$ are i.i.d random variables with mean $\mu$. If a delay chain $\sfD(\bfr,t)$ of length $m$ exists, the manner in which the temporal points $\{t_n\}$ of a delay chain were chosen then implies  
			\be \label{tbound} t = \sum_{n=1}^{m} (t_n - t_{n+1}) \leq  \sum_{n=1}^{m} \d_{\bfr_n,t_n} =S_m, \ee 
			where as before $S_m$ is the sum of $m$ exponentially distributed i.i.d random variables with mean $\mu$, and the inequality is saturated only if $\bfr_n = \bfr$ for all $n$ and $t = \tau_{\bfr,k_{\bfr,t}+1}-0^+$. Thus 
			\be \sfP(\mu\tsim(\bfr,t) < \g t) \leq \sum_{m=0}^{m_t} z^m \sfP(S_m\geq t).\ee 
			The remaining steps are the same as for the bound on $\tsimmax(t)$. Again using the Chernoff bound, we obtain 
			\bea \sfP(\mu\tsim(\bfr,t) < \g t) & \leq e^{-t/\mu} \sum_{m=0}^{m_t} \exp\( m \[1 + \ln\(\frac{zt}{m\mu}\)\]\) . \eea 
			Since the expression in the exponent is monotonically increasing in $m$ for $m \leq m_t < zt/\mu$, we may write 
			\bea  \sfP(\mu\tsim(\bfr,t) < \g t) & < m_t e^{-t/\mu + m_t(1 + \ln(zt/m_t\mu))} \leq \exp\(-\frac t\mu \[1 - \g\( 1 + \ln\frac z\g \) - \frac{\ln (\g t / \mu)}{t/\mu} \]\). \eea 
			Using $\g \ln 1/\g \leq 1/e$ for $\g \in [0,1]$ and defining  
			\be d_\g = \frac{1 - e^{-1} - \g ( 1 + \ln z)}{2\mu},\ee 
			which is positive and $O(1)$ as long as e.g. $\g < \g_*  = (1-e\inv)/(2(1+\ln z))$, 
			we have, after a union bound over all $L^D$ sites, 
			\be \sfP(\mu\tsimmin(t) < \g t) < e^{-2d_\g t + D \ln L - \ln(\g t/\mu)}.\ee 
			Thus if $t \geq \frac{2D}{d_\g} \ln L$, we can drop the last $\ln (\g t / \mu)$ term and write 
			\be\label{minbound} \sfP(\mu\tsimmin(t) < \g t) < e^{-d_\g t}.\ee 
			
			The (weaker) bounds of the Lemma statement then follow from \eqref{minbound}, \eqref{maxbound} upon choosing $c = \min(c_{2e},d_{\g_*})$. 
		\end{proof}

		Although the bounds derived above are sufficient for our purposes, without much extra work one may obtain a stronger bound on $\tsimmax(t)$: since $\lag_{\bfr'\ra\bfr}$ is at most $1$, the event $\mu\tsimmax(t) > \a\l t$, $\a>1$ implies the existence of at least $\sim [\l(\a-1)t]^D$ sites with $\tsim(\bfr,t) > \l t$. This then gives 
		\be \sfP(\mu\tsimmax(t) > \a \l t) < e^{-\wt ct^{D+1}} \ee 
		for some constant $\wt c$, as long as $t = \O((\ln L)^{1/(1+D)})$. An analogous improvement for $\tsimmin(t)$ is less straightforward since the delay chains at the additional $\sim[\l(\a-1)t]^D$ sites needn't necessarily be independent.

		\section{Comments on PDE-based approaches}\label{app:no_pdes}
		
		In this appendix we briefly discuss PDE-based automata similar to the field-based decoders of Herold et al.~\cite{herold2015cellular,herold2017cellular}, and provide a few comments on the self-interaction problem identified by the above authors.

		Let us briefly recall the general construction of \cite{herold2015cellular,herold2017cellular}. We write the classical variables on each site as $\mch_{\sf cl} = {\sf span}\{ |\phi_\bfr,\s_\bfr\ran \}$, where $\phi_\bfr$ is a rotor field valued in $[0,2\pi)$,\footnote{Although a truncation to a field which only stores $\log \log L$ different values is permissible for the same reasons as in the main text.}  and take $\phi$ to be updated according to some appropriate discretization of a differential equation in which anyons act as sources. Letting $\phi_{\bfr,t}$ be the value of $\phi_\bfr$ at time step $t$, the automaton rule $\mca$ is then designed to ensure that 
		\be \mcl \phi_{\bfr,t} = \sfA(\bfr,t),\ee 
		where $\mcl$ is an appropriate discretization of a linear differential operator and $\sfA$ is defined in Def.~\ref{def:lightfronts2d} ($\mcl$ is restricted to be linear in order for signals received  by different anyons to superimpose). The anyons are then updated with feedback operators that move an anyon at $\bfr$ in the direction of the strongest field gradient: 
		\be  \mcf_\bfr = \bot_{a = 1}^D X_{\lan \bfr,\bfr+\uva\ran}^{\s_\bfr f_\bfr^a \vee \s_{\bfr+\uva} (-f_{\bfr+\uva}^a)}\ee 	
		as before, where now 
		\be f^a_\bfr  = \d_{a,a_\bfr} {\sf sgn}( \D_{a_\bfr} \phi_\bfr),\qq a_\bfr = {\sf argmax}_a \{ |\D_a \phi_\bfr|\}.\ee

		Refs.~\cite{herold2015cellular,herold2017cellular} take $\mcl= \p_t - \D^2$ to describe overdamped Langevin dynamics. As explained in these works, this leads to several undesirable features. First, $\mcl$ generates an attractive force between anyons scaling as $|\D \phi| \sim 1/r^{D-1}$. The techniques of sec.~\ref{sec:offline} applied to this problem strongly suggest that these decoders lack thresholds (as was identified numerically in $D=2$ in Ref.~\cite{herold2015cellular}), while a force weaker than $1/r^{D}$ is essentially equivalent to nearest-anyon interactions, and produces a threshold. Since the decay of the force is fixed by $D$ and does not improve when more spatial derivatives are added, attaining a threshold requires that $\phi$ be allowed to propagate in one or more additional dimensions (albeit dimensions whose size need diverge only polylogarithmically in $L$).
		
		From the perspective of this work, a more serious problem with choosing $\mcl = \p_t - f(\D)$ (for any function $f$) is the self-interactions of anyons with the fields they themselves produce. When an anyon moves, it leads to a violation of $\mcl \phi = \sfA$ when it ``leaves behind'' the divergence of $\phi$ that it sourced. If $\phi$ updates at an $O(1)$ speed $v$---meaning that $v$ updates of $\phi$ occur for each application of the feedback $\mcf$---this anyon will then be attracted to the lattice site it most recently occupied, preventing anyon matching from occurring. The detailed analysis of Ref.~\cite{herold2015cellular} showed that this problem can be overcome only by taking $v$ to diverge at least polylogarithmically with $L$, so that $\phi$ very rapidly re-equilibrates after each anyon movement.  
		
		Our message-passing decoders solve this problem by ensuring that the ``field'' sourced by an anyon travels from it along a ballistically-propagating wavefront, which moves at a velocity larger than the speed of anyon motion (see the discussion around Def.~\ref{def:1d_msg_passing}). This property can also be attained by field-based decoders by modifying the choice of $\mcl$, but only in certain dimensions.  To see this, note that the self-interaction problem is avoided only if the Greens function for $\mcl$ has support only on the surface of the past lightfront, and not on its interior.  Our only option is then to take $\mcl = \p_t^2 - \D^2$, as wave equations are the only linear PDEs with this property \cite{friedlander1975wave}.\footnote{Taking $\mcl$ to be nonlinear in $\phi$ is undesirable for other reasons, and in any case the existence of nonlinear PDEs with the desired lightfront property seems rather unlikely.} However, this property in fact holds only for wave equations in {\it odd} dimensions. Using the wave equation to update $\phi$ therefore allows us to remove the self-interaction problem in the setting where an extended 3D field architecture is used to error correct a 2D toric code, but not in more general scenarios.\footnote{Architectures which embed multiple fields evolving under the 1D wave equation into a higher dimensional space do not offer a way out, since the solutions to the 1D wave equation involve signals whose strength does not decay in time.}

		\section{Decoding with a constant number of bits per site}\label{app:constant_bits} 
		
		In this appendix we give a schematic description of decoders that use only an $O(1)$ number of bits per site, as opposed to the $O(\log \log L)$ bits required by the construction in the main text. We will not prove that the decoders defined below possess a threshold, and leave a careful analysis of their performance for future work. 
		
		It is clear that we cannot simply cap the message fields $m^{\pm a}_\bfr$ of Def.~\ref{def:Dd_msg_passing} off at a maximum value of $\ell = O(1)$: this does not work because anyons will not be able to distinguish between signals coming from anyons of different distances beyond $\ell$, which is needed for decoding to work. Instead, our approach will be to find a way of encoding decaying message strengths  by ensuring that the probability for a site $\bfr$ to be occupied by a message decays as a function of distance between $\bfr$ and the anyon nearest to $\bfr$. We first explain the construction in 1D, where it is simpler. 
		
		\ss{One dimension} 
		
		First let us understand how to set up decaying message strengths within a discrete state space. We will parametrize the control variables at site $r$ as
		\be m_r = \{ n^+_r, n^-_r\}, \qq n^\pm_r \in \{0,1\}.\ee 
		Messages with $n^\pm_r = 1$ will be viewed as particles undergoing a certain type of annihilating asymmetric exclusion process defined with the following probabilistic automaton rule: 
		\ms\begin{definition}[probabilistic binary message passing, 1D]\label{def:1d_asep}
			Fix a number $0<q<1$ and in integer $m>1$. The probabilistic automaton rule $\mca_{m,q}^{(1)}$ uses a collection of i.i.d random variables $\xi_r$, each drawn from the uniform distribution on $[0,1]$ and refreshed at each time step, and is defined as 
			\be \mca_{m,q}^{(1)}(\bfm,\bfsig)[n^\pm_r] = \begin{dcases} 1 &  n^\pm_{r\mp1}(1-n_r^\pm) = 1  \wedge \xi_r < q\\
				0  &  \(n_r (1-n^\pm_{r\pm1})= 1 \wedge \xi_r < q  \) \vee \(\prod_{r' = r}^{r\pm m}n^\pm_{r'} = 1\) \\ 
				n_r^\pm & \text{else} 
			\end{dcases} . 
			\ee 
		\end{definition}\ms 
		Under this automaton rule, $n^\pm$ messages move in the $\pm$ direction at each step independently with probability $q$, and are annihilated if they find themselves at the left (right) end of a chain of particles on $m$ consecutive sites. 
		As in sec.~\ref{sec:msg_passing}, the decoding rules are then defined using $\mca^{(v,1)}_{m,q} = (\mca^{(1)}_{m,q})^v$ and feedback operators defined (for now) analogously to those of Def.~\ref{def:1d_msg_passing}. 
		
		Consider a single well-isolated anyon fixed at the origin $r=0$, and examine the expected particle density $\r(r>0) = \lan n^+_r\ran$ at long times $t \gg r$. 
		The annihilation in Def.~\ref{def:1d_asep} leads to an equilibrium density that decays with $r$ in a manner fixed by $m$. For all choices of $m>0$, numerics show that $\r(r) \sim r^{-\a_m}$, where the exponent $\a_m$ decreases with $m$; see fig.~\ref{fig:asep} for several choices of $m$.\footnote{This can be anticipated from mean field: there we would (schematically) have $\p_t \r \sim \p_r( v   +D \p_r)\r -\r^m$, with a power law scaling suggesting $\r \sim r^{-\a_m}$ with $\a_m = 1/(m-1)$; the real power law is (as expected) slower than this.}
		For example, at $m=2$ one finds $\r(r) \propto \frac1{\sqrt r}$. 
		This can in fact be shown rigorously using methods developed for analyzing systems undergoing 1D ballistic annihilation \cite{biswas2021ballistic}, but since it is rather tangential to the main point of this paper we will not go through the details. 
		
		\begin{figure} 
			\centering \includegraphics[width=.45\tw]{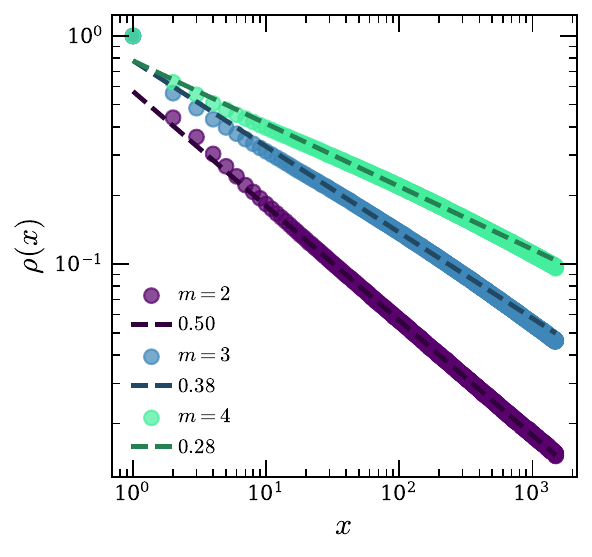}
			\caption{\label{fig:asep} Steady state particle density $\r(x)$ at a distance $x$ from an unmoving particle source for ASEP with different $m$-body annihilating interactions. } 
		\end{figure}

		This approach allows us lets us set up power-law interactions between anyons of the form $1/r^{\a}$ with $\a \leq 1/2$. Getting larger values of $\a$ can be done by coupling multiple copies of this chain together: in an $\sfn$-species model where each species independently undergoes the annihilating exclusion process defined by $\mca_{m,q}$, 
		we have (in schematic notation) 
		\be \prod_{a=1}^\sfn \r_a(r) \sim \frac1{r^{\sfn\a_m}},\ee 
		so that by using the product of the messages across all flavors to control anyon motion, the interaction between anyons can be designed to yield an arbitrarily quickly-decaying power law (exponentially decaying interactions are of course trivial to engineer, simply by letting each message ``decay'' with a constant probability at each time step). 
		
		While we did not explicitly discuss models with power-law interactions in sec.~\ref{sec:offline}, one may use the same logic to show that decoders with retarded attractive interactions of $1/r^{\a > D}$ have thresholds, provided the interactions are set up using the multi-flavor framework of sec.~\ref{ss:power_law_construction}. If we use the same feedback operators as in Def.~\ref{def:1d_msg_passing}---so that an anyon at $r$ moves only when $n^\pm_r\neq0$---a proof along the same lines does not directly go through, as in this case the erosion of a $(W,\infty)$ cluster takes time $t_W = W^\b$ with $\b > 1$. This can be heuristically seen as follows. Suppose a message source is fixed at the origin, and consider a well-isolated anyon at position $W>0$. The anyon undergoes biased motion with a velocity $-v(r)$, where $v(r) \sim \lan n^+_r\ran$. If the latter decays as $1/r^\a$, then 
		\be t_W \sim \int_0^W \frac{dr}{|v(r)|} \sim W^{1+\a},\ee 
		which is indeed super-linear in $W$. This problem can be circumvented by adding an additional variable $d_r \in \{\pm1\}$ at each site with $d_r\neq 0$ only if $\s_r = -1$, with $d_r$ keeping track of the direction in which the anyon at $r$ last moved. Linear erosion can then be restored by taking the feedback operators to move an anyon at $r$ in the direction of $d_r$ in the case where $n_r^\pm = 0$. In the present work we will not go through all the details required to make this fix rigorous.

		\ss{Higher dimensions}
		
		We now generalize the above scheme to higher dimensions. The hardest part is finding a biased annihilating exclusion process that gives a signal which decays uniformly as a function of distance in all directions; naive generalizations of Def.~\ref{def:1d_asep} which attempt to bias message motion in the radial direction are hard to define in a translation invariant way. Instead, our strategy will be to make use of the 1D result, employing multiple flavors of 1D random-walking messages along different directions and making them interact in a way which produces a symmetric expected message density. 
		
		Unlike in sec.~\ref{sec:msg_passing}, we will find it helpful to use message passing rules with a lightfront determined by the $1$-norm. To simplify the notation slightly, we will specify to $D=2$; the extension to higher dimensions is essentially trivial. The message variables on each site will be written as 
		\be m_\bfr = \{n^{\sfd,c}_r \, : \, \sfd\in\{1,2,3,4\}, \, c\in \{1,2\}\}\ee 
		where each $n^{\sfd,c}_\bfr \in \{0,1\}$. Messages with $\sfd=1$ are designed to propagate in the $x\geq0,y>0$ quadrant, those with $\sfd=2$ to propagate in the $x<0,y\geq 0$ quadrant, and so on. 
		
		\ms\begin{definition}[probabilistic binary message passing, 2D]
			Fix an integer $m>0$ and a number $0<q<1$. The message fields $n^{\sfd,c}_\bfr$ are emitted from anyons, and are designed to propagate along $\uvc$ in the $\sfd$ quadrant of the plane. For $\sfd=1$, the update rules are 
			\bea \mca_{m,q}^{(2)}(\bfm,\bfsig)[n^{1,x}_{(x,y)}] & = \mca^{(1)}_{m,q}(\bfm|_{L_{y}},\bfsig|_{L_y})[n^+_x] \\ 
			\mca_{m,q}^{(2)}(\bfm,\bfsig)[n^{1,y}_{(x,y)}] & = \begin{dcases} 
				\mca^{(1)}_{m,q}(\bfm|_{L_{x}},\bfsig|_{L_x})[n^+_y] & n^{1,x}_{(x,y-1)} = 0 \\ 
				1 & \text{else} \end{dcases}.
			\eea
			where $L_x = \{ \bfr \, : \, r^1 = x\}$ and similarly for $L_y$. 
			In words: $n^{1,c}$ messages are sourced at anyon locations and move along $\uvc$ according to the exclusion process of Def.~\ref{def:1d_asep}, and $x$-traveling walkers source $y$-traveling walkers. $\mca_{m,q}^{(2)}$ updates the fields in the remaining three quadrants analogously, and the full automaton rule is obtained by taking $v$ repetitions of $\mca_{m,1}^{(2)}$. 
		\end{definition}\ms 
		
		The properties of the 1D automaton rule ensure that if an anyon source is fixed at the origin, the time-averaged total message density scales as 
		\be \r(\bfr) = \sum_{a,b,c} \lan n^{ab,c}_\bfr \ran \sim \frac1{||\bfr||_1^{\a_m}}.\ee 
		Since the messages propagate according to the $1$-norm, the feedback operators of Def.~\ref{def:Dd_msg_passing} are modified to move anyons along the directions normal to the surfaces of $1$-balls. As in 1D, superlinear erosion times can be avoided by adding additional variables to $\mch_{\sf cl}$ which keep track of the directions along which anyons have moved. A detailed analysis is left for future work.

	\end{appendix} 
	
	\bibliographystyle{abbrev}
	\bibliography{offline_decoders}


\begin{thebibliography}{64}
\ifx \bisbn   \undefined \def \bisbn  #1{ISBN #1}\fi
\ifx \binits  \undefined \def \binits#1{#1}\fi
\ifx \bauthor  \undefined \def \bauthor#1{#1}\fi
\ifx \batitle  \undefined \def \batitle#1{#1}\fi
\ifx \bjtitle  \undefined \def \bjtitle#1{#1}\fi
\ifx \bvolume  \undefined \def \bvolume#1{\textbf{#1}}\fi
\ifx \byear  \undefined \def \byear#1{#1}\fi
\ifx \bissue  \undefined \def \bissue#1{#1}\fi
\ifx \bfpage  \undefined \def \bfpage#1{#1}\fi
\ifx \blpage  \undefined \def \blpage #1{#1}\fi
\ifx \burl  \undefined \def \burl#1{\textsf{#1}}\fi
\ifx \doiurl  \undefined \def \doiurl#1{\url{https://doi.org/#1}}\fi
\ifx \betal  \undefined \def \betal{\textit{et al.}}\fi
\ifx \binstitute  \undefined \def \binstitute#1{#1}\fi
\ifx \binstitutionaled  \undefined \def \binstitutionaled#1{#1}\fi
\ifx \bctitle  \undefined \def \bctitle#1{#1}\fi
\ifx \beditor  \undefined \def \beditor#1{#1}\fi
\ifx \bpublisher  \undefined \def \bpublisher#1{#1}\fi
\ifx \bbtitle  \undefined \def \bbtitle#1{#1}\fi
\ifx \bedition  \undefined \def \bedition#1{#1}\fi
\ifx \bseriesno  \undefined \def \bseriesno#1{#1}\fi
\ifx \blocation  \undefined \def \blocation#1{#1}\fi
\ifx \bsertitle  \undefined \def \bsertitle#1{#1}\fi
\ifx \bsnm \undefined \def \bsnm#1{#1}\fi
\ifx \bsuffix \undefined \def \bsuffix#1{#1}\fi
\ifx \bparticle \undefined \def \bparticle#1{#1}\fi
\ifx \barticle \undefined \def \barticle#1{#1}\fi
\bibcommenthead
\ifx \bconfdate \undefined \def \bconfdate #1{#1}\fi
\ifx \botherref \undefined \def \botherref #1{#1}\fi
\ifx \url \undefined \def \url#1{\textsf{#1}}\fi
\ifx \bchapter \undefined \def \bchapter#1{#1}\fi
\ifx \bbook \undefined \def \bbook#1{#1}\fi
\ifx \bcomment \undefined \def \bcomment#1{#1}\fi
\ifx \oauthor \undefined \def \oauthor#1{#1}\fi
\ifx \citeauthoryear \undefined \def \citeauthoryear#1{#1}\fi
\ifx \endbibitem  \undefined \def \endbibitem {}\fi
\ifx \bconflocation  \undefined \def \bconflocation#1{#1}\fi
\ifx \arxivurl  \undefined \def \arxivurl#1{\textsf{#1}}\fi
\csname PreBibitemsHook\endcsname

\bibitem[\protect\citeauthoryear{Campbell}{2024}]{campbell2024series}
\begin{barticle}
\bauthor{\bsnm{Campbell}, \binits{E.}}:
\batitle{A series of fast-paced advances in quantum error correction}.
\bjtitle{Nature Reviews Physics}
\bvolume{6}(\bissue{3}),
\bfpage{160}--\blpage{161}
(\byear{2024})
\end{barticle}
\endbibitem

\bibitem[\protect\citeauthoryear{AI and Collaborators}{2025}]{google_qec}
\begin{barticle}
\bauthor{\bsnm{AI}, \binits{G.Q.}},
\bauthor{\bsnm{Collaborators}}:
\batitle{Quantum error correction below the surface code threshold}.
\bjtitle{Nature}
\bvolume{638}(\bissue{8052}),
\bfpage{920}--\blpage{926}
(\byear{2025})
\end{barticle}
\endbibitem

\bibitem[\protect\citeauthoryear{Reilly}{2019}]{reilly2019challenges}
\begin{bchapter}
\bauthor{\bsnm{Reilly}, \binits{D.}}:
\bctitle{Challenges in scaling-up the control interface of a quantum computer}.
In: \bbtitle{2019 IEEE International Electron Devices Meeting (IEDM)},
pp. \bfpage{31}--\blpage{7}
(\byear{2019}).
\bcomment{IEEE}
\end{bchapter}
\endbibitem

\bibitem[\protect\citeauthoryear{Brennan et~al.}{2025}]{brennan2025classical}
\begin{botherref}
\oauthor{\bsnm{Brennan}, \binits{J.C.}},
\oauthor{\bsnm{Barbosa}, \binits{J.}},
\oauthor{\bsnm{Li}, \binits{C.}},
\oauthor{\bsnm{Ahmad}, \binits{M.}},
\oauthor{\bsnm{Imroze}, \binits{F.}},
\oauthor{\bsnm{Rose}, \binits{C.}},
\oauthor{\bsnm{Karar}, \binits{W.}},
\oauthor{\bsnm{Stanley}, \binits{M.}},
\oauthor{\bsnm{Heidari}, \binits{H.}},
\oauthor{\bsnm{Ridler}, \binits{N.M.}}, et al.:
Classical interfaces for controlling cryogenic quantum computing technologies.
arXiv preprint arXiv:2504.18527
(2025)
\end{botherref}
\endbibitem

\bibitem[\protect\citeauthoryear{Liyanage et~al.}{2024}]{liyanage2024fpga}
\begin{botherref}
\oauthor{\bsnm{Liyanage}, \binits{N.}},
\oauthor{\bsnm{Wu}, \binits{Y.}},
\oauthor{\bsnm{Tagare}, \binits{S.}},
\oauthor{\bsnm{Zhong}, \binits{L.}}:
Fpga-based distributed union-find decoder for surface codes.
IEEE Transactions on Quantum Engineering
(2024)
\end{botherref}
\endbibitem

\bibitem[\protect\citeauthoryear{deMarti iOlius
  et~al.}{2024}]{demarti2024decoding}
\begin{barticle}
\bauthor{\bsnm{iOlius}, \binits{A.}},
\bauthor{\bsnm{Fuentes}, \binits{P.}},
\bauthor{\bsnm{Or{\'u}s}, \binits{R.}},
\bauthor{\bsnm{Crespo}, \binits{P.M.}},
\bauthor{\bsnm{Martinez}, \binits{J.E.}}:
\batitle{Decoding algorithms for surface codes}.
\bjtitle{Quantum}
\bvolume{8},
\bfpage{1498}
(\byear{2024})
\end{barticle}
\endbibitem

\bibitem[\protect\citeauthoryear{Ziad et~al.}{2024}]{ziad2024local}
\begin{botherref}
\oauthor{\bsnm{Ziad}, \binits{A.B.}},
\oauthor{\bsnm{Zalawadiya}, \binits{A.}},
\oauthor{\bsnm{Topal}, \binits{C.}},
\oauthor{\bsnm{Camps}, \binits{J.}},
\oauthor{\bsnm{Geh{\'e}r}, \binits{G.P.}},
\oauthor{\bsnm{Stafford}, \binits{M.P.}},
\oauthor{\bsnm{Turner}, \binits{M.L.}}:
Local clustering decoder: a fast and adaptive hardware decoder for the surface
  code.
arXiv preprint arXiv:2411.10343
(2024)
\end{botherref}
\endbibitem

\bibitem[\protect\citeauthoryear{Wu and Zhong}{2023}]{wu2023fusion}
\begin{bchapter}
\bauthor{\bsnm{Wu}, \binits{Y.}},
\bauthor{\bsnm{Zhong}, \binits{L.}}:
\bctitle{Fusion blossom: Fast mwpm decoders for qec}.
In: \bbtitle{2023 IEEE International Conference on Quantum Computing and
  Engineering (QCE)},
vol. \bseriesno{1},
pp. \bfpage{928}--\blpage{938}
(\byear{2023}).
\bcomment{IEEE}
\end{bchapter}
\endbibitem

\bibitem[\protect\citeauthoryear{Higgott and Gidney}{2025}]{higgott2025sparse}
\begin{barticle}
\bauthor{\bsnm{Higgott}, \binits{O.}},
\bauthor{\bsnm{Gidney}, \binits{C.}}:
\batitle{Sparse blossom: correcting a million errors per core second with
  minimum-weight matching}.
\bjtitle{Quantum}
\bvolume{9},
\bfpage{1600}
(\byear{2025})
\end{barticle}
\endbibitem

\bibitem[\protect\citeauthoryear{Barber et~al.}{2023}]{barber2023real}
\begin{botherref}
\oauthor{\bsnm{Barber}, \binits{B.}},
\oauthor{\bsnm{Barnes}, \binits{K.M.}},
\oauthor{\bsnm{Bialas}, \binits{T.}},
\oauthor{\bsnm{Bugdayci}, \binits{O.}},
\oauthor{\bsnm{Campbell}, \binits{E.T.}},
\oauthor{\bsnm{Gillespie}, \binits{N.I.}},
\oauthor{\bsnm{Johar}, \binits{K.}},
\oauthor{\bsnm{Rajan}, \binits{R.}},
\oauthor{\bsnm{Richardson}, \binits{A.W.}},
\oauthor{\bsnm{Skoric}, \binits{L.}}, et al.:
A real-time, scalable, fast and highly resource efficient decoder for a quantum
  computer.
arXiv preprint arXiv:2309.05558
(2023)
\end{botherref}
\endbibitem

\bibitem[\protect\citeauthoryear{Delfosse and
  Nickerson}{2021}]{delfosse2021almost}
\begin{barticle}
\bauthor{\bsnm{Delfosse}, \binits{N.}},
\bauthor{\bsnm{Nickerson}, \binits{N.H.}}:
\batitle{Almost-linear time decoding algorithm for topological codes}.
\bjtitle{Quantum}
\bvolume{5},
\bfpage{595}
(\byear{2021})
\end{barticle}
\endbibitem

\bibitem[\protect\citeauthoryear{Harrington}{2004}]{Harrington2004}
\begin{botherref}
\oauthor{\bsnm{Harrington}, \binits{J.W.}}:
{Analysis of Quantum Error-Correcting Codes: Symplectic Lattice Codes and Toric
  Codes}.
PhD thesis,
Caltech
(2004).
\url{https://thesis.library.caltech.edu/1747}
\end{botherref}
\endbibitem

\bibitem[\protect\citeauthoryear{Herold et~al.}{2015}]{herold2015cellular}
\begin{barticle}
\bauthor{\bsnm{Herold}, \binits{M.}},
\bauthor{\bsnm{Campbell}, \binits{E.T.}},
\bauthor{\bsnm{Eisert}, \binits{J.}},
\bauthor{\bsnm{Kastoryano}, \binits{M.J.}}:
\batitle{Cellular-automaton decoders for topological quantum memories}.
\bjtitle{npj Quantum information}
\bvolume{1}(\bissue{1}),
\bfpage{1}--\blpage{8}
(\byear{2015})
\doiurl{10.1038/npjqi.2015.10}
\end{barticle}
\endbibitem

\bibitem[\protect\citeauthoryear{Herold et~al.}{2017}]{herold2017cellular}
\begin{barticle}
\bauthor{\bsnm{Herold}, \binits{M.}},
\bauthor{\bsnm{Kastoryano}, \binits{M.J.}},
\bauthor{\bsnm{Campbell}, \binits{E.T.}},
\bauthor{\bsnm{Eisert}, \binits{J.}}:
\batitle{Cellular automaton decoders of topological quantum memories in the
  fault tolerant setting}.
\bjtitle{New Journal of Physics}
\bvolume{19}(\bissue{6}),
\bfpage{063012}
(\byear{2017})
\end{barticle}
\endbibitem

\bibitem[\protect\citeauthoryear{Balasubramanian
  et~al.}{2024}]{balasubramanian2024local}
\begin{botherref}
\oauthor{\bsnm{Balasubramanian}, \binits{S.}},
\oauthor{\bsnm{Davydova}, \binits{M.}},
\oauthor{\bsnm{Lake}, \binits{E.}}:
A local automaton for the 2d toric code.
arXiv preprint arXiv:2412.19803
(2024)
\end{botherref}
\endbibitem

\bibitem[\protect\citeauthoryear{Kubica and
  Preskill}{2019}]{kubica2019cellular}
\begin{barticle}
\bauthor{\bsnm{Kubica}, \binits{A.}},
\bauthor{\bsnm{Preskill}, \binits{J.}}:
\batitle{Cellular-automaton decoders with provable thresholds for topological
  codes}.
\bjtitle{Physical review letters}
\bvolume{123}(\bissue{2}),
\bfpage{020501}
(\byear{2019})
\doiurl{10.1103/PhysRevLett.123.020501}
\end{barticle}
\endbibitem

\bibitem[\protect\citeauthoryear{Breuckmann et~al.}{2017}]{breuckmann2016local}
\begin{barticle}
\bauthor{\bsnm{Breuckmann}, \binits{N.P.}},
\bauthor{\bsnm{Duivenvoorden}, \binits{K.}},
\bauthor{\bsnm{Michels}, \binits{D.}},
\bauthor{\bsnm{Terhal}, \binits{B.M.}}:
\batitle{Local decoders for the 2d and 4d toric code}.
\bjtitle{Quantum Information and Computation}
\bvolume{17}(\bissue{3-4}),
\bfpage{181}--\blpage{208}
(\byear{2017})
\doiurl{10.1109/TIT.2021.3122352}
\end{barticle}
\endbibitem

\bibitem[\protect\citeauthoryear{Coser and
  Perez-Garcia}{2019}]{coser2019classification}
\begin{barticle}
\bauthor{\bsnm{Coser}, \binits{A.}},
\bauthor{\bsnm{Perez-Garcia}, \binits{D.}}:
\batitle{Classification of phases for mixed states via fast dissipative
  evolution}.
\bjtitle{Quantum}
\bvolume{3},
\bfpage{174}
(\byear{2019})
\end{barticle}
\endbibitem

\bibitem[\protect\citeauthoryear{Rakovszky
  et~al.}{2024}]{rakovszky2024defining}
\begin{barticle}
\bauthor{\bsnm{Rakovszky}, \binits{T.}},
\bauthor{\bsnm{Gopalakrishnan}, \binits{S.}},
\bauthor{\bsnm{Von~Keyserlingk}, \binits{C.}}:
\batitle{Defining stable phases of open quantum systems}.
\bjtitle{Physical Review X}
\bvolume{14}(\bissue{4}),
\bfpage{041031}
(\byear{2024})
\end{barticle}
\endbibitem

\bibitem[\protect\citeauthoryear{Sang et~al.}{2024}]{sang2024mixed}
\begin{barticle}
\bauthor{\bsnm{Sang}, \binits{S.}},
\bauthor{\bsnm{Zou}, \binits{Y.}},
\bauthor{\bsnm{Hsieh}, \binits{T.H.}}:
\batitle{Mixed-state quantum phases: Renormalization and quantum error
  correction}.
\bjtitle{Physical Review X}
\bvolume{14}(\bissue{3}),
\bfpage{031044}
(\byear{2024})
\end{barticle}
\endbibitem

\bibitem[\protect\citeauthoryear{Dennis et~al.}{2002}]{Dennis_2002}
\begin{barticle}
\bauthor{\bsnm{Dennis}, \binits{E.}},
\bauthor{\bsnm{Kitaev}, \binits{A.}},
\bauthor{\bsnm{Landahl}, \binits{A.}},
\bauthor{\bsnm{Preskill}, \binits{J.}}:
\batitle{Topological quantum memory}.
\bjtitle{Journal of Mathematical Physics}
\bvolume{43}(\bissue{9}),
\bfpage{4452}--\blpage{4505}
(\byear{2002})
\doiurl{10.1063/1.1499754}
\end{barticle}
\endbibitem

\bibitem[\protect\citeauthoryear{Fan et~al.}{2024}]{fan2024diagnostics}
\begin{barticle}
\bauthor{\bsnm{Fan}, \binits{R.}},
\bauthor{\bsnm{Bao}, \binits{Y.}},
\bauthor{\bsnm{Altman}, \binits{E.}},
\bauthor{\bsnm{Vishwanath}, \binits{A.}}:
\batitle{Diagnostics of mixed-state topological order and breakdown of quantum
  memory}.
\bjtitle{PRX Quantum}
\bvolume{5}(\bissue{2}),
\bfpage{020343}
(\byear{2024})
\end{barticle}
\endbibitem

\bibitem[\protect\citeauthoryear{Cubitt et~al.}{2015}]{cubitt2015stability}
\begin{barticle}
\bauthor{\bsnm{Cubitt}, \binits{T.S.}},
\bauthor{\bsnm{Lucia}, \binits{A.}},
\bauthor{\bsnm{Michalakis}, \binits{S.}},
\bauthor{\bsnm{Perez-Garcia}, \binits{D.}}:
\batitle{Stability of local quantum dissipative systems}.
\bjtitle{Communications in Mathematical Physics}
\bvolume{337},
\bfpage{1275}--\blpage{1315}
(\byear{2015})
\end{barticle}
\endbibitem

\bibitem[\protect\citeauthoryear{Stahl}{2024}]{stahl2024single}
\begin{barticle}
\bauthor{\bsnm{Stahl}, \binits{C.}}:
\batitle{Single-shot quantum error correction in intertwined toric codes}.
\bjtitle{Physical Review B}
\bvolume{110}(\bissue{7}),
\bfpage{075143}
(\byear{2024})
\end{barticle}
\endbibitem

\bibitem[\protect\citeauthoryear{Kubica and Vasmer}{2022}]{kubica2022single}
\begin{barticle}
\bauthor{\bsnm{Kubica}, \binits{A.}},
\bauthor{\bsnm{Vasmer}, \binits{M.}}:
\batitle{Single-shot quantum error correction with the three-dimensional
  subsystem toric code}.
\bjtitle{Nature Communications}
\bvolume{13}(\bissue{1}),
\bfpage{6272}
(\byear{2022})
\end{barticle}
\endbibitem

\bibitem[\protect\citeauthoryear{Bomb{\'\i}n}{2015}]{bombin2015single}
\begin{barticle}
\bauthor{\bsnm{Bomb{\'\i}n}, \binits{H.}}:
\batitle{Single-shot fault-tolerant quantum error correction}.
\bjtitle{Physical Review X}
\bvolume{5}(\bissue{3}),
\bfpage{031043}
(\byear{2015})
\end{barticle}
\endbibitem

\bibitem[\protect\citeauthoryear{Lake}{2025}]{online}
\begin{botherref}
\oauthor{\bsnm{Lake}, \binits{E.}}:
Local active error correction from simulated confinement.
arXiv preprint arXiv:2510.08056
(2025)
\end{botherref}
\endbibitem

\bibitem[\protect\citeauthoryear{Berman and
  Simon}{1988}]{berman1988investigations}
\begin{bchapter}
\bauthor{\bsnm{Berman}, \binits{P.}},
\bauthor{\bsnm{Simon}, \binits{J.}}:
\bctitle{Investigations of fault-tolerant networks of computers}.
In: \bbtitle{Proceedings of the Twentieth Annual ACM Symposium on Theory of
  Computing},
pp. \bfpage{66}--\blpage{77}
(\byear{1988}).
\doiurl{10.1145/62212.62219}
\end{bchapter}
\endbibitem

\bibitem[\protect\citeauthoryear{Cook et~al.}{2008}]{gacs_synch_slides}
\begin{botherref}
\oauthor{\bsnm{Cook}, \binits{M.}},
\oauthor{\bsnm{Winfree}, \binits{E.}},
\oauthor{\bsnm{Gacs}, \binits{P.}}:
Synchronization in 3 Dimensions.
\url{https://cs-web.bu.edu/faculty/gacs/papers/3DasyncCA-talk.pdf}
\end{botherref}
\endbibitem

\bibitem[\protect\citeauthoryear{Kim and Kosterlitz}{1989}]{kim1989growth}
\begin{barticle}
\bauthor{\bsnm{Kim}, \binits{J.M.}},
\bauthor{\bsnm{Kosterlitz}, \binits{J.}}:
\batitle{Growth in a restricted solid-on-solid model}.
\bjtitle{Physical review letters}
\bvolume{62}(\bissue{19}),
\bfpage{2289}
(\byear{1989})
\end{barticle}
\endbibitem

\bibitem[\protect\citeauthoryear{Gottesman}{2024}]{gottesman2024surviving}
\begin{botherref}
\oauthor{\bsnm{Gottesman}, \binits{D.}}:
Surviving as a quantum computer in a classical world.
Textbook manuscript preprint
(2024)
\end{botherref}
\endbibitem

\bibitem[\protect\citeauthoryear{Gottesman}{2009}]{gottesman2009introductionquantumerrorcorrection}
\begin{botherref}
\oauthor{\bsnm{Gottesman}, \binits{D.}}:
An Introduction to Quantum Error Correction and Fault-Tolerant Quantum
  Computation
(2009).
\url{https://arxiv.org/abs/0904.2557}
\end{botherref}
\endbibitem

\bibitem[\protect\citeauthoryear{Knill et~al.}{1998}]{knill1998resilient}
\begin{barticle}
\bauthor{\bsnm{Knill}, \binits{E.}},
\bauthor{\bsnm{Laflamme}, \binits{R.}},
\bauthor{\bsnm{Zurek}, \binits{W.H.}}:
\batitle{Resilient quantum computation: error models and thresholds}.
\bjtitle{Proceedings of the Royal Society of London. Series A: Mathematical,
  Physical and Engineering Sciences}
\bvolume{454}(\bissue{1969}),
\bfpage{365}--\blpage{384}
(\byear{1998})
\end{barticle}
\endbibitem

\bibitem[\protect\citeauthoryear{Aliferis et~al.}{2005}]{aliferis2005quantum}
\begin{botherref}
\oauthor{\bsnm{Aliferis}, \binits{P.}},
\oauthor{\bsnm{Gottesman}, \binits{D.}},
\oauthor{\bsnm{Preskill}, \binits{J.}}:
Quantum accuracy threshold for concatenated distance-3 codes.
arXiv preprint quant-ph/0504218
(2005)
\end{botherref}
\endbibitem

\bibitem[\protect\citeauthoryear{Gacs}{2024}]{gacs2024probabilistic}
\begin{barticle}
\bauthor{\bsnm{Gacs}, \binits{P.}}:
\batitle{Probabilistic cellular automata with andrei toom}.
\bjtitle{Brazilian Journal of Probability and Statistics}
\bvolume{38}(\bissue{2}),
\bfpage{285}--\blpage{301}
(\byear{2024})
\end{barticle}
\endbibitem

\bibitem[\protect\citeauthoryear{Gacs}{2017}]{gacs_eroder_slides}
\begin{botherref}
\oauthor{\bsnm{Gacs}, \binits{P.}}:
Eroders.
\url{https://cs-web.bu.edu/faculty/gacs/papers/eroders-talk-BU-2017f.pdf}
\end{botherref}
\endbibitem

\bibitem[\protect\citeauthoryear{G{\'a}cs}{2001}]{gacs2001reliable}
\begin{barticle}
\bauthor{\bsnm{G{\'a}cs}, \binits{P.}}:
\batitle{Reliable cellular automata with self-organization}.
\bjtitle{Journal of Statistical Physics}
\bvolume{103},
\bfpage{45}--\blpage{267}
(\byear{2001})
\end{barticle}
\endbibitem

\bibitem[\protect\citeauthoryear{Capuni and Gacs}{2021}]{ccapuni2021reliable}
\begin{botherref}
\oauthor{\bsnm{Capuni}, \binits{I.}},
\oauthor{\bsnm{Gacs}, \binits{P.}}:
A reliable turing machine.
arXiv preprint arXiv:2112.02152
(2021)
\end{botherref}
\endbibitem

\bibitem[\protect\citeauthoryear{Bravyi and Haah}{2011}]{bravyi2011analytic}
\begin{botherref}
\oauthor{\bsnm{Bravyi}, \binits{S.}},
\oauthor{\bsnm{Haah}, \binits{J.}}:
Analytic and numerical demonstration of quantum self-correction in the 3d cubic
  code.
arXiv preprint arXiv:1112.3252
(2011)
\end{botherref}
\endbibitem

\bibitem[\protect\citeauthoryear{Duclos-Cianci and
  Poulin}{2013}]{duclos2013fault}
\begin{botherref}
\oauthor{\bsnm{Duclos-Cianci}, \binits{G.}},
\oauthor{\bsnm{Poulin}, \binits{D.}}:
Fault-tolerant renormalization group decoder for abelian topological codes.
arXiv preprint arXiv:1304.6100
(2013)
\end{botherref}
\endbibitem

\bibitem[\protect\citeauthoryear{Paletta et~al.}{2025}]{paletta2025high}
\begin{botherref}
\oauthor{\bsnm{Paletta}, \binits{L.}},
\oauthor{\bsnm{Leverrier}, \binits{A.}},
\oauthor{\bsnm{Mirrahimi}, \binits{M.}},
\oauthor{\bsnm{Vuillot}, \binits{C.}}:
High-performance local automaton decoders for defect matching in 1d.
arXiv preprint arXiv:2505.10162
(2025)
\end{botherref}
\endbibitem

\bibitem[\protect\citeauthoryear{G{\'a}cs et~al.}{1978}]{gacs1978one}
\begin{barticle}
\bauthor{\bsnm{G{\'a}cs}, \binits{P.}},
\bauthor{\bsnm{Kurdyumov}, \binits{G.L.}},
\bauthor{\bsnm{Levin}, \binits{L.A.}}:
\batitle{One-dimensional uniform arrays that wash out finite islands}.
\bjtitle{Problemy Peredachi Informatsii}
\bvolume{14}(\bissue{3}),
\bfpage{92}--\blpage{96}
(\byear{1978})
\end{barticle}
\endbibitem

\bibitem[\protect\citeauthoryear{Toom}{1995}]{toom1995cellular}
\begin{botherref}
\oauthor{\bsnm{Toom}, \binits{A.}}:
Cellular automata with errors: problems for students of probability.
Topics in contemporary probability and its applications
\textbf{117}
(1995)
\end{botherref}
\endbibitem

\bibitem[\protect\citeauthoryear{Guedes et~al.}{2024}]{guedes2024quantum}
\begin{barticle}
\bauthor{\bsnm{Guedes}, \binits{T.L.}},
\bauthor{\bsnm{Winter}, \binits{D.}},
\bauthor{\bsnm{M{\"u}ller}, \binits{M.}}:
\batitle{Quantum cellular automata for quantum error correction and density
  classification}.
\bjtitle{Physical Review Letters}
\bvolume{133}(\bissue{15}),
\bfpage{150601}
(\byear{2024})
\end{barticle}
\endbibitem

\bibitem[\protect\citeauthoryear{Lang and Buchler}{2018}]{lang2018strictly}
\begin{barticle}
\bauthor{\bsnm{Lang}, \binits{N.}},
\bauthor{\bsnm{Buchler}, \binits{H.P.}}:
\batitle{Strictly local one-dimensional topological quantum error correction
  with symmetry-constrained cellular automata}.
\bjtitle{SciPost Physics}
\bvolume{4}(\bissue{1}),
\bfpage{007}
(\byear{2018})
\end{barticle}
\endbibitem

\bibitem[\protect\citeauthoryear{Schotte et~al.}{2022}]{schotte2022fault}
\begin{botherref}
\oauthor{\bsnm{Schotte}, \binits{A.}},
\oauthor{\bsnm{Burgelman}, \binits{L.}},
\oauthor{\bsnm{Zhu}, \binits{G.}}:
Fault-tolerant error correction for a universal non-abelian topological quantum
  computer at finite temperature.
arXiv preprint arXiv:2301.00054
(2022)
\end{botherref}
\endbibitem

\bibitem[\protect\citeauthoryear{Cirel’son}{2006}]{cirel2006reliable}
\begin{bchapter}
\bauthor{\bsnm{Cirel’son}, \binits{B.}}:
\bctitle{Reliable storage of information in a system of unreliable components
  with local interactions}.
In: \bbtitle{Locally Interacting Systems and Their Application in Biology:
  Proceedings of the School-Seminar on Markov Interaction Processes in Biology,
  Held in Pushchino, Moscow Region, March, 1976},
pp. \bfpage{15}--\blpage{30}
(\byear{2006}).
\bcomment{Springer}
\end{bchapter}
\endbibitem

\bibitem[\protect\citeauthoryear{Chirame et~al.}{2024}]{chirame2024stabilizing}
\begin{botherref}
\oauthor{\bsnm{Chirame}, \binits{S.}},
\oauthor{\bsnm{Prem}, \binits{A.}},
\oauthor{\bsnm{Gopalakrishnan}, \binits{S.}},
\oauthor{\bsnm{Burnell}, \binits{F.J.}}:
Stabilizing non-abelian topological order against heralded noise via local
  lindbladian dynamics.
arXiv preprint arXiv:2410.21402
(2024)
\end{botherref}
\endbibitem

\bibitem[\protect\citeauthoryear{}{2025}]{code}
\begin{botherref}
See the Linked GitHub Repository.
\url{https://github.com/ethanlake/message-passing-decoders}
\end{botherref}
\endbibitem

\bibitem[\protect\citeauthoryear{Aharonov and
  Ben-Or}{1996}]{aharonov1996faulttolerantquantumcomputation}
\begin{botherref}
\oauthor{\bsnm{Aharonov}, \binits{D.}},
\oauthor{\bsnm{Ben-Or}, \binits{M.}}:
Fault Tolerant Quantum Computation with Constant Error Rate
(1996).
\url{https://arxiv.org/abs/quant-ph/9611025}
\end{botherref}
\endbibitem

\bibitem[\protect\citeauthoryear{Wootton}{2015}]{wootton2015simple}
\begin{barticle}
\bauthor{\bsnm{Wootton}, \binits{J.}}:
\batitle{A simple decoder for topological codes}.
\bjtitle{Entropy}
\bvolume{17}(\bissue{4}),
\bfpage{1946}--\blpage{1957}
(\byear{2015})
\end{barticle}
\endbibitem

\bibitem[\protect\citeauthoryear{Hutter et~al.}{2015}]{hutter2015improved}
\begin{barticle}
\bauthor{\bsnm{Hutter}, \binits{A.}},
\bauthor{\bsnm{Loss}, \binits{D.}},
\bauthor{\bsnm{Wootton}, \binits{J.R.}}:
\batitle{Improved hdrg decoders for qudit and non-abelian quantum error
  correction}.
\bjtitle{New Journal of Physics}
\bvolume{17}(\bissue{3}),
\bfpage{035017}
(\byear{2015})
\end{barticle}
\endbibitem

\bibitem[\protect\citeauthoryear{Brell et~al.}{2014}]{brell2014thermalization}
\begin{barticle}
\bauthor{\bsnm{Brell}, \binits{C.G.}},
\bauthor{\bsnm{Burton}, \binits{S.}},
\bauthor{\bsnm{Dauphinais}, \binits{G.}},
\bauthor{\bsnm{Flammia}, \binits{S.T.}},
\bauthor{\bsnm{Poulin}, \binits{D.}}:
\batitle{Thermalization, error correction, and memory lifetime for ising anyon
  systems}.
\bjtitle{Physical Review X}
\bvolume{4}(\bissue{3}),
\bfpage{031058}
(\byear{2014})
\end{barticle}
\endbibitem

\bibitem[\protect\citeauthoryear{Wootton et~al.}{2014}]{wootton2014error}
\begin{barticle}
\bauthor{\bsnm{Wootton}, \binits{J.R.}},
\bauthor{\bsnm{Burri}, \binits{J.}},
\bauthor{\bsnm{Iblisdir}, \binits{S.}},
\bauthor{\bsnm{Loss}, \binits{D.}}:
\batitle{Error correction for non-abelian topological quantum computation}.
\bjtitle{Physical review X}
\bvolume{4}(\bissue{1}),
\bfpage{011051}
(\byear{2014})
\end{barticle}
\endbibitem

\bibitem[\protect\citeauthoryear{Dauphinais and
  Poulin}{2017}]{dauphinais2017fault}
\begin{barticle}
\bauthor{\bsnm{Dauphinais}, \binits{G.}},
\bauthor{\bsnm{Poulin}, \binits{D.}}:
\batitle{Fault-tolerant quantum error correction for non-abelian anyons}.
\bjtitle{Communications in Mathematical Physics}
\bvolume{355},
\bfpage{519}--\blpage{560}
(\byear{2017})
\end{barticle}
\endbibitem

\bibitem[\protect\citeauthoryear{Schotte et~al.}{2022}]{schotte2022quantum}
\begin{barticle}
\bauthor{\bsnm{Schotte}, \binits{A.}},
\bauthor{\bsnm{Zhu}, \binits{G.}},
\bauthor{\bsnm{Burgelman}, \binits{L.}},
\bauthor{\bsnm{Verstraete}, \binits{F.}}:
\batitle{Quantum error correction thresholds for the universal fibonacci
  turaev-viro code}.
\bjtitle{Physical Review X}
\bvolume{12}(\bissue{2}),
\bfpage{021012}
(\byear{2022})
\end{barticle}
\endbibitem

\bibitem[\protect\citeauthoryear{Wootton and Hutter}{2016}]{wootton2016active}
\begin{barticle}
\bauthor{\bsnm{Wootton}, \binits{J.R.}},
\bauthor{\bsnm{Hutter}, \binits{A.}}:
\batitle{Active error correction for abelian and non-abelian anyons}.
\bjtitle{Physical Review A}
\bvolume{93}(\bissue{2}),
\bfpage{022318}
(\byear{2016})
\end{barticle}
\endbibitem

\bibitem[\protect\citeauthoryear{Breuer and
  Petruccione}{2002}]{breuer2002theory}
\begin{botherref}
\oauthor{\bsnm{Breuer}, \binits{H.-P.}},
\oauthor{\bsnm{Petruccione}, \binits{F.}}:
The theory of open quantum systems
(2002)
\end{botherref}
\endbibitem

\bibitem[\protect\citeauthoryear{Gray}{1999}]{gray1999toom}
\begin{botherref}
\oauthor{\bsnm{Gray}, \binits{L.F.}}:
Toom’s stability theorem in continuous time.
Perplexing Problems in Probability: Festschrift in Honor of Harry Kesten,
331--353
(1999)
\end{botherref}
\endbibitem

\bibitem[\protect\citeauthoryear{Bennett and Grinstein}{1985}]{bennett1985role}
\begin{barticle}
\bauthor{\bsnm{Bennett}, \binits{C.H.}},
\bauthor{\bsnm{Grinstein}, \binits{G.}}:
\batitle{Role of irreversibility in stabilizing complex and nonergodic behavior
  in locally interacting discrete systems}.
\bjtitle{Physical review letters}
\bvolume{55}(\bissue{7}),
\bfpage{657}
(\byear{1985})
\end{barticle}
\endbibitem

\bibitem[\protect\citeauthoryear{Ray et~al.}{2024}]{ray2024protecting}
\begin{barticle}
\bauthor{\bsnm{Ray}, \binits{A.}},
\bauthor{\bsnm{Laflamme}, \binits{R.}},
\bauthor{\bsnm{Kubica}, \binits{A.}}:
\batitle{Protecting information via probabilistic cellular automata}.
\bjtitle{Physical Review E}
\bvolume{109}(\bissue{4}),
\bfpage{044141}
(\byear{2024})
\end{barticle}
\endbibitem

\bibitem[\protect\citeauthoryear{Lake and Ro}{2025}]{squeezing}
\begin{botherref}
\oauthor{\bsnm{Lake}, \binits{E.}},
\oauthor{\bsnm{Ro}, \binits{S.}}:
Squeezing codes: robust fluctuation-stabilized memories.
arXiv preprint arXiv:2509.20730
(2025)
\end{botherref}
\endbibitem

\bibitem[\protect\citeauthoryear{Friedlander}{1975}]{friedlander1975wave}
\begin{botherref}
\oauthor{\bsnm{Friedlander}, \binits{F.G.}}:
The wave equation on a curved space-time
\textbf{2}
(1975)
\end{botherref}
\endbibitem

\bibitem[\protect\citeauthoryear{Biswas and
  Leyvraz}{2021}]{biswas2021ballistic}
\begin{barticle}
\bauthor{\bsnm{Biswas}, \binits{S.}},
\bauthor{\bsnm{Leyvraz}, \binits{F.}}:
\batitle{Ballistic annihilation in one dimension: a critical review}.
\bjtitle{The European Physical Journal B}
\bvolume{94}(\bissue{12}),
\bfpage{240}
(\byear{2021})
\end{barticle}
\endbibitem

\end{thebibliography}
	
\end{document}